\documentclass[10pt]{article}

\usepackage{amsmath,amsfonts,amsthm,amssymb,mathrsfs}
\usepackage[T1]{fontenc}
\usepackage[utf8]{inputenc}
\usepackage[colorlinks,backref]{hyperref}
\usepackage{xcolor}
\usepackage{fullpage}
\usepackage{mathtools}
\usepackage{authblk}
\usepackage{algorithm}
\usepackage{algpseudocode}
\usepackage{float, subcaption}
\usepackage{float, xcolor, colortbl}
\usepackage{natbib}

\RequirePackage{tikz}
\usetikzlibrary{fit,positioning,arrows,automata,calc}
\tikzset{
   main/.style={circle, minimum size = 10mm, thick, 
        draw =black!80, node distance = 10mm},
   box/.style={rectangle, draw=black!100}
}
\usetikzlibrary{positioning}
\usepackage{enumitem}
\setlist{  
  listparindent=\parindent,
  parsep=0pt
}
\usepackage{thmtools}
\usepackage{thm-restate}
\usepackage{bbm}

\usepackage[capitalize]{cleveref}
\crefname{assumption}{Assumption}{Assumptions}
\usepackage{titlesec}
\titleformat*{\section}{\large\bfseries}

\titleformat*{\subsection}{\normalfont\bfseries}

\DeclareMathOperator{\tr}{tr}
\usepackage{authblk}
\DeclareMathOperator{\Diag}{Diag}
\DeclareMathOperator{\Var}{Var}
\DeclareMathOperator{\Cov}{Cov}
\DeclareMathOperator{\Ber}{Ber}
\DeclareMathOperator{\rank}{rank}
\DeclareMathOperator{\col}{col}

\newcommand{\perpeq}{%
  \mathrel{%
    \vcenter{%
      \offinterlineskip
      \halign{##\cr
        \hbox to 1.2ex{\hfil\rule{0.5pt}{1.1ex}\hfil}\cr 
        \noalign{\vskip0.2ex}
        \hbox{\rule{1.2ex}{0.4pt}}\cr 
        \noalign{\vskip0.3ex}
        \hbox{\rule{1.2ex}{0.4pt}}\cr 
      }%
    }%
  }%
}
\newcommand{\notperpeq}{%
  \mathrel{%
    \ooalign{%
      $\perpeq$\cr
      \hidewidth$\mkern2mu/\mkern2mu$\hidewidth\cr
    }%
  }%
}

\renewcommand{\t}{\t} 
\newcommand{\diag}{\mathrm{diag}} 
\newcommand{\inv}{^{-1}}
\newcommand{\E}{\mathbb{E}}

\renewcommand{\hat}{\widehat}

\renewcommand{\tilde}{\widetilde}
\newtheorem{proposition}{Proposition}
\newtheorem{theorem}{Theorem}
\newtheorem{corollary}{Corollary}
\newtheorem{lemma}{Lemma}

\theoremstyle{definition} 
\newtheorem{definition}{Definition}

\newtheorem{remark}{Remark}
\newtheorem{assumption}{Assumption}
\hypersetup{%
  colorlinks=true,
  linkcolor=blue,
  citecolor = blue,
  urlcolor = blue
}

\usepackage{mathtools}
\usepackage{bbm}
\usepackage{algpseudocode}
\usepackage{algorithm}

\usepackage{bm}

\usepackage{stackengine}

\usepackage{centernot}

\renewcommand{\t}{^{\top}}

\renewcommand{\Pr}{\mathbb{P}}
\usepackage{multicol,caption}
\usepackage{comment}
\title{Two-Sample Hypothesis Testing for Subspace Equality in Network Data}
\author{Rajdeep Brahma, Joshua Agterberg, and Yuguo Chen}
\affil{University of Illinois Urbana-Champaign}
\date{\today}
\begin{document}

\maketitle

\begin{abstract}
In many settings one is often interested in determining whether two networks share some joint structural connectivity patterns such as communities.  However, while communities may be shared across networks, edge probabilities may differ significantly.  Therefore, in this paper we consider testing a general null hypothesis that two networks have the same underlying subspace, which in particular includes the setting that communities are the same for either stochastic blockmodels or mixed-membership stochastic blockmodels (even if edge probabilities are different).  We propose a test statistic based on the Frobenius norm of the difference of the leading subspace projection matrices, and we prove that our test statistic, after appropriate centering and scaling, converges in distribution to a Gaussian random variable as long as the average expected degree grows at least logarithmically in the number of vertices.  We then provide estimators for the asymptotic mean and variance and show consistency under a stronger signal condition, and we give the local power of our test when the networks are sufficiently dense.  Our theoretical results are based on a limit theorem for the projection difference of empirical and true eigenvectors which can also be viewed as the one-sample version of our test statistic, and this result may be of independent interest.   We demonstrate our results through numerical simulations and an application to US Flight data.  
\end{abstract}

\vspace{\baselineskip}
\noindent
\textbf{Keywords:} Low-rank matrices, Subspace perturbation, Network data, Signal-plus-noise, Random matrix theory.

\tableofcontents

\section{Introduction}

In the modern era, network-valued data has become central to scientific inquiry across diverse fields, such as  social science\citep{lazega2001collegial, girvan2002community} or neuroscience\citep{bullmore2009complex, kolaczyk2009statistical}. In these settings, the primary analytical goal is often to understand the underlying latent structure (such as communities) that dictates network behavior. A critical challenge in these settings is determining whether two networks share this structural geometry, even when their overall edge probabilities differ.  For example, communities may remain the same between social networking platforms, but edge formations can differ due to the particular platform.

To address this, we propose a spectral two-sample test for the equality of underlying population-level subspaces, assuming only that the edge probability matrices are low-rank.  In the case of networks with community structure such as stochastic blockmodels (\cref{def:SBM}) or mixed-membership stochastic blockmodels (\cref{def:MMSBM}), this hypothesis is equivalent to testing whether the communities are equal across networks.  Our theoretical results rely on a novel limit theorem for the projection distance of empirical eigenvectors, a result that may be of independent interest.

\subsection{Key Contributions}

The primary contributions of this work are threefold:

\begin{enumerate}
    
    \item \textbf{A Two-Sample Test for Common Subspaces:} 
 We propose a test statistic for the hypothesis that two networks arise from probability matrices with the same column space. We prove that, after appropriate centering and scaling, our test statistic converges in distribution to a standard Gaussian random variable. These limit theorems are valid over a wide parameter space regarding the network dimension $n$ and the sparsity parameter $\rho_n$. We also provide consistent plug-in estimators under a slightly stronger sparsity assumption.

    \item \textbf{Application to Synthetic and Real-World Data:} 
    We apply our test procedure to synthetic data, showing that the test exhibits strong power in distinguishing stochastic blockmodels with different community structures and is capable of detecting subtle shifts in mixed-membership models. Furthermore, we apply our methodology to United States domestic flight data. Our analysis reveals statistically significant shifts in the structural connectivity of the airport network corresponding to the onset of the COVID-19 pandemic.
    \item \textbf{A Limit Theorem for Projection Distances:} Our main technical results for our test are based on a corresponding limit theorem for the projection distance between subspaces, which can be viewed as the ``one-sample'' analogue of our test.  
This limit theorem, with associated plug-in estimators, can be used for confidence intervals     for the $\sin \Theta$ distance between the observed and population subspaces.  This result generalizes several previous results in the literature to Bernoulli noise, and may be of independent interest.  
    
\end{enumerate}

\subsection{Related Work}

The problem of hypothesis testing for network data has received considerable attention in recent years. Our work resides at the intersection of two-sample inference for random graphs and high-dimensional spectral perturbation theory.

\paragraph{Two-Sample Network Hypothesis Testing.}
A foundational nonparametric framework for determining if two networks are generated from the same random dot product graph (RDPG) latent positions was established by \citet{tang_nonparametric_2017}, later generalized by \citet{agterberg_nonparametric_2020}. Other nonparametric approaches include assumption-lean inference \citep{li_assumption-lean_2025} and tests for arbitrary vertices \citep{auerbach_testing_2022, wu_two-sample_2023}. While these tests are robust, they generally test for the equality of the precise generating parameters rather than the underlying  subspace up to rotation or scaling. 

Another significant strand of literature has focused on moment-based, subsampling, and bootstrap approaches. \citet{chakraborty_scalable_2025}, \citet{lunde_subsampling_2023}, \citet{deng_subsampling-based_2024}, and \citet{zu_local_2025} leverage subsampling techniques to construct test statistics, while others utilize spectral moments, graph cumulants, or U-statistics \citep{bravo-hermsdorff_quantifying_2023, shao_higher-order_nodate, shao_u-statistic_2023, shen_combinatorial-probabilistic_2020, zhang_edgeworth_2022, qi_multivariate_2024}, or employ bootstrap methods \citep{bhadra_bootstrap-based_2025, levin_bootstrapping_2025}. Alternatively, \citet{chatterjee_two-sample_2023} and \citet{ghoshdastidar_two-sample_2020} propose tests based on matrix norms.
In contrast to these strategies, our approach is based on direct spectral analysis. Instead of relying on empirical subsamples or higher-order moment matching, we establish the asymptotic normality of our test statistic directly via different technical matrix analysis tools. Our test statistic admits a closed-form limiting distribution, obviating the need for computationally intensive resampling or moment-computation procedures.

Related work has addressed hypothesis testing problems similar to ours, though primarily under the assumption of a Stochastic Blockmodel (SBM). For instance, \citet{bhattacharjee_change_2020} and \citet{wang_optimal_2021} address change-point detection in dynamic networks, which can be viewed as a sequential extension of two-sample testing. However, these approaches typically monitor global deviations in model parameters via least-squares objectives or cumulative sums, rather than directly testing the stability of the invariant subspace. Consequently, our statistic is distinct in its specific focus on the distance between spectral projectors, ensuring robustness to nuisance fluctuations in edge density that might otherwise be identified as change points. Methodological differences are also pronounced in comparison to \citet{fu_two-sample_2022}, who propose an extreme-value test based on the maximum standardized deviation of edges from estimated block probabilities. While their strategy is contingent on consistent community recovery, our statistic utilizes the Frobenius norm of the difference between spectral projectors. This $L_2$-type functional captures global discrepancies in the underlying subspace geometry, avoiding reliance on local fluctuations or the explicit alignment of community labels.

Beyond these specific formulations, recent literature has expanded network inference to address varied constraints. \citet{nguen_network_2024} and \citet{jin_two-sample_2024, jin_optimal_2025} develop tests for scenarios involving unknown vertex correspondence, unequal graph sizes, or degree-corrected mixed-membership assumptions. Other complementary contributions focus on rank estimation \citep{han_universal_2023}, mesoscale comparisons \citep{macdonald_mesoscale_2024}, and the sensitivity of testing power to label misalignment \citep{saxena_lost_2025}. Distinct from these specialized focuses, our work establishes a comprehensive asymptotic theory for the global projection distance itself. Our work is closely related to the semiparametric framework of \citet{tang_semiparametric_2017}, which motivates a similar spectral statistic but relies on permutation-based validity arguments. We, conversely, derive the exact Gaussian limiting distribution of the test statistic, enabling a direct asymptotic pivot for rigorous Type I error control.

The work most relevant to our work is \citet{li2018two}, which develops a two-sample test specifically for community memberships in weighted SBMs. However, their approach is restricted to a specific class of blockmodels. In contrast, our subspace-based formulation is significantly more general: by testing the null hypothesis of invariant column spaces, our method remains valid for both SBMs and Mixed Membership Stochastic Blockmodels (MMSBMs) and is robust to nuisance variation in edge probabilities.

\paragraph{Spectral Theory and Subspace Analysis.}
Our theoretical analysis relies on the behavior of empirical eigenvectors; see \citet{athreya_statistical_2018, agterberg_overview_2025}, and \citet{chen_spectral_2021} for overviews. Recent advances have established entrywise perturbation bounds under asymmetric noise \citep{chen_asymmetry_2021}, small eigengaps \citep{li_minimax_2025, wang_analysis_2024}, and incoherence \citep{yan_coherence-free_2024}. \citet{agterberg_joint_2025} provides perturbation bounds results for a slightly different model that support our technical analysis.

While many results provide perturbative bounds, our inference tools require distributional theory. The works \citet{fan_asymptotic_2022}, \citet{zheng_limit_2024}, and \citet{liu_asymptotic_2025} provide limit theorems for individual eigenvectors or eigenvalues. \citet{fan_asymptotic_2025} extends this to Laplacian matrices, and \citet{du_hypothesis_2023} and \citet{fan_simple-rc_2022} apply these results to testing membership equality of two individual vertices. \citet{agterberg_distributional_2024} and \citet{cheng_tackling_2021} address small eigengaps and asymmetry.  

In contrast to row-wise or ``local'' inference, our statistic depends on the global projection matrix. The primary theoretical antecedents are \citet{xia_normal_2021} and the closely related \citet{bao_singular_2021}, which derive normal approximations for singular subspaces. We also note complementary results for $\ell_{2,\infty}$ norms \citep{chang_extreme_2025} and tensor methods \citep{xia_inference_2022, agterberg_statistical_2024, xia_confidence_2019}. \citet{xia_normal_2021} provides explicit representation formulas for empirical spectral projectors. However, that work assumes i.i.d. Gaussian noise. In contrast, our technical analysis is tailored to heteroskedastic Bernoulli noise, and we leverage our results to develop our two-sample test.

\subsection{Notation}
We write \(J_n\) for the \(n\times n\) all-ones matrix, \(I_n\) for the \(n\times n\) identity matrix, and  \( \mathbf{1}_n\) for the all-ones vector of length \(n\).
For a vector \(\bm v \in \mathbb{R}^n\), the operator \(\Diag(\bm v)\) denotes the diagonal matrix with diagonal entries given by the components of \(\bm v\).  
For \(M \in \mathbb{R}^{n\times n}\), the operator \(\diag(M)\) returns the vector of diagonal entries of \(M\) and the operator \(\Diag(M)\) returns a diagonal matrix with only the diagonal entries of \(M\).  

For a matrix $M$, $M_{i \cdot}$ represents the $i$-th row and $M_{\cdot i}$ the $i$-th column of $M$. The $\ell_{2,\infty}$-norm is defined as  
$$\|M\|_{2,\infty} = \max_{i} \|M_{i\cdot}\|.$$
The Frobenius norm is denoted by $\|M\|_F$, the spectral norm by $\|M\|$, the infinity norm by $\|M\|_{\infty}=\max_{i,j} |M_{ij}|$, and the $\ell_0$-norm by $\|M\|_0$ (which denotes the number of non-zero elements in the matrix, and similarly for a vector $v$, $\|v\|_0$ denotes the number of non-zero elements in that vector). The column space of a matrix \(M\) is denoted by $\col(M)$, the trace by $\tr(M)$, and the least absolute eigenvalue by $\lambda_{\min}(M)$.  
For two matrices $M_1$ and $M_2$, we denote their Frobenius inner product by $\langle M_1, M_2 \rangle$, defined as $\langle M_1, M_2 \rangle = \tr(M_1 M_2^{\top})$.
Their Hadamard (entrywise) product is denoted by $M_1 \circ M_2$.
We write $A \sim \Ber(P)$ to indicate that
$$
A_{ij} =
\begin{cases}
\mathrm{Bernoulli}(P_{ij}), & i \le j,\\
A_{ji}, & i > j,
\end{cases}
$$
with the collection $\{A_{ij} : i \le j\}$ being mutually independent.

We write \(f(n) \lesssim g(n)\) to indicate that there exists a sufficiently large constant \(C>0\) such that  
\(f(n) \le C\, g(n)\) for all sufficiently large \(n\).  
We write \(f(n) \ll g(n)\) if \(f(n)/g(n)\to 0\) as \(n\to\infty\).  
Similarly, \(f(n) = O(g(n))\) denotes \(f(n)\lesssim g(n)\), and \(f(n) = o(g(n))\) denotes \(f(n)\ll g(n)\).  
We use \(f(n) \asymp g(n)\) to mean both \(f(n) = O(g(n))\) and \(g(n) = O(f(n))\).

If \(X\) is a random variable, then \(X = O_p(f(n))\) means  
\(
\mathbb{P}\left( X \lesssim f(n) \right) > 1 - O(n^{-c})
\quad \text{for some constant } c>1,
\)
and \(X=o_p(f(n))\) means  
\(
\mathbb{P}\left( X \ll f(n) \right) > 1 - O(n^{-c})
\quad \text{for some constant } c>0.
\) For a sequence of random variables $\{X_n\}_{n=1}^{\infty}$, we say $X_n \xrightarrow{D} X$ if $X_n$ converges to $X$ in distribution. 
Finally, we say that a random variable \(X\) satisfies \(X \asymp f(n)\) if for every \(\epsilon >0\) there exist constants \(a,b>0\), independent of \(n\), such that
\(
\mathbb{P}\left( a\,f(n) \le X \le b\,f(n) \right) \ge 1-\epsilon .
\)
\subsection{Organization of the Paper}
The remainder of this article is organized as follows. \cref{sec:two_sample} formulates the two-sample hypothesis testing problem, defines the test statistic, and presents the main asymptotic normality result and associated estimators. \cref{sec:simulations} validates the method through numerical simulations, while \cref{sec:airport} applies it to US flight network data. \cref{sec:one_samplepf} provides the theoretical foundation by establishing a limit theorem for the projection distance between empirical and population eigenspaces. Finally, \cref{sec:discussion} concludes with a discussion, and proofs are deferred to the Appendices.

\section{Two-Sample Hypothesis Testing for Subspaces}\label{sec:two_sample}
Suppose one observes two independent adjacency matrices \({A^{(1)}}\) and \({A^{(2)}}\), with ${A^{(i)}} \in \{0,1\}^{n\times n}$ assumed to be symmetric.  We consider the case where \({A^{(i)}} \sim \Ber(P^{(i)})\) for \(i = 1,2\), and each \(P^{(i)}\) is a symmetric, low-rank probability matrix of dimension \(n \times n\) and rank \(k\). Let $P^{(i)}$ have spectral decomposition \(P^{(i)} = {V^{(i)}} \Lambda^{(i)} {V^{(i)}}\t\), where \({V^{(i)}} \in \mathbb{R}^{n \times k}\) contains the $k$ leading orthonormal eigenvectors associated with the \(k\) eigenvalues of \(P^{(i)}\), collected in the diagonal matrix \(\Lambda^{(i)}\).  For simplicity of analysis, we allow self-loops (i.e., ${A^{(i)}}$ need not be hollow), but the results do not materially change is self-loops are disallowed. 

 Formally, we consider the hypotheses
\begin{align} \label{hypo2}
    H_0 : {V^{(1)}} {V^{(1)}}\t = {V^{(2)}} {V^{(2)}}\t 
    \quad \text{versus} \quad 
    H_1 : {V^{(1)}} {V^{(1)}}\t \neq {V^{(2)}} {V^{(2)}}\t.
\end{align}
In words, this hypothesis test seeks to determine whether the subspaces spanned by the columns of \({V^{(1)}}\) and \({V^{(2)}}\) coincide, which is equivalent to \eqref{hypo2}. To contextualize our hypothesis test within the framework of block models and latent position models, we provide formal definitions for the generalized random dot product graph, the stochastic blockmodel, and the mixed-membership stochastic blockmodel.

\begin{definition}[Generalized Random Dot Product Graph (GRDPG)]\label{def:GRDPG}
Let \(n\in\mathbb{N}\) be the number of vertices and \(k\ge 1\) be the latent dimension.
Let \(I_{p,q} \in \mathbb{R}^{k \times k}\) be a diagonal matrix with \(p\) ones and \(q\) minus ones on the diagonal, where \(p+q=k\).
Let \(\Gamma\in\mathbb{R}^{n\times k}\) denote the matrix of latent positions.
Suppose the rows of \(\Gamma\) satisfy the constraint that for all \(i,j \in \{1,\dots,n\}\), \(0 \le ( \Gamma I_{p,q} \Gamma\t )_{ij} \le 1\).
We say \(A\) is an instantiation of a \emph{generalized random dot product graph}, denoted \(A \sim \text{GRDPG}_{p,q}(\Gamma)\), if \(A \sim \Ber(P)\) with \(P = \Gamma I_{p,q} \Gamma\t\).
\end{definition}

\begin{definition}[Stochastic Blockmodel (SBM)]\label{def:SBM}
Let \(n\in\mathbb{N}\) be the number of vertices and let \(k\ge 1\) be the number of communities.  
Let \(B\in[0,1]^{k\times k}\) denote the symmetric connectivity matrix.  
Suppose each vertex $i$ belongs to some community $C(i) \in \{1, \dots, k\}$. 
Define the matrix $Z \in \{0,1\}^{n \times k}$ via $Z_{ik} = 1$ if vertex $i$ belongs to community $k$, and $Z_{ik} = 0$ otherwise.  
We say $A$ is an instantiation of a \emph{stochastic blockmodel} if \(A \sim \Ber(  P)\) with \(P=Z B Z\t.\)
\end{definition}

\begin{definition}[Mixed membership stochastic blockmodel (MMSBM)]\label{def:MMSBM}
Let \(n\in\mathbb{N}\) be the number of vertices and \(k\ge 1\) be the number of communities. Let \(B\in[0,1]^{k\times k}\) be the symmetric connectivity matrix. For each vertex \(i\in\{1,\dots,n\}\), let
\begin{align} \label{simplex}
    Z_{i\cdot}=(Z_{i1},\dots,Z_{ik})\in\Delta^{k-1}:=\bigg\{x\in[0,1]^k:\sum_{j=1}^k x_j=1\bigg\}
\end{align}
be a membership vector representing the fractional affiliation of vertex \(i\) to each of the \(k\) communities. We say $A$ is an instantiation of a \emph{mixed membership stochastic blockmodel} if \(A \sim \Ber(  P)\) with \(P=Z B Z\t.\)
\end{definition}

All the models described above satisfy $\mathbb{E} \left(A\right) = P$, where $P$ has rank at most $k$. The following propositions examine how the general null hypothesis \eqref{hypo2} applies to specific model cases.  We begin with the GRDPG,
where the test admits a geometric interpretation regarding the column space of the latent positions.

\begin{proposition}[Equivalence of Latent Subspaces in GRDPG] \label{prop:GRDPG-equivalence}
Let \(A^{(1)}\) and \(A^{(2)}\) be independent instantiations of generalized random dot product graphs with latent position matrices \(\Gamma^{(1)}, \Gamma^{(2)} \in \mathbb{R}^{n \times k}\).
Assume \(\Gamma^{(1)}\) and \(\Gamma^{(2)}\) have full column rank \(k\).
Let the associated diagonal matrices be \(I_{p_1,q_1}\) and \(I_{p_2,q_2}\), where \(p_i+q_i = k\), such that \(P^{(i)} = \Gamma^{(i)} I_{p_i,q_i} {\Gamma^{(i)}}\t\) for \(i=1,2\).
Let \(V^{(i)} \in \mathbb{R}^{n \times k}\) be the matrix of \(k\) orthonormal eigenvectors corresponding to the non-zero eigenvalues of \(P^{(i)}\).
Then
\[
    V^{(1)} {V^{(1)}}\t = V^{(2)} {V^{(2)}}\t
\]
if and only if there exists a non-singular matrix \(W \in \mathbb{R}^{k \times k}\) such that
\(
    \Gamma^{(1)} = \Gamma^{(2)} W.
\)
\end{proposition}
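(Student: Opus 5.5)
The plan is to show that, for each $i$, $\col(V^{(i)}) = \col(\Gamma^{(i)})$. Once this is in hand, the proposition becomes a statement about column spaces, and both directions follow from elementary linear algebra.

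\emph{Step 1: identify the column spaces.} Since $\Gamma^{(i)}$ has full column rank $k$ and $I_{p_i,q_i}$ is invertible, the $k\times k$ matrix $I_{p_i,q_i}{\Gamma^{(i)}}\t$ maps $\mathbb{R}^n$ onto $\mathbb{R}^k$. This is because ${\Gamma^{(i)}}\t$ has rank $k$, so it is surjective. Hence
\[
\col(P^{(i)}) = \Gamma^{(i)} I_{p_i,q_i}{\Gamma^{(i)}}\t \mathbb{R}^n = \Gamma^{(i)}\mathbb{R}^k = \col(\Gamma^{(i)}),
\]
and in particular $\rank(P^{(i)})=k$. Now $P^{(i)}$ is symmetric with spectral decomposition $P^{(i)} = V^{(i)}\Lambda^{(i)}{V^{(i)}}\t$, where $\Lambda^{(i)}$ is invertible because it collects exactly the $k$ nonzero eigenvalues. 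The same surjectivity argument then gives $\col(P^{(i)}) = \col(V^{(i)})$. Therefore $\col(V^{(i)})=\col(\Gamma^{(i)})$. This step needs some care on one point: the nonzero eigenvalues may be of mixed sign, so I will phrase everything in terms of nonzero eigenvalues rather than positive ones. That is exactly what the statement specifies.

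\emph{Step 2: projections versus column spaces.} Since $V^{(i)}$ has orthonormal columns, $V^{(i)}{V^{(i)}}\t$ is the orthogonal projector onto $\col(V^{(i)})$. Two orthogonal projectors are equal if and only if their ranges coincide. For the forward direction, apply the projector to each vector of the other range. For the reverse direction, note that an orthogonal projector is determined by its range. Hence $V^{(1)}{V^{(1)}}\t = V^{(2)}{V^{(2)}}\t$ if and only if $\col(\Gamma^{(1)})=\col(\Gamma^{(2)})$.

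\emph{Step 3: column spaces versus $W$.} Suppose $\Gamma^{(1)}=\Gamma^{(2)}W$ with $W$ nonsingular. Then $\col(\Gamma^{(1)})\subseteq\col(\Gamma^{(2)})$, and since $\Gamma^{(2)} = \Gamma^{(1)}W^{-1}$ the reverse inclusion also holds. Conversely, suppose the column spaces are equal. Then every column of $\Gamma^{(1)}$ is a linear combination of columns of $\Gamma^{(2)}$, which yields some $W$ with $\Gamma^{(1)}=\Gamma^{(2)}W$. Explicitly, we can take $W = ({\Gamma^{(2)}}\t\Gamma^{(2)})^{-1}{\Gamma^{(2)}}\t\Gamma^{(1)}$, which is well defined by full column rank. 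Then $k=\rank(\Gamma^{(1)})\le\rank(W)$, so $W$ is nonsingular.

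Nothing here is genuinely hard. The only point needing attention is Step 1, namely checking that the indefinite signature $I_{p,q}$ does not reduce the rank. This is handled by the invertibility of $I_{p,q}$ together with the surjectivity of ${\Gamma^{(i)}}\t$.
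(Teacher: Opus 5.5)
Your proposal is correct and follows essentially the same route as the paper: both show $\col(V^{(i)}) = \col(P^{(i)}) = \col(\Gamma^{(i)})$ using the invertibility of $I_{p_i,q_i}$ and the full row rank of $I_{p_i,q_i}{\Gamma^{(i)}}^{\top}$ (a $k\times n$ matrix, not $k\times k$ as you wrote), then reduce to equality of column spaces. Your Step 3 differs only in spelling out the final linear-algebra fact: you give $W$ explicitly and use a rank argument for its invertibility, whereas the paper states that fact without proof.
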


\begin{proof}
    See \cref{prop1}.
\end{proof}

\noindent In essence, \cref{prop:GRDPG-equivalence} states that for GRDPGs, the hypothesis test determines whether the latent positions of the two graphs span the same feature space. The test is invariant to the linear transformation \(W\), meaning it detects fundamental changes in the geometric configuration of the latent positions rather than simple rotations or scaling.

When we restrict our attention to community based models, this geometric equivalence forces a stricter structural correspondence. For two matrices $M_1,M_2$, we write $M_1 \perpeq M_2$ if there exists a permutation matrix $\Pi$ such that $M_1 = M_2\Pi$.

\begin{proposition}\label{prop:subspace-vs-membership}
Let \(P^{(i)}\) be one of the blockmodels with a full rank symmetric connectivity matrix as defined in \cref{def:MMSBM,def:SBM}. Let \({V^{(i)}}\in\mathbb{R}^{n\times k}\) be an orthonormal matrix whose columns span the column space of \(P^{(i)}\) (the nonzero population eigenvectors). We further assume that there is a pure node for each community for both $Z^{(1)}$ and $Z^{(2)}$. Then
\[
{V^{(1)}}{V^{(1)}}\t = {V^{(2)}}{V^{(2)}}\t
\]
if and only if \(Z^{(1)} \perpeq Z^{(2)}\).
\end{proposition}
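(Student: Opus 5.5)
We prove the two directions separately, treating the SBM as the special case of the MMSBM in which every row of $Z^{(i)}$ is a vertex of the simplex. Throughout, $B^{(i)}$ is full rank and there is a pure node for every community, so $Z^{(i)}$ contains a $k\times k$ permutation submatrix and hence has full column rank $k$. Consequently $\col(P^{(i)}) = \col(Z^{(i)}B^{(i)}{Z^{(i)}}\t) = \col(Z^{(i)})$. The reason is that $B^{(i)}{Z^{(i)}}\t$ has full row rank $k$, so multiplying $Z^{(i)}$ on the right by it does not shrink the column space. The statement ${V^{(1)}}{V^{(1)}}\t={V^{(2)}}{V^{(2)}}\t$ is therefore equivalent to $\col(Z^{(1)})=\col(Z^{(2)})$.

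\emph{The ``if'' direction.} If $Z^{(1)}=Z^{(2)}\Pi$ for a permutation matrix $\Pi$, then $\col(Z^{(1)})=\col(Z^{(2)})$, and the projections agree by the preceding paragraph.

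\emph{The ``only if'' direction.} Suppose $\col(Z^{(1)})=\col(Z^{(2)})$. Since both matrices have full column rank, there is a unique nonsingular $W\in\mathbb{R}^{k\times k}$ with $Z^{(1)}=Z^{(2)}W$. This is the same conclusion as in \cref{prop:GRDPG-equivalence}, but here it follows directly from the column-space argument. It remains to show that $W$ is a permutation matrix, and this is the main step.

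\emph{Step 1: the rows of $W$ lie in the simplex and so does $W^{-1}$.} Let $S_2$ be the set of pure nodes of $Z^{(2)}$, one per community, so that $Z^{(2)}_{S_2\cdot}$ is a permutation matrix $Q$. Then $Z^{(1)}_{S_2\cdot}=QW$. Every row of $Z^{(1)}$ lies in $\Delta^{k-1}$, so every row of $QW$, and hence every row of $W$, lies in $\Delta^{k-1}$. Thus $W$ is row-stochastic with nonnegative entries. Applying the same argument to the pure nodes of $Z^{(1)}$ and the relation $Z^{(2)}=Z^{(1)}W^{-1}$ shows that $W^{-1}$ is also entrywise nonnegative and row-stochastic.

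\emph{Step 2: $W$ is a permutation matrix.} A nonnegative matrix with nonnegative inverse is a monomial matrix, i.e.\ a positive diagonal matrix times a permutation. To see this, note that for $i\neq j$ we have $0=(WW^{-1})_{ij}=\sum_l W_{il}(W^{-1})_{lj}$, a sum of nonnegative terms, so every term vanishes. Each row of $W$ has a nonzero entry because $W$ is nonsingular, and a counting argument on supports then forces each row and each column of $W$ to have exactly one nonzero entry. A monomial matrix whose rows each sum to one has all its nonzero entries equal to $1$, so $W$ is a permutation matrix. Hence $Z^{(1)}=Z^{(2)}W$ with $W$ a permutation matrix, which is $Z^{(1)}\perpeq Z^{(2)}$.

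I expect Step 2 to be the only genuinely non-routine point. As an alternative route, one can argue that $W$ maps $\Delta^{k-1}$ into itself and that $W^{-1}$ does as well, so $W$ is a bijection of the simplex and must permute its extreme points.
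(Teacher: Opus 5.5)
Your proof is correct and follows the same route as the paper's. You reduce equality of the projectors to $\col(Z^{(1)})=\col(Z^{(2)})$, use the pure nodes of each membership matrix to show that the change-of-basis matrix and its inverse are both entrywise nonnegative and row-stochastic, deduce that it is a monomial matrix, and use the unit row sums to force a permutation. The differences are minor: you prove the nonnegative-inverse characterization directly (your support argument is valid) rather than citing the theorem of Ding and Rhee, and you treat the SBM as a special case of the MMSBM argument rather than handling it separately.
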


\begin{proof}
    See \cref{prop2}.
\end{proof}

Thus, it follows from \cref{prop:subspace-vs-membership} that \eqref{hypo2} is equivalent to
\begin{align} \label{hypo1}
    H_0 : Z^{(1)} \perpeq Z^{(2)}
\quad \text{versus} \quad
H_1 : Z^{(1)} \notperpeq Z^{(2)}
\end{align}
under both the SBM and MMSBM. Testing \({V^{(1)}}{V^{(1)}}\t = {V^{(2)}}{V^{(2)}}\t\) is, in effect, a test of whether the two networks encode the same underlying community structure even if the probability matrices \(P_1\) and \(P_2\) are different. The airport network analysis in \cref{sec:airport} offers a concrete illustration: our procedure successfully distinguishes between periods of stable community structure and periods of severe structural disruption, such as the COVID-19 pandemic. Full empirical results and their interpretation are presented in Section~\ref{sec:airport}. Hence, a test of \eqref{hypo2} furnishes a principled and interpretable mechanism for detecting structural changes in community organization while remaining robust to transient edge-level fluctuations.

\subsection{Test Statistic Asymptotics}
Before proposing our test statistics and studying their asymptotics, we impose several mild regularity conditions on the parameter space. 
For notational simplicity, we state the assumptions using a single adjacency matrix \(A\) and its population counterpart \(P=\mathbb{E}(A)\). We assume that \(P\) admits the spectral decomposition \(P = V \Lambda V^\top\), where \(V \in \mathbb{R}^{n \times k}\) is the matrix of eigenvectors and \(\Lambda\) is the diagonal matrix of eigenvalues with the \(i\)-th eigenvalue given by \(\lambda_i\).

Our first assumption concerns the asymptotic regime and sparsity of each network.
\begin{assumption}[Asymptotic regime]\label{ass:asymptotic1}
The edge probabilities are uniformly of order \(\rho_n\); that is, there exist constants \(0<a<b<\infty\) such that
\(
a\,\rho_n \le P_{ij} \le b\,\rho_n\) for all \(1\le i,j\le n.
\)
The number of communities \(k\) remains fixed as \(n\to\infty\), and the sparsity parameter satisfies \(\rho_n\to 0\).  
\end{assumption}

\begin{assumption}[Sparse regime]\label{ass:sparse}
The sparsity parameter satisfies 
\(
n\rho_n \gg \log n
\)
for sufficiently large \(n\).
\end{assumption}
It is well-known that if $n\rho_n \ll \log n$, then the network is disconnected with high probability, and the Frobenius distance between the empirical and population subspace projection matrices fails to converge to zero \citep{agterberg_overview_2025,lei2015consistency}.  Thus, our assumption that $n\rho_n \gg \log n$ is relatively mild, as we focus on \emph{inference}, not estimation. For simplicity we assume that $\rho_n$ is of the same order for both networks.

Our next assumption places conditions on the leading $k$ eigenvalues of each matrix.
\begin{assumption}[Eigenvalue scaling]\label{ass:eigen-scaling}
There exist positive constants \(C_i,D_i\) (independent of \(n\)) such that, for each \(i=1,\dots,k\),
\(
C_i < \frac{|\lambda_i|}{n\rho_n} < D_i
\).
\end{assumption}
This assumption imposes only a mild regularity condition on the eigenvalues of the probability matrix \(P\), and is satisfied in virtually all of the canonical instances of the blockmodel family. Its validity is formalized in the proposition below. We first include the following definition.
\begin{definition}[Balanced Community Sizes]\label{def:balanced}
We say that the community sizes are balanced if the following conditions hold:
\begin{itemize}
    \item \textbf{For SBM:} Let \(n_l\) denote the size of community \(l\), such that \(n_l = n\pi_l\), where proportions \(\pi_l\) satisfy
    \begin{align*}
        C_{1} \le \min_{l}\pi_{l} \le \max_{l}\pi_{l} \le C_{2},
    \end{align*}
    for some fixed constants \(C_{1}, C_{2} > 0\).
    \item \textbf{For MMSB:} The membership matrix \(Z\) satisfies
    \[
        \lambda_{\min}(Z^\top Z) \ge c\,\frac{n}{k},
    \]
    for some constant \(c>0\).
\end{itemize}
\end{definition}
\begin{proposition}\label{prop:eig-scaling-sbm-mmsbm}
Let \(P\) be one of the blockmodels as defined in \cref{def:MMSBM,def:SBM}. Define \(\widetilde{B} = \frac{1}{\rho_n} B\) where we assume that the eigenvalues of $\widetilde{B}$ are bounded; specifically, there exist constants $0 < a_1 < a_2 < \infty$ such that
\(
    a_1 < |\lambda_{\min}(\widetilde{B})| \le |\lambda_{\max}(\widetilde{B})| < a_2.
\)
Furthermore, assume that the community sizes are balanced in the sense of \cref{def:balanced} and that Assumptions \ref{ass:asymptotic1} and \ref{ass:sparse} are satisfied. Then the \(k\) nonzero eigenvalues \(\lambda_{1},\dots,\lambda_{k}\) of \(P\) satisfy \cref{ass:eigen-scaling}.
\end{proposition}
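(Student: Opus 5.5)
The plan is to reduce the eigenvalues of $P = ZBZ\t = \rho_n Z\widetilde B Z\t$ to those of a $k\times k$ matrix and then apply Ostrowski-type eigenvalue inequalities. The reduction uses the fact that the nonzero eigenvalues of $Z\widetilde B Z\t$ coincide with those of $(Z\t Z)^{1/2}\widetilde B (Z\t Z)^{1/2}$, since $XY$ and $YX$ share nonzero eigenvalues. Here we take $X = Z(Z\t Z)^{-1/2}$ and $Y = (Z\t Z)^{1/2}\widetilde B (Z\t Z)^{1/2}X\t$. Concretely, write $Z = Q R$ with $Q = Z(Z\t Z)^{-1/2}$ having orthonormal columns. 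Then
\[
P = \rho_n\, Q\,(Z\t Z)^{1/2}\widetilde B (Z\t Z)^{1/2} Q\t ,
\]
so the nonzero eigenvalues of $P$ are $\rho_n$ times the eigenvalues of $M := (Z\t Z)^{1/2}\widetilde B (Z\t Z)^{1/2}$. The matrix $Z\t Z$ is invertible because balancedness gives $\lambda_{\min}(Z\t Z) > 0$. Moreover, $M$ has exactly $k$ nonzero eigenvalues because $\widetilde B$ is nonsingular.

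Next, bound the spectrum of $Z\t Z$ on both sides at order $n$. In the SBM case $Z\t Z = \Diag(n_1,\dots,n_k)$ with $C_1 n \le n_l \le C_2 n$. In the MMSBM case the lower bound $\lambda_{\min}(Z\t Z)\ge cn/k$ is assumed. For the upper bound, each row of $Z$ lies in the simplex $\Delta^{k-1}$, so $\|Z_{i\cdot}\|_2\le 1$ and hence $\lambda_{\max}(Z\t Z)\le \|Z\|_F^2 \le n$. Since $k$ is fixed, in both cases $c' n \le \lambda_{\min}(Z\t Z)\le\lambda_{\max}(Z\t Z)\le C' n$.

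Then apply Ostrowski's theorem to the congruence $M = S\widetilde B S$ with $S=(Z\t Z)^{1/2}$ symmetric and nonsingular. Order the eigenvalues of $\widetilde B$ and of $M$. Ostrowski's theorem gives $\lambda_i(M) = \theta_i \lambda_i(\widetilde B)$ for some $\theta_i \in [\lambda_{\min}(Z\t Z), \lambda_{\max}(Z\t Z)]$. By Sylvester's law of inertia the signs are preserved, which matters because $\widetilde B$ may be indefinite. It follows that $|\lambda_i(M)|\in [c' n a_1, C' n a_2]$. Multiplying by $\rho_n$ gives $c'a_1 < |\lambda_i|/(n\rho_n) < C'a_2$, which is \cref{ass:eigen-scaling} with constants independent of $n$. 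The order matching between $\lambda_i$ as indexed in $P$ and the eigenvalues of $\widetilde B$ is irrelevant, since every eigenvalue gets the same uniform two-sided bound. Assumptions \ref{ass:asymptotic1} and \ref{ass:sparse} are not really needed beyond fixing $k$ and ensuring the setup is well-posed.

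The main obstacle is handling an indefinite $\widetilde B$ correctly. Naive submultiplicativity gives only the upper bound $\|M\|\le \|S\|^2\|\widetilde B\|$. A lower bound on the smallest absolute eigenvalue needs an extra argument: either Ostrowski/inertia as above, or the direct argument $|\lambda_{\min}(M)| = \|M^{-1}\|^{-1} \ge (\|S^{-1}\|^2\|\widetilde B^{-1}\|)^{-1} \ge \lambda_{\min}(Z\t Z)\,a_1$. The latter is the simpler route, and I would use it to avoid citing Ostrowski.
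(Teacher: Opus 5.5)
Your proof is correct, but it takes a different route from the paper. The paper reduces to the $k\times k$ matrix $\widetilde B^{1/2} N \widetilde B^{1/2}$ with $N=Z^{\top}Z$. It then sandwiches each eigenvalue by a Rayleigh quotient, $\lambda_{\min}(N)\lambda_{\min}(\widetilde B)\le v_j^{\top}\widetilde B^{1/2}N\widetilde B^{1/2}v_j\le\lambda_{\max}(N)\lambda_{\max}(\widetilde B)$. That step explicitly assumes $\widetilde B$ is symmetric positive definite. This is not among the proposition's hypotheses, and the absolute values in $a_1<|\lambda_{\min}(\widetilde B)|$ suggest indefinite $\widetilde B$ (for example, disassortative blockmodels) is meant to be allowed.

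You conjugate the other way, $M=N^{1/2}\widetilde B N^{1/2}$, which only needs the Gram matrix $N$ to be positive definite. You then control $|\lambda_i(M)|$ either by Ostrowski or by $\|M\|\le\lambda_{\max}(N)\|\widetilde B\|$ together with $\|M^{-1}\|\le\lambda_{\min}(N)^{-1}\|\widetilde B^{-1}\|$. This covers indefinite $\widetilde B$ at no extra cost, and the inverse-norm version is fully elementary. The inverse-norm route gives only bounds on $|\lambda_i|$, without the signed, order-matched comparison that the Rayleigh sandwich or Ostrowski provides. That is all \cref{ass:eigen-scaling} requires, so nothing is lost.

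You also justify $\lambda_{\max}(Z^{\top}Z)\le\|Z\|_F^2\le n$ for the MMSBM from the simplex rows. The paper only asserts an order-$n$ upper bound as a consequence of balancedness, although \cref{def:balanced} supplies only the lower bound in the mixed-membership case.
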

\begin{proof}
    See \cref{prop3}.
\end{proof}
Finally, the following assumption imposes the condition that the rows of $V$ (denoted by \(V_{1 \cdot},\dots,V_{n \cdot}\)) are uniformly \emph{spread}, a concept known as \emph{incoherence} in the literature.  In the particular case of the SBM or MMSBM, this assumption follows if we assume that the community sizes are balanced as we will demonstrate in \cref{lem:balanced-incoherence,lem:balanced-mmsb-incoherence}.
\begin{assumption}[Incoherence]\label{ass:incoherence}
The matrix of population eigenvectors \(V\) satisfies the following incoherence  condition: there exist a constant \(\psi>0\) (independent of \(n\)) such that for all \(j=1, \dots,n,\)
\[  
\|V_{j\cdot}\|_{2} 
\le \sqrt{\frac{\psi k}{n}}.
\]
\end{assumption}
Let \(\hat V^{(i)}\in\mathbb{R}^{n\times k}\) denote the matrix of leading \(k\) eigenvectors of ${A^{(i)}}$ with \(\hat{\Lambda}^{(i)}\)  the corresponding eigenvalue matrix.  We consider the test statistic
\begin{align} \label{2sampleTS}
    T_n := \bigl\|(\hat V^{(1)})(\hat V^{(1)})\t - (\hat V^{(2)})(\hat V^{(2)})\t\bigr\|_F^2.
\end{align}
The following theorem characterizes its null distribution after appropriate centering and scaling. The full proof can be found in \cref{sec:mainthm2}. Here, \(\tilde\mu_2\) and \(\tilde\sigma_2\) denote the sample mean and sample standard deviation of \eqref{2sampleTS}, respectively, with the subscript indicating the two-sample setting. Subsequently, we will consider a one-sample analogue of the following result; see \cref{thm:onesample}. In that case, \(\mu_1\) and \(\sigma_1\) will denote the corresponding one-sample mean and standard deviation.

\begin{theorem}
    \label{thm:twosample}
    Suppose \cref{ass:asymptotic1,ass:sparse,ass:eigen-scaling,ass:incoherence} hold. Then, under the null as defined in \eqref{hypo2},
    \begin{equation}
    \label{eq:clt2}
    \frac{T_n - \tilde \mu_{2}}{\tilde \sigma_2}
\xrightarrow{D} \mathcal{N}(0,1),
    \end{equation}
    where the asymptotic centering and scaling terms satisfy
    \begin{align}
    \tilde \mu_{2}
    &= \underbrace{2\sum_{i=1}^{2}\,\operatorname{tr}\biggl[\beta_i^{-2}\Bigl({\Sigma^{(i)}}\circ\beta_i^\perp 
      + \operatorname{Diag}\bigl({\Sigma^{(i)}}\cdot d_i - \operatorname{diag}({\Sigma^{(i)}}) \circ d_i\bigr)\Bigr)\biggr]}_{\mu_2}
      + O_p\left(\frac{k}{n^2 \rho_n^2}\right),
    \label{param:1} \\[1em]
    \tilde \sigma_2^2
    &= \sigma_2^2 + o\left(\frac{k^2}{n^3 \rho_n^2}\right) \nonumber \\[0.5em]
    &= \begin{aligned}[t]
        &\sum_{i=1}^{2} \biggl( 8 \Bigl\langle (J_n - I_n)\circ(\beta_i^{-2}\circ\beta_i^{-2}),({\Sigma^{(i)}})^2\Bigr\rangle 
      + 4 \,\alpha_i^\top K^{(i)}\,\alpha_i \biggr) \\
        &+ 4 \Bigl\langle G \circ G,{\Sigma^{(1)}}^\top{\Sigma^{(2)}} 
      + \left(J_n-I_n\right) \circ \left({\Sigma^{(1)}} \circ {\Sigma^{(2)}}\right)\Bigr\rangle + o\left(\frac{k^2}{n^3 \rho_n^2}\right).
    \end{aligned}
    \label{param:2}
\end{align}  Here we define
    \begin{align}\label{last00}
    &\beta_i^k = {V^{(i)}}({\Lambda^{(i)}})^k {V^{(i)}}\t,
    \quad \beta_i^\perp = I_n - {V^{(i)}} {V^{(i)}}\t, 
    \quad d_i = \diag(\beta_i^\perp),\notag \\
    &G=\beta_1^{-1}\beta_2^{-1} + \beta_2^{-1}\beta_1^{-1},
    \quad {\Sigma^{(i)}} = P^{(i)}\circ(J_n-P^{(i)}),
    \quad K^{(i)} = 2\,{\Sigma^{(i)}}\circ(J_n - 2P^{(i)}),
    \quad \alpha_i = \diag(\beta_i^{-2}).
    \end{align}
    
\end{theorem}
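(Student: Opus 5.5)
Under the null write $\Pi := V^{(1)}V^{(1)}\t = V^{(2)}V^{(2)}\t$, $E^{(i)} := A^{(i)}-P^{(i)}$ and $\hat\Pi_i := \hat V^{(i)}\hat V^{(i)}\t$. The plan is to expand each empirical projector around the common $\Pi$ to second order,
\[
\hat\Pi_i - \Pi = L_i + Q_i + R_i,\qquad L_i = \beta_i^{-1}E^{(i)}\beta_i^\perp + \beta_i^\perp E^{(i)}\beta_i^{-1},
\]
where $Q_i$ collects the second-order terms (e.g.\ $\beta_i^{-1}E^{(i)}\beta_i^{\perp}E^{(i)}\beta_i^{-1}$ and the analogous cross terms) and $R_i$ is the remainder. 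This is the Xia-type contour-integral series for spectral projectors (the representation used for \cref{thm:onesample}). The remainder is controlled with $\|E^{(i)}\|\lesssim\sqrt{n\rho_n}$ (valid under \cref{ass:sparse}), \cref{ass:eigen-scaling} and incoherence (\cref{ass:incoherence}) together with $\ell_{2,\infty}$ bounds, giving $\|R_i\|_F = o_p\bigl(\sqrt{k}\,n^{-3/4}\rho_n^{-1/2}\bigr)$, which is negligible relative to $\tilde\sigma_2$. Expanding
\[
T_n=\|(\hat\Pi_1-\Pi)-(\hat\Pi_2-\Pi)\|_F^2=\|\hat\Pi_1-\Pi\|_F^2+\|\hat\Pi_2-\Pi\|_F^2-2\langle \hat\Pi_1-\Pi,\hat\Pi_2-\Pi\rangle,
\]
the first two terms are exactly the one-sample statistics, so \cref{thm:onesample} and its proof supply their means $2\tr[\beta_i^{-2}(\cdots)]$ and variances $8\langle (J_n-I_n)\circ(\beta_i^{-2}\circ\beta_i^{-2}),(\Sigma^{(i)})^2\rangle+4\alpha_i\t K^{(i)}\alpha_i$. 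The new work is the cross term.

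\textbf{Step 1 (cross term reduction).} The two networks are independent and $\E L_i=0$, so $\langle L_1,L_2\rangle$ has mean zero. The cross terms involving $Q_i$ contribute $O_p(k/(n^2\rho_n^2))$ to the centering, matching the error in \eqref{param:1}. Since $\beta_i^{-1}\beta_j^\perp=0$ under the null, we get
\[
\langle L_1,L_2\rangle=2\tr\bigl(\beta_1^{-1}E^{(1)}\beta^\perp E^{(2)}\beta_2^{-1}\bigr)+2\tr\bigl(\beta^\perp E^{(1)}\beta_1^{-1}\beta_2^{-1}E^{(2)}\bigr).
\]
This is a bilinear form $\sum_{a,b,c,d}E^{(1)}_{ab}E^{(2)}_{cd}M_{abcd}$. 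Computing its variance conditionally on $E^{(1)}$ should produce the term $4\langle G\circ G,\Sigma^{(1)}\t\Sigma^{(2)}+(J_n-I_n)\circ(\Sigma^{(1)}\circ\Sigma^{(2)})\rangle$ in \eqref{param:2}; the pieces involving $\beta^\perp$ are lower order once one uses $\beta^\perp=I_n-\Pi$ and incoherence. This step is a careful bookkeeping of the symmetric-index pairings.

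\textbf{Step 2 (joint CLT).} The leading fluctuation of $T_n-\tilde\mu_2$ is
\[
S_n=\sum_{i=1}^2\bigl(\|L_i\|_F^2-\E\|L_i\|_F^2\bigr)-2\langle L_1,L_2\rangle
\]
plus negligible pieces. This is a degenerate quadratic form in the independent centered entries $\{E^{(1)}_{ab}\}_{a\le b}\cup\{E^{(2)}_{ab}\}_{a\le b}$ (diagonal parts give the $\alpha_i\t K^{(i)}\alpha_i$ linear-in-$E^2$ contribution). I would order the entries and write $S_n$ as a martingale difference sum, then apply the martingale CLT. The conditions to verify are: (i) the conditional variance converges in probability to $\sigma_2^2$, and (ii) a Lyapunov condition (sum of fourth moments is $o(\sigma_2^4)$). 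Alternatively one can use de Jong's criterion for generalized quadratic forms, which requires $\E S_n^4/\sigma_2^4\to3$ and maximal influence $o(1)$. Bernoulli moments obey $\E|E_{ab}|^q\lesssim\rho_n$. Combining this with the entrywise bound $|(\beta_i^{-1})_{ab}|\lesssim k/(n^2\rho_n)$, the moment conditions reduce to $n\rho_n\to\infty$.

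\textbf{Step 3 (rates and remainder).} This step shows $\sigma_2^2\asymp k^2/(n^3\rho_n^2)$ from below, using $a\rho_n\le P_{ij}$ from \cref{ass:asymptotic1} and \cref{ass:eigen-scaling}, so that every $o(\cdot)$ term above is negligible. I expect the main obstacle to be controlling the second-order terms $Q_i$ and $R_i$ at the precision $o(\sigma_2)\approx o(k n^{-3/2}\rho_n^{-1})$ when $n\rho_n$ is only slightly larger than $\log n$. Crude spectral bounds give only $\|R_i\|_F\lesssim k^{1/2}(n\rho_n)^{-3/2}$, which is insufficient. The approach I would try first is to split $\beta_i^{-1}E\beta^\perp E\beta_i^{-1}$ into its mean $\beta_i^{-1}\Diag(\Sigma\mathbf 1)\beta_i^{-1}$-type part, which yields the $\operatorname{Diag}(\Sigma^{(i)} d_i-\diag(\Sigma^{(i)})\circ d_i)$ correction in $\mu_2$, and a fluctuation part whose inner products with $L_i$ are bounded by second-moment computations. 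The higher-order terms would be handled using the leave-one-out $\ell_{2,\infty}$ bounds underlying \cref{thm:onesample}.
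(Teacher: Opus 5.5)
Your handling of the leading term follows the paper: Xia's expansion, the joint second-order form $2\|\beta^\perp(E^{(1)}\beta_1^{-1}-E^{(2)}\beta_2^{-1})\|_F^2$, dropping $\beta^\perp$ at the level of variances, the $G$-term from the cross-sample bilinear part, and a single martingale CLT over the entries of both networks. One small slip: $\langle L_1,L_2\rangle=\tr(\beta_1^{-1}E^{(1)}\beta^\perp E^{(2)}\beta_2^{-1})+\tr(\beta^\perp E^{(1)}\beta_1^{-1}\beta_2^{-1}E^{(2)})$, and the two traces coincide. Your display has an extra factor $2$, which would spoil the constant in front of $\langle G\circ G,\cdot\rangle$.

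The gap is in the remainder. Write $S_{i,l}$ for the degree-$l$ terms of the expansion, so $L_i=S_{i,1}$, $Q_i=S_{i,2}$ and $R_i=\sum_{l\ge3}S_{i,l}$. The bound $\|R_i\|_F=o_p(\sqrt k\,n^{-3/4}\rho_n^{-1/2})$ in your first paragraph is not available under \cref{ass:sparse}. The crude bound $\sqrt k\,(n\rho_n)^{-3/2}$ is of the right order in general; already the summand $\beta^\perp E^{(i)}\beta^\perp E^{(i)}\beta^\perp E^{(i)}\beta_i^{-3}$ of $S_{i,3}$ has this size. So your claim would need $n\rho_n\gg n^{1/4}$. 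Even if it held, Cauchy--Schwarz with $\|L_i\|_F\asymp\sqrt{k/(n\rho_n)}$ gives only $\langle L_i,R_i\rangle=o(k\,n^{-5/4}\rho_n^{-1})$. That is not $o(\sigma_2)$, since $\sigma_2\asymp k\,n^{-3/2}\rho_n^{-1}$.

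More fundamentally, the goal you set in Step 3 cannot be met: you cannot control $Q_i$ and $R_i$ to precision $o(\sigma_2)$. The degree-four part of $\|\hat\Pi_i-\Pi\|_F^2$ is in general of exact order $k/(n\rho_n)^2$. For instance, $\|\beta^\perp Q_i\beta^\perp\|_F^2=\|\beta^\perp E^{(i)}\beta_i^{-2}E^{(i)}\beta^\perp\|_F^2$ concentrates at that level. This exceeds $\sigma_2$ whenever $n\rho_n\ll\sqrt n$. No $\ell_{2,\infty}$ or leave-one-out bound can make such a term small; it has to go into the centering. That is exactly what the $O_p(k/(n^2\rho_n^2))$ slack in \eqref{param:1} is for.

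The missing step is to sort the remainder by degree and use that slack, as the paper does.

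\textbf{Degree at least four.} Every such term is bounded crudely by trace duality. This covers the within-sample terms ($\|Q_i\|_F^2$, $\langle L_i,R_i\rangle,\dots$, in total $-2\sum_{l\ge4}\langle\Pi,S_{i,l}\rangle$) and the cross-sample ones ($\langle Q_1,Q_2\rangle$, $\langle L_1,R_2\rangle,\dots$). For example, $|\langle\Pi,S_{i,l}\rangle|\le k\|S_{i,l}\|\lesssim k\,(C/(n\rho_n))^{l/2}$, and $|\langle S_{1,l_1},S_{2,l_2}\rangle|$ is bounded similarly using $\rank S_{i,l}\le 4^lk$. These sum to $O_p(k/(n^2\rho_n^2))$ and are absorbed into $\tilde\mu_2$.

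\textbf{Degree three.} These are the terms that genuinely need work: $2\langle L_i,Q_i\rangle=-2\langle\Pi,S_{i,3}\rangle$ and the cross-sample $\langle L_1,Q_2\rangle$, $\langle Q_1,L_2\rangle$. Crude bounds give only $k/(n\rho_n)^{3/2}$ for them, which exceeds both $\sigma_2$ and the slack. For these one shows:
\begin{itemize}
\item the mean is $o(k/(n^2\rho_n^2))$, and exactly zero for the cross-sample ones by independence and $\E L_i=0$;
\item the variance is of order $k^2/(n^4\rho_n^3)=o(\sigma_2^2)$;
\end{itemize}
and then concludes with Chebyshev.

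Your Step 1 assertion about the cross terms involving $Q_i$ is true, but it rests on exactly this second-moment computation, which you have not supplied. Once it is in place, no leave-one-out machinery is needed.
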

\begin{proof}
    See \cref{sec:mainproof}.
\end{proof}
The residual contributions \(O_p\bigl(\tfrac{k}{n^{2}\rho_n^{2}}\bigr)\) and 
\(o\bigl(\tfrac{k^{2}}{n^{3}\rho_n^{2}}\bigr)\) appearing in \cref{param:1} and 
\cref{param:2} arise from the accumulation of all higher–order terms in the expansion underlying our statistic.  
These terms remain controlled under the asymptotic regime of 
\cref{ass:asymptotic1,ass:sparse}, but they do not vanish automatically when $n\rho_n \gg \log n$.  
In particular, when the graph is moderately sparse, their magnitude is 
asymptotically negligible relative to the leading terms, but can still influence the centering and scaling terms. If, however, we strengthen the sparsity condition to a denser regime, then all 
such residual higher–order terms become negligible.  
This phenomenon is rigorously verified in \cref{thm:twosample_dense}, and motivates the additional assumption stated below.

\newcounter{saveAssumptionCount}
\setcounter{saveAssumptionCount}{\value{assumption}}

\setcounter{assumption}{0}

\begingroup
\renewcommand{\theassumption}{2$'$}

\renewcommand{\theHassumption}{1prime} 

\begin{assumption}[Dense Regime]\label{ass:dense}
The sparsity parameter satisfies
\[
n\rho_n \gg \sqrt n .
\]
\end{assumption}
\endgroup

\setcounter{assumption}{\value{saveAssumptionCount}}

\begin{theorem}\label{thm:twosample_dense}
Suppose \cref{ass:asymptotic1,ass:dense,ass:eigen-scaling,ass:incoherence} hold.  
Then, under the null as defined in \eqref{hypo2},
\begin{equation}\label{eq:clt2_dense}
\frac{T_n -  \mu_{2}}{ \sigma_2}
\xrightarrow{D} \mathcal{N}(0,1),
\end{equation}
where the asymptotic centering and scaling terms are given by
\begin{align}
 \mu_{2}
&= 2\sum_{i=1}^{2}\,
   \tr\Bigl[\beta_i^{-2}\bigl({\Sigma^{(i)}}\circ\beta_i^\perp 
   + \Diag\bigl({\Sigma^{(i)}}\cdot d_i - \diag({\Sigma^{(i)}})\circ d_i\bigr)\bigr)\Bigr],\\[4pt]
 \sigma_2^2
&= \sum_{i=1}^{2} \Bigl(
      8\left\langle (J_n-I_n)\circ(\beta_i^{-2}\circ\beta_i^{-2}),{\Sigma^{(i)}}^{2}\right\rangle
      + 4\,\alpha_i\t K^{(i)} \alpha_i
    \Bigr) \\
&\qquad\qquad
    + 4\Bigl\langle G\circ G,
       {\Sigma^{(1)}}\t{\Sigma^{(2)}} + (J_n-I_n)\circ({\Sigma^{(1)}}\circ {\Sigma^{(2)}})
      \Bigr\rangle ,
\end{align}
where all notation coincides with that of \cref{thm:twosample}.
\end{theorem}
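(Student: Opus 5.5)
Under the dense regime, \cref{thm:twosample_dense} should follow from \cref{thm:twosample}, which applies because \cref{ass:dense} implies \cref{ass:sparse}: $n\rho_n \gg \sqrt n \gg \log n$. It then suffices to show that replacing $(\tilde\mu_2,\tilde\sigma_2)$ by $(\mu_2,\sigma_2)$ changes the standardized statistic by $o_p(1)$, and to apply Slutsky's lemma.

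Write
\[
\frac{T_n-\mu_2}{\sigma_2}=\frac{\tilde\sigma_2}{\sigma_2}\cdot\frac{T_n-\tilde\mu_2}{\tilde\sigma_2}+\frac{\tilde\mu_2-\mu_2}{\sigma_2}.
\]
The key preliminary step is a lower bound on the order of $\sigma_2^2$. I would show $\sigma_2^2 \gtrsim k^2/(n^3\rho_n^2)$, possibly up to a $k$-dependent constant, which is harmless since $k$ is fixed. For this I use only the first summand $8\langle (J_n-I_n)\circ(\beta_i^{-2}\circ\beta_i^{-2}),(\Sigma^{(i)})^2\rangle$.

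1. By \cref{ass:asymptotic1}, $\Sigma^{(i)}_{jl}\asymp\rho_n$ entrywise, so $((\Sigma^{(i)})^2)_{jl}\asymp n\rho_n^2$ entrywise.
2. Since $\beta_i^{-2}=V^{(i)}(\Lambda^{(i)})^{-2}{V^{(i)}}\t$, \cref{ass:eigen-scaling} gives $\sum_{j\ne l}(\beta_i^{-2})_{jl}^2\asymp \|\beta_i^{-2}\|_F^2\asymp k/(n\rho_n)^4$.
3. The diagonal is negligible: by \cref{ass:incoherence} each diagonal entry is $O(k/(n^3\rho_n^2))$, so their squares sum to $O(k^2/(n^5\rho_n^4))$, which is of smaller order than the full sum.
4. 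Multiplying, this summand is $\asymp n\rho_n^2\cdot k/(n\rho_n)^4 = k/(n^3\rho_n^2)$.

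The other terms need a sign check. The cross term $\langle G\circ G,\cdot\rangle$ is nonnegative because every entry is nonnegative. For $\alpha_i\t K^{(i)}\alpha_i$, once $\rho_n<1/(2b)$ we have $K^{(i)}\ge 0$ entrywise, and $\alpha_i\ge 0$. So $\sigma_2^2$ is bounded below by the first summand.

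With this in hand, \cref{param:2} gives $\tilde\sigma_2^2/\sigma_2^2 = 1+o(1)$, hence $\tilde\sigma_2/\sigma_2\to1$.

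The centering is the only point where \cref{ass:dense} is used. By \cref{param:1}, $|\tilde\mu_2-\mu_2|=O_p(k/(n^2\rho_n^2))$, so
\[
\frac{|\tilde\mu_2-\mu_2|}{\sigma_2}=O_p\!\left(\frac{k}{n^2\rho_n^2}\cdot\frac{n^{3/2}\rho_n}{k^{1/2}}\right)=O_p\!\left(\frac{\sqrt{k}}{\sqrt n\,\rho_n}\right)=O_p\!\left(\frac{\sqrt{k}\sqrt n}{n\rho_n}\right),
\]
which is $o_p(1)$ exactly when $n\rho_n\gg\sqrt n$. Slutsky's lemma then gives \eqref{eq:clt2_dense}.

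The main obstacle is verifying that the $O_p(k/(n^2\rho_n^2))$ residual in \cref{param:1} really is a deterministic-or-$O_p$ remainder in the proof of \cref{thm:twosample}, for example the bias from second-order terms such as $\beta^{-1}E\beta^{-1}E$-type traces. If instead some of it is a random fluctuation already absorbed in the CLT, the argument is unaffected. If that remainder is not uniformly of this order, I would return to the expansion in \cref{sec:mainproof} and bound the higher-order trace terms directly using $\|E\|\lesssim\sqrt{n\rho_n}$ and the incoherence of $V^{(i)}$.
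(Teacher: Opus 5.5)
Your argument is correct and is essentially the paper's proof: apply \cref{thm:twosample} (which holds because \cref{ass:dense} implies \cref{ass:sparse}), use that $\sigma_2^2$ is of order $k^2/(n^3\rho_n^2)$ up to $k$-dependent constants so that $\tilde\sigma_2/\sigma_2\to 1$ and the $O_p(k/(n^2\rho_n^2))$ centering remainder is $o_p(\sigma_2)$ precisely because $n\rho_n\gg\sqrt n$, and conclude by Slutsky. The only difference is that the paper takes the variance order from \cref{lem:second-order2}, whereas you derive the lower bound directly from the formula for $\sigma_2^2$ using the nonnegativity of the $\alpha_i^{\top}K^{(i)}\alpha_i$ and $G\circ G$ terms; this is a valid self-contained substitute.
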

\begin{proof}
See \cref{thm:twosample_densepf}.
\end{proof}

\noindent
We emphasize that the validity of our two–sample limit theorem does not require \cref{ass:dense}; \cref{thm:twosample} already establishes the asymptotic 
normality under the weaker sparsity conditions in \cref{ass:sparse}.  
The role of \cref{ass:dense} is instead methodological: in the dense regime, the 
dominant components in \cref{param:1} and \cref{param:2} can be estimated with 
sufficient accuracy so that the resulting data–driven test remains consistent.  
This observation motivates the construction of the estimators in the next 
section, which are designed specifically to recover the leading terms appearing 
in the asymptotic centering and scaling expressions.


\subsection{Data-Driven Estimators}
\label{sec:data-driven}

To carry out valid inference using our proposed two-sample test,  
we require consistent, data-driven estimators of the dominant mean and variance
terms \(\mu_{2}\) and \(\sigma_{2}\) appearing in the Gaussian limits established in
\cref{thm:twosample_dense}. For the plug-in procedure to remain valid, the estimators \(\hat{\mu}_2\) and \(\hat{\sigma}_2\) must satisfy: 
\begin{align} \label{rates}
    \frac{\hat{\mu}_2 - \mu_2}{\hat{\sigma}_2} \xrightarrow{p} 0, \quad \text{and} \quad \frac{\hat{\sigma}_2}{\sigma_2} \xrightarrow{p} 1.
\end{align}
These requirements ensure that the plug-in centering and scaling do not distort
the limiting distribution, thereby yielding a fully data-driven inference
procedure whose Type-I error and power remain asymptotically valid.

Below we describe a principled plug-in procedure that satisfies \cref{rates}.  The parameters \(\mu_2\) and \(\sigma_2\) are functions of the population eigen-structure \(({V^{(i)}},\Lambda^{(i)})\) for \(i=1,2\).  Since \(({V^{(i)}},\Lambda^{(i)})\) are not observed, we estimate them from the data; for the eigenvectors we use the empirical eigenvectors \(\hat V^{(i)}\), while for the eigenvalues we use the empirical eigenvalues.

Motivated by the explicit expressions for \(\mu_{2}\) and \(\sigma_{2}\) in \cref{thm:twosample_dense}, we next define the following data-driven estimators by plug-in:
\begin{equation}\label{eq:estimators}
\begin{aligned}
\hat{\mu}_2 
&= 2\sum_{i=1}^{2}\,\tr\Bigl[\hat{\beta}_i^{-2}\Bigl((\hat{\Sigma}^{(i)})\circ \hat\beta_i^\perp 
      + \Diag\bigl((\hat{\Sigma}^{(i)})\cdot \hat d_i \bigr)\Bigr)\Bigr], \\[4pt]
\hat{\sigma}^2_2 
&= \sum_{i=1}^{2} \left(8 \bigl\langle (J_n - I_n)\circ(\hat{\beta}_i^{-2}\circ\hat{\beta}_i^{-2}),(\hat{\Sigma}^{(i)})^2\bigr\rangle
      + 4 \, \hat{\alpha}_i\t {\hat{K}^{(i)}}\,\hat{\alpha}_i \right)\\[3pt]
&\qquad\qquad+ 4 \bigl\langle \hat{G}\circ\hat{G},
      (\hat{\Sigma}^{(1)})\t(\hat{\Sigma}^{(2)}) + (J_n-I_n)\circ((\hat{\Sigma}^{(1)})\circ(\hat{\Sigma}^{(2)}))\bigr\rangle,
\end{aligned}
\end{equation}
where, for \(i=1,2\),
\begin{equation}
\begin{aligned}\label{thm3.0}
    {\hat{P}^{(i)}} = (\hat V^{(i)}) (\hat{\Lambda}^{(i)}) (\hat V^{(i)})\t, 
\qquad
\hat{\beta}_i^k = (\hat V^{(i)}) (\hat{\Lambda}^{(i)})^k (\hat V^{(i)})\t,
\qquad
\hat\beta_i^\perp = I_n - (\hat V^{(i)})(\hat V^{(i)})\t, 
\qquad
\hat d_i = \diag(\hat\beta_i^\perp),\\
\hat{\Sigma}^{(i)} = {\hat{P}^{(i)}}\circ(J_n-{\hat{P}^{(i)}}),
\qquad 
{\hat{K}^{(i)}} = 2\,(\hat{\Sigma}^{(i)})\circ(J_n - 2{\hat{P}^{(i)}}),
\qquad 
\hat{\alpha}_i = \diag(\hat{\beta}_i^{-2}), \quad \hat{G} = \hat{\beta}_1^{-1}\hat{\beta}_2^{-1} + \hat{\beta}_2^{-1}\hat{\beta}_1^{-1}.
\end{aligned}
\end{equation}
\cref{thm:2sampleconsis} establishes the consistency properties of the estimators \(\hat{\mu}_{2}\) and \(\hat{\sigma}_{2}\). 
\begin{theorem}[Mean and Variance consistency]\label{thm:2sampleconsis}
Let \(\mu_2\) and \(\sigma_2^2\) denote the theoretical mean and variance of the two-sample test statistic as defined in \cref{thm:twosample}.
Define the plug-in mean estimator \(\hat\mu_2\) and variance estimator \(\hat\sigma_2^2\) as in \cref{eq:estimators}.  Then under \cref{ass:asymptotic1,ass:dense,ass:eigen-scaling,ass:incoherence},
\[
|\hat{\mu}_2 - \mu_2| =O_p \left(\max \left(\frac{k}{n^2 \rho_n^2},\frac{k \sqrt{\log n}}{n^2 \rho_n^{3/2}}\right)\right),
\qquad
|\hat{\sigma}_2^2 - \sigma_2^2| = o_p \left(\frac{k^2}{n^3 \rho_n^2}\right).
\]
In particular, \(
\frac{\hat{\mu}_2 - \mu_2}{\hat{\sigma}_2} \xrightarrow{p} 0,
\) and \(
\frac{\hat{\sigma}_2}{\sigma_2} \xrightarrow{p} 1
\) as required.
\end{theorem}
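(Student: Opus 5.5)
We prove the two bounds separately. Both rest on a few standard perturbation facts for each sample $i$, all holding with probability $1-O(n^{-c})$ under \cref{ass:asymptotic1,ass:dense,ass:eigen-scaling,ass:incoherence}. First, $\|A^{(i)}-P^{(i)}\|\lesssim\sqrt{n\rho_n}$. Second, $|\hat\lambda_j-\lambda_j|\lesssim \sqrt{\log n}+\sqrt{\rho_n}$, or the cruder bound $\lesssim\sqrt{n\rho_n}$. Third, the $\ell_{2,\infty}$ eigenvector bound $\|\hat V^{(i)}W-V^{(i)}\|_{2,\infty}\lesssim \sqrt{k\log n}/(n\sqrt{\rho_n})$, where $W$ is the orthogonal Procrustes alignment; this also gives incoherence of $\hat V^{(i)}$. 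Fourth, the spectral bound $\|\hat V^{(i)}\hat V^{(i)\top}-V^{(i)}V^{(i)\top}\|\lesssim 1/\sqrt{n\rho_n}$. From these we obtain entrywise bounds:
\[
\|\hat P^{(i)}-P^{(i)}\|_\infty\lesssim \sqrt{\rho_n\log n/n},
\qquad
\|\hat\beta_i^{-1}-\beta_i^{-1}\|_\infty\lesssim \frac{\sqrt{\log n}}{n^2\rho_n^{3/2}},
\]
and analogously $\|\hat\beta_i^{-2}-\beta_i^{-2}\|_\infty$ is of relative order $\sqrt{\log n/(n\rho_n)}$ compared with $\|\beta_i^{-2}\|_\infty\asymp 1/(n^3\rho_n^2)$. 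These follow by writing $\hat V\hat\Lambda^{k}\hat V^\top-V\Lambda^kV^\top$ as a telescoping sum and using $\|\hat\Lambda^{k}-W^\top\Lambda^kW\|\lesssim (n\rho_n)^{k-1}\sqrt{\log n}$, where we use that $\hat\Lambda$ and $W^\top\Lambda W$ nearly commute. As a consequence $\|\hat\Sigma^{(i)}-\Sigma^{(i)}\|_\infty\lesssim\sqrt{\rho_n\log n/n}$, with relative error $\sqrt{\log n/(n\rho_n)}$, and the same holds for $\hat K^{(i)}$.

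\emph{Mean.} Split $\hat\mu_2-\mu_2$ into three parts. The first is the dropped term $\tr[\beta_i^{-2}\Diag(\diag(\Sigma^{(i)})\circ d_i)]$, which is at most $n\cdot\|\beta^{-2}\|_\infty\cdot\rho_n\lesssim k/(n^2\rho_n)$ in crude form. Using $\sum_j(\beta_i^{-2})_{jj}\le k/(n\rho_n)^2$ this refines to $O(k\rho_n/(n^2\rho_n^2))=O(k/(n^2\rho_n))$, which is $\le k/(n^2\rho_n^2)$. The second is the term $\tr[\beta^{-2}(\Sigma\circ\beta^\perp)]$. Write it as $\tr[\beta^{-2}\Sigma]-\tr[\beta^{-2}(\Sigma\circ VV^\top)]$; the first piece equals $\sum_{j}\lambda_j^{-2}v_j^\top\Sigma v_j$. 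Perturbing each factor, the error in $v_j^\top\Sigma v_j$ from replacing $\Sigma$ by $\hat\Sigma$ is at most $\|\hat\Sigma-\Sigma\|\le n\|\hat\Sigma-\Sigma\|_\infty\lesssim\sqrt{n\rho_n\log n}$. Multiplying by $\lambda^{-2}$ gives $k\sqrt{\log n}/(n^2\rho_n^{3/2})$, which matches the target rate. The errors from $\hat v_j$ and $\hat\lambda_j$ contribute relative error $\sqrt{\log n/(n\rho_n)}$ times the leading term $k/(n\rho_n)$ in the crude version. That is too large, so here we must exploit the first-order cancellation: $v_j^\top\Sigma(\hat v_j-v_j)$ is, to leading order, linear in the noise $E=A-P$ with vanishing mean. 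Its size is therefore $\sqrt{\rho_n}\cdot\|\Sigma\|_F$-type over $\lambda$, giving $k\sqrt{\log n}/(n^2\rho_n^{3/2})$. The third part is the Diag term $\tr[\beta^{-2}\Diag(\Sigma d)]=\sum_l(\beta^{-2})_{ll}(\Sigma d)_l$. Its error follows from the entrywise bounds together with $\|\hat d-d\|_\infty\lesssim k\sqrt{\log n}/(n\sqrt{n\rho_n})$ from the $\ell_{2,\infty}$ bound, and yields the same rate. We expect the second part, the eigenvector-perturbation contribution to the trace term, to be the main obstacle. I would first try a first-order expansion $\hat VW-V=EV\Lambda^{-1}+R$ with $\|R\|_{2,\infty}$ of order $\sqrt{k}\log n/(n\sqrt{n}\rho_n)$, and bound the linear piece by a Bernstein inequality.

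\emph{Variance.} Each term in $\sigma_2^2$ is a multilinear form in entrywise-bounded matrices of order $k^2/(n^3\rho_n^2)$; see the formula in \cref{thm:twosample}. Replace factors one at a time. Each replacement costs a relative error of $\sqrt{\log n/(n\rho_n)}=o(1)$ in $\ell_\infty$, and by the triangle inequality (for example $|\langle X\circ X,Y\rangle-\langle\hat X\circ\hat X,\hat Y\rangle|\le n^2(\cdots)$) this yields $o(k^2/(n^3\rho_n^2))$. For $G$ and $\Sigma^{(1)\top}\Sigma^{(2)}$ we bound matrix products via $\|XY\|_\infty\le n\|X\|_\infty\|Y\|_\infty$, which is sharp enough here. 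Finally, note that $\sigma_2^2\gtrsim k^2/(n^3\rho_n^2)$, because its first term is a sum of $n^2$ nonnegative entries each of order $\rho_n^2/(n\rho_n)^4\cdot k^2/n^2$ by incoherence and \cref{ass:eigen-scaling}. This gives $\hat\sigma_2/\sigma_2\to1$. Then $|\hat\mu_2-\mu_2|/\sigma_2\lesssim \max(n^{-1/2}\rho_n^{-1},\sqrt{\log n}\,n^{-1/2}\rho_n^{-1/2})\to0$ under \cref{ass:dense}, which completes the proof.
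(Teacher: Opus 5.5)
There is a genuine gap in the mean part. It comes from misplacing where the difficulty lies.

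First, your decomposition of the second part is algebraically wrong. Since $\Sigma\circ I_n=\Diag(\Sigma)$ (the Hadamard identity is $J_n$, not $I_n$), one has $\Sigma\circ\beta^\perp=\Diag(\Sigma)-\Sigma\circ VV^{\top}$, not $\Sigma-\Sigma\circ VV^{\top}$. Consequently $|\tr[\beta^{-2}(\Sigma\circ\beta^\perp)]|\le\rho_n\tr(\beta^{-2})+\|\beta^{-2}\|_F\,\|\Sigma\circ VV^{\top}\|_F\lesssim k/(n^2\rho_n)$ deterministically. The hatted version obeys the same bound by incoherence of $\hat V$, and the paper bounds the off-diagonal piece exactly this way in \cref{eq:ab1}. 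So the term you single out as the main obstacle needs no perturbation analysis, and the ``leading term $k/(n\rho_n)$'' you are fighting there does not exist. Even within your decomposition there is an arithmetic slip: $\lambda^{-2}\sqrt{n\rho_n\log n}=\sqrt{\log n}/(n\rho_n)^{3/2}$, not $\sqrt{\log n}/(n^2\rho_n^{3/2})$. So the operator-norm bound on $\hat\Sigma-\Sigma$ already loses a factor $\sqrt n$.

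The genuinely leading term is the one you dismiss: $2\tr[\beta^{-2}\Diag(\Sigma d)]=2\sum_l(\beta^{-2})_{ll}(\Sigma d)_l\asymp k/(n\rho_n)$. Suppose you sum entrywise relative errors of order $\sqrt{\log n/(n\rho_n)}$ in $(\hat\beta^{-2})_{ll}$ and $(\hat\Sigma\mathbf 1_n)_l$ over $l$ by the triangle inequality. This gives only $k\sqrt{\log n}/(n\rho_n)^{3/2}$, which exceeds the claimed rate by $\sqrt n$. It is not even $o(\sigma_2)$: since $\sigma_2\asymp k/(n^{3/2}\rho_n)$, the ratio is $\sqrt{\log n/\rho_n}\to\infty$. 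Sharper entrywise bounds cannot rescue this. For an SBM, $\mathbf 1_n\in\col(V)$ gives $T_1(X)\mathbf 1_n=X\mathbf 1_n$, so each $((\hat P-P)\mathbf 1_n)_l$ really fluctuates at scale $\sqrt{n\rho_n}$.

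The missing idea is cancellation across $l$. Isolate the parts of $\tr(\beta^{-2}\Diag((\hat\Sigma-\Sigma)\mathbf 1_n))$ and $\tr((\hat\beta^{-2}-\beta^{-2})\Diag(\hat\Sigma\mathbf 1_n))$ that are linear in $X$, e.g.\ $\sum_l(\beta^{-2})_{ll}(XVV^{\top}\mathbf 1_n)_l$ and $\tr(\beta^{-3}X\beta^\perp\Diag(\Sigma\mathbf 1_n))$. These are mean-zero forms $\sum_{i\le j}w_{ij}X_{ij}$, which Bernstein's inequality places at scale $k\sqrt{\log n}/(n^2\rho_n^{3/2})$. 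Then bound the terms of degree at least two in the series expansions of $\hat P-P$ (\cref{lem:series-expansion}) and of $\hat\beta^{-2}-\beta^{-2}$ in operator norm, giving $k/(n^2\rho_n^2)$. This is exactly the content of \cref{inter1,inter3}. Your plan to apply Bernstein to a first-order expansion is the right tool, but it is aimed at the wrong term.

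The variance part, by contrast, is a workable alternative to the paper's Frobenius/Cauchy--Schwarz route through \cref{inter2,inter4}. Entrywise replacement suffices there, because the sum of absolute values of each multilinear form in $\sigma_2^2$ is itself $O(k^2/(n^3\rho_n^2))$. One fix is needed. Incoherence only bounds entries of $\beta^{-2}$ from above, so it cannot give the lower bound on $\sigma_2^2$. Instead use $(\Sigma^2)_{ij}\gtrsim n\rho_n^2$ together with $\sum_{i\ne j}(\beta^{-2})_{ij}^2\ge\|\beta^{-2}\|_F^2-\sum_i(\beta^{-2})_{ii}^2\gtrsim k/(n\rho_n)^4$, as the paper does.
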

\begin{proof}
    See \cref{sec:estproof}.
\end{proof}
\begin{remark}[Proof Strategy]
The proof of \cref{thm:2sampleconsis} proceeds via a systematic decomposition: (i) expanding the plug-in estimation error into sums involving deviations of empirical eigenvectors and eigenvalues from their population counterparts; (ii) controlling the empirical eigenvector and eigenvalue fluctuations; and (iii) verifying that all remainder terms are asymptotically negligible in the dense regime.
While this high-level roadmap is standard, the novelty of our analysis lies in the intricate handling of the higher-order terms required by the Bernoulli noise model. 
Unlike homoskedastic Gaussian settings where perturbation terms are often amenable to simpler bounds, the adjacency matrix entails heteroskedasticity and discreteness. 
Our analysis necessitates a delicate higher-order expansion to ensure that the complex variance structure of the Bernoulli entries does not invalidate the consistency rates.
Crucially, despite these theoretical challenges, the resulting estimators remain algorithmically attractive: they are closed-form functions of the observed empirical eigen-structure, permitting valid inference without requiring complex auxiliary optimization or regularization.
\end{remark}
\noindent
Our full testing procedure is summarized in \cref{alg:hypo_test}.

\begin{algorithm}[H]
\caption{Hypothesis Test}\label{alg:hypo_test}
\begin{algorithmic}[1]
  \Require Two adjacency matrices \( A^{(1)}, A^{(2)} \in \{0,1\}^{n \times n} \); significance level \(\alpha>0\); rank of \(P^{(i)} = k\)
  \Ensure Decision regarding hypothesis \eqref{hypo2} (``Accept'' or ``Reject'')
  
  \State \textbf{Step 1 (Estimators).}  
  Compute the estimators \(\widehat{\mu}_{2}\) and \(\widehat{\sigma}_{2}\) defined in \cref{eq:estimators}, where \(\hat{V}^{(i)}\) denotes the matrix of eigenvectors corresponding to the \(k\) eigenvalues of \(A_{i}\) with the largest absolute values, and \(\hat{\Lambda}_{i}\) is the diagonal matrix containing these \(k\) eigenvalues.
  
  \State \textbf{Step 2 (Test statistic).}  
  Compute
  \[
    Z
    =
    \frac{\bigl\|(\widehat{V}^{(1)})(\widehat{V}^{(1)})^{\top}
           -(\widehat{V}^{(2)})(\widehat{V}^{(2)})^{\top}\bigr\|_{F}^{2}
          -\widehat{\mu}_{2}}
         {\widehat{\sigma}_{2}}.
  \]
  
  \State \textbf{Step 3 (Decision).}  
  Reject the null hypothesis in \eqref{hypo2} if
  \[
    \bigl|\Phi^{-1}(Z)\bigr| > \frac{\alpha}{2}.
  \]
  
\end{algorithmic}
\end{algorithm}

\subsection{Test Consistency and Application to Blockmodel Families} \label{TestConsistency}
Under the null hypothesis $H_0: {V^{(1)}} {V^{(1)}}\t= {V^{(2)}} {V^{(2)}}\t$, \cref{thm:twosample_dense} guarantees that the test statistic is asymptotically standard normal, and the estimator consistency results in the previous section guarantee asymptotic Type I error control at any desired level $\alpha$.  Therefore, in this section we consider the consistency of our full testing procedure under general alternatives and the important case of blockmodel families.   

We study the local alternative hypothesis $H_1:{V^{(1)}} {V^{(1)}}\t \neq {V^{(2)}} {V^{(2)}}\t$. The following result establishes the power of our test under local alternatives.  
\begin{theorem}\label{thm:testcon}
Suppose \cref{ass:asymptotic1,ass:eigen-scaling,ass:incoherence,ass:dense} hold. Let the two–sample test statistic be as in \cref{2sampleTS} and consider the testing problem stated in \eqref{hypo2}. If the alternative satisfies the condition
\begin{equation}\label{eq:lowerbound_Frobenius}
\|{V^{(1)}}{V^{(1)}}\t - {V^{(2)}}{V^{(2)}}\t\|_F^2 \gg \frac{k}{n^{3/2}\,\rho_n},
\end{equation}
then the test is consistent in the sense that the power converges to one as \(n\to\infty\). 
\end{theorem}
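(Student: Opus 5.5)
Under the alternative, write $\Pi_i = V^{(i)}V^{(i)\top}$, $\hat\Pi_i=\hat V^{(i)}\hat V^{(i)\top}$, $E_i=\hat\Pi_i-\Pi_i$ and $D=\Pi_1-\Pi_2$. Expanding the square gives
\[
T_n = \|D\|_F^2 + 2\langle D, E_1-E_2\rangle + \|E_1-E_2\|_F^2 .
\]
The plan is to show that, after subtracting $\hat\mu_2$ and dividing by $\hat\sigma_2$, the first term dominates and diverges. The test rejects when $|Z|$ exceeds the normal critical value $z_{1-\alpha/2}$, so it suffices to show $Z\to\infty$ in probability.

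\textbf{Step 1: the noise term.} The term $\|E_1-E_2\|_F^2$ has the same structure as the null statistic. The expansion behind \cref{thm:twosample_dense} and the one-sample result \cref{thm:onesample} only use the one-sample expansions of $\hat\Pi_i-\Pi_i$, which hold for each network separately whether or not $H_0$ holds. Hence $\|E_1\|_F^2+\|E_2\|_F^2$ is $\mu_{2}$ (computed with each network's own parameters) plus $O_p(\sigma)$. Here $\sigma \asymp k/(n^{3/2}\rho_n)$, using \cref{ass:eigen-scaling}, \cref{ass:incoherence} and $\Sigma^{(i)}\asymp \rho_n$. The cross term is $2\langle E_1,E_2\rangle$. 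It is controlled by independence of $A^{(1)}$ and $A^{(2)}$ and the first-order approximation $E_i\approx \beta_i^{-1}W_i\Pi_i^\perp+\Pi_i^\perp W_i\beta_i^{-1}$ with $W_i=A^{(i)}-P^{(i)}$. This cross term has mean of lower order and standard deviation $O(\sigma)$; that is exactly the $G\circ G$ contribution already present in $\sigma_2^2$. So $\|E_1-E_2\|_F^2-\mu_2=O_p(\sigma)$.

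\textbf{Step 2: the linear term.} The linear term $\langle D,E_1-E_2\rangle$ is the main obstacle, because it can be as large as $\|D\|_F$ times the fluctuation scale. For the first-order part of $E_i$, the quantity $\langle D, \beta_i^{-1}W_i\Pi_i^\perp\rangle$ is a weighted sum of independent centred Bernoullis. Its variance is bounded by $\rho_n\|\Pi_i^\perp D\beta_i^{-1}\|_F^2\lesssim \rho_n\|D\|_F^2/(n\rho_n)^2$, so it is $O_p(\|D\|_F/(n\sqrt{\rho_n}))$. The second-order part of $E_i$ (quadratic in $W_i$) has expectation of order $k/(n\rho_n)$ in Frobenius norm, but its trace against $D$ is bounded by Cauchy–Schwarz as $\|D\|_F\cdot O_p(\sqrt{k}/(n\rho_n))$. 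That bound is not good enough in the moderately dense regime. I would therefore use the refined expansion from the proof of \cref{thm:onesample}, where the second-order terms are $\Pi_i^\perp W_i\beta_i^{-2}W_i\Pi_i^\perp$ minus $\Pi_iW_i\beta_i^{-2}W_i\Pi_i$. I would then bound their inner product with $D$ via the $\ell_{2,\infty}$ and incoherence bounds, giving $O_p(\|D\|_F\, k/(n^{3/2}\rho_n))$-type control, up to $\log$ factors absorbed by \cref{ass:dense}.

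\textbf{Step 3: conclusion.} Write $x=\|D\|_F^2$. The linear term is $O_p(\sqrt{x}\,(1/(n\sqrt{\rho_n})+\sqrt{\sigma}))$. Since $x\gg\sigma$, we have $\sqrt{x\sigma}=o(x)$. Also $\sqrt{x}/(n\sqrt{\rho_n})=o(x)$ whenever $x\gg 1/(n^2\rho_n)$, and this holds because $1/(n^2\rho_n)\ll k/(n^{3/2}\rho_n)$. Thus $T_n-\mu_2=x(1+o_p(1))$.

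It remains to replace $\mu_2,\sigma_2$ by their estimates. The plug-in estimators in \cref{eq:estimators} depend only on each network's own eigenstructure. The consistency proof of \cref{thm:2sampleconsis} therefore applies unchanged for the per-network parts. The only joint piece is $\hat G$, which is still consistent for $G$ at the same rate and does not require $H_0$. This gives $\hat\sigma_2\asymp\sigma$ and $|\hat\mu_2-\mu_2|=o_p(\sigma)$. Consequently
\[
Z=\frac{T_n-\hat\mu_2}{\hat\sigma_2}\asymp \frac{x}{\sigma}\to\infty
\]
by \eqref{eq:lowerbound_Frobenius}, so the power tends to one.
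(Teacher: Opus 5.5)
Your proposal is correct and follows essentially the paper's argument: the same expansion $T_n=\|D\|_F^2+2L(\Delta)+\|E_1\|_F^2+\|E_2\|_F^2-2\langle E_1,E_2\rangle$, per-network one-sample control of the self-noise terms, independence for the cross term, a variance (the paper uses Bernstein) bound on the first-order part of $L(\Delta)$, and normalization by $\hat\sigma_2\asymp k/(n^{3/2}\rho_n)$. The refined incoherence argument in your Step 2 can be dropped: the paper bounds the $l\ge 2$ part of $L(\Delta)$ by plain Cauchy--Schwarz, $\|D\|_F\sqrt{k}/(n\rho_n)$, and this is already $o(\|D\|_F^2)$ (indeed inside your own $\sqrt{x\sigma}$ envelope) because $k/(n\rho_n)^2\ll k/(n^{3/2}\rho_n)$ under \cref{ass:dense}, so your worry that this bound is too weak is unfounded.
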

\begin{proof}
    See \cref{sec:testconpf}.  
\end{proof}

We now proceed to investigate what this separation condition entails in the context of specific blockmodel families. Our first result establishes the local power for SBM as in \cref{def:SBM}.

\begin{corollary} \label{cor:SBM}
Let \({A^{(i)}}\), \(i = 1, 2\),  denote the adjacency matrices of two SBMs with $k$ communities and balanced community sizes satisfying \cref{def:balanced}. Suppose that \(P^{(i)} = {Z^{(i)}} B^{(i)} {Z^{(i)}}\t\) for \(i = 1, 2\), where \({Z^{(i)}} \in \{0,1\}^{n \times k}\) is the community membership matrix, and \(B^{(i)}\) is the corresponding block matrix under model \(i\). Let \({Z_{\cdot l}^{(i)}} \in \{0,1\}^n\) be as in \cref{def:SBM}. Then, under \cref{ass:asymptotic1,ass:dense}, the proposed two-sample test statistic in \eqref{2sampleTS} has power converging to one for testing
\[
H_0 : Z^{(1)} \perpeq Z^{(2)} \quad \text{versus} \quad H_1 : Z^{(1)} \notperpeq Z^{(2)},
\]
provided that the following condition holds:
\( \quad
\max_{1 \leq l \leq k} \sum_{j=1}^{n} \left| Z_{jl}^{(1)} - Z_{jl}^{(2)} \right| \gg \frac{k}{n^{1/2}\rho_n}.
\)
\end{corollary}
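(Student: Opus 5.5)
We reduce the corollary to \cref{thm:testcon}. Under \cref{def:balanced}, \cref{ass:asymptotic1,ass:dense}, \cref{prop:eig-scaling-sbm-mmsbm} gives \cref{ass:eigen-scaling}, and \cref{lem:balanced-incoherence} gives \cref{ass:incoherence}, for both networks. Here we take $B^{(i)}=\rho_n\widetilde B^{(i)}$ with $\widetilde B^{(i)}$ full rank and with bounded, bounded-away-from-zero eigenvalues, as implicitly assumed. So it suffices to show that the stated membership discrepancy forces
\[
\|V^{(1)}{V^{(1)}}\t - V^{(2)}{V^{(2)}}\t\|_F^2 \gg \frac{k}{n^{3/2}\rho_n}.
\]

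\textbf{Step 1: identify the projections.} For an SBM with full-rank $B$, $\col(P^{(i)})=\col(Z^{(i)})$. Hence
\[
V^{(i)}{V^{(i)}}\t = Z^{(i)}({Z^{(i)}}\t Z^{(i)})^{-1}{Z^{(i)}}\t =: \Pi_i,
\]
which is block-constant: $(\Pi_i)_{jm}=1/n^{(i)}_l$ if $j,m$ both lie in community $l$, and $0$ otherwise. Therefore $\|\Pi_1-\Pi_2\|_F^2 = 2k - 2\tr(\Pi_1\Pi_2)$. Writing $N_{ab}=|C^{(1)}_a\cap C^{(2)}_b|$, we get
\[
\tr(\Pi_1\Pi_2)=\sum_{a,b} \frac{N_{ab}^2}{n^{(1)}_a n^{(2)}_b}.
\]
We first choose the permutation aligning labels so that $\sum_j |Z^{(1)}_{jl}-Z^{(2)}_{jl}|$ is measured against the best matching. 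We may do this because the hypothesis is stated up to $\perpeq$, and the condition should be read for the optimally aligned labels.

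\textbf{Step 2: lower bound via one mismatched column.} Fix $l$ attaining the max, and set $m=\sum_j|Z^{(1)}_{jl}-Z^{(2)}_{jl}| = |C^{(1)}_l\triangle C^{(2)}_l|$. A direct route avoids the full trace formula. Let $u=Z^{(1)}_{\cdot l}/\sqrt{n^{(1)}_l}$, a unit vector in $\col(\Pi_1)$. Then
\[
\|\Pi_1-\Pi_2\|_F^2 \ge \|(\Pi_1-\Pi_2)u\|^2 = \|(I-\Pi_2)u\|^2,
\]
which is the squared distance from $u$ to the span of the indicators of the $C^{(2)}_b$. This distance is
\[
\frac{1}{n^{(1)}_l}\sum_b \bigl(N_{lb}-N_{lb}^2/n^{(2)}_b\bigr) = \frac{1}{n^{(1)}_l}\sum_b \frac{N_{lb}(n^{(2)}_b-N_{lb})}{n^{(2)}_b}.
\]
We split the symmetric difference into its two parts.

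\emph{Vertices of $C^{(1)}_l$ outside $C^{(2)}_l$.} The term $b=l$ contributes $N_{ll}(n^{(2)}_l-N_{ll})/(n^{(1)}_l n^{(2)}_l)$. The terms $b\ne l$ contribute at least $\sum_{b\ne l}N_{lb}(n^{(2)}_b-N_{lb})/(n^{(1)}_l n^{(2)}_b)$.

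\emph{Using the alignment.} Under optimal alignment and balanced sizes, we will argue in one of two ways. Either $N_{ll}\ge c n$ and $n^{(2)}_b-N_{lb}\ge cn$ for $b\neq l$, in which case this quantity is at least $c' m/n$. Or the misclassification is macroscopic, in which case $\|\Pi_1-\Pi_2\|_F^2\gtrsim 1$ and the bound is trivial. Applying the symmetric argument with the roles of $1,2$ swapped handles $C^{(2)}_l\setminus C^{(1)}_l$. In all cases $\|\Pi_1-\Pi_2\|_F^2 \gtrsim m/n$, up to constants depending on $C_1,C_2$ and possibly $k$.

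\textbf{Step 3: conclude.} With $m\gg k/(n^{1/2}\rho_n)$ we obtain $\|\Pi_1-\Pi_2\|_F^2\gtrsim m/n \gg k/(n^{3/2}\rho_n)$. This is exactly \eqref{eq:lowerbound_Frobenius}, so \cref{thm:testcon} gives power tending to one.

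The main obstacle is Step 2. We must show $\|(I-\Pi_2)u\|^2\gtrsim m/n$ uniformly, which requires a careful case analysis of the overlap matrix $N$ under the optimal permutation. The delicate case is when $N_{ll}$ is small, meaning community $l$ is largely relabeled. There we plan to exploit the optimality of the alignment: if some other $b$ captured most of $C^{(1)}_l$, swapping labels would reduce the discrepancy. If this fails, we will fall back on a macroscopic bound of order one. Tracking the $k$-dependence of the constants is secondary, since $k$ is fixed.
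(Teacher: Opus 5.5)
Your route differs from the paper's and most of it works. The gap is the macroscopic branch of Step 2: it is asserted rather than proved, and the mechanism you sketch for it does not suffice.

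\textbf{Comparison with the paper.} The paper expands $\|Q_1-Q_2\|_F^2=2k-2\tr(Q_1Q_2)$ in full. It writes the diagonal overlaps through $\|w_l\|^2=\|Z^{(1)}_{\cdot l}-Z^{(2)}_{\cdot l}\|^2$ and bounds the off-diagonal overlaps by $O(\|Z^{(1)}-Z^{(2)}\|_0^2/n^2)$. This gives a two-sided estimate $\|Q_1-Q_2\|_F^2\asymp n^{-1}\sum_l\|w_l\|^2$, but only after restricting to $\|Z^{(1)}-Z^{(2)}\|_0=o(n)$ with aligned labels tacitly assumed. The complementary regime is dismissed by noting that the hypothesis then holds, which does not by itself bound $\|Q_1-Q_2\|_F$.

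Your test-vector bound needs only one side and no sign bookkeeping. In fact $\|\Pi_1-\Pi_2\|_F^2=2\sum_a\|(I-\Pi_2)u_a\|^2$ over the normalized indicators $u_a$, so you only discard nonnegative terms. Your explicit label alignment is genuinely needed: under a nontrivial column permutation of $Z^{(1)}$ the literal condition can hold while $\Pi_1=\Pi_2$. So is your check of the eigenvalue-scaling and incoherence assumptions, which the paper's proof omits.

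Two simplifications are available.
\begin{itemize}
\item The $b=l$ term equals $N_{ll}(n^{(2)}_l-N_{ll})/(n^{(1)}_l n^{(2)}_l)$, so it already counts $C^{(2)}_l\setminus C^{(1)}_l$. The $b\neq l$ terms count $C^{(1)}_l\setminus C^{(2)}_l$. One vector therefore captures all of $m$, and the swapped argument is unnecessary.
\item If $m\le cn$ with $c<C_1$, your case (i) holds with no appeal to optimality. Indeed $N_{ll}\ge n^{(1)}_l-m$ and $N_{lb}\le m$ for $b\neq l$, which gives $\|\Pi_1-\Pi_2\|_F^2\ge (C_1-c)C_2^{-2}\,m/n$.
\end{itemize}

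\textbf{The gap: the case $m>cn$.} Here you need $\|\Pi_1-\Pi_2\|_F^2\gtrsim 1$, and your vector $u$ cannot supply it. If $C^{(1)}_l$ is the union of two model-2 communities, then $u\in\col(\Pi_2)$ and $\|(I-\Pi_2)u\|=0$, and no relabeling changes that. The mass must come from other directions of $\col(\Pi_1)$.

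A short rigidity argument closes this. Set $p_{ab}=N_{ab}/n^{(1)}_a$ and $q_{ab}=N_{ab}/n^{(2)}_b$, so that $\sum_b p_{ab}=\sum_a q_{ab}=1$. Then
\[
\|\Pi_1-\Pi_2\|_F^2=2\sum_{a,b}p_{ab}(1-q_{ab})=2\sum_{a,b}q_{ab}(1-p_{ab}),
\]
and every summand is nonnegative. Suppose this is at most $2\epsilon$, and take $\sigma(a)\in\arg\max_b p_{ab}$.
\begin{itemize}
\item Since $p_{a\sigma(a)}\ge 1/k$, we get $q_{a\sigma(a)}\ge 1-k\epsilon$.
\item Columns of $q$ sum to one, so $\sigma$ is a permutation once $k\epsilon<1/2$.
\item Then $q_{a\sigma(a)}\ge 1/2$ forces $p_{a\sigma(a)}\ge 1-2\epsilon$.
\end{itemize}
Hence $|C^{(1)}_a\triangle C^{(2)}_{\sigma(a)}|\le (2+k)\epsilon C_2 n$ for every $a$. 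Any labeling that differs from $\sigma$ at some $a$ has mismatch at least $N_{a\sigma(a)}\ge (1-2\epsilon)C_1 n$ in column $a$. So for small $\epsilon$ the optimal alignment is $\sigma$, and $m\le (2+k)\epsilon C_2 n$.

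Choose $\epsilon$ with $(2+k)\epsilon C_2<c$. The contrapositive gives $\|\Pi_1-\Pi_2\|_F^2>2\epsilon\gtrsim m/n$ whenever $m>cn$, using $m\le 2C_2 n$. Combined with case (i), this yields $\|\Pi_1-\Pi_2\|_F^2\gtrsim m/n$ in all regimes, with constants depending only on $k,C_1,C_2$. Step 3 then goes through as you wrote it.
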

\begin{proof}
    See \cref{sec:testconpf}.  
\end{proof}

The above corollary reveals that even if we swap the community memberships of as few as \(n_0 = k\) vertices, the null hypothesis \(H_0: Z^{(1)} \perpeq Z^{(2)}\) will be rejected with high probability, indicating strong power of the proposed test. This highlights the sensitivity of the test in detecting small but structured perturbations in community assignments. 

We will now consider MMSBMs.
\begin{corollary} \label{cor:MMSBM}
Let \({A^{(i)}}\) denote the adjacency matrices of two MMSBMs for \(i=1,2.\) Suppose that the membership matrices satisfy
\begin{align*}
    \lambda_{\min}({Z^{(i)}}\t {Z^{(i)}}) \ge c\,\frac{n}{k},
\end{align*}
where $c$ is some positive constant. Suppose that \(P^{(i)} = {Z^{(i)}} B^{(i)} {Z^{(i)}}\t\) for \(i = 1, 2\), where \({Z^{(i)}} \in [0,1]^{n \times k}\) is the mixed-membership matrix as in \cref{def:MMSBM}, and \(B^{(i)}\) is the corresponding block matrix under model \(i\). Then, under \cref{ass:asymptotic1,ass:dense}, the proposed two-sample test statistic in \eqref{2sampleTS} has power converging to one for testing
\[
H_0 : Z^{(1)} \perpeq Z^{(2)} \quad \text{versus} \quad H_1 : Z^{(1)} \notperpeq Z^{(2)},
\]
provided that the following condition holds:
\( \quad
\|Z^{(1)} - Z^{(2)}\|_F^2 \gg \frac{k}{n^{1/2}\rho_n}.
\)
\end{corollary}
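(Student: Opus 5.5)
The plan is to reduce \cref{cor:MMSBM} to \cref{thm:testcon}. That means verifying the hypotheses of that theorem and then showing that the separation condition on $Z$ implies \eqref{eq:lowerbound_Frobenius}, namely $\|V^{(1)}V^{(1)\top}-V^{(2)}V^{(2)\top}\|_F^2 \gg k/(n^{3/2}\rho_n)$.

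\textbf{Step 1: hypotheses of \cref{thm:testcon}.} The balance condition $\lambda_{\min}(Z^{(i)\top}Z^{(i)})\ge cn/k$ is exactly the MMSBM clause of \cref{def:balanced}. Together with the (implicit) boundedness of the eigenvalues of $B^{(i)}/\rho_n$, \cref{prop:eig-scaling-sbm-mmsbm} then gives \cref{ass:eigen-scaling}. The incoherence lemma referenced for balanced MMSBMs (\cref{lem:balanced-mmsb-incoherence}) gives \cref{ass:incoherence}. Indeed, $V^{(i)}=Z^{(i)}(Z^{(i)\top}Z^{(i)})^{-1/2}O$ for some orthogonal $O$, so $\|V^{(i)}_{j\cdot}\|\le \|Z^{(i)}_{j\cdot}\|\,\lambda_{\min}^{-1/2}\le \sqrt{k/(cn)}$.

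\textbf{Step 2: reduce to a sine-theta bound.} Set $Q_i=Z^{(i)}(Z^{(i)\top}Z^{(i)})^{-1}Z^{(i)\top}=V^{(i)}V^{(i)\top}$. For projections,
\[
\|Q_1-Q_2\|_F^2 = 2\|(I-Q_1)Q_2\|_F^2 \ge 2\|(I-Q_1)Z^{(2)}\|_F^2\,/\,\|Z^{(2)}\|^2 .
\]
Since the rows of $Z^{(2)}$ lie in the simplex, $\|Z^{(2)}\|^2\le n$. The remaining task is to show
\[
\|(I-Q_1)Z^{(2)}\|_F^2 = \min_W\|Z^{(2)}-Z^{(1)}W\|_F^2 \gtrsim \min_{\Pi}\|Z^{(2)}-Z^{(1)}\Pi\|_F^2 ,
\]
where $\Pi$ ranges over permutations. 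This yields $\|Q_1-Q_2\|_F^2\gtrsim \|Z^{(1)}-Z^{(2)}\|_F^2/n$ after relabelling communities. Here I read $\|Z^{(1)}-Z^{(2)}\|_F$ in the statement as taken after optimal permutation alignment, which is what makes $H_1$ meaningful. Then $\|Z^{(1)}-Z^{(2)}\|_F^2\gg k/(n^{1/2}\rho_n)$ gives exactly $\|Q_1-Q_2\|_F^2\gg k/(n^{3/2}\rho_n)$, and \cref{thm:testcon} finishes the proof.

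\textbf{Step 3: the main obstacle.} The hard part is the comparison $\min_W\|Z^{(2)}-Z^{(1)}W\|_F\gtrsim\min_\Pi\|Z^{(2)}-Z^{(1)}\Pi\|_F$. The optimal $W=(Z^{(1)\top}Z^{(1)})^{-1}Z^{(1)\top}Z^{(2)}$ need not be a permutation, so both matrices must be pinned down using the simplex structure.

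First, rows of $Z^{(1)}W$ should be near the simplex. I would use the pure nodes, as in \cref{prop:subspace-vs-membership}: at a pure node $j$ of $Z^{(1)}$ in community $l$, we have $(Z^{(1)}W)_{j\cdot}=W_{l\cdot}$. Second, row sums are preserved: $Z^{(1)}\mathbf 1=\mathbf 1$, so $\|Z^{(1)}(W\mathbf 1-\mathbf 1)\|\le\|Z^{(2)}-Z^{(1)}W\|_F$, which by the balance condition makes $W\mathbf 1\approx\mathbf 1$.

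With these two facts I would argue by cases. If the residual $\varepsilon=\|Z^{(2)}-Z^{(1)}W\|_F$ is small relative to $\sqrt{n/k}$, then $W$ is close to a nonnegative matrix with unit row sums mapping the vertices of the simplex near vertices, hence close to some $\Pi$ with $\|W-\Pi\|_F\lesssim \varepsilon\sqrt{k/n}$. It follows that
\[
\|Z^{(2)}-Z^{(1)}\Pi\|_F\le \varepsilon+\|Z^{(1)}\|\,\|W-\Pi\|_F\lesssim\varepsilon .
\]
If instead $\varepsilon$ is large, the bound is immediate, because $\|Z^{(2)}-Z^{(1)}\Pi\|_F\le 2\sqrt n$ trivially.

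I expect the perturbation argument turning "$W$ nearly maps pure rows to pure rows" into "$W$ is near a permutation" to be the delicate step. It may need the pure-node assumption to hold robustly, for example with a constant fraction of near-pure nodes, and I would add this if necessary.
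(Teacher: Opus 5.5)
Your Steps 1--2 are sound and point in the right logical direction. Power needs a \emph{lower} bound on $\|Q_1-Q_2\|_F$, and $\|Q_1-Q_2\|_F^2=2\|(I-Q_1)Q_2\|_F^2\ge\tfrac{2}{n}\min_W\|Z^{(2)}-Z^{(1)}W\|_F^2$ is correct. (The paper instead proves the upper bound $\|Q_1-Q_2\|_F\lesssim k^2n^{-1/2}\|Z^{(1)}-Z^{(2)}\|_F$ by a three-term triangle inequality and substitutes it into \eqref{eq:lowerbound_Frobenius}. An inequality in that direction only shows the membership condition is necessary for \eqref{eq:lowerbound_Frobenius}, so it cannot give power either.) Your permutation-aligned reading of $\|Z^{(1)}-Z^{(2)}\|_F$ is also forced.

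The gap is Step 3. The comparison $\min_W\|Z^{(2)}-Z^{(1)}W\|_F\gtrsim\min_\Pi\|Z^{(2)}-Z^{(1)}\Pi\|_F$ is not merely delicate; it is false under the stated hypotheses, and so is the corollary. Take $k=2$, let half the rows of $Z^{(1)}$ be $e_1^{\top}$ and half $e_2^{\top}$, and put $R_\delta=(1-\delta)I_2+\tfrac{\delta}{2}J_2$ and $Z^{(2)}=Z^{(1)}R_\delta$. The rows stay in the simplex and $\lambda_{\min}(Z^{(2)\top}Z^{(2)})=\tfrac n2(1-\delta)^2$. Since $\col(Z^{(2)})=\col(Z^{(1)})$, we get $Q_1=Q_2$, and the test rejects with probability tending to $\alpha$. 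Yet $Z^{(1)}\notperpeq Z^{(2)}$, and for fixed $\delta$ we have $\min_\Pi\|Z^{(2)}-Z^{(1)}\Pi\|_F^2=n\delta^2/2\gg k/(n^{1/2}\rho_n)$. In your argument $\varepsilon=0$, while $W=R_\delta$ stays at distance $\asymp\delta$ from every permutation. The pure rows of $Z^{(1)}$ make $W$ nearly row-stochastic, but nothing forces vertices to map to vertices unless $Z^{(2)}$ has pure rows too.

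The single-pure-node hypothesis of \cref{prop:subspace-vs-membership} does not rescue the rate. In the same example, reset one row per community so that $Z^{(2)}_{j\cdot}=Z^{(1)}_{j\cdot}=e_l^{\top}$. Both matrices now have pure nodes and $\min_\Pi\|Z^{(1)}-Z^{(2)}\Pi\|_F^2\asymp n\delta^2$, but $\|Z^{(2)}-Z^{(1)}R_\delta\|_F=\delta$. Hence $\|Q_1-Q_2\|_F^2\le 2\|Z^{(2)}-Z^{(1)}R_\delta\|_F^2/\lambda_{\min}(Z^{(2)\top}Z^{(2)})\lesssim\delta^2/n$. With $\delta^2=\omega_n/(n^{3/2}\rho_n)$ and $\omega_n\to\infty$ slowly, the membership condition holds while the drift $\tfrac{n^{3/2}\rho_n}{k}\|\Delta\|_F^2$ in \eqref{eq:Power2} tends to zero. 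Your target inequality $\|Q_1-Q_2\|_F^2\gtrsim\|Z^{(1)}-Z^{(2)}\|_F^2/n$ is therefore off by a factor of order $n$. A constant fraction of $\eta$-near-pure rows is not enough either, since the first example with $\delta\le\eta$ qualifies.

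What is missing is a hypothesis, not a technique. Suppose each $Z^{(i)}$ has at least $c'n/k$ rows exactly equal to each $e_l^{\top}$, as in a balanced SBM; then your plan closes.
\begin{itemize}
\item Averaging over the pure rows of $Z^{(1)}$ puts $W$ within $O(\varepsilon k/\sqrt n)$ of a row-stochastic $\bar R$.
\item Averaging over the pure rows of $Z^{(2)}$ gives a row-stochastic $Y$ with $\|Y\bar R-I\|_F\lesssim_k\varepsilon/\sqrt n$.
\item Since $\sum_l Y_{ml}(1-\bar R_{lm})$ is small and $\max_l Y_{ml}\ge 1/k$, each $e_m^{\top}$ is nearly a row of $\bar R$.
\item So $\bar R$, and hence $W$, is within $O_k(\varepsilon/\sqrt n)$ of a permutation, which yields $\min_\Pi\|Z^{(2)}-Z^{(1)}\Pi\|_F\lesssim_k\varepsilon$.
\end{itemize}
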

\begin{proof}
    See \cref{sec:testconpf}.  
\end{proof}
The corollary states that, in the MMSBM setting, detectability of a difference between two network models is governed by the squared Frobenius distance between their membership matrices, with the scaling determined by the sample size \(n\), the rank \(k\), and the sparsity parameter \(\rho_n\). This condition naturally extends the SBM result to the soft–assignment regime: larger deviations of the entire membership profile (measured in Frobenius norm) are required for reliable discrimination when the networks are sparser.

\section{Numerical Simulations}\label{sec:simulations}
In this section we consider simulations under both the SBM and the MMSBM.  

\subsection{Stochastic Blockmodels}
First, we apply \cref{alg:hypo_test} to SBMs.  For each specification of \(k\)  (the number of communities), we construct two block matrices \(B^{(1)}\) and \(B^{(2)}\). The diagonal entries of these matrices are independently sampled from \([0.8,1]\), while the off-diagonal entries of \(B^{(1)}\) are sampled from \([0.3,0.5]\) and those of \(B^{(2)}\) are sampled from \([0.1,0.3]\).  

For \(k=3\), the two block matrices (rounded to two decimal places) are given below:

\[
B^{(1)} =
\begin{bmatrix}
0.98 & 0.36 & 0.46 \\
0.36 & 0.99 & 0.38 \\
0.46 & 0.38 & 0.81
\end{bmatrix},
\qquad
B^{(2)} =
\begin{bmatrix}
0.97 & 0.12 & 0.14 \\
0.12 & 0.96 & 0.25 \\
0.14 & 0.25 & 0.87
\end{bmatrix}.
\]
For \(k=5\), the two block matrices (rounded to two decimal places) are given below:
\[
B^{(1)} =
\begin{bmatrix}
0.99 & 0.36 & 0.46 & 0.48 & 0.41 \\
0.36 & 0.89 & 0.38 & 0.49 & 0.48 \\
0.46 & 0.38 & 0.94 & 0.31 & 0.41 \\
0.48 & 0.49 & 0.31 & 0.91 & 0.39 \\
0.41 & 0.48 & 0.41 & 0.39 & 0.82
\end{bmatrix},
\qquad
B^{(2)} =
\begin{bmatrix}
0.87 & 0.12 & 0.14 & 0.27 & 0.12 \\
0.12 & 0.84 & 0.25 & 0.26 & 0.16 \\
0.14 & 0.25 & 0.95 & 0.17 & 0.15 \\
0.27 & 0.26 & 0.17 & 0.96 & 0.18 \\
0.12 & 0.16 & 0.15 & 0.18 & 0.92
\end{bmatrix}.
\]

To construct the community assignment for $Z^{(1)}$, we allocate the $n$ nodes uniformly across the $k$ communities, such that each community contains exactly $n/k$ nodes. The subsequent assignment $Z^{(2)}$ is generated by initializing $Z^{(2)} = Z^{(1)}$, selecting $n_0$ nodes uniformly at random, and reassigning their community labels to a different class, while leaving the memberships of the remaining $n - n_0$ nodes unchanged. We then form the two probability matrices as \(P^{(i)} = {Z^{(i)}} B^{(i)} {Z^{(i)}}\t\), and generate the observed adjacency matrices according to \({A^{(i)}} \sim \text{Ber}(P^{(i)})\). For each fixed combination of \(n,k,\rho_n\), and \(n_0\), we simulate the data set 100 times and apply the test at \(\alpha=0.05\) level of significance, recording the number of rejections. The empirical power of the test at those parameters is then estimated by the fraction of rejections. \cref{tab:mmsbm-k3-rho04-09,tab:mmsbm-k5-rho04-09} report these empirical powers, with rows corresponding to different values of \(\rho_n\) and columns corresponding to different values of \(n_0\). The first row of each table ($n_0 = 0$) reflects the performance of the proposed test under the null hypothesis in \eqref{hypo2}. While finite-sample empirical sizes may exhibit slight deviations from the exact nominal level $\alpha = 0.05$, they remain closely aligned. Furthermore, as anticipated by our theoretical analysis, this size calibration steadily improves, converging toward 0.05, as the density parameter $\rho_n$ increases toward 1 for a fixed $k$. Conversely, under the alternative hypothesis ($n_0 > 0$), the empirical power exhibits a sharp transition, converging rapidly to $1$ as the perturbation size $n_0$ increases down the rows, demonstrating the high sensitivity of the procedure. While we do not keep track of the dependence on $k$ in our theoretical results, we see that larger $k$ has weaker detection ability.

\begin{table}[H]
\centering
\caption{Level and power of \cref{alg:hypo_test} for SBMs with $k=3$}
\label{tab:mmsbm-k3-rho04-09}
\begin{subtable}[t]{0.49\textwidth}
\centering
\caption{$n=2400,\; k=3$}
\label{tab:mmsbm-k3-n2400}
\renewcommand{\arraystretch}{1.05}
\setlength{\tabcolsep}{3.5pt}
\resizebox{\linewidth}{!}{%
\begin{tabular}{|c||c|c|c|c|c|c|}
\hline
$n_0$ & $\rho_n=0.4$ & $\rho_n=0.5$ & $\rho_n=0.6$ & $\rho_n=0.7$ & $\rho_n=0.8$ & $\rho=0.9$ \\
\hline \hline
0  & 0.05 & 0.04 & 0.06 & 0.05 & 0.05 & 0.06 \\
\hline
2  & 1.00 & 1.00 & 1.00 & 1.00 & 1.00 & 1.00 \\
\hline
4  & 1.00 & 1.00 & 1.00 & 1.00 & 1.00 & 1.00 \\
\hline
6  & 1.00 & 1.00 & 1.00 & 1.00 & 1.00 & 1.00 \\
\hline
8  & 1.00 & 1.00 & 1.00 & 1.00 & 1.00 & 1.00 \\
\hline
10 & 1.00 & 1.00 & 1.00 & 1.00 & 1.00 & 1.00 \\
\hline
\end{tabular}}
\end{subtable}
\hfill
\begin{subtable}[t]{0.49\textwidth}
\centering
\caption{$n=3000,\; k=3$}
\label{tab:mmsbm-k3-n3000-rho04-09}
\renewcommand{\arraystretch}{1.05}
\setlength{\tabcolsep}{3.5pt}
\resizebox{\linewidth}{!}{%
\begin{tabular}{|c||c|c|c|c|c|c|}
\hline
$n_0$ & $\rho_n=0.4$ & $\rho_n=0.5$ & $\rho_n=0.6$ & $\rho_n=0.7$ & $\rho_n=0.8$ & $\rho_n=0.9$ \\
\hline \hline
0  & 0.03 & 0.03 & 0.05 & 0.08 & 0.03 & 0.04 \\
\hline
2  & 1.00 & 1.00 & 1.00 & 1.00 & 1.00 & 1.00 \\
\hline
4  & 1.00 & 1.00 & 1.00 & 1.00 & 1.00 & 1.00 \\
\hline
6  & 1.00 & 1.00 & 1.00 & 1.00 & 1.00 & 1.00 \\
\hline
8  & 1.00 & 1.00 & 1.00 & 1.00 & 1.00 & 1.00 \\
\hline
10 & 1.00 & 1.00 & 1.00 & 1.00 & 1.00 & 1.00 \\
\hline
\end{tabular}}
\end{subtable}
\end{table}

\begin{table}[H]
\centering
\caption{Level and power of \cref{alg:hypo_test} for SBMs with $k=5$}
\label{tab:mmsbm-k5-rho04-09}
\begin{subtable}[t]{0.49\textwidth}
\centering
\caption{$n=2500,\; k=5$}
\label{tab:mmsbm-k5-n2500}
\renewcommand{\arraystretch}{1.05}
\setlength{\tabcolsep}{3.5pt}
\resizebox{\linewidth}{!}{%
\begin{tabular}{|c||c|c|c|c|c|c|}
\hline
$n_0$ & $\rho_n=0.4$ & $\rho_n=0.5$ & $\rho_n=0.6$ & $\rho_n=0.7$ & $\rho_n=0.8$ & $\rho_n=0.9$ \\
\hline \hline
0  & 0.08 & 0.13 & 0.13 & 0.12 & 0.12 & 0.06 \\
\hline
2  & 0.36 & 0.53 & 0.77 & 0.93 & 1.00 & 1.00 \\
\hline
4  & 0.90 & 0.98 & 1.00 & 1.00 & 1.00 & 1.00 \\
\hline
6  & 0.99 & 1.00 & 1.00 & 1.00 & 1.00 & 1.00 \\
\hline
8  & 1.00 & 1.00 & 1.00 & 1.00 & 1.00 & 1.00 \\
\hline
10 & 1.00 & 1.00 & 1.00 & 1.00 & 1.00 & 1.00 \\
\hline
\end{tabular}}
\end{subtable}
\hfill
\begin{subtable}[t]{0.49\textwidth}
\centering
\caption{$n=3000,\; k=5$}
\label{tab:mmsbm-k5-n3000-rho04-09}
\renewcommand{\arraystretch}{1.05}
\setlength{\tabcolsep}{3.5pt}
\resizebox{\linewidth}{!}{%
\begin{tabular}{|c||c|c|c|c|c|c|}
\hline
$n_0$ & $\rho_n=0.4$ & $\rho_n=0.5$ & $\rho_n=0.6$ & $\rho_n=0.7$ & $\rho_n=0.8$ & $\rho_n=0.9$ \\
\hline \hline
0  & 0.08 & 0.06 & 0.05 & 0.08 & 0.12 & 0.07 \\
\hline
2  & 0.35 & 0.58 & 0.84 & 0.98 & 1.00 & 1.00 \\
\hline
4  & 0.97 & 1.00 & 1.00 & 1.00 & 1.00 & 1.00 \\
\hline
6  & 1.00 & 1.00 & 1.00 & 1.00 & 1.00 & 1.00 \\
\hline
8  & 1.00 & 1.00 & 1.00 & 1.00 & 1.00 & 1.00 \\
\hline
10 & 1.00 & 1.00 & 1.00 & 1.00 & 1.00 & 1.00 \\
\hline
\end{tabular}}
\end{subtable}
\end{table}

\subsection{Mixed Membership Stochastic Blockmodels}
Next, we apply \cref{alg:hypo_test} to MMSBMs. For each specification of the number of communities \(k\), we construct two block matrices \(B^{(1)}\) and \(B^{(2)}\). The diagonal entries of these matrices are independently sampled from \([0.8,1]\), while the off-diagonal entries of \(B^{(1)}\) are sampled from \([0.3,0.5]\) and those of \(B^{(2)}\) are sampled from \([0.1,0.3]\) 

For \(k=3\), the block matrices are given below:
\[
B^{(1)} =
\begin{bmatrix}
0.98 & 0.53 & 0.58 \\
0.53 & 0.99 & 0.54 \\
0.58 & 0.54 & 0.81
\end{bmatrix},
\qquad
B^{(2)} =
\begin{bmatrix}
0.97 & 0.11 & 0.12 \\
0.11 & 0.96 & 0.17 \\
0.12 & 0.17 & 0.87
\end{bmatrix}.
\]
For \(k=5\), the two block matrices are given below:
\[
B^{(1)} =
\begin{bmatrix}
0.99 & 0.53 & 0.58 & 0.59 & 0.55 \\
0.53 & 0.89 & 0.54 & 0.59 & 0.59 \\
0.58 & 0.54 & 0.94 & 0.50 & 0.56 \\
0.59 & 0.59 & 0.50 & 0.91 & 0.55 \\
0.55 & 0.59 & 0.56 & 0.55 & 0.82
\end{bmatrix},
\qquad
B^{(2)} =
\begin{bmatrix}
0.87 & 0.11 & 0.12 & 0.19 & 0.11 \\
0.11 & 0.84 & 0.17 & 0.18 & 0.13 \\
0.12 & 0.17 & 0.95 & 0.13 & 0.12 \\
0.19 & 0.18 & 0.13 & 0.96 & 0.14 \\
0.11 & 0.13 & 0.12 & 0.14 & 0.92
\end{bmatrix}.
\]

To construct the membership matrices, we proceed as follows. For \(Z^{(1)}\), we randomly select \(n_0\) nodes and assign their community membership vectors to be 
\(
\left( \tfrac{1}{k-1}, \tfrac{1}{k-1}, \ldots, \tfrac{1}{k-1}, 0 \right),
\)
while the membership vectors of all remaining nodes are independently sampled from a Dirichlet distribution with parameter vector 
\(
\left( \tfrac{1}{k}, \tfrac{1}{k}, \ldots, \tfrac{1}{k} \right).
\)
Similarly, for \(Z^{(2)}\), we choose the same \(n_0\) nodes as before and assign their community membership vectors to be
\(
\left( \tfrac{1}{k}, \tfrac{1}{k}, \ldots, \tfrac{1}{k} \right),
\)
while the remaining nodes are kept identical to those in \(Z^{(1)}\).  As before, we form the two probability matrices
\(
P^{(i)} = {Z^{(i)}} B^{(i)} {Z^{(i)}}\t\) for \(i=1,2,
\)
and generate the observed adjacency matrices according to 
\(
{A^{(i)}} \sim \text{Ber}(P^{(i)}).
\)

For each fixed combination of \(n,k,\rho_n\), and \(n_0\), we simulate the data set 100 times and apply the test at \(\alpha=0.05\) level of significance, recording the number of rejections. The empirical power at those parameters is then estimated by the fraction of rejections. \cref{tab:mmsbm-k3-power,tab:mmsbm-k5-power} report these empirical powers, with rows corresponding to different values of \(\rho_n\) and columns corresponding to different values of \(n_0\). Consistent with the results for the SBM, we observe analogous trends under the MMSBM. The first row ($n_0 = 0$) reflects the performance of the test under the null hypothesis in \eqref{hypo2}. The finite-sample empirical sizes remain closely aligned with the nominal level $\alpha = 0.05$, and this calibration steadily improves as the density parameter $\rho_n$ increases for a fixed $k$. Furthermore, under the alternative hypothesis ($n_0 > 0$), the empirical power again exhibits a stable convergence to $1$ as the perturbation size $n_0$ increases, aligning with our theoretical expectations.  The power with $k = 5$ tends to one more slowly than in the previous subsection, which suggests that detecting changes in this setup is more difficult than the case where networks are discrete stochastic blockmodels.

\begin{table}[H]
\centering
\caption{Level and power results for MMSBMs with $k=3$}
\label{tab:mmsbm-k3-power}
\begin{subtable}[t]{0.49\textwidth}
\centering
\caption{$n=2700,\; k=3$}
\label{tab:mmsbm-k3-n2700}
\renewcommand{\arraystretch}{1.05}
\setlength{\tabcolsep}{3.5pt}
\resizebox{\linewidth}{!}{%
\begin{tabular}{|c||c|c|c|c|c|c|}
\hline
$n_0$  & $\rho_n=0.75$ & $\rho_n=0.8$ & $\rho_n=0.85$ & $\rho_n=0.9$ & $\rho_n=0.95$ & $\rho_n=1$ \\
\hline \hline
0   & 0.03 & 0.06 & 0.13 & 0.07 & 0.09 & 0.06 \\
\hline
20  & 0.81 & 0.86 & 0.88 & 0.98 & 1.00 & 1.00 \\
\hline
40  & 1.00 & 1.00 & 1.00 & 1.00 & 1.00 & 1.00 \\
\hline
60  & 1.00 & 1.00 & 1.00 & 1.00 & 1.00 & 1.00 \\
\hline
80  & 1.00 & 1.00 & 1.00 & 1.00 & 1.00 & 1.00 \\
\hline
100 & 1.00 & 1.00 & 1.00 & 1.00 & 1.00 & 1.00 \\
\hline
150 & 1.00 & 1.00 & 1.00 & 1.00 & 1.00 & 1.00 \\
\hline
\end{tabular}}
\end{subtable}
\hfill
\begin{subtable}[t]{0.49\textwidth}
\centering
\caption{$n=3000,\; k=3$}
\label{tab:mmsbm-k3-n3000}
\renewcommand{\arraystretch}{1.05}
\setlength{\tabcolsep}{3.5pt}
\resizebox{\linewidth}{!}{%
\begin{tabular}{|c||c|c|c|c|c|c|}
\hline
$n_0$  & $\rho_n=0.75$ & $\rho_n=0.8$ & $\rho_n=0.85$ & $\rho_n=0.9$ & $\rho_n=0.95$ & $\rho_n=1$ \\
\hline \hline
0   & 0.12 & 0.09 & 0.10 & 0.06 & 0.06 & 0.04 \\
\hline
20  & 0.80 & 0.87 & 0.93 & 1.00 & 1.00 & 1.00 \\
\hline
40  & 1.00 & 1.00 & 1.00 & 1.00 & 1.00 & 1.00 \\
\hline
60  & 1.00 & 1.00 & 1.00 & 1.00 & 1.00 & 1.00 \\
\hline
80  & 1.00 & 1.00 & 1.00 & 1.00 & 1.00 & 1.00 \\
\hline
100 & 1.00 & 1.00 & 1.00 & 1.00 & 1.00 & 1.00 \\
\hline
150 & 1.00 & 1.00 & 1.00 & 1.00 & 1.00 & 1.00 \\
\hline
\end{tabular}}
\end{subtable}
\end{table}

\begin{table}[H]
\centering
\caption{Level and Power results for MMSBMs with $k=5$}
\label{tab:mmsbm-k5-power}
\begin{subtable}[t]{0.49\textwidth}
\centering
\caption{$n=3000,\; k=5$}
\label{tab:mmsbm-k5-n3000}
\renewcommand{\arraystretch}{1.05}
\setlength{\tabcolsep}{3.5pt}
\resizebox{\linewidth}{!}{%
\begin{tabular}{|c||c|c|c|c|c|c|}
\hline
$n_0$  & $\rho_n=0.75$ & $\rho_n=0.8$ & $\rho_n=0.85$ & $\rho_n=0.9$ & $\rho_n=0.95$ & $\rho_n=1$ \\
\hline \hline
0   & 0.87 & 0.71 & 0.37 & 0.09 & 0.08 & 0.02 \\
\hline
20  & 0.92 & 0.86 & 0.51 & 0.26 & 0.13 & 0.11 \\
\hline
40  & 0.95 & 0.93 & 0.66 & 0.41 & 0.35 & 0.30 \\
\hline
60  & 0.99 & 0.95 & 0.73 & 0.64 & 0.62 & 0.55 \\
\hline
80  & 1.00 & 0.98 & 0.88 & 0.83 & 0.78 & 0.71 \\
\hline
100 & 1.00 & 0.97 & 0.95 & 0.93 & 0.90 & 0.90 \\
\hline
150 & 1.00 & 1.00 & 1.00 & 0.99 & 1.00 & 1.00 \\
\hline
\end{tabular}}
\end{subtable}
\hfill
\begin{subtable}[t]{0.49\textwidth}
\centering
\caption{$n=3500,\; k=5$}
\label{tab:mmsbm-k5-n3500}
\renewcommand{\arraystretch}{1.05}
\setlength{\tabcolsep}{3.5pt}
\resizebox{\linewidth}{!}{%
\begin{tabular}{|c||c|c|c|c|c|c|}
\hline
$n_0$  & $\rho_n=0.75$ & $\rho_n=0.8$ & $\rho_n=0.85$ & $\rho_n=0.9$ & $\rho_n=0.95$ & $\rho_n=1$ \\
\hline \hline
0   & 0.70 & 0.43 & 0.21 & 0.06 & 0.09 & 0.07 \\
\hline
20  & 0.77 & 0.62 & 0.37 & 0.18 & 0.19 & 0.09 \\
\hline
40  & 0.87 & 0.80 & 0.56 & 0.38 & 0.36 & 0.27 \\
\hline
60  & 0.93 & 0.89 & 0.71 & 0.66 & 0.56 & 0.51 \\
\hline
80  & 0.96 & 0.94 & 0.81 & 0.80 & 0.78 & 0.74 \\
\hline
100 & 0.98 & 0.97 & 0.92 & 0.92 & 0.90 & 0.89 \\
\hline
150 & 0.99 & 0.99 & 1.00 & 0.99 & 1.00 & 1.00 \\
\hline
\end{tabular}}
\end{subtable}
\end{table}

\section{Application to Airport Data}{\label{sec:airport}}

We also apply \cref{alg:hypo_test} to U.S.\ domestic flight data publicly available from the Bureau of Transportation Statistics \citep{bts_flight_data} and previously analyzed in \citet{agterberg_joint_2025} and \citet{agterberg_estimating_2025}.. After restricting to the largest connected component, we obtain \(n=343\) airports and data for \(T=69\) months (January 2016 through September 2021). For each month \(t=1,\dots,T\), we have a count matrix
\(
F_t=(F_t(i,j))_{i,j=1}^n\in\mathbb{N}^{n\times n},
\)
where \(F_t(i,j)\) denotes the number of flights from airport \(i\) to airport \(j\) during month \(t\). To obtain a binary network representation for each month, we symmetrize and threshold the counts: define the adjacency matrix \(A_t=(A_t(i,j))_{i,j=1}^n\in\{0,1\}^{n\times n}\) by
\[
A_t(i,j) = \mathbbm{1}\{\,F_t(i,j)+F_t(j,i)\ge 1\,\},\qquad i\neq j,
\]
and set \(A_t(i,i)=0\) for all \(i\). In words, \(A_t(i,j)=1\) if there exists at least one flight between airports \(i\) and \(j\) (in either direction) during month \(t\), and \(A_t(i,j)=0\) otherwise. For detecting the number of communities \(k\), we select the embedding dimension using the ``elbow'' method of \citet{zhu2006automatic}; the resulting choice is \(k=5\).

The months of January, June, and November provide a particularly informative contrast for assessing structural stability in the U.S. flight network. June 2020 corresponds to the period of most pronounced disruption in national air-traffic patterns due to the COVID shock, and thus serves as a natural stress point at which one might expect substantive alterations in the community structure.  
By comparison, January and November exhibit far more regular seasonal behavior across years, and consequently their underlying connectivity patterns are expected to display a high degree of structural persistence.  
This separation of regimes allows for a clear evaluation of the sensitivity of our methodology to both stable and significantly perturbed network environments.

In \cref{tab:jan-pvals,tab:jun-pvals,tab:nov-pvals}, the number of rows $m$ in each table corresponds to the frequency of that specific month within the study period. Consequently, there are $m=6$ occurrences for January and June, and $m=5$ occurrences for November. The \((i,j)\)-th entry, \(i\neq j\), reports the \(p\)-value from a hypothesis test comparing the adjacency matrices of that month in year \(i\) and year \(j\). Because multiple pairwise comparisons are performed (there are \(m(m-1)/2\) tests per month), we apply a Bonferroni correction: at nominal level \(\alpha=0.05\), the significance threshold is
\(
{2\alpha}/({m(m-1))}.
\)
Cells with \(p\)-values at or below this threshold are displayed in red, and cells with \(p\)-values above the threshold are displayed in green.

\begin{table}[H]
\centering
\caption{January (matrix dimension $m=6$, Bonferroni threshold $2\alpha/(m(m-1))=0.003$).}
\label{tab:jan-pvals}
\begin{tabular}{|c||c|c|c|c|c|c|}
\hline
 & 2016-01 & 2017-01 & 2018-01 & 2019-01 & 2020-01 & 2021-01 \\
\hline\hline
2016-01 &
 & \cellcolor{green!15}0.0726 & \cellcolor{green!15}0.0875 & \cellcolor{green!15}0.2437 & \cellcolor{green!15}0.8976 & \cellcolor{green!15}0.2951 \\
\hline
2017-01 &
\cellcolor{green!15}0.0726 &  & \cellcolor{green!15}0.0615 & \cellcolor{green!15}0.1810 & \cellcolor{green!15}0.4708 & \cellcolor{green!15}0.3407 \\
\hline
2018-01 &
\cellcolor{green!15}0.0875 & \cellcolor{green!15}0.0615 &  & \cellcolor{green!15}0.0634 & \cellcolor{green!15}0.2615 & \cellcolor{green!15}0.8945 \\
\hline
2019-01 &
\cellcolor{green!15}0.2437 & \cellcolor{green!15}0.1810 & \cellcolor{green!15}0.0634 &  & \cellcolor{green!15}0.0787 & \cellcolor{green!15}0.8612 \\
\hline
2020-01 &
\cellcolor{green!15}0.8976 & \cellcolor{green!15}0.4708 & \cellcolor{green!15}0.2615 & \cellcolor{green!15}0.0787 &  & \cellcolor{green!15}0.8358 \\
\hline
2021-01 &
\cellcolor{green!15}0.2951 & \cellcolor{green!15}0.3407 & \cellcolor{green!15}0.8945 & \cellcolor{green!15}0.8612 & \cellcolor{green!15}0.8358 &  \\
\hline
\end{tabular}
\end{table}

\begin{table}[H]
\centering
\caption{June (matrix dimension $m=6$, Bonferroni threshold $2\alpha/(m(m-1))=0.003$).}
\label{tab:jun-pvals}
\begin{tabular}{|c||c|c|c|c|c|c|}
\hline
 & 2016-06 & 2017-06 & 2018-06 & 2019-06 & 2020-06 & 2021-06 \\
\hline\hline
2016-06 &
 & \cellcolor{green!15}0.1815 & \cellcolor{green!15}0.5459 & \cellcolor{green!15}0.9343 & \cellcolor{red!15}0.0002 & \cellcolor{green!15}0.0760 \\
\hline
2017-06 &
\cellcolor{green!15}0.1815 &  & \cellcolor{green!15}0.2123 & \cellcolor{green!15}0.4802 & \cellcolor{red!15}0.0005 & \cellcolor{green!15}0.1777 \\
\hline
2018-06 &
\cellcolor{green!15}0.5459 & \cellcolor{green!15}0.2123 &  & \cellcolor{green!15}0.0674 & \cellcolor{red!15}0.0001 & \cellcolor{green!15}0.0919 \\
\hline
2019-06 &
\cellcolor{green!15}0.9343 & \cellcolor{green!15}0.4802 & \cellcolor{green!15}0.0674 &  & \cellcolor{red!15}0.0000 & \cellcolor{green!15}0.3596 \\
\hline
2020-06 &
\cellcolor{red!15}0.0002 & \cellcolor{red!15}0.0005 & \cellcolor{red!15}0.0001 & \cellcolor{red!15}0.0000 &  & \cellcolor{green!15}0.2915 \\
\hline
2021-06 &
\cellcolor{green!15}0.0760 & \cellcolor{green!15}0.1777 & \cellcolor{green!15}0.0919 & \cellcolor{green!15}0.3596 & \cellcolor{green!15}0.2915 &  \\
\hline
\end{tabular}
\end{table}

\begin{table}[H]
\centering
\caption{November (matrix dimension $m=5$, Bonferroni threshold $2\alpha/(m(m-1))=0.005$).}
\label{tab:nov-pvals}
\begin{tabular}{|c||c|c|c|c|c|}
\hline
 & 2016-11 & 2017-11 & 2018-11 & 2019-11 & 2020-11 \\
\hline\hline
2016-11 &
 & \cellcolor{green!15}0.0101 & \cellcolor{green!15}0.2853 & \cellcolor{green!15}0.5370 & \cellcolor{green!15}0.4457 \\
\hline
2017-11 &
\cellcolor{green!15}0.0101 &  & \cellcolor{green!15}0.0891 & \cellcolor{green!15}0.1971 & \cellcolor{green!15}0.8067 \\
\hline
2018-11 &
\cellcolor{green!15}0.2853 & \cellcolor{green!15}0.0891 &  & \cellcolor{green!15}0.0876 & \cellcolor{green!15}0.6468 \\
\hline
2019-11 &
\cellcolor{green!15}0.5370 & \cellcolor{green!15}0.1971 & \cellcolor{green!15}0.0876 &  & \cellcolor{green!15}0.9973 \\
\hline
2020-11 &
\cellcolor{green!15}0.4457 & \cellcolor{green!15}0.8067 & \cellcolor{green!15}0.6468 & \cellcolor{green!15}0.9973 &  \\
\hline
\end{tabular}
\end{table}
For January and November, all cells are shaded green, indicating that the corresponding flight networks are not significantly different across years at the Bonferroni-corrected level.  
In contrast, for June the entries involving year 2020 are shaded red, while all other entries remain green.  
This pattern is attributable to the COVID--19 pandemic, which peaked in June 2020 and caused a severe disruption of U.S.\ domestic air traffic.  
During this period many airports experienced no incoming or outgoing flights, leading to a flight network structure that is markedly different from the corresponding networks in other years.  
These findings illustrate that our proposed testing methodology is capable of distinguishing between genuinely similar network structures (as in January and November) and substantive structural changes (as in June 2020).

\section{Projection Distance Asymptotics and a One-Sample Test}\label{sec:one_samplepf}
In order to understand the proof of \cref{thm:twosample} for the two‐sample setting, we find it beneficial to study its one‐sample analogue.  In this simpler setting, we observe a single adjacency matrix
\(
A \sim \Ber(P),
\)
with population spectral decomposition \(P = V \Lambda V\t\).  We wish to test whether the principal subspace of \(P\) coincides with a given reference subspace spanned by an orthonormal matrix \(V^*\in\mathbb{R}^{n\times k}\).  Formally, we consider
\begin{align}\label{hypo3}
    H_0: V V\t = V^* V^{*\t} 
\quad\text{versus}\quad 
H_1: V V\t \neq V^* V^{*\t}
\end{align}
Let \(\hat V\in\mathbb{R}^{n\times k}\) denote the matrix of the leading \(k\) empirical eigenvectors of \(A\).  We define the one‐sample test statistic
\[
T_{n}^{({\sf os})} := \bigl\|\hat V\,\hat V\t - V^* V^{*\t}\bigr\|_F^2.
\]
Note that we can equivalently reject the null hypothesis at level $\alpha$ if we can form an appropriate $1 - \alpha$ confidence interval. The following theorem establishes the Gaussian limit under the null hypothesis in \eqref{hypo3}.

\begin{theorem} \label{thm:onesample}
Suppose \cref{ass:asymptotic1,ass:sparse,ass:eigen-scaling,ass:incoherence} hold.  Then  under \(H_0\) of \eqref{hypo2},
\begin{equation}
\label{eq:clt1}
\frac{T_{n}^{({\sf os})} - \mu_1}{\sigma_1}
\xrightarrow{D} \mathcal{N}(0,1),
\end{equation}
where
\begin{align}
\mu_1 
&= 2\,\mathrm{tr}\Bigl(\beta^{-2}\left[\Sigma\circ\beta^\perp 
   + \Diag(\Sigma \cdot d-\diag(\Sigma) \circ d)\right]\Bigr)
   + O_p\left(\frac{k}{n^2 \rho_n^2}\right),
\label{param:11}\\
\sigma^2_1
    &= 8 \Bigl\langle (J_n - I_n)\circ(\beta^{-2}\circ\beta^{-2}),\Sigma^2\Bigr\rangle
      + 4 \,\alpha\t K\,\alpha+o \left(\frac{k^2}{n^3 \rho_n^2} \right),
\label{param:12}
\end{align}
with \(V V\t = V^* V^{*\t}\). Here we define
\[
P = V \Lambda V\t, 
\quad \beta^k = V\Lambda^kV\t,
\quad \beta^\perp = I_n - VV\t, 
\quad d = \diag(\beta^\perp),
\]
\[
\Sigma = P\circ(J_n-P),
\quad K = 2\,\Sigma\circ(J_n - 2P),
\quad \alpha = \diag(\beta^{-2}).
\]
\end{theorem}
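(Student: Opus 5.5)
Under $H_0$ we have $V^*V^{*\top}=VV^\top$, so $T_n^{({\sf os})}=\|\hat V\hat V^\top-VV^\top\|_F^2 = 2k-2\|V^\top\hat V\|_F^2$. The plan is to expand the spectral projector $\hat V\hat V^\top$ in powers of the noise $E=A-P$. We use a second-order Neumann/contour expansion in the style of \citet{xia_normal_2021}, adapted to Bernoulli noise: $\hat V\hat V^\top - VV^\top = S_1(E)+S_2(E)+R$. Here $S_1(E)=\beta^{-1}E\beta^\perp+\beta^\perp E\beta^{-1}$, $S_2$ is the usual quadratic term, and $R$ collects cubic and higher terms. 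Since $VV^\top(\hat V\hat V^\top-VV^\top)VV^\top$ and the quadratic parts combine, the standard identity gives
\begin{equation*}
T_n^{({\sf os})} = 2\|\beta^\perp E\beta^{-1}\|_F^2 + \mathcal R_n = 2\,\tr\bigl(\beta^{-2}E\beta^\perp E\bigr)+\mathcal R_n ,
\end{equation*}
with $\mathcal R_n$ involving third and higher powers of $E$ sandwiched between $\beta^{-1}$'s. The leading term is of order $k/(n\rho_n)$.

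The first step is to control $\mathcal R_n$. We use $\|E\|\lesssim\sqrt{n\rho_n}$ with probability $1-O(n^{-c})$, which holds under \cref{ass:sparse}. We also need the incoherence in \cref{ass:incoherence}, the eigenvalue scaling $\|\beta^{-1}\|\asymp (n\rho_n)^{-1}$ from \cref{ass:eigen-scaling}, and $\ell_{2,\infty}$ bounds on $EV$ and on $E^2V$ (via leave-one-out or moment arguments). The goal is to show that $\mathcal R_n - \E\mathcal R_n = o_p(\sigma_1)$ with $\sigma_1\asymp k/(n^{3/2}\rho_n)$. The mean of the cubic/quartic pieces need not vanish; it contributes the $O_p(k/(n^2\rho_n^2))$ correction in \eqref{param:11}. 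This is why the theorem is stated with that slack.

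The second step is the mean of the leading term. We write $Q:=\tr(\beta^{-2}E\beta^\perp E)=\sum_{i,j,l,m}(\beta^{-2})_{mi}E_{ij}(\beta^\perp)_{jl}E_{lm}$ and use symmetry and independence of $\{E_{ij}\}_{i\le j}$. The nonzero pairings are $\{i,j\}=\{l,m\}$. Since $\E E_{ij}^2=\Sigma_{ij}$, this gives $\E Q=\tr(\beta^{-2}(\Sigma\circ\beta^\perp))+\tr(\beta^{-2}\Diag(\Sigma d-\diag(\Sigma)\circ d))$. The diagonal correction handles the double counting when $i=j$. Multiplying by $2$ yields $\mu_1$.

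The third step is the CLT for $Q-\E Q$. We decompose it into an off-diagonal degenerate U-statistic $\sum_{i\ne j\ldots}$ of a quadratic form in the independent entries $\{E_{ij}\}_{i\le j}$, plus a linear part $\sum_{i\le j} c_{ij}(E_{ij}^2-\Sigma_{ij})$. Computing the variances gives $8\langle (J_n-I_n)\circ(\beta^{-2}\circ\beta^{-2}),\Sigma^2\rangle$ for the quadratic part; we use $\beta^{-2}\beta^\perp=0$ to simplify the cross structure. The linear part gives $4\alpha^\top K\alpha$, using $\Var(E_{ij}^2)$ which is expressible via $\Sigma_{ij}(1-2P_{ij})$ terms, i.e. $K$. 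Asymptotic normality then follows from a martingale CLT: we order the independent edges and verify the conditional-variance and Lindeberg conditions, or equivalently the de Jong fourth-moment criterion for degenerate quadratic forms. The largest-eigenvalue condition there holds because the kernel $\beta^{-2}$ has rank $k$ with incoherent rows.

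The main obstacle is the first step in the sparse regime $n\rho_n\gg\log n$. Crude bounds give $\|E\|^3\|\beta^{-1}\|^3\cdot k\sim k/(n\rho_n)^{3/2}$, which is not $o(k/(n^{3/2}\rho_n))$ unless $\rho_n$ is large. So we must center the higher-order terms and bound their fluctuations directly. We would attempt this by computing second moments of the cubic term $\tr(\beta^{-3}E\beta^\perp E\beta^\perp E)$-type expressions combinatorially, exploiting incoherence. We would also use high-probability entrywise bounds for $VV^\top E^2 V$, and we expect the variance of each such term to be smaller than $\sigma_1^2$ by a factor of $(n\rho_n)^{-1}$.
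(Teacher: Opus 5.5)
Your outline follows the paper's main line: the expansion of \citet{xia_normal_2021}, the leading term $2\|\beta^\perp E\beta^{-1}\|_F^2$, the pairing computation of its mean, a martingale CLT, and a second-moment/Chebyshev treatment of the cubic term. The gap is in how you handle the remainder. You commit to proving $\mathcal R_n-\E\mathcal R_n=o_p(\sigma_1)$ for the \emph{whole} remainder, but you only indicate how to do this for the cubic term. Under \cref{ass:sparse} alone, this cannot be closed by centering finitely many orders and bounding the tail crudely. The order-$l$ term has crude size $k(n\rho_n)^{-l/2}$, and with $\sigma_1\asymp k/(n^{3/2}\rho_n)$ the ratio is $n^{3/2}\rho_n(n\rho_n)^{-l/2}$. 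For $n\rho_n=(\log n)^2$ this equals $n^{1/2}(\log n)^{2-l}\to\infty$ for every fixed $l$. Your route therefore needs centered second-moment bounds for $\langle VV^\top,S_l(E)\rangle$, including all the $\beta^\perp$ insertions, that hold uniformly in $l$ for $l$ growing like $\log n/\log(n\rho_n)$. The proposal does not supply these. Expecting each term's variance to be smaller by a factor $(n\rho_n)^{-1}$ does not by itself control an unbounded number of terms.

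The missing idea is that nothing beyond third order needs centering. The $O_p(k/(n^2\rho_n^2))$ term in \eqref{param:11} is a random slack in the centering, not just room for deterministic means. On the event $\|E\|\lesssim\sqrt{n\rho_n}$ you have $\|S_l(E)\|\le(c/(n\rho_n))^{l/2}$, so $\bigl|\langle VV^\top,S_l(E)\rangle\bigr|\le k\|S_l(E)\|$. Summing gives $\bigl|\sum_{l\ge 4}\langle VV^\top,S_l(E)\rangle\bigr|=O\bigl(k/(n\rho_n)^2\bigr)$, which the paper simply absorbs into $\mu_1$. Only the cubic term needs the treatment you describe, because its crude size $k/(n\rho_n)^{3/2}$ exceeds both this slack and $\sigma_1$. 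Its mean is $o(k/(n^2\rho_n^2))$ and its variance is $\asymp k^2/(n^4\rho_n^3)=\sigma_1^2\cdot O(1/(n\rho_n))$, so Chebyshev finishes. After this reorganization, the leave-one-out $\ell_{2,\infty}$ bounds on $EV$ and $E^2V$ are not needed either.

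A smaller point: the displayed $\sigma_1^2$ is computed from the $\beta^\perp$-free form $2\tr(\beta^{-2}E^2)$. The identity $\beta^{-2}\beta^\perp=0$ does not remove the $\beta^\perp$ between the two copies of $E$. You still need an argument like \cref{lem:beta_perp1one} showing that $\tr(\beta^{-2}EVV^\top E)$ has variance of relative order $k/n$.
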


\begin{proof}
    See \cref{sec:mainproof}.
\end{proof}
The above one–sample result serves as the cornerstone for deriving the more delicate two–sample asymptotic limit stated in Theorem~\ref{thm:twosample}. In particular, the expansion arguments and concentration inequalities developed here form the analytical basis for the proof of the two–sample result.
\begin{remark}[Proof Overview]
We provide a brief outline of the proof of \cref{thm:onesample}. The argument begins with a series expansion of the projection distance 
\(\|\hat V \hat V\t - V^* {V^*}\t\|_F^2\) first developed in \citet{xia_normal_2021}. We show the statistic decomposes as
\[
T_{n}^{({\sf os})} = T_{1}^{(S)} + T_{1}^{(3)} + T^{(H)}_{1},
\]
where we denote the second–order term by \(T_{1}^{(S)}\), the third–order term by \(T_{1}^{(3)}\)  and the remaining higher–order terms collectively by \(T^{(H)}_{1}\).  The first order term vanishes.
The dominant component is \(T_{1}^{(S)}\), and the central limit behavior follows from a martingale central limit theorem (CLT), building on ideas from \citet{fan_asymptotic_2022}. In particular, we can show that
\[
\frac{T_{1}^{(S)} - \E \big(T_{1}^{(S)}\big)}{\sqrt{\Var\big(T_{1}^{(S)}\big)}} \xrightarrow{D} \mathcal{N}(0,1).
\]
The variance comparison is then crucial: we establish that
\(\Var( T_{1}^{(S)}) \gg \Var(T^{(3)}_{1})\). For the expectations, we verify that \(\E (T^{(3)}_{1})\) is of order \(o(\frac{k}{n^2 \rho_n^2})\). Furthermore, by a straightforward application of Chebyshev’s inequality, \(\big(T^{(3)}_{1}-\E \big(T^{(3)}_{1}\big)\big)\big/\sqrt{\Var\big(T_{1}^{(S)}\big)}\) vanish in probability. Finally from Theorem 1 of \cite{xia_normal_2021}, we show that \(\|T^{(H)}_1\|_F \lesssim \frac{k}{n^2 \rho_n^2}\) with high probability.

Putting these ingredients together, we arrive at the desired Gaussian limit under \cref{ass:sparse}:
\[
\frac{T_{n}^{({\sf os})} - \mu_1}{\sigma_1} \xrightarrow{D} \mathcal{N}(0,1),
\] where \(\mu_1=\E \left(T_{1}^{(S)}\right)+O_p \left(\frac{k}{n^2 \rho_n^2} \right)\) and \(\sigma_1^2= \Var \left(T_{1}^{(S)} \right)\).
\end{remark}

For the test implementation, it remains to consistently estimate the mean and variance appearing in \eqref{eq:clt1}--\eqref{param:12}. The estimation procedure mirrors that in the two–sample setting: population eigenvectors \(V\) are replaced by their empirical estimates \(\hat{V}\), and the population eigenvalues by their empirical counterparts \(\hat{\Lambda}\). 

Define the plug-in estimators
\begin{equation}\label{eq:estimatorsone}
\begin{aligned}
\hat{\mu}_1 
&=2\tr\Bigl[\hat{\beta}^{-2}\Bigl(\hat \Sigma\circ \hat\beta^\perp 
      + \Diag\bigl(\hat{\Sigma}\cdot \hat d \bigr)\Bigr)\Bigr], \\[6pt]
\hat{\sigma}_1^2 
&= 8 \Bigl\langle (J_n - I_n)\circ(\hat{\beta}^{-2}\circ\hat{\beta}^{-2}),\hat{\Sigma}^2\Bigr\rangle
      + 4 \,\hat{\alpha}\t \hat{K}\,\hat{\alpha},
\end{aligned}
\end{equation}
where 
\begin{align}\label{thm6.1}
    \hat{P}=\hat V\,\hat{\Lambda}\,\hat V\t, \qquad 
\hat{\beta}^k=\hat V\,\hat{\Lambda}^{k}\,\hat V\t, \qquad
\hat\beta^\perp=I_n-\hat V\hat V\t, \qquad 
\hat d=\diag(\hat\beta^\perp),
\end{align}
and
\begin{align}\label{thm6.2}
    \hat{\Sigma}=\hat{P}\circ(J_n-\hat{P}), \qquad
\hat{K}=2\,\hat{\Sigma}\circ(J_n-2\hat{P}), \qquad
\hat{\alpha}=\diag(\hat{\beta}^{-2}).
\end{align}

\begin{theorem}[Mean and variance estimators for the one–sample test]\label{thm:1sampleconsis}
The estimators in \cref{eq:estimatorsone} satisfy the following rate under \cref{ass:asymptotic1,ass:dense,ass:eigen-scaling,ass:incoherence}:
\[
|\hat{\mu}_1 - \mu_1| =O_p \left(\max \left(\frac{k}{n^2 \rho_n^2},\frac{k \sqrt{\log n}}{n^2 \rho_n^{3/2}}\right)\right),
\qquad
|\hat{\sigma}_1^2 - \sigma_1^2| = o_p \left(\frac{k^2}{n^3 \rho_n^2}\right).
\]
\end{theorem}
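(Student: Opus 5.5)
The plan is a perturbation argument. Each plug-in estimator is a smooth (polynomial) function of $(\hat V,\hat\Lambda)$. We expand it around $(V,\Lambda)$ and bound every difference using standard spectral perturbation bounds for Bernoulli noise. Write $E=A-P$. Under \cref{ass:asymptotic1,ass:eigen-scaling,ass:incoherence} and \cref{ass:dense}, we will use the following with high probability:
\begin{itemize}
\item $\|E\|\lesssim\sqrt{n\rho_n}$;
\item $\min_W\|\hat V-VW\|_{2,\infty}\lesssim \sqrt{k\log n}/(n\sqrt{\rho_n})$ (entrywise eigenvector bounds, e.g.\ those cited from \citet{agterberg_overview_2025,chen_spectral_2021});
\item $\|\hat V\hat V\t-VV\t\|\lesssim 1/\sqrt{n\rho_n}$;
\item $|\hat\lambda_i-\lambda_i|\lesssim \sqrt{\log n}$ up to a deterministic bias of order $1$.
\end{itemize}
From these we derive entrywise bounds on the building blocks. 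Since $\hat\beta^{k}$ is invariant to the rotation $W$, we get
\[
\|\hat\beta^{-2}-\beta^{-2}\|_\infty\lesssim \frac{k\sqrt{\log n}}{n^3\rho_n^{5/2}},\qquad
\|\hat P-P\|_\infty\lesssim \frac{k\sqrt{\rho_n\log n}}{n^{1/2}\cdot n^{1/2}},
\]
and hence $\|\hat\Sigma-\Sigma\|_\infty\lesssim \|\hat P-P\|_\infty$ and $\|\hat d-d\|_\infty\lesssim k\sqrt{\log n}/(n^{3/2}\sqrt{\rho_n})$. The expansion $\hat\beta^{-2}=\hat V\hat\Lambda^{-2}\hat V\t$ has relative error about $\sqrt{\log n}/\sqrt{n\rho_n}$ compared with $\|\beta^{-2}\|_\infty\asymp k/(n^3\rho_n^2)$.

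\textbf{Mean.} Write $\mu_1=\mu_1^{\circ}+O(k/(n^2\rho_n^2))$, where $\mu_1^\circ$ is the leading trace term. The term $-\diag(\Sigma)\circ d$ omitted from $\hat\mu_1$ contributes $\tr(\beta^{-2}\Diag(\diag(\Sigma)\circ d))\lesssim n\cdot k/(n^3\rho_n^2)\cdot\rho_n=k/(n^2\rho_n)$, which is negligible. We then telescope $\hat\mu_1-\mu_1^\circ$ into three pieces, in which one of $\hat\beta^{-2},\hat\Sigma,\hat\beta^\perp/\hat d$ is replaced by its error while the others are held fixed. For the piece $\tr[(\hat\beta^{-2}-\beta^{-2})(\Sigma\circ\beta^\perp)]$ we should not use entrywise bounds naively, since that gives $n^2$ terms. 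Instead we use $\beta^\perp\approx I-VV\t$ and the low rank of $\hat\beta^{-2}-\beta^{-2}$ (rank at most $2k$). The trace is then at most $2k\,\|\hat\beta^{-2}-\beta^{-2}\|\,\|\Sigma\circ\beta^\perp\|$. We have $\|\Sigma\circ\beta^\perp\|\lesssim n\rho_n$ (diagonal part about $\rho_n$, plus $\Sigma-\Sigma\circ VV\t$, whose spectral norm is $\lesssim n\rho_n$), and $\|\hat\beta^{-2}-\beta^{-2}\|\lesssim 1/(n\rho_n)^{2}\cdot 1/\sqrt{n\rho_n}$ (eigenvector part) plus $\sqrt{\log n}/(n\rho_n)^3$ (eigenvalue part). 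Together these give $k\sqrt{\log n}/(n^2\rho_n^{3/2})$ and $k/(n^2\rho_n^{2})$-type bounds. The $\hat\Sigma$-error term is handled the same way: $\|(\hat\Sigma-\Sigma)\circ\beta^\perp\|\le \|\hat\Sigma-\Sigma\|_F$, and $\hat\Sigma-\Sigma\approx(\hat P-P)\circ(J-2P)$ is low-rank-like. The $\Diag(\hat\Sigma\hat d)$ term is a sum of $n$ diagonal entries of size $k/(n^3\rho_n^2)\cdot n\rho_n$ times relative errors, giving the same order.

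\textbf{Variance.} This part is easier because only a relative error $o(1)$ is needed: $\sigma_1^2\asymp k^2/(n^3\rho_n^2)$, since $\|\beta^{-2}\|_\infty^2\,n^2\,(n\rho_n^2)\asymp k^2/(n^3\rho_n^2)$ (the term with $\Sigma^2$ dominates; $\alpha\t K\alpha$ has the same or smaller order). Each factor $\hat\beta^{-2}\circ\hat\beta^{-2}$, $\hat\Sigma^2$, $\hat\alpha$, $\hat K$ agrees with its population counterpart entrywise up to relative error $O(\sqrt{\log n/(n\rho_n)})=o(1)$, uniformly over entries. Sums of nonnegative-dominated terms then inherit this relative error: $|\langle X,Y\rangle-\langle\hat X,\hat Y\rangle|\le \langle |X|,|Y|\rangle\cdot o(1)$. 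The diagonal of $\Sigma^2$ and the entries of $\beta^{-2}$ are uniformly comparable to their sizes, although $\beta^{-2}$ entries can change sign. For that reason we bound by absolute values and use $\langle|\beta^{-2}|^{\circ2},\Sigma^2\rangle\lesssim k^2/(n^3\rho_n^2)$, which yields $o_p(k^2/(n^3\rho_n^2))$.

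\textbf{Main obstacle.} The hardest step is the mean: reaching the stated rate rather than a rate that is larger by a factor of $n$ or $\sqrt{\log n}$. This requires exploiting the low-rank structure of $\hat\beta^{-2}-\beta^{-2}$ and the projection $\beta^\perp$, instead of entrywise bounds. Two further issues are controlling the bias of $\hat\lambda_i$, which is of order $\|E\|^2/\lambda\asymp 1$ and contributes the $k/(n^2\rho_n^2)$ term, and showing that the $\hat d$ and $\hat\beta^\perp$ errors interact with $\Sigma$ only through a spectral-norm bound of order $1/\sqrt{n\rho_n}$. If the spectral approach loses a logarithmic factor, the fallback is a first-order expansion $\hat V\hat V\t-VV\t\approx \beta^{\perp}E\beta^{-1}+\beta^{-1}E\beta^\perp$. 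The linear-in-$E$ contribution to $\hat\mu_1$ is then a mean-zero weighted sum of independent entries, which Bernstein bounds by $k\sqrt{\log n}/(n^2\rho_n^{3/2})$, while the quadratic remainder is $O(k/(n^2\rho_n^2))$.
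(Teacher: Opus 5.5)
\textbf{There is a genuine gap in the mean part.} Your primary route (entrywise relative errors, or rank times operator norm) does not reach the stated rate. It falls short by a factor of order $\sqrt{n}$, not a logarithm.

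The dominant piece of the mean is $2\tr[\beta^{-2}\Diag(\Sigma d)]\asymp k/(n\rho_n)$. The theorem requires it to within $k\sqrt{\log n}/(n^2\rho_n^{3/2})$, i.e.\ relative accuracy $\sqrt{\log n}/(n\sqrt{\rho_n})$.
\begin{itemize}
\item \emph{Entrywise.} The correct max-norm bound is $\|\hat\beta^{-2}-\beta^{-2}\|_\infty\lesssim k\sqrt{\log n}/(n^{7/2}\rho_n^{5/2})$, a relative error $\sqrt{\log n/(n\rho_n)}$; your displayed $n^{3}$ is a typo-level slip. Your ``$n$ diagonal entries of size $k/(n^2\rho_n)$ times relative errors'' then gives $k\sqrt{\log n}/(n\rho_n)^{3/2}$. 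That is $\sqrt{n}$ times the target, not ``the same order''.
\item \emph{Low rank.} This is no better. Since $\|\Diag(\Sigma d)\|\asymp n\rho_n$, you get $2k\cdot n\rho_n\cdot(n\rho_n)^{-5/2}=2k(n\rho_n)^{-3/2}$. Your own numbers give exactly this for the $\Sigma\circ\beta^\perp$ piece as well.
\item \emph{The $\Sigma\circ\beta^\perp$ piece.} It is in fact harmless, but not for the reason you give. You replaced $\beta^\perp$ by $J_n-VV^\top$. In fact $\Sigma\circ\beta^\perp=\Diag(\Sigma)-\Sigma\circ VV^\top$ has norm $O(\sqrt{k}\,\rho_n)$. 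This incoherence observation is how the paper discards that piece and the $d$-versus-$\mathbf{1}_n$ replacements.
\item \emph{The $\hat\Sigma$-error piece.} The term $\tr[\beta^{-2}\Diag((\hat\Sigma-\Sigma)d)]$ only looks fine because your bound $\|\hat P-P\|_\infty\lesssim k\sqrt{\rho_n\log n}/n$ is too optimistic. Entries of the first-order term $VV^\top E$ have standard deviation of order $\sqrt{k\rho_n/n}$. With that, the entrywise bound is again off by roughly $\sqrt{n}$.
\end{itemize}

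\textbf{The missing idea} is that the extra factor $n^{-1/2}$ can only come from averaging over independent edges. This is what the paper does in \cref{inter1,inter3}.
\begin{itemize}
\item Expand $\hat V\hat\Lambda^{-2}\hat V^\top-V\Lambda^{-2}V^\top$, including the eigenvalue part via the identity \eqref{dseries}, and expand $\hat P-P$ via \cref{lem:series-expansion}, both in powers of $E$.
\item Paired with the deterministic diagonal weights, the linear terms are sums $\sum_{t\le q}w_{tq}E_{tq}$. Bernstein controls these at scale $\sqrt{\rho_n\log n}\,\|W\|_F\lesssim\sqrt{k\log n}/(n^2\rho_n^{3/2})$, rather than $\sum_{t\le q}|w_{tq}|$ times a max-norm.
\item Quadratic and higher terms are bounded by rank times operator norm, or by $\|\alpha\|\sqrt{n}\,\|T_l(E)\|$. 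This is where the $k/(n^2\rho_n^2)$ term comes from.
\end{itemize}
Your ``fallback'' is therefore not a hedge against a logarithmic loss; it is the argument. It must be applied to the eigenvector, eigenvalue and $\hat\Sigma$ errors alike.

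Your variance part is a legitimate alternative to the paper's Frobenius/Cauchy--Schwarz route through \cref{inter2,inter4}. Only $o(1)$ relative accuracy is needed, and bounding by absolute values handles the sign changes of $\beta^{-2}$. It does need the corrected entrywise bounds.
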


\begin{proof}
    See \cref{prf:FinEst1}.
\end{proof}
The following proposition establishes that the upper bound on \(|\hat{\mu}_1 - \mu_1|\) as stated in \cref{thm:1sampleconsis} is sharp and therefore cannot be improved for plug-in estimators.

\begin{proposition}\label{rem:Minimax2}
Suppose \(P\) is generated from a SBM. Let \(\hat{\mu}_1\) be the estimator defined as in \cref{eq:estimatorsone}. Then, under \cref{ass:asymptotic1,ass:eigen-scaling,ass:incoherence,ass:dense},
\[
    \bigl|\hat{\mu}_1 - \mu_1\bigr|
    \asymp
    \max \left(\frac{k}{n^2 \rho_n^2},\frac{k \sqrt{\log n}}{n^2 \rho_n^{3/2}}\right).
\]
\end{proposition}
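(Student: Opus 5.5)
The upper bound is already given by \cref{thm:1sampleconsis}, so only the matching lower bound needs proof. The constant-$\epsilon$ form of $\asymp$ in the Notation section requires that for every $\epsilon>0$ there is $a>0$ with $\Pr(|\hat\mu_1-\mu_1|\ge a\max(\cdot))\ge 1-\epsilon$. I would build the estimation error from its leading pieces and show that one of them has a nondegenerate magnitude of the stated order. Both $\hat\mu_1$ and $\mu_1$ (leading part) have the form $2\,\tr[\beta^{-2}(\Sigma\circ\beta^\perp+\Diag(\Sigma d))]$ up to negligible terms. Since $\Sigma=P-P\circ P$ and $\rho_n\to0$, the dominant part is $\Sigma\approx P$.

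The error then splits into two contributions. The first is an eigenvalue/eigenvector perturbation term: replacing $\beta^{-2}$ and $\beta^\perp$ by their hats, with $\hat V\hat V\t-VV\t$ linearized as $\beta^{-1}E\beta^\perp+\beta^\perp E\beta^{-1}$ where $E=A-P$. The second comes from $\hat P-P$ inside $\Sigma$. To first order $\hat P-P\approx VV\t E+EVV\t-VV\t EVV\t$.

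The plan is to isolate one explicit linear statistic $L=\langle M,E\rangle$ from this expansion, with a deterministic coefficient matrix $M$. For an SBM, $M$ is block-structured through $Z$, so $\Var(L)$ can be computed exactly as $\sum_{i\le j}M_{ij}^2P_{ij}(1-P_{ij})$. The target is $\sqrt{\Var(L)}\asymp k/(n^2\rho_n^{3/2})$. For example, $\tr(\beta^{-2}\Diag(\Sigma d))$ perturbed through $\hat\Lambda$ gives roughly $\sum_l \lambda_l^{-3}(\hat\lambda_l-\lambda_l)\cdot n\cdot n\rho_n$, with $\hat\lambda_l-\lambda_l\approx v_l\t E v_l$ of size $\sqrt{\rho_n}$. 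This yields a fluctuation of order $k/(n^2\rho_n^{3/2})$. A Lindeberg CLT for $L$ then gives that $|L|/\sqrt{\Var L}$ is bounded below with probability $1-\epsilon$.

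The $\sqrt{\log n}$ factor is the delicate part, and I would try it first via a maximum. For an SBM, the diagonal entries $d_i$ and the $\ell_{2,\infty}$ perturbation $\|\hat V-VW\|_{2,\infty}$ contribute terms like $\sum_i(\text{row-}i\text{ error})^2$ weighted by $\rho_n/(n\rho_n)^2$. The row-wise errors are $\max$-type quantities of order $\sqrt{\log n/(n^2\rho_n)}$ only in the worst case, though, so a sum would not produce $\sqrt{\log n}$. I would therefore check whether the sharp rate arises from a degree-type quantity $\max_i|\sum_j E_{ij}|$ entering through $\hat\Sigma$'s quadratic piece $\hat P\circ\hat P$. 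If so, a lower bound on the maximum of $n$ nearly independent centered binomials, via a Gaussian comparison or anti-concentration plus independence across a subset of vertices, gives $\gtrsim\sqrt{n\rho_n\log n}$. The second term $k/(n^2\rho_n^2)$ is deterministic-like: it is the bias of the second-order expansion, for example $\E[E\beta^{-1}E]$ terms of size $k/(n^2\rho_n^2)$. These have a definite sign for an SBM and can be computed explicitly from $Z,B$.

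Finally, the remaining pieces need to be controlled so they cannot cancel the isolated main term. Using the bounds from the proof of \cref{thm:1sampleconsis}, all other components are $o_p$ of whichever of the two rates dominates; the remainders are ordered by powers of $(n\rho_n)^{-1/2}$, which \cref{ass:dense} makes small. I would then combine the three ingredients with a case split on which rate dominates, since $\rho_n\log n\gtrless1$ decides it. In each case, the identified main term has magnitude bounded below with probability $1-\epsilon$ and everything else is $o_p$ of it, giving the two-sided $\asymp$. The main obstacle I expect is rigorously producing the $\sqrt{\log n}$ lower bound rather than only the upper bound. If that step fails, the fallback is to show the isolated Gaussian-type term alone already attains the rate claimed for the dominant case.
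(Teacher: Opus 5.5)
The mechanism you rely on for the lower bound is at the wrong scale, and the $\sqrt{\log n}$ lower bound you are looking for cannot be obtained. Write $X=A-P$ for your $E$. Your linear statistic $L=\langle M,X\rangle$ has $\sqrt{\Var(L)}\asymp k/(n^2\rho_n^{3/2})=\sqrt{\rho_n}\cdot k/(n^2\rho_n^2)$, which is $o$ of \emph{both} entries of the maximum because $\rho_n\to0$. A Lindeberg CLT for $L$ therefore gives the lower bound in neither regime. Nor can $L$ be the main term with everything else $o_p$ of it, since the second-order bias dominates it. Even in your eigenvalue example, the bias $\E(\hat\lambda_l-\lambda_l)$ is of order one, far larger than the fluctuation $v_l\t Xv_l=O(\sqrt{\rho_n})$, so that piece is driven by a mean term at scale $k/(n^2\rho_n^2)$.

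There is also no maximum to exploit. Up to a negligible remainder, $\hat\mu_1-\mu_1$ is a polynomial in $X$ whose coefficients are deterministic traces, and no degree-type maxima appear. The $\sqrt{\log n}$ in \cref{thm:1sampleconsis} is only an artifact of applying Bernstein's inequality at confidence $1-O(n^{-c})$. In the constant-probability sense of $\asymp$, Chebyshev bounds the linear part by $k/(n^2\rho_n^{3/2})$, so the error is of order $k/(n^2\rho_n^2)$ with probability $1-\epsilon$. Consequently, when $\rho_n\log n\to\infty$ (e.g.\ $\rho_n=(\log n)^{-1/2}$, which is compatible with all the assumptions), a lower bound of order $k\sqrt{\log n}/(n^2\rho_n^{3/2})$ cannot hold. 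One branch of your case split therefore cannot be completed, and your fallback fails for the same reason. Consistent with this, the paper's argument actually concludes $|\mu_1^{(2)}-\hat\mu_1^{(2)}|\asymp k/(n^2\rho_n^2)$, which matches the stated maximum only when $\rho_n\log n=O(1)$.

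What carries the proof is the bias computation you mention only in passing, and your sketch omits its key step. The paper first uses the consistency proof to reduce to the single term $\tr\bigl(V\Lambda^{-2}V\t\Diag(P\mathbf 1_n)-\hat V\hat\Lambda^{-2}\hat V\t\Diag(\hat P\mathbf 1_n)\bigr)$. It expands this term to second order in $X$ via \cref{lem:series-expansion} and the Xia expansion. Variance computations and Chebyshev then show the quadratic part concentrates around $E_1+E_2+E_3-E_4$, and all terms of degree at least three are bounded by $O_p\bigl(k/(n\rho_n)^{5/2}\bigr)$.

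These expectations are a priori of comparable order and enter with different signs, so the SBM structure is needed precisely to rule out cancellation:
\begin{itemize}
\item $\beta^\perp$ is block-diagonal with blocks $I_{n_m}-J_{n_m}/n_m$, so it commutes with the block-constant $\Diag(P\mathbf 1_n)$.
\item $\mathbf 1_n\in\col(Z)$, so $\beta^\perp\mathbf 1_n=0$.
\item $\Sigma\circ\beta^{-4}$ is block-constant, so $\beta^\perp(\Sigma\circ\beta^{-4})=0$.
\end{itemize}
These identities kill all but one expectation. The survivor, $E_3=\E\|\beta^{-2}X\beta^\perp\Diag(P\mathbf 1_n)^{1/2}\|_F^2$, is positive and explicitly $\asymp k/(n^2\rho_n^2)$. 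Your assertion that the bias terms ``have a definite sign for an SBM'' skips exactly this argument.

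Finally, you cannot import $o_p$ control of the other pieces from the proof of \cref{thm:1sampleconsis}. The pieces bounded there via \cref{inter1,inter3} are only $O_p$ at the full rate, and they are exactly where the main term lives. When $\rho_n\log n\asymp1$, their high-probability linear bound is of the same order as the bias. You therefore need this refined second-order expansion together with a constant-probability, variance-based bound on the linear part.
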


\begin{proof}
    See \cref{prop4}.
\end{proof}
An immediate implication is that the proposed inference procedure cannot remain valid outside the dense regime (\cref{ass:dense}).  In particular, when \(n\rho_{n} \lesssim \sqrt{n}\), the estimation error \(\frac{k}{n^{2}\rho_{n}^{2}}\) causes the test statistic to lose consistency.  This precisely motivates the necessity of \cref{ass:dense}.

\section{Discussion}\label{sec:discussion}

In this work, we have developed a rigorous and computationally efficient two-sample testing framework for assessing the equality of the underlying low-rank subspaces of two networks. At the core of our methodology are novel Gaussian limit theorems for the projection distance between estimated spectral subspaces. Notably, these limit theorems are the first of their kind established explicitly under a Bernoulli noise model. Supported by easily computable, data-driven plug-in estimators, our proposed test demonstrates strong empirical power in detecting subtle structural shifts. Furthermore, the generality of our approach extends beyond detecting community structure changes in SBMs. For instance, within the framework of GRDPGs, our test inherently determines whether the latent positions of two graphs span the exact same feature space, rendering the procedure robust to arbitrary linear transformations of the latent geometry.

While our findings establish a  theoretically grounded foundation for network hypothesis testing, they simultaneously open several broader avenues for future research. For instance, one could extend the test statistic to incorporate information beyond the principal angles between the estimated subspaces. Our current angle-based procedure provides robustness against global density fluctuations; however, incorporating the spectral distance between the underlying connectivity matrices could further enhance statistical power.

Furthermore, while our theoretical framework establishes rigorous guarantees under bounded, independent Bernoulli noise,  extending these limit theorems to accommodate more complex noise structures presents an exciting frontier. Two immediate avenues emerge:
\begin{itemize}
    \item \textbf{Heavy-Tailed Distributions:} Developing limit theorems for spectral projectors under heavy-tailed noise would broaden the applicability of spectral testing methods to weighted networks.
    \item \textbf{Dependent Edges:} Investigating the robustness of the test statistic under weak, local edge dependencies (e.g., transitivity or reciprocity) poses a challenging and valuable open problem.
\end{itemize}

Finally, our current analysis assumes the rank $k$ of the probability matrices is known \emph{a priori}. While several consistent estimators for $k$ are readily available in the literature, formally incorporating the theoretical uncertainty of rank selection into the asymptotic distribution of the test statistic is a mathematically non-trivial task. Developing a fully adaptive testing procedure that remains robust to an estimated rank, or constructing a test intrinsically agnostic to the specific choice of $k$ would be another interesting avenue for future research.

\appendix
\section{Proof of \cref{thm:onesample}} \label{sec:mainproof}
In this section we give the full proof of \cref{thm:onesample}.  Our results are based on three steps.  In \cref{sec:seriesexpansion} we apply the matrix series expansion of \citet{xia_normal_2021} to identify the leading-order and higher-order terms.  In \cref{sec:secondorderasymptotics} we study the asymptotic behavior of the leading-order term, and in \cref{sec:higherorder} we show that the higher-order terms are asymptotically negligible relative to the leading-order term.  Combining all these ingredients gives the final proof of \cref{thm:onesample}.

\subsection{Matrix Series Expansion and Second-Order Term} \label{sec:seriesexpansion}
In order to write the test statistic \(\|\hat{V}\hat{V}\t - VV\t\|_F^2\) using Theorem~1 of \citet{xia_normal_2021}, we first define the following good event and show it holds with high probability.
\begin{lemma}\label{lem:Good Set}
Suppose $A \sim \Ber(P)$, and let $X = A - P$. 
Define the event
\[
\mathcal{E}_{{\sf good}} = \{\|X\|\lesssim \sqrt{n\rho_n}\,\}.
\]
Then under \cref{ass:asymptotic1,ass:eigen-scaling,ass:sparse},
\(
\mathbb{P}(\mathcal{E}_{{\sf good}})\ge1-O\left(n^{-19}\right).
\)
Moreover, on the event \(\mathcal{E}_{{\sf good}}\) the following bound holds:
\[
\big\|\hat V\hat V\t - V V\t\big\|_F^2 \lesssim \frac{k}{n\rho_n},
\]
where \(\hat V\) and \(V\) are the matrices of estimated and population top-\(k\) eigenvectors, respectively. Thus, we have 
\[\big\|\hat V\hat V\t - V V\t\big\|_F^2 = O_p \left(\frac{k}{n\rho_n}\right) \quad \text{and} \quad \|X\|=O_p( \sqrt{n \rho_n}).\]
\end{lemma}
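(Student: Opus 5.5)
The plan has two parts. First we establish the spectral norm concentration $\|X\|\lesssim\sqrt{n\rho_n}$ with probability $1-O(n^{-19})$. Then we convert it, on the good event, into the Frobenius bound on the projection difference via Davis--Kahan.

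For the concentration step, note that $X=A-P$ is a symmetric random matrix. Its upper-triangular entries are independent, centered, and bounded by $1$ in absolute value. Under \cref{ass:asymptotic1} each variance satisfies $\Var(X_{ij})=P_{ij}(1-P_{ij})\le b\rho_n$. Hence the maximal row variance obeys $\sigma^2:=\max_i\sum_j \E X_{ij}^2\le b\,n\rho_n$.

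We will invoke a sharp nonasymptotic bound for random matrices with independent bounded entries. The candidate is the Bandeira--van Handel inequality, or the Lei--Rinaldo/Feige--Ofek style bound used in \citet{lei2015consistency}. It gives, for any constant $c>0$,
\[
\Pr\Bigl(\|X\|\ge C_0\sigma + C_1\sqrt{\log n}+t\Bigr)\le n e^{-t^2/c'} ,
\]
where $\sigma\asymp\sqrt{n\rho_n}$ and the max-entry term is $\max_{ij}\|X_{ij}\|_\infty\sqrt{\log n}\le\sqrt{\log n}$. We choose $t=C\sqrt{\log n}$ with $C$ large enough that the failure probability is $O(n^{-19})$. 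This yields $\|X\|\lesssim \sqrt{n\rho_n}+\sqrt{\log n}$. By \cref{ass:sparse} we have $n\rho_n\gg\log n$, so the second term is dominated and $\|X\|\lesssim\sqrt{n\rho_n}$. This is exactly the place where \cref{ass:sparse} is needed. Getting the clean $\sqrt{n\rho_n}$ rate rather than $\sqrt{n\rho_n\log n}$, as would follow from matrix Bernstein, is the main technical point. With Lei--Rinaldo's Theorem 5.2 (valid once $n\rho_n\gtrsim\log n$) one can directly quote the bound, with the constant in the probability tuned to give exponent $19$.

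For the projection bound, work on $\mathcal{E}_{\sf good}$. By \cref{ass:eigen-scaling}, the $k$ nonzero eigenvalues of $P$ have magnitude at least $C\,n\rho_n$, while the remaining $n-k$ eigenvalues are $0$. By Weyl's inequality, the eigenvalues of $A$ lie within $\|X\|\lesssim\sqrt{n\rho_n}=o(n\rho_n)$ of those of $P$. So for large $n$ the $k$ largest-magnitude eigenvalues of $A$ are separated from the rest by a gap of order $n\rho_n$, and $\hat V$ is well defined. Next apply the Davis--Kahan $\sin\Theta$ theorem (Yu--Wang--Samworth variant, which needs the gap only for the population matrix):
\[
\|\hat V\hat V\t-VV\t\|_F\le \sqrt{2}\,\|\sin\Theta(\hat V,V)\|_F\le \frac{2\sqrt{2k}\,\|X\|}{\min_i|\lambda_i|}\lesssim \frac{\sqrt{k}\sqrt{n\rho_n}}{n\rho_n}.
\]
Squaring gives $\|\hat V\hat V\t-VV\t\|_F^2\lesssim k/(n\rho_n)$. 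Finally, since $\mathcal{E}_{\sf good}$ holds with probability $1-O(n^{-19})$, both bounds hold in the $O_p$ sense of the paper's notation, taking $c=19>1$.

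One minor subtlety must be handled: with eigenvalues of both signs, the ``leading $k$'' refers to the largest magnitudes. The Davis--Kahan gap must then be taken between $\{\lambda_1,\dots,\lambda_k\}$ and $\{0\}$, which is still $\min_i|\lambda_i|\gtrsim n\rho_n$. The Weyl argument above ensures that the empirical ordering by magnitude selects the correct cluster.
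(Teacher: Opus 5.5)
Your proposal is correct and follows essentially the paper's route: a Bandeira--van Handel tail bound, then Davis--Kahan with gap $\min_i|\lambda_i|\gtrsim n\rho_n$. The only differences are cosmetic: the paper takes $t=\sqrt{n\rho_n}$ and uses \cref{ass:sparse} to make $\exp(\log n - n\rho_n/C^2)=O(n^{-19})$, whereas you take $t=C\sqrt{\log n}$ and use \cref{ass:sparse} to absorb $\sqrt{\log n}$ into $\sqrt{n\rho_n}$; and the paper cites the magnitude-ordered Corollary~2.8 of \citet{chen_spectral_2021}, which avoids the contiguous-block requirement of the Yu--Wang--Samworth statement that you correctly flag and resolve via Weyl's inequality.
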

\begin{proof}
    See \cref{sec:goodsetproof}.
\end{proof}
Define 
\begin{align}
   \beta^0 = \beta^\perp &= I - VV\t; \\
    \beta^{-l} &= V \Lambda^{-l} V\t,
\end{align}
and for each integer $l \geq 1$,
\begin{align} \label{Sk_1sample}
    S_{l}(X)
&=\sum_{\substack{s=(s_1,\dots,s_{l+1}):\\ s_1+\cdots+s_{l+1}=l}}
(-1)^{1+\tau(s)}\;
\beta^{-s_1}\, X\, \beta^{-s_2}\, X\cdots \beta^{-s_{l}} X\, \beta^{-s_{l+1}}, \\
\tau(s)&=\sum_{j=1}^{l+1} 1\{s_j>0\},
\end{align}
where we recall \(X:=A-P\) denotes the perturbation matrix.  On the event $\mathcal{E}_{{\sf good}}$, \cref{ass:eigen-scaling} impies that $\| X \| \ll \lambda_k$, and hence the conditions of Theorem 1 of \citet{xia_normal_2021} are satisfied.  Applying this result, we have that 
\begin{equation}\label{eq1}
\begin{aligned}
\|\hat{V}\hat{V}\t - VV\t\|_F^2
&=\,-2\,\sum_{l\ge2}\,\langle VV\t,\,S_l(X)\rangle_F \\
&=2\,\|\beta^{\perp}X\beta^{-1}\|_F^2
\;-\;2\,\sum_{l\ge3}\,\langle VV\t,\,S_l(X)\rangle_F.
\end{aligned}
\end{equation}
The leading nonvanishing contribution to the eigenspace projection error thus arises from the second–order term \(2\,\|\beta^{\perp}X\beta^{-1}\|_F^2\).  We derive its asymptotic distribution in \cref{sec:secondorderasymptotics}, and study the higher-order terms in \cref{sec:higherorder}.

\subsection{Second-Order Term Asymptotics}  \label{sec:secondorderasymptotics}
We begin this subsection by deriving the mean and variance of the leading contribution to the test statistic. Define
\begin{align*}
T_{1}^{(S)} =
2\big\|\beta^{\perp} X \beta^{-1}\big\|_F^2
= 2\tr\big(\beta^{-1}X\t\beta^{\perp} X\beta^{-1}\big).
\end{align*}
Our first result demonstrates that the variance of this term can be approximated by the variance of a similar term with $\beta^{\perp} $ taken to be the identity.
\label{sec:secondorder}
\begin{lemma}\label{lem:beta_perp1one}
Suppose \cref{ass:asymptotic1,ass:sparse,ass:eigen-scaling,ass:incoherence} hold. Then
\[
\frac{\Var\big(\|\beta^{\perp} X \beta^{-1}\|_F^2\big)}
     {\Var\big(\|X \beta^{-1}\|_F^2\big)}
\longrightarrow 1,
\]
\end{lemma}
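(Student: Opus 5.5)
Write $\|\beta^{\perp}X\beta^{-1}\|_F^2 = \|X\beta^{-1}\|_F^2 - \|VV\t X\beta^{-1}\|_F^2$. This holds because $\beta^\perp$ and $VV\t$ are complementary orthogonal projections, so
\[
\|X\beta^{-1}\|_F^2=\|\beta^\perp X\beta^{-1}\|_F^2+\|VV\t X\beta^{-1}\|_F^2 .
\]
Set $Q_1=\|X\beta^{-1}\|_F^2$ and $Q_2=\|VV\t X\beta^{-1}\|_F^2=\|V\t X V\Lambda^{-1}\|_F^2$. Then $\Var(Q_1-Q_2)=\Var(Q_1)-2\Cov(Q_1,Q_2)+\Var(Q_2)$. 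By Cauchy--Schwarz it suffices to show $\Var(Q_2)=o(\Var(Q_1))$, since then $|\Cov(Q_1,Q_2)|\le \sqrt{\Var(Q_1)\Var(Q_2)}=o(\Var(Q_1))$ and the ratio tends to one.

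\textbf{Lower bound on $\Var(Q_1)$.} Write $Q_1=\sum_{i,j,l,m}X_{ij}X_{il}\beta^{-1}_{jm}\beta^{-1}_{lm}=\sum_{i}\sum_{j,l}X_{ij}X_{il}(\beta^{-2})_{jl}$, a quadratic form $\langle X^2,\beta^{-2}\rangle$ in the independent upper-triangular entries. Decompose it into the diagonal part $\sum_{i,j}X_{ij}^2(\beta^{-2})_{jj}$ and the off-diagonal part. After symmetrization, the off-diagonal part is a degenerate U-statistic in the independent entries. Its variance is computed exactly through pairwise uncorrelatedness of products of distinct centered entries. This gives the term $8\langle (J_n-I_n)\circ(\beta^{-2}\circ\beta^{-2}),\Sigma^2\rangle$ plus the diagonal contribution $4\alpha\t K\alpha$, matching $\sigma_1^2$ in \cref{thm:onesample}.

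To get a lower bound, use \cref{ass:asymptotic1}, which gives $\Sigma_{ij}\asymp\rho_n$ entrywise, so that $(\Sigma^2)_{jl}\gtrsim n\rho_n^2$. Then
\[
\langle (J_n-I_n)\circ(\beta^{-2})^{\circ 2},\Sigma^2\rangle\gtrsim n\rho_n^2\bigl(\|\beta^{-2}\|_F^2-\textstyle\sum_j(\beta^{-2})_{jj}^2\bigr).
\]
By \cref{ass:eigen-scaling}, $\|\beta^{-2}\|_F^2=\tr(\Lambda^{-4})\asymp k/(n\rho_n)^4$. By \cref{ass:incoherence}, $(\beta^{-2})_{jj}\lesssim k/(n\cdot n^2\rho_n^2)$, so the diagonal sum is $O(k^2/(n^5\rho_n^4))$, which is negligible. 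Hence $\Var(Q_1)\gtrsim k/(n^3\rho_n^2)$, the order of $\sigma_1^2$. The diagonal term $\alpha\t K\alpha$ is nonnegative up to lower order, since $K\approx 2\Sigma$ is entrywise positive when $\rho_n\to0$, so it cannot cancel this bound.

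\textbf{Upper bound on $\Var(Q_2)$.} Write $Q_2=\sum_{a,b}\lambda_b^{-2}(v_a\t X v_b)^2$. Each $Y_{ab}=v_a\t Xv_b$ is a linear form in independent entries with $\E Y_{ab}^2\lesssim\rho_n$. Using independence and incoherence, the fourth moment $\E Y_{ab}^4$ is at most $3(\E Y_{ab}^2)^2$ plus a fourth-cumulant term $\sum_{i\le j}\kappa_4(X_{ij})(\cdot)^4\lesssim\rho_n\cdot n^2\cdot (k/n)^4$. Hence $\Var(Y_{ab}^2)\lesssim\rho_n^2$. Since there are only $k^2$ such terms, Cauchy--Schwarz gives
\[
\Var(Q_2)\lesssim k^4\rho_n^2/(n\rho_n)^4=k^4/(n^4\rho_n^2).
\]
Comparing with $k/(n^3\rho_n^2)$, the ratio is $O(k^3/n)\to0$, since $k$ is fixed.

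\textbf{Main obstacle.} The delicate step is the lower bound on $\Var(Q_1)$. It requires carefully enumerating which index coincidences in $\E[X_{ij}X_{il}X_{i'j'}X_{i'l'}]$ survive, given the symmetry $X_{ij}=X_{ji}$ and the allowed self-loops, and checking that no cancellation between the off-diagonal and diagonal contributions destroys the $k/(n^3\rho_n^2)$ order. I would first compute the variance exactly, and only then apply \cref{ass:asymptotic1,ass:eigen-scaling,ass:incoherence} to bound each piece. \cref{ass:sparse} enters only to guarantee that $\rho_n$-order terms dominate the fourth-cumulant corrections, which is where $n\rho_n\to\infty$ is used.
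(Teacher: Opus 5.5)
Your proof is correct. Its skeleton is the paper's: the paper also writes the difference as $\tr(VV^\top X\beta^{-2}X^\top)$, which is your $Q_2$, and reduces the lemma to $\Var(Q_2)=o(\Var(Q_1))$. The paper leaves the Cauchy--Schwarz step on the covariance implicit; you make it explicit.

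The difference is in how the two variances are estimated.

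\textbf{Denominator.} The paper expands both traces entrywise but keeps only the diagonal weights $(\beta^{-2})_{cc}$. It then asserts $\Var(Q_1)\asymp \rho_n n\|\beta^{-2}\|_F^2$, which overstates the true order by a factor $\rho_n^{-1}$. The leading part of $\Var(Q_1)$, of order $k/(n^3\rho_n^2)$, comes from the off-diagonal entries of $\beta^{-2}$ paired with $\Sigma^2$. That is exactly the second-chaos piece your lower bound isolates. The lemma's conclusion is unaffected, as your two bounds confirm.

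\textbf{Numerator.} You use that $Q_2$ depends on $X$ only through the $k\times k$ matrix $V^\top XV$. This replaces the paper's index bookkeeping with fourth-moment bounds on $k^2$ linear forms. The cost is a cruder dependence on $k$: summing standard deviations over the $k^2$ terms gives a ratio of order $k^3/n$, against the $k(1+\rho_n)/n$ the paper states. This is immaterial since $k$ is fixed.

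\textbf{Minor point.} The diagonal piece $2\sum_{i,j}\alpha_j X_{ij}^2$ does not have variance exactly $4\alpha^\top K\alpha$. Its variance is $4\sum_{i<j}(\alpha_i+\alpha_j)^2\Var(X_{ij}^2)+4\sum_i \alpha_i^2\Var(X_{ii}^2)$. This does not affect your argument, because you only need this piece to contribute nonnegatively. That is automatic: it is uncorrelated with the off-diagonal chaos, so its contribution to $\Var(Q_1)$ is itself a variance. No appeal to $K\approx 2\Sigma$ being entrywise positive is needed.
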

\begin{proof}
    See \cref{sec:betaperp1proof}.
\end{proof}
The following result computes this mean and variance.
\begin{lemma}[Second--order term: mean and variance of one-sample test]\label{lem:second-order1}
Suppose \cref{ass:asymptotic1,ass:sparse,ass:eigen-scaling,ass:incoherence} hold. Define
\[
\Sigma := P\circ(J_n-P),\qquad d:=\diag(\beta^\perp),\qquad
\alpha:=\diag(\beta^{-2}),\qquad K:=2\,\Sigma\circ(J_n-2P).
\]
The mean and variance of \(T_{1}^{(S)}\) admit the following expressions:
\[
\mathbb{E}\bigl[T_{1}^{(S)}\bigr]
= 2\,\tr\Bigl[\beta^{-2}\bigl(\Sigma\circ\beta^\perp + \mathrm{Diag}(\Sigma\cdot d-\diag(\Sigma) \circ d)\bigr)\Bigr],
\]
and 
\[
\Var\bigl(T_{1}^{(S)}\bigr)
= 8\big\langle (J_n-I_n)\circ(\beta^{-2}\circ\beta^{-2}),\Sigma^2 \big\rangle
+ 4\,\alpha\t K \alpha + o \left( \frac{k^2}{n^3 \rho_n^2} \right).
\]
The quantities coincide with the leading mean and variance contributions stated in \cref{param:11,param:12}. In particular, we also have
\(
\Var\bigl(T_{1}^{(S)}\bigr)
\asymp
\frac{k^{2}}{n^{3}\rho_n^{2}}.
\)

\end{lemma}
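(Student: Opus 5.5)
The plan is a direct moment computation. I will treat $T_1^{(S)}$ as a quadratic form in the independent upper-triangular entries $\{X_{ab}: a\le b\}$ of $X=A-P$. Since $\beta^{-1}$ is symmetric,
\[
T_1^{(S)} = 2\tr\bigl(\beta^{-2}X\beta^\perp X\bigr)=2\sum_{i,a,b,j}\beta^{-2}_{ji}X_{ia}\beta^\perp_{ab}X_{bj}.
\]
Using $X_{ia}=X_{ai}$, I regroup this as $\sum_{e,f} c_{ef}X_eX_f$ over unordered index pairs $e=\{i,a\}$, $f=\{b,j\}$. The coefficients $c_{ef}$ are explicit bilinear combinations of entries of $\beta^{-2}$ and $\beta^\perp$.

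\textbf{Mean.} By independence and centering, $\E[X_eX_f]=\Sigma_e\mathbbm{1}\{e=f\}$. It therefore suffices to enumerate the index coincidences in the four-fold sum.
\begin{enumerate}
\item The coincidence $(b,j)=(i,a)$ gives $\sum\beta^{-2}_{ai}\Sigma_{ia}\beta^\perp_{ia}$, which is $\tr[\beta^{-2}(\Sigma\circ\beta^\perp)]$.
\item The coincidence $(b,j)=(a,i)$ gives $\sum_{i,a}\beta^{-2}_{ii}\Sigma_{ia}\beta^\perp_{aa}$, which is $\tr[\beta^{-2}\Diag(\Sigma d)]$.
\item When $i=a$ both coincidences describe the same pair, so it was counted twice. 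Removing the double count gives the correction $-\tr[\beta^{-2}\Diag(\diag(\Sigma)\circ d)]$.
\end{enumerate}
Multiplying by $2$ gives the stated expression for $\E[T_1^{(S)}]$ exactly; no approximation is needed.

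\textbf{Variance.} By \cref{lem:beta_perp1one}, it suffices to compute $4\Var(\|X\beta^{-1}\|_F^2)=4\Var(\tr(\beta^{-2}X^2))$ up to a factor $1+o(1)$. Write $Q=\sum_{i,j,a}\beta^{-2}_{ij}X_{ja}X_{ai}$. For a quadratic form in independent centered variables,
\[
\Var\Bigl(\sum c_{ef}X_eX_f\Bigr)=2\sum_{e\neq f}c_{ef}^2\Sigma_e\Sigma_f+\sum_e c_{ee}^2\bigl(\E X_e^4-\Sigma_e^2\bigr).
\]
\begin{enumerate}
\item \emph{Off-diagonal pairs.} Pairs $X_{ja}X_{ai}$ with $i\ne j$ have coefficient proportional to $\beta^{-2}_{ij}$. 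Summing over the shared index $a$ produces $\sum_{i\neq j}(\beta^{-2}_{ij})^2(\Sigma^2)_{ij}$. With the combinatorial factors this yields $8\langle (J_n-I_n)\circ(\beta^{-2}\circ\beta^{-2}),\Sigma^2\rangle$.
\item \emph{Diagonal terms.} The terms $X_{ia}^2$ have coefficient $\beta^{-2}_{ii}+\beta^{-2}_{aa}$, i.e.\ built from $\alpha$. Here $\E X^4-\Sigma^2=\Sigma(1-4\Sigma)$, which matches $K=2\Sigma\circ(J_n-2P)$ up to $O(\rho_n^2)$ terms. Expanding the square gives the cross term $4\alpha^\top K\alpha$, plus a diagonal piece $\sum_i\alpha_i^2(\ldots)$.
\item \emph{Absorbed pieces.} The diagonal piece in item 2, the $i=j$ contributions, and the $O(\rho_n^2)$ corrections are bounded by $o(k^2/(n^3\rho_n^2))$.
\end{enumerate}

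\textbf{Orders of magnitude.} \cref{ass:incoherence} and \cref{ass:eigen-scaling} give
\[
|\beta^{-2}_{ij}|\le\|V_{i\cdot}\|\,\|V_{j\cdot}\|\,\lambda_k^{-2}\lesssim \frac{k}{n}(n\rho_n)^{-2},
\]
and \cref{ass:asymptotic1} gives $\Sigma_{ij}\asymp\rho_n$. Hence the leading term is at most $n^2\cdot k^2n^{-2}(n\rho_n)^{-4}\cdot n\rho_n^2=k^2/(n^3\rho_n^2)$. The $\alpha^\top K\alpha$ term is $n^2\cdot k^2n^{-2}(n\rho_n)^{-4}\rho_n$, which is smaller by a factor $1/n$. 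The remaining pieces are smaller by factors of $1/n$ or $\rho_n$.

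For the lower bound I will use $\Sigma\ge a\rho_n(1-b\rho_n)J_n$ entrywise. This gives
\[
\langle (J_n-I_n)\circ(\beta^{-2}\circ\beta^{-2}),\Sigma^2\rangle\gtrsim n\rho_n^2\bigl(\|\beta^{-2}\|_F^2-\textstyle\sum_i\alpha_i^2\bigr),
\]
and $\|\beta^{-2}\|_F^2=\sum_l\lambda_l^{-4}\asymp k(n\rho_n)^{-4}$. This yields a lower bound of order $k/(n^3\rho_n^2)$. For the claimed $k^2$ I would also track $\tr(\beta^{-2})^2$-type contributions; since $k$ is fixed, $\asymp$ holds either way.

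\textbf{Main obstacle.} I expect the hardest step to be the careful combinatorial accounting of coincident indices, namely symmetry $X_{ab}=X_{ba}$ and the diagonal entries that are counted twice. This matters because the exact constants $8$ and $4$ and the $-\diag(\Sigma)\circ d$ correction depend on it. I would organize it by explicitly computing the symmetric coefficient matrix $c_{ef}$ first.
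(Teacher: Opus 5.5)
Your proposal is correct and follows essentially the paper's route. For the mean you do the exact index-coincidence computation of $\E[X\beta^\perp X]$. For the variance you reduce via \cref{lem:beta_perp1one} to $\Var(\|X\beta^{-1}\|_F^2)$ and compute it as a quadratic form in the edge variables; the paper groups by rows $Q_i$ and adds the covariances through the shared entry $X_{ik}$, which is the same bookkeeping. You then use the same Frobenius-norm lower bound.

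One small correction: your fourth-moment factor $\Sigma\circ(J_n-4\Sigma)$ is the right one, but it is roughly half of $K$, not equal to it up to $O(\rho_n^2)$. This is harmless, because the whole $\alpha$-dependent part is $O\bigl(k^2/(n^4\rho_n^3)\bigr)$, i.e.\ smaller than the leading term by a factor of order $1/(n\rho_n)$ (not $1/n$), so it is absorbed into the $o\bigl(k^2/(n^3\rho_n^2)\bigr)$ remainder.
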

\begin{proof}
    See \cref{sec:second-order1proof}.
\end{proof}

Having calculated these moments, we now study the asymptotic distribution of the first-order approximation $T_1^{(S)}$.  The following result uses the martingale central limit theorem to establish the Gaussian limit for the centered and scaled second-order term.
\begin{lemma}[Second--order asymptotic distribution of one-sample test]\label{lem:second-order-clt1}
Suppose \cref{ass:asymptotic1,ass:sparse,ass:eigen-scaling,ass:incoherence} hold.  The quantity
\(
T_{1}^{(S)} \)
satisfies
\[
\frac{
  T_{1}^{(S)} - \,\mathbb{E}[T_{1}^{(S)}]
}{
  \sqrt{\Var(T_{1}^{(S)})}
}
\xrightarrow{D} \mathcal{N}(0,1)
\]
as $n \to \infty$.
\end{lemma}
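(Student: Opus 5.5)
We will write $T_1^{(S)}$ as a quadratic form in the independent upper-triangular noise entries and apply the martingale CLT. Set $M := \beta^{-1}\beta^{\perp}\beta^{-1}$, which is symmetric. Since $\beta^{-1}$ is symmetric,
\[
T_1^{(S)} = 2\tr\bigl(\beta^{-1}X\beta^{\perp}X\beta^{-1}\bigr) = 2\tr(XMX)\cdot 0 + 2\sum_{a,b,c} X_{ab}X_{ac}\bigl(\beta^{\perp}\bigr)_{bc}\ldots
\]
More cleanly, $T_1^{(S)}=2\sum_{i,j,l,m} X_{ij}X_{lm}\,w_{ij,lm}$ with deterministic weights $w_{ij,lm}=(\beta^{\perp})_{il}(\beta^{-2})_{jm}$, after using symmetry of $X$. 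We index the independent variables $\xi_e=X_{ij}$ for $e=(i\le j)$ and obtain
\[
T_1^{(S)}-\E T_1^{(S)} = \sum_{e}\bigl(\xi_e^2-\E\xi_e^2\bigr)c_e + \sum_{e\neq e'}\xi_e\xi_{e'}W_{ee'} =: D_n + Q_n .
\]
Here $c_e$ and $W_{ee'}$ are explicit bilinear combinations of entries of $\beta^{\perp}$ and $\beta^{-2}$. By \cref{lem:second-order1} we have $\Var(T_1^{(S)})\asymp k^2/(n^3\rho_n^2)$. The diagonal part $D_n$ and the off-diagonal part $Q_n$ are uncorrelated, with variances $4\alpha^\top K\alpha$ (up to the $\beta^\perp$ correction, handled as in \cref{lem:beta_perp1one}) and $8\langle (J_n-I_n)\circ(\beta^{-2}\circ\beta^{-2}),\Sigma^2\rangle$ respectively.

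\emph{Martingale step.} We order the edges $e_1,e_2,\dots$ lexicographically, let $\mathcal F_t=\sigma(\xi_{e_1},\dots,\xi_{e_t})$, and define the differences
\[
Y_t = c_{e_t}\bigl(\xi_{e_t}^2-\E\xi_{e_t}^2\bigr) + 2\xi_{e_t}\sum_{s<t}W_{e_te_s}\xi_{e_s}.
\]
These form a martingale difference array with $\sum_t Y_t = T_1^{(S)}-\E T_1^{(S)}$. By the martingale CLT (e.g.\ Hall--Heyde, as used in \citet{fan_asymptotic_2022}) it suffices to check two conditions, with $s_n^2=\Var(T_1^{(S)})$:
\begin{enumerate}
\item[(i)] $s_n^{-4}\sum_t \E Y_t^4\to 0$ (Lyapunov);
\item[(ii)] $s_n^{-2}\sum_t \E[Y_t^2\mid\mathcal F_{t-1}]\to 1$ in probability, which we verify via $s_n^{-4}\Var\bigl(\sum_t \E[Y_t^2\mid \mathcal F_{t-1}]\bigr)\to0$.
\end{enumerate}

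\emph{Weight bounds.} The key inputs are entrywise bounds on the weights from \cref{ass:eigen-scaling,ass:incoherence}: $|(\beta^{-2})_{jm}|\lesssim k/(n^3\rho_n^2)$, $\|\beta^{-2}\|\lesssim (n\rho_n)^{-2}$, and $\beta^\perp=I-VV^\top$ with $|(VV^\top)_{il}|\lesssim k/n$. Since $\E\xi_e^4\lesssim\rho_n$, each fourth moment reduces to sums such as $\sum_e c_e^4\rho_n$ and $\sum_t\sum_{s<t}W_{ts}^4\rho_n^2$. We also need $\sum_t\bigl(\sum_s W_{ts}^2\rho_n\bigr)^2\rho_n$. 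Each is bounded using row-sum estimates of the form $\sum_m (\beta^{-2})_{jm}^2\le\|\beta^{-2}\|_{2,\infty}^2\lesssim k/(n^5\rho_n^4)$. This should give Lyapunov ratios of order $1/(n\rho_n)$ or $1/n$, which vanish under \cref{ass:sparse}.

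\emph{Main obstacle.} We expect condition (ii) to be the hardest step. Expanding $\sum_t \E[Y_t^2\mid\mathcal F_{t-1}]$ yields $\sum_t \sigma_{e_t}^2\bigl(\sum_{s<t}W_{ts}\xi_{e_s}\bigr)^2$ plus cross terms with $c_e$. Its variance is again a quadratic form, and we need it to be $o(s_n^4)$. This requires the trace-type quantity $\tr\bigl((\widetilde W\,\mathrm{Diag}(\sigma^2))^4\bigr)$ to be negligible relative to $\bigl(\tr(\widetilde W\,\mathrm{Diag}(\sigma^2))^2\bigr)^2$, where $\widetilde W$ is the Hadamard-weighted matrix. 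Our plan is to exploit the low rank of $\beta^{-2}$: the off-diagonal kernel is essentially $(\beta^\perp\otimes\beta^{-2})$ restricted to edges, and its spectral norm is $\lesssim (n\rho_n)^{-2}$ while its Frobenius norm is of order $k/(n\rho_n)^2\cdot\sqrt{\cdot}$. Consequently $\|\widetilde W_\Sigma\|^2/\|\widetilde W_\Sigma\|_F^2\lesssim 1/k\cdot(\ldots)\to0$ as long as the effective rank $\asymp n\cdot k$ diverges. We will then treat the $\beta^{\perp}$ versus $I$ discrepancy and the self-loop/diagonal terms separately, using \cref{lem:beta_perp1one} to argue that they do not alter the leading conditional variance.
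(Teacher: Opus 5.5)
Your plan follows essentially the same route as the paper's proof. Both write the second-order term as a quadratic form in the independent upper-triangular entries, reveal the entries lexicographically, and apply a martingale CLT by checking a Lyapunov condition and concentration of the conditional variance. The paper follows \citet{fan_asymptotic_2022} and invokes Lemma 9.12 of Bai--Silverstein; your increments $Y_t$ are exactly its $X_{ri}b_{ri}+(X_{ri}^2-\nu_{ri}^2)c_{ri}$. The only organizational difference is the treatment of $\beta^{\perp}$. The paper first uses \cref{lem:beta_perp1one} and Slutsky to replace $\|\beta^{\perp}X\beta^{-1}\|_F^2$ by $\|X\beta^{-1}\|_F^2$, so its kernel is $I_n\otimes\beta^{-2}$ and only edges sharing a vertex interact. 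You carry the dense $\beta^{\perp}$ coupling and invoke that lemma afterwards. Your Lyapunov estimates work either way.

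The step that does not follow as you wrote it is (ii). Let $\sigma_e^2=\Var(\xi_e)$ and let $L$ be the strictly lower-triangular part of $\widetilde W$ in your edge ordering. Then the random part of $\sum_t\E[Y_t^2\mid\mathcal F_{t-1}]$ is $4\,\xi^{\top}L^{\top}\Diag(\sigma^2)L\,\xi$. Up to an easily negligible fourth-moment diagonal term, its variance is at most a constant times $\|\Diag(\sigma)L\Diag(\sigma)\|^2\,\|\Diag(\sigma)L\Diag(\sigma)\|_F^2$.

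So you need an operator-norm bound on $L$, not on $\widetilde W$. Triangular truncation is not bounded in operator norm uniformly in the dimension; it can cost a factor $\log N$, here with $N\asymp n^2$. Hence $\|\widetilde W\|\lesssim (n\rho_n)^{-2}$ does not by itself give what you need. The quantity $\tr\bigl((\widetilde W\Diag(\sigma^2))^4\bigr)$ that you propose to control is the one relevant to a de Jong-type fourth-moment argument, not to the martingale conditional-variance condition.

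The patch is easy. By the Schur test and symmetry of $\widetilde W$,
\[
\|L\|\le\max_t\sum_s|\widetilde W_{ts}|\lesssim\Bigl(1+\max_i\sum_l|(VV^{\top})_{il}|\Bigr)\max_j\sum_m|(\beta^{-2})_{jm}|\lesssim k\,(n\rho_n)^{-2}.
\]
This uses $\sum_m|(\beta^{-2})_{jm}|\le\sqrt n\,\|\beta^{-2}\|_{2,\infty}$ and $\sum_l|(VV^{\top})_{il}|\le\sqrt n\,\|V\|_{2,\infty}$. It gives $\Var\bigl(\sum_t\E[Y_t^2\mid\mathcal F_{t-1}]\bigr)/s_n^4=O(k/n)$. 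Even the crude $\log(n^2)$ loss would be affordable, since the weighted ratio $\|\widetilde W\|^2/\|\widetilde W\|_F^2$ is $O(1/(nk))$.

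Once patched, your verification of (ii) is tighter than the paper's. The paper bounds the variance of the conditional variance by $\sum\Var(\nu_{ri}^2b_{ri}^2)+\sum\Var(2\theta_{ri}b_{ri}c_{ri})$, which drops the covariances between different $b_{ri}$. Also delete your first display: $\beta^{-1}\beta^{\perp}\beta^{-1}=0$, and $\tr(XMX)$ is not the right trace in any case. Your subsequent weight formula $w_{ij,lm}=(\beta^{\perp})_{il}(\beta^{-2})_{jm}$ is the correct one.
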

\begin{proof}
    See \cref{sec:second-orderclt1proof}.
\end{proof}
This completes the derivation of the Gaussian limit for the second–order term and provides the principal probabilistic ingredient in the proof of \cref{thm:onesample}.

\subsection{Concentration of Higher-Order Terms} \label{sec:higherorder}
In this section we study the higher-order terms from the decomposition \cref{eq1}. For $l \ge 3$, each term in the expansion \cref{eq1} takes the form \(\langle V V\t, S_{l}(X) \rangle_F\). Using the definition of \(S_l(X)\), this can be written as
\begin{equation}
\begin{aligned}\label{bigvar}
    \langle V V\t, S_{l}(X) \rangle_F &= \sum_{\substack{s=(s_1,\dots,s_{l+1}):\\ s_1+\cdots+s_{l+1}=l}}
(-1)^{1+\tau(s)}\;
\langle V V\t, \beta^{-s_1}\, X\, \beta^{-s_2}\, X\cdots \beta^{-s_{l}} X\, \beta^{-s_{l+1}}\rangle.
\end{aligned}
\end{equation}

We separate the expansion into the third-order term and higher order terms.  We first analyze the contribution of $T_1^{(3)}= \langle V V\t, S_{3}(X) \rangle_F$.
\begin{lemma}\label{lem:3rdmean1}
Assume that \cref{ass:asymptotic1,ass:sparse,ass:eigen-scaling,ass:incoherence} hold.  Let \(S_{3}(X)\) be as specified in \cref{Sk_1sample}.  Then, we have 
\[\E \langle V V\t,S_3(X)\rangle = o \left(\frac{k}{n^2 \rho_n^2}\right).\]
\end{lemma}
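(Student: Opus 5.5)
Write $S_3(X)$ as a sum over compositions $s=(s_1,\dots,s_4)$ of $3$, so each term is $\beta^{-s_1}X\beta^{-s_2}X\beta^{-s_3}X\beta^{-s_4}$. Using $\langle VV\t, M\rangle=\tr(VV\t M)$ and the identities $VV\t\beta^{-j}=\beta^{-j}$ for $j\ge1$ and $VV\t\beta^{\perp}=0$, only compositions with $s_1\ge1$ or $s_4\ge1$ survive. By cyclicity of trace, each surviving term becomes a trace of a product of three copies of $X$ interleaved with three matrices from $\{\beta^{-1},\beta^{-2},\beta^{-3},\beta^\perp\}$, with total negative power $3$. The plan is to compute the expectation of each such cubic form exactly and bound it.

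\emph{Step 1 (third moment reduction).} For deterministic symmetric $M_1,M_2,M_3$, expand
\[
\E\tr(M_1XM_2XM_3X)=\sum_{a,b,c,d,e,f}(M_1)_{fa}(M_2)_{bc}(M_3)_{de}\,\E[X_{ab}X_{cd}X_{ef}].
\]
Since the entries $X_{ij}$, $i\le j$, are independent and centered, $\E[X_{ab}X_{cd}X_{ef}]$ is nonzero only when the three (unordered) index pairs coincide. In that case it equals $\E X_{ij}^3=P_{ij}(1-P_{ij})(1-2P_{ij})=:\kappa_{ij}$, with $|\kappa_{ij}|\lesssim\rho_n$. The expectation therefore reduces to a sum over pairs $\{i,j\}$ of $\kappa_{ij}$ times a finite number of products of one entry from each of $M_1,M_2,M_3$, each evaluated at $(i,i),(i,j),(j,j)$-type positions according to the orientation choices.

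\emph{Step 2 (entrywise bounds).} Incoherence (\cref{ass:incoherence}) and \cref{ass:eigen-scaling} give
\[
\|\beta^{-l}\|_\infty\le \|V\|_{2,\infty}^2\|\Lambda^{-l}\|\lesssim \frac{k}{n}(n\rho_n)^{-l}.
\]
For $\beta^\perp=I-VV\t$, the diagonal entries are $O(1)$ and the off-diagonal entries are $O(k/n)$. Each surviving term has at least one $\beta^{-l}$ factor and total power $3$. In the worst configuration two factors are $\beta^\perp$ placed on diagonal entries, each of size $O(1)$; this forces the indices to collapse, for instance $a=b$ in $(\beta^\perp)_{ab}$. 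The sum then ranges over at most $n^2$ pairs, or over $n$ diagonal terms when the $\beta^\perp$ diagonal entries force $i=j$.

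Counting for the dominant case, $\sum_{i,j}\kappa_{ij}(\beta^{\perp})_{ii}(\beta^\perp)_{jj}(\beta^{-3})_{ij}$ is bounded by $n^2\rho_n\cdot\frac{k}{n}(n\rho_n)^{-3}=\frac{k}{n^2\rho_n^2}\cdot\frac{1}{n\rho_n}\cdot n\cdot\frac1n$. This is not yet the right count, so I will bound the cases more carefully:
\begin{itemize}
\item Terms with $\beta^{-1}\beta^{-1}\beta^{-1}$ give $n^2\rho_n (k/n)^3(n\rho_n)^{-3}=O(k^3/(n^4\rho_n^2))$.
\item Terms with one $\beta^\perp$ and powers $(1,2)$ give $n^2\rho_n(k/n)^2(n\rho_n)^{-3}=O(k^2/(n^3\rho_n^2))$.
\item Terms with two $\beta^\perp$ and $\beta^{-3}$ give $n^2\rho_n(k/n)(n\rho_n)^{-3}=O(k/(n^2\rho_n^2))$. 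This is only $O$, not $o$.
\end{itemize}

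\emph{Step 3 (main obstacle: cancellation of the borderline terms).} The compositions $(3,0,0,0)$, $(0,0,0,3)$, and $(0,3,0,0)$-type give the critical terms. After the trace reduction, the cyclic forms are $\tr(\beta^{-3}X\beta^\perp X\beta^\perp X)$. Only $(3,0,0,0)$ and $(0,0,0,3)$ survive the $VV\t$ projection with two $\beta^\perp$ factors, and both enter with sign $(-1)^{1+1}=+1$. So there is no sign cancellation, and I need a finer estimate of
\[
\sum_{i,j}\kappa_{ij}\,\E\text{-pattern}\;\approx\;\sum_{i,j}\kappa_{ij}(\beta^\perp)_{ii}(\beta^\perp)_{jj}(\beta^{-3})_{ij}\approx\sum_{i,j}\kappa_{ij}(\beta^{-3})_{ij}.
\]
I will exploit the low-rank structure rather than entrywise bounds. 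Write $\sum_{ij}\kappa_{ij}(\beta^{-3})_{ij}=\langle \kappa,\beta^{-3}\rangle$ with $\kappa=\Sigma\circ(J_n-2P)$, and note that $\beta^{-3}=V\Lambda^{-3}V\t$ has rank $k$. Then
\[
|\langle\kappa,\beta^{-3}\rangle|=|\tr(\Lambda^{-3}V\t\kappa V)|\le k\|\Lambda^{-3}\|\,\|V\t\kappa V\|,
\]
and since $\|\kappa\|\le n\|\kappa\|_\infty\lesssim n\rho_n$, this gives at most $k(n\rho_n)^{-2}$, which is still borderline.

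The saving must come from orientation. The third-moment pairing in $\tr(\beta^{-3}X\beta^\perp X\beta^\perp X)$ with all three $X$ on the same edge requires the $\beta^\perp$ entries to be at index pairs $(i,i)$ or $(j,j)$ alternating, such as $(\beta^\perp)_{ij}$. The fully diagonal configuration $(\beta^\perp)_{ii}(\beta^\perp)_{jj}$ needs the path $a\to b\to c\to\cdots$ to be $i,j,j,i,i,j$, which is an odd cycle mismatch: three steps along one edge cannot return to the start unless a diagonal step occurs. So at least one $\beta^\perp$ factor must be off-diagonal, $O(k/n)$, or the edge must be a self-loop $i=j$, which restricts the sum to $n$ terms.

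Either way the bound becomes $O(k^2/(n^3\rho_n^2))+O(nk\rho_n/(n(n\rho_n)^3))=o(k/(n^2\rho_n^2))$ under \cref{ass:sparse}. Verifying this parity argument carefully for every orientation is the main step. With it established, summing the finitely many compositions yields the lemma.
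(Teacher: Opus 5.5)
\textbf{The failing step is the parity argument in Step~3, and it cannot be repaired.} For an edge $\{i,j\}$ with $i\neq j$, parity only shows that an odd number of the three matrix factors in $\tr(\beta^{-3}X\beta^\perp X\beta^\perp X)$ must be off-diagonal: each $X$-step of the closed walk $a\to b\to\cdots\to f\to a$ switches between $i$ and $j$. That off-diagonal factor may be $\beta^{-3}$ itself. The walk $(a,b,c,d,e,f)=(i,j,j,i,i,j)$ that you rule out closes through $(\beta^{-3})_{ji}$ and is perfectly valid, with weight $(\beta^{-3})_{ji}(\beta^\perp)_{jj}(\beta^\perp)_{ii}$.

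So the configuration you identified survives. It is genuinely of order $k/(n^2\rho_n^2)$, not merely bounded by it. Use $(\beta^\perp)_{ii}=1-O(k/n)$, your $\kappa=P-3P\circ P+2P\circ P\circ P$, and $V^\top PV=\Lambda$:
\[
\sum_{i\neq j}\kappa_{ij}(\beta^{-3})_{ij}(\beta^\perp)_{ii}(\beta^\perp)_{jj}
=\tr\bigl(\Lambda^{-3}V^\top\kappa V\bigr)+O\Bigl(\frac{k^2}{n^3\rho_n^2}\Bigr)
=\tr(\Lambda^{-2})+O\Bigl(\frac{k}{n^2\rho_n}+\frac{k^2}{n^3\rho_n^2}\Bigr),
\]
with $\tr(\Lambda^{-2})\asymp k/(n^2\rho_n^2)>0$. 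All remaining orientations, and both cubic forms built from $\beta^{-2},\beta^{-1},\beta^\perp$, are $O(k^2/(n^3\rho_n^2))$. Hence
\[
\E\langle VV^\top,S_3(X)\rangle=-2\tr(\Lambda^{-2})(1+o(1)).
\]
A concrete example is $P=\rho_nJ_n$ with $k=1$ and $V=\mathbf 1_n/\sqrt n$, which satisfies all four assumptions; there it equals $-2(n\rho_n)^{-2}(1+o(1))$. The claimed $o(k/(n^2\rho_n^2))$ bound is therefore false. Only the $O(k/(n^2\rho_n^2))$ bound from your Step~2 holds.

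\textbf{Your composition bookkeeping is also wrong, though it does not change this.} Since $\tr(VV^\top\beta^{-s_1}X\cdots X\beta^{-s_4})=\tr(\beta^{-s_4}VV^\top\beta^{-s_1}X\cdots X)$ and $VV^\top\beta^\perp=\beta^\perp VV^\top=0$, a composition survives only if $s_1\ge1$ \emph{and} $s_4\ge1$. Consequently:
\begin{itemize}
\item $(3,0,0,0)$ and $(0,0,0,3)$ vanish;
\item there is no $\beta^{-1}\beta^{-1}\beta^{-1}$ term;
\item the two-$\beta^\perp$ survivors are $(1,0,0,2)$ and $(2,0,0,1)$, each with sign $(-1)^{1+2}=-1$.
\end{itemize}
This gives
\[
\langle VV^\top,S_3(X)\rangle=\tr(\beta^{-2}X\beta^{-1}X\beta^\perp X)+\tr(\beta^{-2}X\beta^\perp X\beta^{-1}X)-2\tr(\beta^{-3}X\beta^\perp X\beta^\perp X).
\]

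\textbf{The paper's proof has the same gap.} It uses this decomposition, but in evaluating $\E\tr(\beta^{-2}X\beta^\perp X\beta^\perp X\beta^{-1})$ it keeps only the orientation $(\beta^{-3})_{ji}(\beta^\perp)_{ji}^2$, where all three factors sit at the same off-diagonal position. It drops exactly the dominant orientation above, so it does not supply the missing step either. What can be proved is $\E\langle VV^\top,S_3(X)\rangle=O(k/(n^2\rho_n^2))$. That is all \cref{thm:onesample} actually uses, since $\mu_1$ there already carries an $O_p(k/(n^2\rho_n^2))$ remainder.
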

\begin{proof}
    See \cref{sec:3rd}.
\end{proof}
We next consider the variance of $\langle VV\t, S_3(X)\rangle$.
\begin{lemma}\label{lem:3rdvar1}
Assume that \cref{ass:asymptotic1,ass:sparse,ass:eigen-scaling,ass:incoherence} hold. Then \(\Var(\langle V V\t,S_3(X)\rangle) \asymp \frac{k^2}{n^4 \rho_n^3}\).
\end{lemma}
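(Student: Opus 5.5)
We need to show $\Var(\langle VV\t,S_3(X)\rangle)\asymp k^2/(n^4\rho_n^3)$. The plan is to write the third-order term as a cubic polynomial in the independent entries of $X$, compute its variance by pairing indices, and get the upper and lower bounds from the same computation.

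\textbf{Step 1: identify the surviving terms.} We enumerate $s=(s_1,\dots,s_4)$ with $s_1+\cdots+s_4=3$. Since $VV\t\beta^{0}=0$, the trace $\langle VV\t,\beta^{-s_1}X\beta^{-s_2}X\beta^{-s_3}X\beta^{-s_4}\rangle$ vanishes unless $s_1>0$ and $s_4>0$. So only $s\in\{(1,0,1,1),(1,1,0,1),(2,0,0,1),(1,0,0,2)\}$ (together with $(1,1,1,0)$-type cases, which are killed) contribute. Using cyclicity of the trace, every surviving term has the form
\[
\tr\big(\beta^{-a}X\beta^{-b}X\beta^{-c}X\big),
\]
with $a+b+c=3$ and at most one $\beta^{0}=\beta^\perp$ among the factors. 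For example, $(1,1,0,1)$ gives $\tr(\beta^{-2}X\beta^{-1}X\beta^\perp X)$. Each term is therefore a trilinear form
\[
\sum_{i,j,l,m,p,q} M^{(1)}_{qi}X_{ij}M^{(2)}_{jl}X_{lm}M^{(3)}_{mp}X_{pq}.
\]

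\textbf{Step 2: upper bound.} We expand the variance as a sum over pairs of index 6-tuples and use independence and mean zero of $\{X_{ij}\}_{i\le j}$. A term contributes only if every edge appears at least twice among the six $X$ factors; each such edge contributes a moment $\E X_{ij}^r\lesssim\rho_n$.

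We bound the matrices using incoherence and \cref{ass:eigen-scaling}:
\[
\|\beta^{-a}\|_\infty\lesssim \frac{k}{n(n\rho_n)^a},\qquad \|\beta^{-a}\|\lesssim (n\rho_n)^{-a},\qquad \|\beta^\perp\|\le 1,
\]
and $\beta^\perp=I-VV\t$ has diagonal close to $1$.

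The dominant configuration should be the perfect pairing in which the $\beta^\perp$ factor is contracted through its identity part. This collapses the term to a quadratic-type form such as $\sum_{i,j}(\beta^{-2}\Sigma\beta^{-1})_{ii}X_{ij}\cdots$. Its variance is roughly
\[
n^2\rho_n\cdot\Big(\frac{k}{n}\Big)^2 (n\rho_n)^{-6}(n\rho_n)^2 \cdot n\rho_n\cdots .
\]
I would organize the counting through quantities of the form $\|\beta^{-a}\Sigma\beta^{-b}\|_F^2$ together with the diagonal/row structure; this should give $k^2 n^2\rho_n^3/(n\rho_n)^6=k^2/(n^4\rho_n^3)$.

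All other pairings lose a factor of $n^{-1}$ or $(n\rho_n)^{-1}$ by the incoherence bounds. Here $n\rho_n\gg\log n$ ensures that pairings with higher-multiplicity edges (which gain a factor $1/\rho_n$ but lose index freedom) are still smaller.

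\textbf{Step 3: lower bound.} We isolate the leading Gaussian-like component. Writing $\beta^\perp=I-VV\t$, the identity part gives the term $\tr(\beta^{-2}X\beta^{-1}X^2)$. Replacing $X^2$ by its conditional mean $\Diag(\Sigma\mathbf 1)$ plus a fluctuation yields a linear statistic
\[
L=\sum_{i\le j}c_{ij}X_{ij},\qquad c_{ij}\approx (\beta^{-2}D\beta^{-1}+\beta^{-1}D\beta^{-2})_{ij},\qquad D=\Diag(\Sigma\mathbf 1)\asymp n\rho_n I.
\]
Combining the four surviving $s$ with their signs $(-1)^{1+\tau(s)}$, I expect the coefficient to be a nonzero multiple of $n\rho_n\,\beta^{-3}$ up to lower-order corrections. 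Then
\[
\Var(L)\asymp \rho_n (n\rho_n)^2\|\beta^{-3}\|_F^2\asymp \rho_n (n\rho_n)^2\, k (n\rho_n)^{-6}\cdot\frac{k}{n}\cdots,
\]
which I would match carefully to $k^2/(n^4\rho_n^3)$ using $\|\beta^{-3}\|_F^2\asymp k(n\rho_n)^{-6}$ and the entrywise spread $\asymp k/n$ from \cref{ass:asymptotic1}, via $a\rho_n\le P_{ij}$. The remaining pieces have smaller variance by Step 2, so the lower bound follows by the reverse triangle inequality.

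\textbf{Main obstacle.} The hardest part is verifying that the signed combination of the four surviving $s$-terms does not cancel at leading order. If it cancels, the order is smaller than claimed. I would first compute the leading coefficient explicitly in the simplest case $V=\mathbf 1/\sqrt n$, i.e. an Erd\H{o}s--R\'enyi graph, and then argue that the general case perturbs it only through the bounded ratios in \cref{ass:eigen-scaling}. The combinatorial bookkeeping in Step 2 is tedious but routine.
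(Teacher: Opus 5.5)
Your proposal has a genuine gap in the lower bound, and it traces back to Step 1.

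\textbf{Step 1 misidentifies the surviving terms.} The choices $s=(2,0,0,1)$ and $(1,0,0,2)$ both collapse to $\tr(\beta^{-3}X\beta^\perp X\beta^\perp X)$, which contains \emph{two} factors $\beta^\perp$. So the claim ``at most one $\beta^0$'' is false, and
$\langle VV^\top,S_3(X)\rangle=-2\tr(\beta^{-3}X\beta^\perp X\beta^\perp X)+\tr(\beta^{-2}X\beta^{-1}X\beta^\perp X)+\tr(\beta^{-2}X\beta^\perp X\beta^{-1}X)$.

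The first monomial is exactly the one the paper singles out. The paper compares it with $\tr(\beta^{-3}X^3)$ by inserting $I=VV^\top+\beta^\perp$, and computes $\Var(\tr(\beta^{-3}X^3))\asymp n^2\rho_n^3\|\beta^{-3}\|_F^2$.

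\textbf{Consequence for Steps 2 and 3.} The part of this monomial with three distinct edges (no internal pairing) already has variance of full order $n^2\rho_n^3\|\beta^{-3}\|_F^2\asymp k/(n^4\rho_n^3)$. So it is false that every configuration other than your contracted pairing ``loses a factor''. Hence $T-L$ is not of smaller variance than $L$, and the reverse triangle inequality yields nothing.

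Concretely, take the Erd\H{o}s--R\'enyi case ($k=1$, $V=\mathbf 1_n/\sqrt n$, $P_{ij}=p$). There $L\approx 2np\,\tr(\beta^{-3}X)$ with $\Var(L)\approx 8/(n^4p^3)$. Meanwhile $-2\tr(\beta^{-3}X\beta^\perp X\beta^\perp X)$ has negligible linear part but also has variance $\approx 8/(n^4p^3)$.

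The fix is to use orthogonality rather than the triangle inequality: $\Var(T)\ge\sum_{i\le j}\Cov(T,X_{ij})^2/\Var(X_{ij})$, the variance of the projection of $T$ onto the linear span of the independent entries. Then only the linear coefficient needs a lower bound, and no smallness of the remainder is required. Your Step 2 upper bound survives, provided you bound this fully crossed configuration rather than dismiss it.

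\textbf{The non-cancellation plan also fails.} Computing for Erd\H{o}s--R\'enyi and then perturbing via the bounded ratios of \cref{ass:eigen-scaling} does not work, for two reasons. First, the diagonal of $D=\Diag(\Sigma\mathbf 1_n)$ varies by constant factors. Second, the monomial you dropped feeds $-2(\beta^{-3}E\beta^\perp+\beta^\perp E\beta^{-3})$, with $E:=\E(X\beta^\perp X)$, into the linear coefficient. This term vanishes when $E$ commutes with $\beta^\perp$ (e.g.\ for blockmodels) but is of leading order in general. So the coefficient is not a multiple of $n\rho_n\beta^{-3}$.

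A block argument does work:
\begin{itemize}
\item The two mixed monomials contribute $V(\Lambda^{-1}\tilde E\Lambda^{-2}+\Lambda^{-2}\tilde E\Lambda^{-1})V^\top$ with $\tilde E:=V^\top EV$, which lies in the $VV^\top$ block.
\item Every pairing of the $\beta^{-3}$ monomial, and every other pairing of the mixed monomials, lands (up to negligible higher-moment corrections) in blocks involving $\beta^\perp$.
\item These blocks are Frobenius-orthogonal.
\item $\tilde E_{ss}=\E\|\beta^\perp Xv_s\|^2\ge\min_j D_{jj}-O(k\rho_n)\gtrsim n\rho_n$.
\end{itemize}
Hence the linear coefficient has squared Frobenius norm $\gtrsim\sum_s\lambda_s^{-6}(n\rho_n)^2\asymp k(n\rho_n)^{-4}$, giving $\Var(T)\gtrsim k/(n^4\rho_n^3)$.

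\textbf{Arithmetic.} $\rho_n(n\rho_n)^2\|\beta^{-3}\|_F^2$ is already $\asymp k/(n^4\rho_n^3)$. The extra factor $k/n$ in your display is wrong, and agreement with $k^2/(n^4\rho_n^3)$ holds only because $k$ is fixed.

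\textbf{Comparison with the paper.} With these repairs, your route (pairing upper bound plus first-order projection lower bound) is a genuine alternative to the paper's reduction to $\tr(\beta^{-3}X^3)$. It also correctly sees the linear piece carried by the mixed monomials, which the paper's index count for $\tr(\beta^{-t}X^t\beta^{-(3-t)}X^{3-t})$ omits. In the Erd\H{o}s--R\'enyi case that variance is about one fifth of $\Var(\tr(\beta^{-3}X^3))$, not negligible, although the order asserted in the lemma is unaffected.
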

\begin{proof}
    See \cref{sec:3rdproof}.
\end{proof}
A direct application of Chebyshev’s inequality gives
\begin{equation*}
\begin{aligned}
    \Pr\left[\,
    \frac{\big|T_{1}^{(3)} - \E \left(T_{1}^{(3)}\right)\big|}
    {\sqrt{\Var\left(T_{1}^{(S)}\right)}} \;\ge\; \delta
    \right]
    &\le \frac{\Var\left(T_{1}^{(3)}\right)}{\delta^2 \Var(T_{1}^{(S)})}\\
    &\lesssim \frac{1}{\delta^2 n \rho_n},
\end{aligned}
\end{equation*}
for any $\delta > 0$. Therefore by \cref{ass:sparse},
\begin{align} \label{eq:cheby1}
\frac{\left|T_{1}^{(3)} - \E \left(T_{1}^{(3)}\right)\right|}
{\sqrt{\Var\left(T_{1}^{(S)}\right)}} 
\;\to\; 0
\qquad \text{in probability}.
\end{align}
The higher-order terms are significantly simpler to analyze.  We note that whenever \(A\in\mathcal{E}_{{ \sf good}}\), we have 
\(
\|S_l(X)\|
\le
\Big(\tfrac{c}{n\rho_n}\Big)^{l/2}\) and
\( 
\rank\big(S_l(X)\big)\le{2l\choose l}k\;\le\;4^{l}k,
\)
for some constant \(c>0\).  
Consequently,
\[
\|S_l(X)\|_{F}
\;\le\;
\sqrt{\rank(S_l(X))}\,\|S_l(X)\|
\;\le\;
\big(4^{l}k\big)^{1/2}\Big(\tfrac{c}{n\rho_n}\Big)^{l/2}
= \sqrt{k}
\Big(\tfrac{4c}{n\rho_n}\Big)^{l/2}.
\]

\medskip\noindent
Using this bound termwise in the higher order terms,
\(
T_1^{(H)}
=
\big\langle 
VV\t,
\sum_{l\ge4} S_l(X)
\big\rangle,
\)
we obtain
\[
|T_1^{(H)}|
\;\le\;
k\,\Big\|\sum_{l\ge4} S_l(X)\Big\|_{F}
\;\le\;
\sum_{l\ge4} k\,\|S_l(X)\|_{F}
\;\le\;
k\sum_{l\ge4}\Big(\tfrac{4c}{n\rho_n}\Big)^{l/2} \frac{16 k c^{2}}{n^{2}\rho_n^{2}}.
\]
Therefore
\(
T_1^{(H)}
=
O_{p}\left(\frac{k}{n^{2}\rho_n^{2}}\right).
\)
Combining this result with \cref{lem:second-order-clt1} and \cref{eq:cheby1} yields the claim of \cref{thm:onesample}.  
This completes the proof.

\section{Proof of \cref{thm:twosample}}\label{sec:mainthm2}
We now turn to the proof of \cref{thm:twosample}. Following the same methodology as in the proof of \cref{thm:onesample} in the previous section, we begin by developing the matrix expansion of the two-sample statistic. The analysis proceeds in several steps. First, we derive the explicit form of the expansion and isolate the second-order contribution, which captures the leading fluctuation behavior of the statistic. We then compute the mean and variance of this second-order term and establish its asymptotic Gaussianity under the stated assumptions. Finally, we demonstrate that the higher-order terms possess variances of smaller order compared to that of the second-order component. Consequently, these higher-order terms affect only the mean of the test statistic while leaving its asymptotic variance unchanged.

\subsection{Matrix Series Expansion and Second-Order Term}

The two–sample projection–distance statistic considered in \cref{thm:twosample} is
\(
\|(\hat V^{(1)}) (\hat V^{(1)})\t - (\hat V^{(2)}) (\hat V^{(2)})\t\|_F^2
\).
Define
\begin{align*}
\beta_i^k = {V^{(i)}} (\Lambda^{(i)})^k {V^{(i)}}\t, 
\quad \beta^{\perp} = I_n - {V^{(i)}} {V^{(i)}}\t.
\end{align*}
Applying the matrix expansion of \citet{xia_normal_2021} to each empirical projector and collecting terms gives the decomposition
\begin{equation}
\begin{aligned}\label{eq2}
\|(\hat V^{(1)}) (\hat V^{(1)})\t - (\hat V^{(2)}) (\hat V^{(2)})\t\|_F^2
&= \big\|\hat{V}_{1} \hat{V}_{1}\t - {V^{(1)}} {V^{(1)}}\t\big\|_F^2
  + \big\|\hat{V}_{2} \hat{V}_{2}\t - {V^{(2)}} {V^{(2)}}\t\big\|_F^2 \notag \\
&\qquad - 2\, \tr\Big(\big(\hat{V}_{1} \hat{V}_{1}\t - {V^{(1)}} {V^{(1)}}\t\big)\big(\hat{V}_{2} \hat{V}_{2}\t - {V^{(2)}} {V^{(2)}}\t\big)\Big) \\
&= 2\big\|\beta_1^{\perp} {X^{(1)}} \beta_1^{-1}\big\|_F^2 - 2\,\sum_{j\ge3}\,\langle {V^{(1)}} {V^{(1)}}\t,\,S_{1,j}({X^{(1)}})\rangle_F
+ 2\big\|\beta_2^{\perp} {X^{(2)}} \beta_2^{-1}\big\|_F^2 \\
&\quad - 2\,\sum_{j\ge3}\,\langle {V^{(1)}} {V^{(1)}}\t,\,S_{2,j}({X^{(2)}})\rangle_F - 2\, \tr\bigl(S_{1,1}({X^{(1)}})\, S_{2,1}({X^{(2)}})\bigr)\\
&\quad - 2\sum_{l \geq 4} \sum_{l_1+l_2=l} \, \tr\bigl(S_{1,1}({X^{(1)}})\, S_{2,1}({X^{(2)}})\bigr),
\end{aligned}
\end{equation}
where \({X^{(1)}}={A^{(1)}}-{P^{(1)}}\), \({X^{(2)}}={A^{(2)}}-{P^{(2)}}\) and \begin{align} \label{Sk_2sample}
    S_{i,k}(X^{(i)})
&=\sum_{\substack{s=(s_1,\dots,s_{k+1}):\\ s_1+\cdots+s_{k+1}=k}}
(-1)^{1+\tau(s)}\;
\beta_i^{-s_1}\, {X^{(i)}}\, \beta_i^{-s_2}\, {X^{(i)}}\cdots \beta_i^{-s_{k}}{X^{(i)}}\, \beta_i^{-s_{k+1}}, \\
\tau(s)&=\sum_{j=1}^{k+1} 1\{s_j>0\} \notag.
\end{align}
Thus, the second-order term takes the form
\begin{equation}
\begin{aligned} \label{T_2}
T_2^{(S)}&= 2\big\|\beta_1^{\perp} {X^{(1)}} \beta_1^{-1}\big\|_F^2 
+ 2\big\|\beta_2^{\perp} {X^{(2)}} \beta_2^{-1}\big\|_F^2 
- 2\, \tr\bigl(S_{1,1}({X^{(1)}})\, S_{2,1}({X^{(2)}})\bigr) \\[4pt]
&= 2\big\|\beta^{\perp} {X^{(1)}} \beta_1^{-1}\big\|_F^2 
+ 2\big\|\beta^{\perp} {X^{(2)}} \beta_2^{-1}\big\|_F^2 
- 2\, \tr\big(\beta^{\perp} {X^{(1)}} \beta_1^{-1} \beta_2^{-1} {X^{(2)}}\big) 
- 2\, \tr\big(\beta_1^{-1} {X^{(1)}} \beta^{\perp} {X^{(2)}} \beta_2^{-1} \big) \\[4pt]
&= 2\bigl\|\beta^{\perp}\bigl( {X^{(1)}} \beta_1^{-1} - {X^{(2)}} \beta_2^{-1} \bigr)\bigr\|_F^2 \\[4pt]
&= 2\,\tr\left( 
U\t 
\begin{pmatrix}
(Y^{(1)})\t (Y^{(1)}) & -(Y^{(1)})\t (Y^{(2)})\\[4pt]
-(Y^{(2)})\t (Y^{(1)}) & (Y^{(2)})\t (Y^{(2)})
\end{pmatrix}
U
\right),
\end{aligned}
\end{equation}
where \(Y^{(i)} = \beta_i^{\perp}{X^{(i)}}\) for \(i=1,2\) and
\(
U = \begin{pmatrix} \beta_1^{-1} & \beta_2^{-1} \end{pmatrix}
\)
is the corresponding block matrix (whose columns we denote by \(U_{1 \cdot},\dots,U_{2n \cdot}\in\mathbb{R}^n\)).  The representation in \cref{T_2} exhibits the two–sample second–order contribution as a quadratic form in the block matrix of vectors \(U_{j \cdot}\), and provides the starting point for the subsequent mean/variance calculations and the martingale central limit argument.

\subsection{Second-Order Term Asymptotics}
As in the one-sample case, we first derive the mean and variance of the leading contribution to the two–sample projection–distance statistic.  Recall the definition:
\[
T_{2}^{(S)} = 2\bigl\|\beta_1^{\perp} {X^{(1)}} \beta_1^{-1}\bigr\|_F^2 
     + 2\bigl\|\beta_2^{\perp} {X^{(2)}} \beta_2^{-1}\bigr\|_F^2 
     - 2\, \tr\bigl(S_{1,1}({X^{(1)}})\, S_{2,1}({X^{(2)}})\bigr),
\]
equivalently written in the block form appearing in \cref{T_2}. Just like in the one sample setting, we have a similar version of \cref{lem:beta_perp1one}
for the two-sample setting under the same set of assumptions which is stated as follows.
\begin{lemma}\label{lem:beta_perp1two}
In the context of \cref{thm:2sampleconsis}, it holds that
\[
\frac{\Var\Big(2\,\tr\Big( 
U\t 
\begin{pmatrix}
(Y^{(1)})\t (Y^{(1)}) & -(Y^{(1)})\t (Y^{(2)})\\[4pt]
-(Y^{(2)})\t (Y^{(1)}) & (Y^{(2)})\t (Y^{(2)})
\end{pmatrix}
U
\Big)\Big)}
     {\Var\Big(2\,\tr\Big( 
U\t 
\begin{pmatrix}
{X^{(1)}}\t {X^{(1)}} & -{X^{(1)}}\t {X^{(2)}}\\[4pt]
-{X^{(2)}}\t {X^{(1)}} & {X^{(2)}}\t {X^{(2)}}
\end{pmatrix}
U
\Big)\Big)}
\;\longrightarrow\; 1,
\]
where \(U = \begin{pmatrix} \beta_1^{-1} & \beta_2^{-1} \end{pmatrix}\) and \(Y^{(i)} = \beta_i^{\perp} {X^{(i)}}\) for \(i = 1,2\).
\end{lemma}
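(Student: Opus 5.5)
Write $B_i:=\beta_i^{-1}$ and $\Pi_i:=V^{(i)}{V^{(i)}}\t$, so that $\beta_i^\perp=I_n-\Pi_i$. Unwinding the block form shows that the numerator and denominator quadratic forms are
\[
Q_Y:=2\bigl\|\beta_1^\perp X^{(1)}B_1-\beta_2^\perp X^{(2)}B_2\bigr\|_F^2,\qquad
Q_X:=2\bigl\|X^{(1)}B_1-X^{(2)}B_2\bigr\|_F^2 .
\]
In the setting of \cref{thm:2sampleconsis} I do \emph{not} assume the null, so $\Pi_1\neq\Pi_2$ is allowed. The plan is to write $Q_Y$ as a perturbation of $Q_X$ and show the perturbation has negligible variance.

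\textbf{Step 1: decomposition.} Set $D:=X^{(1)}B_1-X^{(2)}B_2$ and $E:=\Pi_1X^{(1)}B_1-\Pi_2X^{(2)}B_2$. Then
\[
Q_Y=Q_X-4\langle D,E\rangle+2\|E\|_F^2 .
\]
Hence it suffices to show that
\[
\Var\bigl(\langle D,E\rangle\bigr)+\Var\bigl(\|E\|_F^2\bigr)=o\bigl(\Var(Q_X)\bigr).
\]
Once this holds, Minkowski's inequality for standard deviations, $\bigl|\sqrt{\Var(Q_Y)}-\sqrt{\Var(Q_X)}\bigr|\le\sqrt{\Var(Q_Y-Q_X)}$, gives the ratio $\to1$.

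\textbf{Step 2: lower bound on $\Var(Q_X)$.} Expand
\[
Q_X=2\|X^{(1)}B_1\|_F^2+2\|X^{(2)}B_2\|_F^2-4\tr\bigl(B_1X^{(1)}X^{(2)}B_2\bigr).
\]
\begin{enumerate}
\item The first two summands are independent.
\item The bilinear cross term is uncorrelated with each of them. For example, $\E\bigl[\|X^{(1)}B_1\|_F^2\,\tr(B_1X^{(1)}X^{(2)}B_2)\bigr]=0$ because $\E X^{(2)}=0$ and $X^{(2)}$ is independent of $X^{(1)}$.
\item Therefore $\Var(Q_X)\ge 4\Var(\|X^{(1)}B_1\|_F^2)$.
\item By \cref{lem:beta_perp1one} together with the variance formula of \cref{lem:second-order1}, the right-hand side is $\asymp k^2/(n^3\rho_n^2)$.
\end{enumerate}
So $\Var(Q_X)\gtrsim k^2/(n^3\rho_n^2)$. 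This step uses only \cref{ass:asymptotic1,ass:eigen-scaling,ass:incoherence} and sparsity, which are all in force.

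\textbf{Step 3: the perturbation terms.} The key observation is that $\Pi_iX^{(i)}B_i=V^{(i)}\bigl({V^{(i)}}\t X^{(i)}V^{(i)}\bigr)(\Lambda^{(i)})^{-1}{V^{(i)}}\t$, so $E$ only involves the $k\times k$ compressions $W_{ab}^{(i)}:={V^{(i)}}\t X^{(i)}V^{(j)}$ (possibly mixed across $i,j$), sandwiched by $\Lambda^{-1}$ factors of size $(n\rho_n)^{-1}$.

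\begin{enumerate}
\item \emph{Single-entry bound.} Each entry of $W^{(i)}$ is a weighted sum $\sum_{s\le t}X^{(i)}_{st}w_{st}$ with $\sum w_{st}^2\lesssim1$ (orthonormal columns). Its variance is $\lesssim\rho_n$, and its fourth moment is $\lesssim\rho_n^2+\rho_n\max|w|^2\lesssim\rho_n^2$, using incoherence, $\max|w|\lesssim k/n$.
\item \emph{Self terms.} Expanding $\|E\|_F^2$, and the pieces $\langle X^{(i)}B_i,\Pi_iX^{(i)}B_i\rangle=\|\Pi_iX^{(i)}B_i\|_F^2$ of $\langle D,E\rangle$, gives sums of $O(k^2)$ products of such entries times $(n\rho_n)^{-2}$. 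Their variance is $\lesssim k^2\rho_n^2/(n\rho_n)^4=k^2/(n^4\rho_n^2)$, which is $o\bigl(k^2/(n^3\rho_n^2)\bigr)$ by a factor $1/n$.
\item \emph{Mixed terms.} These are bilinear in the independent pair, e.g.\ $\langle X^{(1)}B_1,\Pi_2X^{(2)}B_2\rangle=\tr\bigl(B_1X^{(1)}\Pi_2X^{(2)}B_2\bigr)$.
 \begin{itemize}
 \item Conditionally on $X^{(2)}$, this is linear in $X^{(1)}$, so its variance is $\E\sum_{s\le t}\Sigma^{(1)}_{st}\bigl[(\Pi_2X^{(2)}B_2B_1)_{ts}\bigr]^2\lesssim\rho_n\,\E\|\Pi_2X^{(2)}B_2B_1\|_F^2$.
 \item Because $\Pi_2$ has rank $k$, $\E\|\Pi_2X^{(2)}M\|_F^2\lesssim\rho_n\|M\|_F^2\cdot k$.
 \item With $\|B_2B_1\|_F^2\lesssim k/(n\rho_n)^4$, the variance is $\lesssim k^2\rho_n^2/(n\rho_n)^4=k^2/(n^4\rho_n^2)$, again negligible.
 \end{itemize}
\item \emph{Remaining pieces.} The pieces mixing $\Pi_i$ with the non-projected $X^{(i)}B_i$ of the same sample, e.g.\ $\tr(B_1X^{(1)}\Pi_1X^{(1)}B_1)$, reduce to compressions of $X^{(1)}$ against $V^{(1)}$ and are handled as in item 2.
\end{enumerate}

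\textbf{Main obstacle.} I expect the only delicate part to be the exact fourth-moment bookkeeping for the same-sample quartic forms under heteroskedastic, symmetric Bernoulli noise. This includes the diagonal $s=t$ terms, the $\E X_{st}^4\asymp\rho_n$ contributions, and non-Gaussian cumulants, for which incoherence must be used to suppress $\max|w|$. I would handle it by writing each such form as $\sum_{(s,t),(s',t')}X_{st}X_{s't'}c_{st,s't'}$ and splitting the variance into the standard pairing part plus the fourth-cumulant part. The cumulant part is bounded by $\rho_n\sum_{st}c_{st,st}^2$, and incoherence makes it smaller than the pairing part by a factor of $k/n$.
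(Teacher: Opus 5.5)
Your proof is correct, and it is organized differently from the paper's.

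\textbf{The paper's route.} The paper writes $M(X)-M(Y)$ blockwise as $X^{(i)\top}W_{ij}X^{(j)}$, with $W_{11}=\Pi_1$, $W_{22}=\Pi_2$ and $W_{12}=-(\Pi_1+\Pi_2-\Pi_1\Pi_2)$. For each block it conditions on $X^{(i)}$ and treats the trace as linear in $X^{(j)}$, which gives a variance $\asymp\rho_n^2\|W_{ij}\|_F^2\|\beta_j^{-1}\beta_i^{-1}\|_F^2$. Dividing by the same quantity with $W_{ij}$ replaced by $I_n$ gives a ratio $\asymp\|W_{ij}\|_F^2/n\asymp k/n$ for every block at once. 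The computation is short and isolates the mechanism (a rank-$k$ projector versus the identity). However, the conditioning step only makes sense for the cross-sample blocks $i\neq j$. The reduction also compares each block to its own unprojected counterpart rather than to the full denominator.

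\textbf{What your route adds.} Your version covers both of these points.
\begin{itemize}
\item Step 2 uses the exact uncorrelatedness of $2\|X^{(1)}B_1\|_F^2$, $2\|X^{(2)}B_2\|_F^2$ and the bilinear cross term to show that $\Var(Q_X)$ dominates each piece.
\item Step 3 treats the same-sample terms through the $k\times k$ compressions $V^{(i)\top}X^{(i)}V^{(j)}$, with incoherence controlling their fourth moments. Conditioning is reserved for the genuinely bilinear cross-sample terms.
\item Minkowski's inequality then turns an $o(\Var Q_X)$ perturbation into a ratio tending to one.
\end{itemize}
Both routes gain essentially the same factor ($k/n$ versus $k^{O(1)}/n$, which is immaterial for fixed $k$), and neither uses the null.

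\textbf{One simplification.} Step 2 does not need \cref{lem:beta_perp1one}. The explicit formula for $\Var(2\|X\beta^{-1}\|_F^2)$ in the proof of \cref{lem:second-order1} is a sum of nonnegative terms, since $\beta^{-2}\succeq 0$ and $K\ge 0$ entrywise once $\rho_n$ is small. Its off-diagonal part alone is $\gtrsim k/(n^3\rho_n^2)$, and that is all you use. Your compression argument, run for a single sample, in fact gives an independent proof of \cref{lem:beta_perp1one} as well.
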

\begin{proof}
    See \cref{sec:betaperp1proof2}.
\end{proof}
We similarly calculate the mean and variance of the second-order contribution.
\begin{lemma}[Second--order term: mean and variance of the two-sample test statistic]\label{lem:second-order2}
Define the second--order contribution
\begin{equation}
\begin{aligned} 
    T_{2}^{(S)} 
    &= 2\bigl\|\beta^{\perp} {X^{(1)}} \beta_1^{-1}\bigr\|_F^2 
     + 2\bigl\|\beta^{\perp} {X^{(2)}} \beta_2^{-1}\bigr\|_F^2 
     - 2\, \tr\bigl(S_{1,1}({X^{(1)}})\, S_{2,1}({X^{(2)}})\bigr) \\
    &= 2\,\tr\left( 
    U\t
    \begin{pmatrix}
        (Y^{(1)})\t (Y^{(1)}) & -(Y^{(1)})\t (Y^{(2)})\\[4pt]
        -(Y^{(2)})\t (Y^{(1)}) & (Y^{(2)})\t (Y^{(2)})
    \end{pmatrix}
    U
    \right),
\end{aligned}
\end{equation}
where \(Y^{(i)} = \beta^{\perp} {X^{(i)}}\) and \(U = \begin{pmatrix} \beta_1^{-1} & \beta_2^{-1} \end{pmatrix}\).
Assume that \cref{ass:asymptotic1,ass:sparse,ass:eigen-scaling,ass:incoherence} hold. Recall the definitions from \cref{last00}
\begin{align*}
&\quad d_i = \diag(\beta_i^{\perp}),
\quad G = \beta_1^{-1}\beta_2^{-1} + \beta_2^{-1}\beta_1^{-1}, 
\quad {\Sigma^{(i)}} = P^{(i)} \circ (J_n - P^{(i)}), \\
&\quad K^{(i)} = 2\,{\Sigma^{(i)}} \circ (J_n - 2P^{(i)}),
\quad \alpha_i = \diag(\beta_i^{-2}).
\end{align*}
Then the mean and variance of \(T_{2}^{(S)}\) satisfy
\[
\mathbb{E}\bigl[T_{2}^{(S)}\bigr]
= 2\sum_{i=1}^{2} 
  \tr\Bigl[
    \beta_i^{-2}
    \Bigl(
      {\Sigma^{(i)}} \circ \beta^{\perp} 
      + \Diag({\Sigma^{(i)}} \cdot d_i - \diag({\Sigma^{(i)}}) \circ d_i)
    \Bigr)
  \Bigr],
\]
and
\[
\Var\bigl(T_{2}^{(S)}\bigr)
= \sum_{i=1}^{2} 
  \Bigl(
    8\,\bigl\langle (J_n - I_n)\circ(\beta_i^{-2}\circ\beta_i^{-2}),\;({\Sigma^{(i)}})^2 \bigr\rangle
    + 4\,\alpha_i\t K^{(i)} \alpha_i
  \Bigr)
  + 4\,\bigl\langle 
      G \circ G,\;
      {\Sigma^{(1)}}\t {\Sigma^{(2)}} + (J_n - I_n) \circ ({\Sigma^{(1)}} \circ {\Sigma^{(2)}})
    \bigr\rangle + o \left( \frac{k^2}{n^3 \rho_n^2}\right).
\]
In particular, we also have
\(
\Var\bigl(T_{2}^{(S)}\bigr)
\;\asymp\;
\frac{k^{2}}{n^{3}\rho_n^{2}}.
\)
\end{lemma}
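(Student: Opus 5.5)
Under $H_0$ we have $\beta_1^\perp=\beta_2^\perp=\beta^\perp$, and $X^{(1)},X^{(2)}$ are independent and centered. The plan is to expand $T_2^{(S)}$ into three pieces,
\[
T_2^{(S)} = Q_1+Q_2-C, \qquad Q_i=2\|\beta^\perp X^{(i)}\beta_i^{-1}\|_F^2, \qquad C = 2\tr\big(\beta^\perp X^{(1)}\beta_1^{-1}\beta_2^{-1}X^{(2)}\big)+2\tr\big(\beta_1^{-1}X^{(1)}\beta^\perp X^{(2)}\beta_2^{-1}\big),
\]
and compute moments piece by piece. Each $Q_i$ is exactly the one-sample second-order term $T_1^{(S)}$ built from $(P^{(i)},V^{(i)},\Lambda^{(i)})$. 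Hence \cref{lem:second-order1} directly gives both $\E Q_i$ and $\Var(Q_i)$. The cross term $C$ is bilinear in independent centered matrices, so $\E C=0$. This yields the stated mean with no further work.

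For the variance, independence and centering make the covariances vanish. $\Cov(Q_1,Q_2)=0$ by independence. $\Cov(Q_i,C)=0$ because $C$ is linear in $X^{(j)}$ for $j\neq i$ while $Q_i$ does not involve $X^{(j)}$. So
\[
\Var(T_2^{(S)})=\Var(Q_1)+\Var(Q_2)+\E C^2 ,
\]
and the first two summands are again supplied by \cref{lem:second-order1}.

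To compute $\E C^2$, I would first replace $\beta^\perp$ by $I_n$, as licensed by \cref{lem:beta_perp1two}. It is cleaner to do this only at the level of the cross term and bound the error directly:
- $\beta^\perp = I - VV^\top$, and the $VV^\top$ correction gives terms like $\tr(VV^\top X^{(1)}\beta_1^{-1}\beta_2^{-1}X^{(2)})$.
- Their second moments are $O(k^3/(n^4\rho_n^3))$, using incoherence (\cref{ass:incoherence}) and $\|\beta_i^{-1}\|\asymp (n\rho_n)^{-1}$.
- This is $o\big(k^2/(n^3\rho_n^2)\big)$ under \cref{ass:sparse}.

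After the replacement, cyclicity and symmetry of the $X^{(i)}$ and $\beta_i^{-1}$ give
\[
C\approx 2\tr\big(X^{(1)}GX^{(2)}\big)\cdot\text{(symmetrized)} = 2\sum_{a,b,c,d}X^{(1)}_{ab}G_{bc}X^{(2)}_{cd}\,\mathbb 1\{d=a\}.
\]
That is, $C\approx 2\langle X^{(1)}, X^{(2)}G\rangle$ up to symmetrization, with $G=\beta_1^{-1}\beta_2^{-1}+\beta_2^{-1}\beta_1^{-1}$.

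Taking $\E C^2$ then uses only second moments: $\E X^{(i)}_{ab}X^{(i)}_{a'b'}=\Sigma^{(i)}_{ab}$ when $\{a,b\}=\{a',b'\}$. Summing over the two matchings (identical indices versus transposed indices) produces
- the term $\langle G\circ G,\Sigma^{(1)\top}\Sigma^{(2)}\rangle$, and
- the term $\langle G\circ G,(J_n-I_n)\circ(\Sigma^{(1)}\circ\Sigma^{(2)})\rangle$, where the diagonal is excluded to avoid double counting the $a=b$ matching.
The overall factor $4$ comes from the $2^2$ prefactor. No fourth cumulants appear, because each $X^{(i)}$ enters linearly.

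For the order of magnitude, I would use \cref{ass:asymptotic1,ass:eigen-scaling,ass:incoherence}:
- $\|G\|_{\max}\lesssim k/(n^3\rho_n^2)$ and $\|G\|_F^2\asymp k/(n^4\rho_n^4)$,
- hence $\E C^2\lesssim n^2\rho_n^2\cdot n^2\|G\|_{\max}^2\cdots\lesssim k^2/(n^3\rho_n^2)$.
Combined with $\Var(Q_i)\asymp k^2/(n^3\rho_n^2)$ from \cref{lem:second-order1} and the nonnegativity of all three summands, this gives $\Var(T_2^{(S)})\asymp k^2/(n^3\rho_n^2)$.

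I expect the main obstacle to be the exact bookkeeping in $\E C^2$. One must handle symmetric pairs $(a,b)$ versus $(b,a)$, the diagonal (self-loop) entries, and the two trace terms being transposes of each other so that they combine into $G$. The error from replacing $\beta^\perp$ by $I_n$ must also be shown to be uniformly $o(k^2/(n^3\rho_n^2))$, including its cross-covariance with the main part, which I would control by Cauchy–Schwarz.
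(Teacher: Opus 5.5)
Your decomposition $T_2^{(S)}=Q_1+Q_2-C$ is correct, and it takes a different route from the paper's. Because $C$ is linear in each of the two independent centered matrices, every covariance vanishes exactly, so $\Var(T_2^{(S)})=\Var(Q_1)+\Var(Q_2)+\E C^2$. This lets you use \cref{lem:second-order1} as a black box and replace $\beta^\perp$ by $I_n$ only inside $C$. The paper instead replaces $\beta^\perp$ by $I_n$ in the whole statistic via \cref{lem:beta_perp1two}, then recomputes everything through a row-wise sum with within- and between-row covariances. Your route is cleaner, and your mean argument is complete. The order claim $\Var(T_2^{(S)})\asymp k^2/(n^3\rho_n^2)$ also follows from your nonnegativity argument.

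The gap is the step $C\approx 2\tr(X^{(1)}GX^{(2)})$ ``up to symmetrization.'' The two traces in $C$ are equal, so with $M:=\beta_1^{-1}\beta_2^{-1}$ the cross term after $\beta^\perp\to I_n$ is
\[
C_0=4\tr\bigl(\beta_1^{-1}X^{(1)}X^{(2)}\beta_2^{-1}\bigr)=4\sum_{c=1}^{n}\bigl(X^{(1)}_{c\cdot}\bigr)^{\top}M\,X^{(2)}_{c\cdot}.
\]
This is a bilinear form in two \emph{different} random vectors, so you cannot replace $M$ by $G/2=(M+M^\top)/2$. Indeed, $2\tr(X^{(1)}GX^{(2)})-C_0=2\tr\bigl((M-M^\top)X^{(1)}X^{(2)}\bigr)$. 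Pairing honestly gives exactly
\[
\E C_0^2=16\bigl\langle M\circ M,\ \Sigma^{(1)}\Sigma^{(2)}\bigr\rangle+16\bigl\langle \diag(M)\diag(M)^{\top},\ (J_n-I_n)\circ\Sigma^{(1)}\circ\Sigma^{(2)}\bigr\rangle .
\]
The second term, which is what the transposed matching actually produces (products of diagonal entries, not $G\circ G$), is $O(k^2/(n^4\rho_n^2))$ and harmless.

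The first term equals the leading part $4\langle G\circ G,\Sigma^{(1)\top}\Sigma^{(2)}\rangle$ of the statement only when $M=M^\top$, i.e.\ when $\beta_1^{-1}$ and $\beta_2^{-1}$ commute. $H_0$ does not force this. For SBMs with a common $Z$ and $N=Z^\top Z$, $M$ is symmetric iff $(B^{(1)})^{-1}N^{-1}(B^{(2)})^{-1}$ is, which fails for generic $B^{(1)},B^{(2)}$. In that case the two expressions generically differ by an amount of order $k/(n^3\rho_n^2)$, which is not $o(k^2/(n^3\rho_n^2))$.

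The paper's proof makes the same identification: it assigns $Q_i^{(1,2)}=2(X^{(1)}_{i\cdot})^\top M X^{(2)}_{i\cdot}$ the variance $\sum_{s,t}G_{st}^2\Sigma^{(1)}_{is}\Sigma^{(2)}_{it}$ instead of $4\sum_{s,t}M_{st}^2\Sigma^{(1)}_{is}\Sigma^{(2)}_{it}$. So this is not a bookkeeping issue you can fix. Carried out correctly, your argument proves the lemma with $G$ replaced by $2\beta_1^{-1}\beta_2^{-1}$ in the leading cross term. Alternatively, it proves the stated version under the extra hypothesis $\beta_1^{-1}\beta_2^{-1}=\beta_2^{-1}\beta_1^{-1}$.
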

\begin{proof}
    See \cref{sec:second-order2proof}.
\end{proof}

Combining these moment formulae with the martingale central limit argument yields the Gaussian limit for the centered and scaled second–order term.
\begin{lemma}[Second--order asymptotic distribution of the two-sample test]\label{lem:second-order-clt2}
Suppose \cref{ass:asymptotic1,ass:sparse,ass:eigen-scaling,ass:incoherence} hold.  The quantity
\(
T_{2}^{(S)} \)
satisfies
\[
\frac{
  T_{2}^{(S)} - \mathbb{E}[T_{2}^{(S)}]
}{
  \sqrt{\Var(T_{2}^{(S)})}
}
\xrightarrow{D} \mathcal{N}(0,1).
\]
\end{lemma}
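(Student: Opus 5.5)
The plan is to write $T_2^{(S)}$ as a centered quadratic form in the independent edge variables of both graphs, plus a diagonal part, and then apply a martingale central limit theorem. This mirrors \cref{lem:second-order-clt1}.

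\textbf{Step 1: reduction and decomposition.} By \cref{lem:beta_perp1two}, it suffices to treat the version with $\beta^\perp$ replaced by the identity, $\tilde T=2\|X^{(1)}\beta_1^{-1}-X^{(2)}\beta_2^{-1}\|_F^2$. The difference is handled via its variance ratio, using Chebyshev and Slutsky. Order the $N=n(n+1)$ independent variables $\{X^{(1)}_{ij},X^{(2)}_{ij}:i\le j\}$ so that the first sample comes first. Expanding $\tilde T$ entrywise gives
\[
\tilde T=\sum_{a}c_{aa}\xi_a^2+\sum_{a\neq b}c_{ab}\xi_a\xi_b .
\]
The coefficients $c_{ab}$ are entries of $\beta_1^{-2}$, $\beta_2^{-2}$ or $G$, according to which samples $a,b$ come from.

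\textbf{Step 2: martingale differences.} Subtract the mean and write $\tilde T-\E\tilde T=\sum_{m=1}^N D_m$, where
\[
D_m=c_{mm}(\xi_m^2-\E\xi_m^2)+2\xi_m\sum_{b<m}c_{mb}\xi_b .
\]
These are martingale differences for the filtration generated by $\xi_1,\dots,\xi_m$. The diagonal part gives the $4\alpha_i\t K^{(i)}\alpha_i$ contribution, since $\Var(\xi^2)$ produces $K^{(i)}$. The off-diagonal part gives the $8\langle\cdot,(\Sigma^{(i)})^2\rangle$ and $G\circ G$ terms. \cref{lem:second-order2} supplies $\Var\asymp k^2/(n^3\rho_n^2)$, which we use as the normalization $s_N^2$.

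\textbf{Step 3: martingale CLT conditions.} We apply the martingale CLT (Hall--Heyde) and must verify two conditions.
\begin{itemize}
\item \emph{Lyapunov.} We need $\sum_m\E D_m^4=o(s_N^4)$. Incoherence (\cref{ass:incoherence}) and \cref{ass:eigen-scaling} give $|(\beta_i^{-1}\beta_j^{-1})_{ab}|\lesssim k/(n^3\rho_n^2)$. Moments of Bernoulli entries satisfy $\E|\xi|^p\lesssim\rho_n$. A direct fourth-moment count then gives $\sum_m\E D_m^4\lesssim k^4/(n^7\rho_n^4)\cdot n^2\rho_n\cdot(\text{const})$ plus similar terms. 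Each is $o(k^4/(n^6\rho_n^4))$ because $n\rho_n\to\infty$.
\item \emph{Conditional variance.} We need $\sum_m\E[D_m^2\mid\mathcal F_{m-1}]/s_N^2\to1$ in probability. The conditional variance involves $\sum_m\sigma_m^2(\sum_{b<m}c_{mb}\xi_b)^2$ and cross terms $\E[\xi_m^3]\,c_{mm}\sum_bc_{mb}\xi_b$. We show its variance is $o(s_N^4)$ by expanding into sums over index quadruples. The key estimate is the operator-norm-type bound $\|C\circ C\|$ combined with $\sum_{b,b'}(\sum_m\sigma_m^2c_{mb}c_{mb'})^2$. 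Since $C$ has rank $O(k)$ structure with incoherent factors, this fourth-order contraction is of order $k^4/(n^7\rho_n^4)\ll s_N^4$.
\end{itemize}

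\textbf{Main obstacle.} I expect the conditional-variance step to be the hardest, especially the cross-sample blocks. There the coefficient $G$ couples $X^{(1)}$ and $X^{(2)}$, and the ordering makes the first-sample variables act as "past" for all second-sample increments. I would first treat those blocks by conditioning on $X^{(1)}$ and using concentration of $\|X^{(1)}\beta_1^{-1}\beta_2^{-1}\|_F$. Only if that fails would I fall back to brute-force moment counting.
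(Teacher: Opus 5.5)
Your proposal is correct and follows essentially the paper's argument: reduce via \cref{lem:beta_perp1two}, then apply a martingale CLT to the resulting quadratic form in the edge variables of both graphs. The only difference is that the paper reveals the pair $(X^{(1)}_{ik},X^{(2)}_{ik})$ jointly at each edge position instead of placing all of sample 1 first, and that is immaterial.

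One slip needs fixing. As written, your Lyapunov bound $\frac{k^4}{n^7\rho_n^4}\cdot n^2\rho_n\asymp n\rho_n\,s_N^4$ is not $o(s_N^4)$. The right count uses that $L_m=\sum_{b<m}c_{mb}\xi_b$ has only $O(n)$ nonzero coefficients, namely the edges sharing a vertex with $m$. This gives $\E L_m^4\lesssim\rho_n^2\bigl(n k^2/(n^6\rho_n^4)\bigr)^2$, hence $\sum_m\E D_m^4\lesssim k^4/(n^8\rho_n^5)=o(s_N^4)$ because $n^2\rho_n\to\infty$.

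Likewise, your conditional-variance order $k^4/(n^7\rho_n^4)$ is correct, but it comes from the same vertex-sharing count, not from a rank-$O(k)$ structure. For fixed $b,b'$, only $O(n)$ indices $m$ contribute when $b,b'$ share a vertex, and $O(1)$ otherwise. The edge-indexed coefficient array has no low-rank structure, and the truncation $m>\max(b,b')$ would destroy it anyway.
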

\begin{proof}
    See \cref{sec:second-orderclt2proof}.
\end{proof}

This completes the derivation of the Gaussian limit for the two–sample second–order contribution and provides the principal stochastic ingredient in the proof of \cref{thm:twosample}.

\subsection{Concentration of Higher-Order Terms}

Like in the proof of \cref{thm:onesample}, let \(T_{2}^{(H)}\) denote the higher--order contributions in the asymptotic expansion of the test statistic:
\[
T_{2}^{(H)} = \sum_{l \geq 3} T_{2}^{(l)} = \sum_{l \geq 3} \left( 
    \langle {V^{(1)}} {V^{(1)}}\t, S_{1,l}({X^{(1)}}) \rangle 
    + \langle {V^{(2)}} {V^{(2)}}\t, S_{2,l}({X^{(2)}}) \rangle 
    + \sum_{l_1 + l_2 = l} 
      \langle S_{1,l_1}({X^{(1)}}), S_{2,l_2}({X^{(2)}}) \rangle 
\right).
\]
As in the one-sample setting, we first separate out the third-order contribution.  The following result bounds its mean.\begin{lemma}\label{lem:3rdmean2}
Assume \cref{ass:asymptotic1,ass:sparse,ass:eigen-scaling,ass:incoherence}.  Let \(S_{i,3}\) denote the third–order remainder terms in \cref{Sk_2sample}.   Then we have 
\[\E \langle V V\t,S_{i,3}(X)\rangle = o \left(\frac{k}{n^2 \rho_n^2}\right).\]
\end{lemma}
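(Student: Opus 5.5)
This statement is the one-sample \cref{lem:3rdmean1} applied to each network separately: $S_{i,3}(X^{(i)})$ depends only on $X^{(i)}=A^{(i)}-P^{(i)}$ and on the eigenstructure $(V^{(i)},\Lambda^{(i)})$. Each pair $(A^{(i)},P^{(i)})$ satisfies \cref{ass:asymptotic1,ass:sparse,ass:eigen-scaling,ass:incoherence}. So my plan is to reduce to the one-sample computation and, for completeness, sketch how that computation goes. I fix $i$, drop the superscript, and write $X$, $\beta^{-s}$, $\beta^\perp$.

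\textbf{Step 1: expansion.} I would expand
\[
\langle VV\t,S_3(X)\rangle=\sum_{s_1+\dots+s_4=3}(-1)^{1+\tau(s)}\tr\bigl(VV\t\beta^{-s_1}X\beta^{-s_2}X\beta^{-s_3}X\beta^{-s_4}\bigr).
\]
Since $VV\t\beta^\perp=0$ and $VV\t\beta^{-s}=\beta^{-s}$ for $s\ge1$, I use cyclicity to move $VV\t$ next to the outer factors. Only index tuples with $s_1+s_4\ge1$ survive. What remains is a finite list, at most $\binom{6}{3}$ terms, of cubic forms $\tr(M_1XM_2XM_3X)$. Each $M_j$ is either $\beta^{-s}$ with $s\ge1$ or $\beta^\perp$, and at least one $M_j$ has $s\ge1$.

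\textbf{Step 2: the expectation of a cubic form.} I would compute $\E\tr(M_1XM_2XM_3X)=\sum \E[X_{ab}X_{cd}X_{ef}]\,(\cdots)$. Because the entries are independent, centered and symmetric, the expectation vanishes unless all three factors are the same edge $\{a,b\}$. In that case $\E X_{ab}^3=P_{ab}(1-P_{ab})(1-2P_{ab})\lesssim\rho_n$. This gives
\[
\Bigl|\E\tr(M_1XM_2XM_3X)\Bigr|\lesssim\rho_n\sum_{a,b}\ \prod\text{(entries of }M_j\text{ indexed by }a,b).
\]
Every index pattern is a product of three entries of the form $(M_j)_{aa}$, $(M_j)_{ab}$ or $(M_j)_{bb}$.

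\textbf{Step 3: entrywise bounds.} By \cref{ass:incoherence,ass:eigen-scaling}, $|(\beta^{-s})_{ab}|\le\|V_{a\cdot}\|\|V_{b\cdot}\|\,|\lambda_k|^{-s}\lesssim \frac{k}{n}(n\rho_n)^{-s}$. Also $|(\beta^\perp)_{ab}|\le 1\{a=b\}+k/n$. The total power of $\beta^{-1}$ equals $3$, which contributes a factor $(n\rho_n)^{-3}$. In the worst case there are two $\beta^\perp$ factors, both diagonal, and a single $\beta^{-3}$ factor. This yields
\[
\rho_n\sum_{a,b}\frac{k}{n}\cdot\frac{1}{(n\rho_n)^3}\lesssim \frac{kn}{n^3\rho_n^2}=\frac{k}{n^2\rho_n^2}\cdot O(1).
\]
That is the borderline rate, not $o(\cdot)$.

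\textbf{Step 4: the main obstacle.} Step 3 alone therefore does not give little-$o$. The key is to count the diagonal constraints. For the sum over $(a,b)$ to be of size $n^2$, every $\beta^\perp$ factor must be off-diagonal, and then it carries a factor $k/n$. If instead a factor is diagonal, the sum collapses to $a=b$, which costs a factor $1/n$. I would therefore carefully split each term into the parts $a=b$ and $a\neq b$. The terms with three $\beta^{-s}$ factors give $\rho_n\cdot n^2\cdot(k/n)^3(n\rho_n)^{-3}$. The terms with a single $\beta^{-3}$ and two $\beta^\perp$ factors give $\rho_n[n\cdot(k/n)+n^2(k/n)^3](n\rho_n)^{-3}$. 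Both are bounded by $k/(n^3\rho_n^2)\cdot O(k^2)=o(k/(n^2\rho_n^2))$, since $k$ is fixed.

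This mirrors the argument of \cref{lem:3rdmean1}. I would carry out the two-case bookkeeping over all index tuples $s$ to confirm that the worst surviving term gains at least one factor of $1/n$ over the crude bound in Step 3. Summing the finitely many terms for $i=1,2$ then proves the claim.
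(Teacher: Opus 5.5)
Your reduction to the one-sample quantity $\E\langle VV^\top,S_3(X)\rangle$ is exactly what the paper does; it simply cites Lemma~\ref{lem:3rdmean1} for these terms. However, your Step~4 fails, and it cannot be repaired, because the claimed $o(\cdot)$ bound is false.

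\textbf{Where Step~4 fails.} You assert that a diagonal $\beta^\perp$ factor collapses the sum to $a=b$. That is not so. An entry $(\beta^\perp)_{aa}=1-\|V_{a\cdot}\|^2=1-O(k/n)$ depends on one summation index only. The sum collapses only through the identity part of an entry indexed by $(a,b)$. After cyclic reduction you need both $s_1\ge1$ and $s_4\ge1$, not merely $s_1+s_4\ge1$, and one gets
\[
\langle VV^\top,S_3(X)\rangle=-2\tr(\beta^{-3}X\beta^\perp X\beta^\perp X)+\tr(\beta^{-2}X\beta^{-1}X\beta^\perp X)+\tr(\beta^{-2}X\beta^\perp X\beta^{-1}X).
\]
In the first trace, the pairing whose three noise factors are $X_{ab}X_{ba}X_{ab}$ contributes $\sum_{a,b}\Theta_{ab}(\beta^{-3})_{ab}(\beta^\perp)_{aa}(\beta^\perp)_{bb}$, summed over all $n^2$ pairs. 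Here $\Theta_{ab}:=\E X_{ab}^3=P_{ab}(1-P_{ab})(1-2P_{ab})$. This is exactly the case your split into $n\cdot(k/n)$ and $n^2(k/n)^3$ leaves out.

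\textbf{Size of the omitted term.} Its size is clearest from the exact identity $\E X^3=\Theta$: the only summand of $(X^3)_{af}$ with nonzero mean is $X_{af}X_{fa}X_{af}$. Hence
\[
\E\tr(\beta^{-3}X^3)=\tr(\beta^{-3}\Theta)=\tr(\beta^{-3}P)+O\bigl(k/(n^2\rho_n)\bigr)=\tr(\Lambda^{-2})+o\bigl(k/(n^2\rho_n^2)\bigr).
\]
By contrast, the $VV^\top$ corrections from writing $\beta^\perp=I-VV^\top$, and the two other traces, have means $O(k^2/(n^3\rho_n^2))$.

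\textbf{Consequence: the statement is false.} It follows that $\E\langle VV^\top,S_3(X)\rangle=-2\tr(\Lambda^{-2})+o(k/(n^2\rho_n^2))$. By \cref{ass:eigen-scaling}, $\tr(\Lambda^{-2})\gtrsim k/(n^2\rho_n^2)$, so the crude bound in your Step~3 is already sharp. A concrete check is $P=\rho_nJ_n$, with $k=1$, $u=\mathbf{1}_n/\sqrt n$, $\lambda=n\rho_n$; this satisfies all four assumptions. With $m_j=u^\top X^ju$ one has $\langle uu^\top,S_3(X)\rangle=\lambda^{-3}(-2m_3+6m_1m_2-4m_1^3)$. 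Here $\E m_3=n\rho_n(1-\rho_n)(1-2\rho_n)$, while $\E m_1m_2$ and $\E m_1^3$ are $O(\rho_n)$. So the expectation equals $-2(n\rho_n)^{-2}(1+o(1))$, which is not $o(k/(n^2\rho_n^2))$. This is a skewness effect of Bernoulli noise; it vanishes for symmetric (e.g.\ Gaussian) noise.

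\textbf{Relation to the paper.} The paper's proof of Lemma~\ref{lem:3rdmean1} has the same omission. In \eqref{eq:mean32} the weight is written as $(\beta^{-3})_{ji}(\beta^\perp)_{ji}^2$, which keeps only the pairing in which both $\beta^\perp$ entries are off-diagonal. What can actually be proved is $\E\langle V^{(i)}{V^{(i)}}^\top,S_{i,3}(X^{(i)})\rangle=O(k/(n^2\rho_n^2))$. That weaker bound is all the proofs of \cref{thm:onesample,thm:twosample} need, because their centering already carries an $O_p(k/(n^2\rho_n^2))$ remainder.
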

\begin{proof}
    See \cref{sec:3rdm2}.
\end{proof}
Next we bound the variance of the third–order contribution.  By \cref{lem:3rdvar1} we already have tight control of the pure projection term $\Var\big(\langle VV\t,S_{i,3}(X)\rangle\big)$, so it suffices to control the mixed or cross–term variances of the form
\(
\Var\big(\langle S_{1,l_1}(X),S_{2,l_2}(X)\rangle\big)\) for \(l_1,l_2\ge 1,
\)
because these are the only remaining contributions appearing in the expansion of the third–order part.  The next lemma provides the required uniform bound on these cross–term variances, which together with \cref{lem:3rdvar1}, yields the desired bound on the overall third–order variance.

\begin{lemma}\label{lem:3rdvar2}
Assume that \cref{ass:asymptotic1,ass:sparse,ass:eigen-scaling,ass:incoherence} hold. Then \(\Var(\langle S_{1,l_1}(X),S_{2,l_2}(X)\rangle) \asymp \frac{k^2}{n^4 \rho_n^3}\) where \(l_1+l_2=3\) and \(l_1,l_2 \in \mathbb{Z}^{+}.\)
\end{lemma}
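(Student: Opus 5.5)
By symmetry it suffices to treat $(l_1,l_2)=(1,2)$; the case $(2,1)$ is identical with the two samples interchanged. I would work under the null, so that $\beta_1^\perp=\beta_2^\perp=:\beta^\perp$. Writing out $S_{1,1}(X^{(1)})=\beta^\perp X^{(1)}\beta_1^{-1}+\beta_1^{-1}X^{(1)}\beta^\perp$ (up to sign) and the six terms of $S_{2,2}(X^{(2)})$, I expand $\langle S_{1,1}(X^{(1)}),S_{2,2}(X^{(2)})\rangle$ into a finite sum of traces. Each trace has the form
\[
\tr\bigl(B_1 X^{(1)} B_2 X^{(2)} B_3 X^{(2)} B_4\bigr),
\]
with $B_j\in\{\beta^\perp,\beta_i^{-1},\beta_i^{-2}\}$. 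Many terms vanish because $\beta^\perp\beta_i^{-s}=0$. The surviving terms are those with exactly one factor $\beta_2^{-2}$, or two factors $\beta_2^{-1}$, on the $X^{(2)}$ side, together with one factor $\beta_1^{-1}$ on the $X^{(1)}$ side. Since $k$ is fixed, it is enough to show that each surviving term $W$ satisfies $\Var(W)\lesssim k^2/(n^4\rho_n^3)$, and that at least one term, for instance $\tr(\beta^\perp X^{(1)}\beta_1^{-1}\beta_2^{-1}X^{(2)}\beta^\perp X^{(2)}\beta_2^{-1})$, has variance of exactly this order. Cross-covariances can then be handled by Cauchy--Schwarz, which gives the upper bound.

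\textbf{Upper bound.} I use the independence of $X^{(1)}$ and $X^{(2)}$ and condition on $X^{(2)}$. Given $X^{(2)}$, $W$ is linear in $X^{(1)}$, namely $W=\langle X^{(1)},M(X^{(2)})\rangle$ with $M=B_1^\top(B_2X^{(2)}B_3X^{(2)}B_4)^\top$. Hence
\[
\Var(W)=\E\Bigl[\textstyle\sum_{a\le b}\Sigma^{(1)}_{ab}(M_{ab}+M_{ba})^2\Bigr]+\Var\bigl(\E[W\mid X^{(2)}]\bigr).
\]
The conditional mean is $\E[W\mid X^{(2)}]=0$, so only the first term remains, and it is bounded by $\rho_n\,\E\|M\|_F^2$. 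Using $\|\beta_i^{-1}\|\lesssim (n\rho_n)^{-1}$, $\|\beta_i^{-2}\|\lesssim(n\rho_n)^{-2}$, the fact that every term contains a rank-$k$ factor, and the bound $\|X^{(2)}\|\lesssim\sqrt{n\rho_n}$ on $\mathcal E_{\sf good}$ from \cref{lem:Good Set}, I get
\[
\|M\|_F^2\le k\|M\|^2\lesssim k\,(n\rho_n)^{-3}\,(n\rho_n)^{2}\cdot(\text{extra }(n\rho_n)^{-1}\text{ from the second inverse}).
\]
This crude count gives $k/(n^3\rho_n^2)$ times $\rho_n$, which is too big by a factor $n$.

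To gain that factor, I replace the operator-norm bound by an exact second-moment computation of $\E\|M\|_F^2$. I expand $M_{ab}$ entrywise through $X^{(2)}_{cd}X^{(2)}_{ef}$ and keep only the index pairings allowed by independence and the zero mean of the entries. Incoherence (\cref{ass:incoherence}) enters through $\|\beta_i^{-s}\|_\infty\lesssim k/(n(n\rho_n)^s)$ and $\|\beta_i^{-s}\|_{2,\infty}\lesssim\sqrt{k/n}\,(n\rho_n)^{-s}$. The leading pairing is $(c,d)=(e,f)$, which produces the factor $\Sigma^{(2)}$ contracted against $\beta_2^{-1}\beta^\perp\beta_2^{-1}$-type kernels, and it should yield $\E\|M\|_F^2\asymp k^2/(n^4\rho_n^4)$. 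Multiplying by $\rho_n$ gives the target $k^2/(n^4\rho_n^3)$. The off-$\mathcal E_{\sf good}$ contribution is negligible because all entries are bounded and $\Pr(\mathcal E_{\sf good}^c)=O(n^{-19})$, or the computation can be done without conditioning on that event at all.

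\textbf{Lower bound.} For the exemplar term, I evaluate the leading pairing exactly. I then show that the remaining pairings (diagonal coincidences, and fourth-moment terms of order $\rho_n$ per entry) are of smaller order under \cref{ass:sparse}, mirroring the lower-bound argument of \cref{lem:3rdvar1}. Positivity comes from $\Sigma^{(i)}_{ab}\ge a\rho_n/2$ together with the fact that $\|\beta_1^{-1}\beta_2^{-1}\|_F^2\asymp k/(n\rho_n)^4$ under \cref{ass:eigen-scaling}.

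\textbf{Main obstacle.} The hardest step is the combinatorial bookkeeping of index coincidences in $\E\|M\|_F^2$, and showing that heteroskedastic fourth moments ($\E X^4\asymp\rho_n$ rather than $\rho_n^2$) do not dominate. My plan is to organize this exactly as in \cref{lem:3rdvar1}, reusing its entrywise bounds. The only change is that the independence of the two samples removes all cross-sample pairings, which simplifies rather than complicates the count.
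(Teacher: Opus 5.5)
Conditioning on $X^{(2)}$, so that each trace becomes linear in $X^{(1)}$, is a sound device and it gives the upper bound. The genuine gap is the lower half of $\asymp$.

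\textbf{The exemplar vanishes.} Your exemplar $\tr(\beta^\perp X^{(1)}\beta_1^{-1}\beta_2^{-1}X^{(2)}\beta^\perp X^{(2)}\beta_2^{-1})$ is identically zero under the null. Cycling the last $\beta_2^{-1}$ to the front produces $\beta_2^{-1}\beta^\perp=0$. For the same reason, the kernel $\beta_2^{-1}\beta^\perp\beta_2^{-1}$ against which you plan to contract $\Sigma^{(2)}$ is $0$.

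\textbf{No term with two factors $\beta_2^{-1}$ can carry the lower bound.} Take the nonvanishing one, $T_c=\tr(\beta_1^{-1}X^{(1)}\beta^\perp X^{(2)}\beta_2^{-1}X^{(2)}\beta_2^{-1})$. Its inner sandwich $V^{(2)\top}X^{(2)}V^{(2)}$ is only $O_p(\sqrt{\rho_n})$, so its variance is roughly a factor $n$ below the target.

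\textbf{The reduction itself is not sufficient.} You state that it suffices for every surviving term to have variance $\lesssim$ the target and for one to have variance $\asymp$ the target. This cannot lower-bound the variance of a sum, because summands of the same order may cancel.

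\textbf{What is needed} is the bookkeeping the paper does. Since $S_{1,1}(X^{(1)})=A+A^\top$ with $A=\beta^\perp X^{(1)}\beta_1^{-1}$, and $S_{2,2}(X^{(2)})$ is symmetric, you get
$\langle S_{1,1},S_{2,2}\rangle=2\tr(A^\top S_{2,2})=2T_a-2T_c$, where $T_a=\tr(\beta_1^{-1}X^{(1)}\beta^\perp X^{(2)}\beta^\perp X^{(2)}\beta_2^{-2})$. In other words, the four surviving traces coincide in pairs.

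You must then show $\Var(T_c)=o(\Var(T_a))$ and bound $\Var(T_a)$ from below. The paper does this by reducing $T_a$ to $\tr(\beta_1^{-1}X^{(1)}(X^{(2)})^2\beta_2^{-2})$, index counting, and showing the $\beta_2^{-1}X^{(2)}\beta_2^{-1}$ monomials are negligible.

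If you use your conditioning instead, proceed as follows.
Write $T_a=\langle X^{(1)},M\rangle$ with $M=\beta_1^{-1}\beta_2^{-2}X^{(2)}\beta^\perp X^{(2)}\beta^\perp$. The symmetrization $M+M^\top$ costs nothing here, since $\tr(M^2)=0$.

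The within-copy pairing in $\E\|M\|_F^2$ produces $\beta_1^{-1}\beta_2^{-2}\,\E[X^{(2)}\beta^\perp X^{(2)}]\,\beta^\perp$, and this can vanish exactly. For an SBM, $\E[X^{(2)}\beta^\perp X^{(2)}]$ maps $\col(V)$ into itself.

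Positivity must therefore come from the cross pairings, i.e.\ the fluctuation of $X^{(2)}\beta^\perp X^{(2)}$. Its off-diagonal entries have variance $\asymp n\rho_n^2$, which gives $\E\|M\|_F^2\gtrsim (n\rho_n)^2\|\beta_2^{-2}\beta_1^{-1}\|_F^2\asymp k(n\rho_n)^{-4}$.

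\textbf{The upper bound is easier than you think.} Your ``crude count'' forgot to square the operator norm. For every surviving term, on $\mathcal{E}_{{\sf good}}$ you have $\|M\|\lesssim\|X^{(2)}\|^2(n\rho_n)^{-3}\lesssim(n\rho_n)^{-2}$ and $\rank(M)\le k$. Hence $\rho_n\|M\|_F^2\lesssim k/(n^4\rho_n^3)$ directly, with no pairing computation needed.
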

\begin{proof}
    See \cref{sec:3rdproofvar2}.
\end{proof}
A simple application of Chebyshev's inequality, as in \cref{eq:cheby1}, yields the following result in probability:
\[\frac{\left|T_{2}^{(3)} - \E \left(T_{2}^{(3)}\right)\right|}
{\sqrt{\Var\left(T_{2}^{(S)}\right)}} 
\to 0.\]
For the higher order terms, we follow the same proof as in one-sample case, which together with the asymptotic normality established in \cref{lem:second-order-clt2}, yields the conclusion of \cref{thm:twosample}.

\section{Proof of Estimator Consistency (\cref{thm:1sampleconsis,thm:2sampleconsis})}\label{sec:estproof} 
To prove these result we start with the following lemma that shows the empirical eigenvectors are sufficiently incoherent.
\begin{lemma}\label{lem:Very Good Set}
Let $\mathcal{E}_{{\sf good}}$ be the event from \cref{lem:Good Set} and define the event
\[
\mathcal{E}_{{\sf very \ good}} = \mathcal{E}_{{\sf good}} \cap \bigg\{\\ \|\hat{V}\|_{2,\infty} \lesssim \sqrt{\frac{k}{n}}\,\bigg\}.
\]
Then $\mathbb{P}(\mathcal{E}_{{\sf very \ good}})\ge1 - O(n^{-19}).$
\end{lemma}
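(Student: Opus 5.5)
The plan is to work on $\mathcal{E}_{\sf good}$, where by \cref{lem:Good Set} we already have $\|X\|\lesssim\sqrt{n\rho_n}$ with probability $1-O(n^{-19})$. We then show that the row-norm bound $\|\hat V\|_{2,\infty}\lesssim\sqrt{k/n}$ fails only on an additional event of probability $O(n^{-19})$, and finish with a union bound. Since \cref{ass:incoherence} gives $\|V\|_{2,\infty}\le\sqrt{\psi k/n}$, it suffices to prove an $\ell_{2,\infty}$ perturbation bound of the form
\[
\min_{O\in\mathbb{O}_k}\|\hat V - V O\|_{2,\infty}\lesssim \sqrt{\frac{k}{n}}\cdot\frac{\sqrt{\log n}}{\sqrt{n\rho_n}}+\sqrt{\frac{k}{n}}\cdot\frac{1}{\sqrt{n\rho_n}}.
\]
The right-hand side is $o(\sqrt{k/n})$ under \cref{ass:sparse}. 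Because $\|\hat V O\t\|_{2,\infty}=\|\hat V\|_{2,\infty}$, this gives the claim by the triangle inequality.

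\textbf{Step 1 (deterministic decomposition).} On $\mathcal{E}_{\sf good}$, Weyl's inequality and \cref{ass:eigen-scaling} give $|\hat\lambda_i|\asymp n\rho_n$. Write $\hat V = A\hat V\hat\Lambda^{-1}$, so that
\[
\hat V = P\hat V\hat\Lambda^{-1} + X\hat V\hat\Lambda^{-1}.
\]
The first term equals $V(\Lambda V\t\hat V\hat\Lambda^{-1})$. Its row norms are bounded by $\|V\|_{2,\infty}\cdot\|\Lambda\|\,\|\hat\Lambda^{-1}\|\lesssim\sqrt{k/n}$. For the second term we bound
\[
\|X\hat V\|_{2,\infty}\le \|XV\|_{2,\infty}+\|X(\hat V-VV\t\hat V)\|_{2,\infty}.
\]

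\textbf{Step 2 (the easy piece).} The quantity $\|XV\|_{2,\infty}$ is handled by a matrix Bernstein inequality applied row-wise. For row $j$, $(XV)_{j\cdot}=\sum_l X_{jl}V_{l\cdot}$ is a sum of independent centered vectors with summands bounded by $\sqrt{k/n}$ and variance proxy $\lesssim\rho_n\sum_l\|V_{l\cdot}\|^2=k\rho_n$. Taking a union bound over $j$ with the tail level set to $n^{-20}$ gives
\[
\|XV\|_{2,\infty}\lesssim\sqrt{k\rho_n\log n}+\sqrt{k/n}\,\log n\lesssim\sqrt{k/n}\cdot\sqrt{n\rho_n\log n}.
\]
After dividing by $n\rho_n$ this contributes $\sqrt{k/n}\sqrt{\log n/(n\rho_n)}=o(\sqrt{k/n})$.

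\textbf{Step 3 (main obstacle: dependence between $X$ and $\hat V$).} The term $\|X(\hat V-VV\t\hat V)\|_{2,\infty}$ is the hard part, because $X_{j\cdot}$ and $\hat V$ are dependent. The plan is the standard leave-one-out argument, as in \citet{chen_spectral_2021} or \citet{agterberg_overview_2025}. For each $j$, let $A^{(j)}$ be $A$ with row and column $j$ replaced by their expectations, and let $\hat V^{(j)}$ be its top-$k$ eigenvectors. Then $X_{j\cdot}$ is independent of $\hat V^{(j)}$, so Bernstein's inequality controls $X_{j\cdot}(\hat V^{(j)}H^{(j)}-V)$ by $\sqrt{\rho_n\log n}\,\|\hat V^{(j)}H^{(j)}-V\|_F+\log n\,\|\hat V^{(j)}H^{(j)}-V\|_{2,\infty}$. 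The difference $\hat V\hat V\t-\hat V^{(j)}\hat V^{(j)\t}$ is controlled by Davis--Kahan:
\[
\|\hat V\hat V\t-\hat V^{(j)}\hat V^{(j)\t}\|\lesssim \frac{\|(A-A^{(j)})\hat V^{(j)}\|}{n\rho_n}\lesssim \frac{\sqrt{n\rho_n}\,\|\hat V^{(j)}\|_{2,\infty}+\|X_{j\cdot}\hat V^{(j)}\|}{n\rho_n}.
\]
These pieces yield a self-bounding inequality of the form
\[
\|\hat V H-V\|_{2,\infty}\le C\frac{\sqrt{\log n}}{\sqrt{n\rho_n}}\sqrt{\frac{k}{n}}+ C\frac{\log n}{n\rho_n}\|\hat VH-V\|_{2,\infty}.
\]
Since $\log n/(n\rho_n)\to0$ under \cref{ass:sparse}, the last term can be absorbed into the left-hand side. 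The delicate points are making the union over $j$ hold with probability $1-O(n^{-19})$ (we choose the Bernstein tail $n^{-21}$) and handling the sign-indefinite eigenvalues allowed by \cref{ass:eigen-scaling}; for the latter we order by absolute value and use the eigengap $\asymp n\rho_n$ between $|\lambda_k|$ and $0$. Alternatively, one could cite the $\ell_{2,\infty}$ bound of \citet{agterberg_joint_2025} directly in place of this step.
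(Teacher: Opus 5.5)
Your proposal is correct and follows the same route as the paper. Both use a union bound over $\mathcal{E}_{{\sf good}}$ and an $\ell_{2,\infty}$ event, together with $\|\hat V\|_{2,\infty}\le\|\hat V-VW\|_{2,\infty}+\|V\|_{2,\infty}$ and \cref{ass:incoherence}. The only difference is that the paper does not derive the $\ell_{2,\infty}$ perturbation bound through your decomposition and leave-one-out argument. It instead cites Lemma C.6 of \citet{agterberg_joint_2025}, which gives $\|\hat V-VW_*\|_{2,\infty}\lesssim\sqrt{\log n/(n\rho_n)}\,\|V\|_{2,\infty}$ with probability $1-O(n^{-19})$ --- exactly the alternative you mention at the end.
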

\begin{proof}
    See \cref{sec:verygoodproof}.
\end{proof}
The event \(\mathcal{E}_{{\sf very \ good}}\) defines a regime where both the spectral norm bound \(\|X\| \lesssim \sqrt{n \rho_n}\) and the incoherence condition on \(\hat V\) hold simultaneously. These are crucial for controlling higher-order error terms and ensuring the consistency of the proposed estimator.

\subsection{Entrywise Decomposition for Estimated Adjacency Matrices}
Our estimator consistency results are built upon the following series expansion for the difference ${\hat P} - P$, where ${\hat P}$ is the rank $k$ approximation of $A$.
\begin{lemma}[Series expansion for the low-rank estimator]
\label{lem:series-expansion}
Let \({\widehat P}\) denote the best (with respect to Frobenius norm) rank-\(k\) approximation of \(A\).  Then under \cref{ass:asymptotic1,ass:sparse,ass:eigen-scaling,ass:incoherence}, \({\widehat P}-P\) admits the following series expansion:
\[
{\widehat P} - P = \sum_{l=1}^{\infty} T_{l}(X),
\]
where the first-order term is
\[
T_{1}(X)
= V V\t X (I - V V\t)+(I-V V\t) X V V\t+V V\t X V V\t,
\]
and, for each integer \(l\ge 2\),
\begin{align*}
T_{l}(X)
&= S_{l}(X)\,P + P\,S_{l}(X) + \sum_{l_1+l_2=l} S_{l_1}(X)\,P\,S_{l_2}(X)\\
&\qquad + S_{l-1}(X)\,X\,V V\t +  V V\t X S_{l-1}(X)
+ \sum_{l_1+l_2=l-1} S_{l_1}(X)\,X\,S_{l_2}(X).
\end{align*}
Here \(S_{l}(X)\) are the polynomial operators defined as in \cref{Sk_1sample}.  Moreover, the higher-order terms satisfy the operator-norm bound
\[
\|T_{l}(X)\| \lesssim c_l \left(\frac{\|X\|}{n\rho_n}\right)^{\!l}\,(n\rho_n),
\]
for constants \(c_l>0\) such that \(\log c_l = O(1)\).
\end{lemma}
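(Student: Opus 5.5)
We write $\hat P = \hat V\hat\Lambda\hat V\t = \hat V\hat V\t A\,\hat V\hat V\t$ and set $\hat\Pi := \hat V\hat V\t$ and $\Pi := VV\t$. On $\mathcal{E}_{\sf good}$ (\cref{lem:Good Set}) we have $\|X\|\lesssim\sqrt{n\rho_n}\ll\lambda_k$. Theorem 1 of \citet{xia_normal_2021} therefore applies and gives the convergent expansion $\hat\Pi - \Pi = \sum_{l\ge1} S_l(X)$, with $S_1(X) = \beta^\perp X\beta^{-1}+\beta^{-1}X\beta^\perp$. Since $\hat V$ spans the top-$k$ eigenspace of $A$, $\hat P = \hat\Pi A\hat\Pi = \hat\Pi(P+X)\hat\Pi$. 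We substitute $\hat\Pi = \Pi + \sum_{l\ge1}S_l(X)$ on both sides and expand. Note that $\Pi P = P\Pi = P$, and $P$ has rank $k$.

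\textbf{Step 1 (collecting orders).} Expanding $(\Pi+\sum_l S_l)(P+X)(\Pi+\sum_l S_l)$ and grouping by total degree in $X$ gives four kinds of terms:
\begin{itemize}
\item \(P\) (degree $0$);
\item $S_l P + P S_l + \sum_{l_1+l_2=l} S_{l_1} P S_{l_2}$, coming from $P$, of degree $l$;
\item $\Pi X\Pi$;
\item $S_{l-1}X\Pi + \Pi X S_{l-1} + \sum_{l_1+l_2=l-1}S_{l_1}XS_{l_2}$, coming from $X$, of degree $l$.
\end{itemize}
Subtracting $P$ leaves exactly $T_l(X)$ for $l\ge2$. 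For $l=1$ the degree-one terms are $S_1P+PS_1+\Pi X\Pi$. Because $\beta^{-1}P = \Pi$ and $\beta^\perp P=0$, we get $S_1P = \beta^\perp X\Pi$ and $PS_1 = \Pi X\beta^\perp$, which yields $T_1(X)$ as stated. Absolute convergence on $\mathcal{E}_{\sf good}$ justifies the rearrangement. This is the main bookkeeping step; the only care needed is to check that $\Pi X$ pairs with $S_{l-1}$ to give degree $l$.

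\textbf{Step 2 (norm bounds).} From the explicit sum defining $S_l$, each summand is a product of $l$ copies of $X$ and $\beta^{-s_j}$ factors with $\sum s_j=l$. Hence $\|S_l(X)\|\le \binom{2l}{l}(\|X\|/\lambda_k)^l\le 4^l(\|X\|/(C n\rho_n))^l$ by \cref{ass:eigen-scaling}. Using $\|P\|\lesssim n\rho_n$, the bound for each group is:
\begin{itemize}
\item $\|S_lP\|$ and $\|PS_l\|$ are $\lesssim 4^l(\|X\|/n\rho_n)^l n\rho_n$;
\item $\sum_{l_1+l_2=l}\|S_{l_1}PS_{l_2}\|\le l\,4^l(\cdot)^l n\rho_n$;
\item $\|S_{l-1}X\Pi\|\le 4^{l-1}(\|X\|/n\rho_n)^{l-1}\|X\| = 4^{l-1}(\|X\|/n\rho_n)^l n\rho_n$, and likewise for the remaining $X$-terms.
\end{itemize}
Thus $\|T_l(X)\|\lesssim c_l(\|X\|/n\rho_n)^l n\rho_n$ with $c_l\lesssim l\,4^l$, so $\log c_l = O(l)$. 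The stated $O(1)$ should be read as $O(1)$ per fixed $l$ / geometric growth. Since $\|X\|/n\rho_n\lesssim (n\rho_n)^{-1/2}\to0$ under \cref{ass:sparse}, the geometric factor $4^l$ is absorbed and the series $\sum_l T_l$ converges in operator norm. Incoherence (\cref{ass:incoherence}) is not needed for this operator-norm statement; it is only listed for downstream entrywise use.
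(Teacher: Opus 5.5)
Your proposal is correct and follows essentially the same route as the paper: write $\hat P = \hat\Pi(P+X)\hat\Pi$, substitute Xia's expansion $\hat\Pi = \Pi + \sum_{l\ge1} S_l(X)$, and collect terms by degree in $X$, using $\beta^{-1}P = VV^\top$ and $\beta^\perp P = 0$ to identify $T_1$. You then bound each group via $\|S_l(X)\| \le (4\|X\|/|\lambda_k|)^l$ and $\|P\| \lesssim n\rho_n$. Your observation that $c_l$ actually grows geometrically, so that $\log c_l = O(l)$ rather than $O(1)$ uniformly in $l$, also matches the paper's own argument, whose bound carries an $(l+1)2^{l-1}$ factor. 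This growth is harmless downstream, since $\|X\|/(n\rho_n) \to 0$ keeps $\sum_l \|T_l(X)\|$ summable.
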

\begin{proof}
    See \cref{sec:series-expansionproof}.
\end{proof}

\subsection{Consistency of Intermediate Quantities}
We now study the plug-in estimates we use in the definition of our mean estimator.  
\begin{lemma}\label{inter1}
Define $\Sigma, \hat{\Sigma}$ as in \cref{thm:onesample,thm6.2}. Define the matrices
$$
    M := \Diag(\Sigma \mathbf{1}_n), \quad 
    \hat{M} := \Diag(\hat{\Sigma} \mathbf{1}_n), \quad 
    \Delta M := M - \hat{M}, \quad 
    \Delta\Diag(\Sigma) := \Diag(\Sigma) - \Diag(\hat{\Sigma}).
$$
Then, under \cref{ass:asymptotic1,ass:dense,ass:eigen-scaling,ass:incoherence}, the following bounds hold:
\begin{align}
    \big| \tr\bigl(VV\t\,\Delta M\bigr) \big| &= O_p\bigl(\max (k,k \sqrt{\rho_n \log n})\bigr); \label{inter-a} \\
    \big| \tr\bigl(VV\t\,\Delta\Diag(\Sigma)\bigr)\big| &= O_p\left(\max \left(\frac{k}{n}, \frac{k \sqrt{\rho_n \log n}}{n}\right)\right). \label{inter-b}
\end{align}
\end{lemma}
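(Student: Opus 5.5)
We reduce both bounds to entrywise control of $\hat P - P$ weighted by incoherent vectors. Since $M$ and $\hat M$ are diagonal, $\tr(VV\t \Delta M) = \sum_{j} \|V_{j\cdot}\|^2 (\Sigma\mathbf 1_n - \hat\Sigma \mathbf 1_n)_j$. Similarly, $\tr(VV\t\Delta\Diag(\Sigma)) = \sum_j \|V_{j\cdot}\|^2(\Sigma_{jj}-\hat\Sigma_{jj})$. Entrywise,
\[
\Sigma_{ij}-\hat\Sigma_{ij} = (P_{ij}-\hat P_{ij})(1-P_{ij}-\hat P_{ij}),
\]
and on $\mathcal{E}_{\sf very\ good}$ (\cref{lem:Very Good Set}) the factor $1-P_{ij}-\hat P_{ij}$ is $1+O(\rho_n)$. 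So it suffices to control
\[
\textstyle\sum_j \|V_{j\cdot}\|^2\sum_i (\hat P-P)_{ji}
\qquad\text{and}\qquad
\sum_j \|V_{j\cdot}\|^2 (\hat P - P)_{jj}.
\]

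We plan to use the expansion of \cref{lem:series-expansion}, $\hat P - P = T_1(X) + \sum_{l\ge2}T_l(X)$. The first sum equals $w\t(\hat P-P)\mathbf 1_n$ with $w_j=\|V_{j\cdot}\|^2\le \psi k/n$ and $\|w\|_1=k$.

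\textbf{First-order term.} The term $w\t T_1(X)\mathbf 1_n$ is a linear form $\sum_{i\le j} c_{ij} X_{ij}$. Its coefficients come from $VV\t\mathbf 1_n$, $VV\t w$, $\beta^\perp \mathbf 1_n$ and $\beta^\perp w$. Using incoherence (\cref{ass:incoherence}), $\|VV\t w\|_\infty \lesssim k/n$ and $\|w\|_2\lesssim k/\sqrt n$. The vector $\mathbf 1_n$ is the delicate piece: $\|\mathbf 1_n\|=\sqrt n$, but $\mathbf 1_n$ is close to the column space for blockmodels, and in any case $|(VV\t\mathbf 1_n)_i|\le \sqrt{\psi k}$. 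We will bound the variance $\sum c_{ij}^2\rho_n$ and apply Bernstein's inequality. Terms like $w\t X\mathbf 1_n$ have variance $\lesssim \rho_n\, n\|w\|_2^2\lesssim k^2\rho_n$, giving $O_p(k\sqrt{\rho_n\log n})$. This is the source of the $k\sqrt{\rho_n\log n}$ term in \eqref{inter-a}.

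\textbf{Higher-order terms.} For $l\ge2$ we bound $|w\t T_l(X)\mathbf 1_n|\le \|w\|\,\|T_l(X)\|\,\sqrt n \lesssim (k/\sqrt n)\sqrt n\, c_l (n\rho_n)^{1-l/2}$, using \cref{lem:Good Set} and the operator-norm bound of \cref{lem:series-expansion}. For $l=2$ this gives $O(k)$; the geometric sum over $l\ge 3$ is smaller under \cref{ass:dense}. This yields the $O_p(k)$ term, so \eqref{inter-a} follows.

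\textbf{Diagonal bound \eqref{inter-b}.} Here we need $\max_j|(\hat P-P)_{jj}|$. We use the $\ell_{2,\infty}$ bounds: $|(T_l)_{jj}|$ is bounded through rows of $V$ and $\hat V$ using \cref{lem:Very Good Set}. Each term in $T_l$ is sandwiched by $V$, $S_{l}$ or $P$ factors, and $e_j\t P = e_j\t V\Lambda V\t$ has norm $\lesssim \sqrt{k/n}\, n\rho_n$. The first-order piece $(T_1)_{jj}$ involves $e_j\t VV\t X e_j$-type quantities, and row concentration gives $\|V\t X e_j\|\lesssim \sqrt{k\rho_n\log n}$, hence $O(k\sqrt{\rho_n\log n}/\sqrt n\cdot n^{-1/2})$ after weighting. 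Summing against $w$ with $\|w\|_1=k$ gives $k\cdot\max_j|(\hat P-P)_{jj}|\lesssim \max(k/n, k\sqrt{\rho_n\log n}/n)$. A union bound over $j$ keeps probability $1-O(n^{-c})$.

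\textbf{Main obstacle.} The hard step is the second-order term in \eqref{inter-a}. The crude operator-norm bound only just gives $O(k)$, and in the non-$\beta^\perp$ directions the $\mathbf 1_n$ vector interacts with $P$ (e.g.\ $S_2(X)P\mathbf 1_n$, where $\|P\mathbf 1_n\|\asymp n^{3/2}\rho_n$). To handle this, I would first rewrite $P\mathbf 1_n=V\Lambda V\t\mathbf 1_n$. Then I would check that the $\Lambda$ cancels against the $\beta^{-s}$ factors inside $S_l$, so that the effective scale stays $\|X\|^2/(n\rho_n)\cdot \sqrt{kn}\cdot\|w\|$. If that cancellation fails, I would fall back on computing the expectation of the quadratic term $w\t\beta^\perp X\beta^{-1}X\mathbf 1_n$ explicitly, which is $O(k)$, and then bounding its fluctuation separately.
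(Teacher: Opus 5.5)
Your treatment of the first- and higher-order parts of \eqref{inter-a} is essentially the paper's argument. You use Bernstein for the linear form $w^{\top}T_1(X)\mathbf{1}_n$, whose variance is $\lesssim k^2\rho_n$, and the bound $|w^{\top}T_l(X)\mathbf{1}_n|\le\|w\|\sqrt{n}\,\|T_l(X)\|$ for $l\ge 2$. Your ``main obstacle'' is not one. The operator-norm bound of \cref{lem:series-expansion} already absorbs the factors of $P$ (e.g.\ $\|S_2(X)P\|\le\|S_2(X)\|\,\|P\|\lesssim 1$), so $l=2$ contributes $O(k)$, which is exactly the target.

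\textbf{The gap is \eqref{inter-b}.} Bounding $\sum_j w_j(\hat P-P)_{jj}$ by $\|w\|_1\max_j|(\hat P-P)_{jj}|$ discards all cancellation across $j$, and the diagonal entries are too large for that. The entry $(T_1(X))_{jj}=2\sum_i(VV^{\top})_{ji}X_{ij}-(VV^{\top}XVV^{\top})_{jj}$ has variance of order $\rho_n\|V_{j\cdot}\|^2$, i.e.\ $\asymp k\rho_n/n$ for balanced blockmodels. So the diagonal entries are typically of size $\sqrt{k\rho_n/n}\gg\sqrt{\rho_n\log n}/n$, and no sharper entrywise bound can rescue this route. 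Your own estimate gives $|(T_1)_{jj}|\le\|V_{j\cdot}\|\,\|V^{\top}Xe_j\|\lesssim k\sqrt{\rho_n\log n/n}$. Weighting by $w$ multiplies this by $\|w\|_1=k$, and there is no source for the extra factor $n^{-1/2}$, so you only get $O_p(k^2\sqrt{\rho_n\log n/n})$.

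What is needed, and what the paper's ``identical argument'' amounts to, is to treat $\tr(\Diag(w)T_1(X))$ as a single linear form in $X$. Its coefficients come from $(VV^{\top})_{ij}(w_i+w_j)$ and $(VV^{\top}\Diag(w)VV^{\top})_{ij}$, so its variance is $\lesssim\rho_n\|w\|_\infty^2\|VV^{\top}\|_F^2\lesssim\rho_n k^3/n^2$. Bernstein then gives $O_p(k^{3/2}\sqrt{\rho_n\log n}/n)$. For $l\ge 2$, use trace duality: $|\tr(\Diag(w)T_l(X))|\le\|w\|_\infty\|T_l(X)\|_*\le\|w\|_\infty\rank(T_l(X))\,\|T_l(X)\|$. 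Every monomial in the expansion of $T_l$ contains a factor $\beta^{-s}$ ($s\ge 1$), $P$ or $VV^{\top}$, so $\rank(T_l)\le C^lk$, and the sum over $l\ge2$ is $O(k^2/n)$.

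\textbf{A second gap is the reduction step.} The claim ``the factor $1-P_{ij}-\hat P_{ij}$ is $1+O(\rho_n)$, so it suffices to control\ldots'' is not valid for bounds that rely on cancellation. Dropping the factor in \eqref{inter-a} leaves $\sum_{i,j}w_j(\hat P-P)_{ji}(P_{ji}+\hat P_{ji})$. The $O(\rho_n)$ size alone yields only $2b\rho_n\|w\|_\infty\sum_{i,j}|(\hat P-P)_{ij}|\lesssim k^{3/2}\rho_n\sqrt{n\rho_n}$. This is not $O(k)$ once $\rho_n\gg n^{-1/3}$, a range permitted by \cref{ass:dense}.

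The paper avoids this by splitting $\hat P\circ\hat P-P\circ P=2P\circ(\hat P-P)+(\hat P-P)\circ(\hat P-P)$. The quadratic part is at most $\|w\|_\infty\|\hat P-P\|_F^2\lesssim k^2\rho_n=o(k)$. The term $\sum_{i,j}w_jP_{ji}(\hat P-P)_{ji}$ gets its own expansion-plus-Bernstein treatment: its linear part has variance $\lesssim k^2\rho_n^3$, and operator-norm bounds handle $l\ge 2$. The analogous split is needed in \eqref{inter-b} once you handle it via Bernstein as above.
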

\begin{proof}
    See \cref{sec:interproof1}.
\end{proof}

\begin{lemma}\label{inter3}
Defining the quantities as in \cref{inter1}, the following bounds hold:
\begin{align}
\big| \tr\bigl((V \Lambda^{-2} V\t-\widehat V \widehat \Lambda^{-2} \widehat V\t)\,\Diag(\hat{\Sigma}\bigr)\big|&=O_p \left(\max \left(\frac{k}{n^3 \rho_n^2},\frac{k \sqrt{\log n}}{n^3 \rho_n^{3/2}}\right)\right);\label{inter-d1}\\ 
\big| \tr\bigl((V \Lambda^{-2} V\t-\widehat V \widehat \Lambda^{-2} \widehat V\t)\,\Diag(\hat{\Sigma}\cdot \mathbf{1}_n)\bigr)\big|&=O_p \left(\max \left(\frac{k}{n^2 \rho_n^2},\frac{k \sqrt{\log n}}{n^2 \rho_n^{3/2}}\right)\right).\label{inter-d2}
\end{align}
\end{lemma}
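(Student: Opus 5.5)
Both bounds reduce to controlling a trace of the form $\tr(E\,D)$, where $E := V\Lambda^{-2}V\t - \hat V\hat\Lambda^{-2}\hat V\t$ and $D$ is a diagonal matrix: $D=\Diag(\hat\Sigma)$ for \eqref{inter-d1} and $D=\Diag(\hat\Sigma\mathbf 1_n)$ for \eqref{inter-d2}. On $\mathcal{E}_{{\sf very\ good}}$ (\cref{lem:Very Good Set}), the entries of $\hat P$ are uniformly of order $\rho_n$: $\|\hat V\|_{2,\infty}\lesssim\sqrt{k/n}$ and $\hat\lambda_i\asymp n\rho_n$ give $|\hat P_{ij}|\lesssim k\rho_n$. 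Hence $\|\Diag(\hat\Sigma)\|\lesssim\rho_n$ and $\|\Diag(\hat\Sigma\mathbf 1_n)\|\lesssim n\rho_n$. The second bound is exactly $n$ times the first, so I will prove a single estimate $|\tr(E D)|\lesssim \|D\|\cdot\max\{k/(n^2\rho_n^3),\,k\sqrt{\log n}/(n^2\rho_n^{5/2})\}$. I cannot simply use $|\tr(ED)|\le\|D\|\,\|E\|_*$, since $\|E\|_*$ is only of order $k/(n^{5/2}\rho_n^{5/2})$, which is too crude. I will therefore expand $E$ and exploit the diagonal structure of $D$.

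\textbf{Step 1 (decomposition).} I will write
\[
E=(V\Lambda^{-2}V\t-\hat V\Lambda_\star^{-2}\hat V\t)+\hat V(\Lambda_\star^{-2}-\hat\Lambda^{-2})\hat V\t,
\]
with eigenvalue and eigenvector parts handled separately; it is cleaner to use $\hat V\hat V\t-VV\t$ together with a $k\times k$ alignment $H=V\t\hat V$. The eigenvalue part contributes $\tr(\hat V\t D\hat V(\cdot))$, and
\[
\|\hat V\t D\hat V\|\le\|D\|,\qquad \|\hat\Lambda^{-2}-H\t\Lambda^{-2}H\|\lesssim\frac{\|X\|}{(n\rho_n)^3}+\frac{1}{(n\rho_n)^3}.
\]
The first-order eigenvalue fluctuation is $V\t XV$, whose entries are $O_p(\sqrt{\rho_n\log n})$ by Bernstein with incoherence. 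This yields the $k\sqrt{\log n}\,\rho_n^{1/2}/(n\rho_n)^3$ type contribution, which, multiplied by $\|D\|$, matches the $\sqrt{\log n}$ term. The $\|X\|^2/(n\rho_n)^4\sim 1/(n\rho_n)^3$ second-order term gives the other term.

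\textbf{Step 2 (eigenvector part).} I will use the expansion $\hat V\hat V\t-VV\t=\sum_{l\ge1}S_l(X)$ from \citet{xia_normal_2021}, as in \cref{eq1} and \cref{lem:series-expansion}, sandwiched with $\Lambda^{-2}$ factors. The first-order term is linear in $X$, of the form $\tr(\beta^{-2}X\beta^\perp D)+\tr(\beta^\perp X\beta^{-2}D)$. This is a sum $\sum_{ij}X_{ij}c_{ij}$ with deterministic-given-$D$ weights. Since $D$ depends on the data, I will replace $D$ by its population analogue $\Diag(\Sigma)$ (resp.\ $\Diag(\Sigma\mathbf 1_n)$) and bound the difference using \cref{inter1}-type estimates. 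Then Bernstein gives
\[
|\cdot|\lesssim\sqrt{\log n}\,\Bigl(\rho_n\sum c_{ij}^2\Bigr)^{1/2}\lesssim\|D\|\,\frac{k\sqrt{\rho_n\log n}}{(n\rho_n)^2\,n^{1/2}}\cdot\frac{1}{\sqrt n},
\]
using incoherence of $V$, which is within the target. Terms with $l\ge2$ are bounded by $\|D\|\,\|\cdot\|_*$ with rank $\lesssim 4^lk$ and norm $(\|X\|/n\rho_n)^l/(n\rho_n)^2$, giving $\|D\|\,k/(n\rho_n)^3$.

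\textbf{Main obstacle.} The hardest part is the data dependence of $D$ in the linear term: $\hat\Sigma$ is correlated with $X$. I would handle it by writing $D=D_0+(D-D_0)$ and controlling the cross term $\tr(\beta^{-2}X\beta^\perp(D-D_0))$ crudely via \cref{lem:series-expansion}. Specifically, $\|D-D_0\|\lesssim\|\hat P-P\|_\infty\lesssim\sqrt{\rho_n\log n/n}$ on $\mathcal{E}_{{\sf very\ good}}$, combined with an $\ell_{2,\infty}$ bound on $X\beta^{-2}$. Collecting terms then gives \eqref{inter-d1} and \eqref{inter-d2}.
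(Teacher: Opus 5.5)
Your architecture matches the paper's. You expand $E$, control the part linear in $X$ by Bernstein against a \emph{deterministic} diagonal, bound everything of order $\ge 2$ by rank times operator norm, and treat the data-dependent part of the diagonal separately. The step that does not go through is the one you flag yourself. Put $\tilde D:=D-D_0$ and $M:=\beta^{-3}X\beta^\perp+\beta^\perp X\beta^{-3}$, which is the correct linear eigenvector term (see below). Since $\tr(M\tilde D)=\sum_i M_{ii}\tilde D_{ii}$, any bound that sees $\tilde D$ only through $\|\tilde D\|$ is at best $\|\tilde D\|\sum_i|M_{ii}|$. The nuclear-norm bound $k\|\beta^{-3}\|\,\|X\|\,\|\tilde D\|$ attains this up to powers of $k$, and an $\ell_{2,\infty}$ route does no better. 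Both factors are genuinely large: $\sum_i|M_{ii}|$ is typically of order $\sqrt{kn\rho_n}/(n\rho_n)^3$, and $\|\tilde D\|=\max_i|\hat\Sigma_{ii}-\Sigma_{ii}|$ is of order $\sqrt{k\rho_n\log n/n}$, driven by $\max_i|(VV\t X)_{ii}|$. Their product is of order $k\sqrt{\log n}/(n^3\rho_n^2)$. This exceeds $\max\{k/(n^3\rho_n^2),\,k\sqrt{\log n}/(n^3\rho_n^{3/2})\}$ by a factor $\min\{\sqrt{\log n},\rho_n^{-1/2}\}\to\infty$. The paper handles the same cross term just as crudely, via rank times $\|S_j(X)\|\,\|\hat V\hat\Lambda^{-2}\hat V\t\|\,\|\Diag(T_l(X))\|$. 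It reaches $k/(n^3\rho_n^2)$ only by invoking the log-free entrywise bound $|(T_l(X))_{ij}|\lesssim\rho_n(\|X\|/(n\rho_n))^l$. For $l=1$ the maximum over $i$ genuinely carries the $\sqrt{\log n}$ you kept, so importing that bound would not make your argument rigorous.

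What closes the step is one more order of expansion. Write $\hat\Sigma_{ii}-\Sigma_{ii}=(1-2P_{ii})(T_1(X))_{ii}+r_i$, where $r_i$ collects $-(\hat P_{ii}-P_{ii})^2$ and the diagonals of $T_l(X)$ for $l\ge 2$. These remainder terms are of lower order entrywise (via incoherence and \cref{ass:dense}), so the crude bound suffices for them. The leading cross term $\sum_i(1-2P_{ii})M_{ii}(T_1(X))_{ii}$ is then a quadratic form in $X$. Its mean comes only from matched pairs of entries and is $\lesssim\rho_n\sum_i\|e_i\t\beta^{-3}\|\,\|e_i\t V\|\lesssim k/(n^3\rho_n^2)$ by incoherence. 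Its fluctuations are of smaller order, and the same argument works for \eqref{inter-d2}.

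Three slips also need fixing. First, since $\beta^{-2}S_1(X)=\beta^{-3}X\beta^\perp$, the linear term involves $\beta^{-3}$, not $\beta^{-2}$; with $\beta^{-2}$ it would be $n\rho_n$ times too large. Second, its Bernstein bound is $\|D_0\|\sqrt{k\rho_n\log n}/(n\rho_n)^3$, from $\|\beta^\perp D_0\beta^{-3}\|_F\le\|D_0\|\,\|\beta^{-3}\|_F$. Your display is too optimistic by a factor $\rho_n$ that incoherence cannot supply, although the honest bound still meets the target. Third, your ``single estimate'' needs $n^3$ rather than $n^2$ in its denominators; as written it is a factor $n$ too weak for both bounds.

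Two of your choices are cleaner than the paper's. Routing the eigenvalue fluctuation through the $k\times k$ matrix $V\t XV$ needs no cancellation over $i$ and tolerates a data-dependent $D$. Rerunning the argument with $D_0=\Diag(\Sigma\mathbf 1_n)$ for \eqref{inter-d2} is also sounder than the paper's one-line reduction $|\tr(E\,\Diag(\hat\Sigma\mathbf 1_n))|\lesssim n\,|\tr(E\,\Diag(\hat\Sigma))|$, which is not valid in general because the $E_{ii}$ change sign.
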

\begin{proof}
    See \cref{sec:interproof3}.
\end{proof}

Next, we will study the plug-in estimates we use in the definition of our variance estimator.  
\begin{lemma}\label{inter2}
Define the vectors
\(
\alpha := \diag(\beta^{-2})
\)
and
\(
\hat{\alpha} := \diag(\hat{\beta}^{-2})
\). Then, under \cref{ass:asymptotic1,ass:dense,ass:eigen-scaling,ass:incoherence}, the following bounds hold:
\begin{align}
\|\hat{\alpha}-\alpha\| &=O_p\left( \frac{k}{n^{3}\rho_n^{2.5}}\right);\label{inter-c}\\
\big\|\hat{G}\circ\hat{G}-G\circ G\big\|_F &=O_p\left( \frac{k^{3/2}}{n^{5.5}\rho_n^{4.5}}\right);\label{inter-e}\\
\left\| \hat \beta^{-2} - \beta^{-2} \right\|_F &=O_p\left(\frac{k^{1/2}}{n^{2.5}\rho_n^{2.5}}\right).\label{inter-g}
\end{align}
where \(G\) and \(\hat{G}\) are as defined in \cref{thm:twosample,thm3.0}.
\end{lemma}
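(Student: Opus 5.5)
The plan is to work throughout on the event $\mathcal{E}_{\sf very\ good}$ of \cref{lem:Very Good Set}, which has probability $1-O(n^{-19})$. The first step is a perturbation bound for $\hat\beta^{-1}-\beta^{-1}$ and $\hat\beta^{-2}-\beta^{-2}$. I will write the resolvent-type expansion $\hat\beta^{-l}=\hat V\hat\Lambda^{-l}\hat V\t$ as a contour integral,
\[
\hat\beta^{-l}=\frac{1}{2\pi i}\oint_{\mathcal C} z^{-l}(zI-A)^{-1}\,dz,
\]
with $\mathcal C$ enclosing the $k$ leading eigenvalues of $A$ and $P$ but not $0$. Such a contour exists since $|\lambda_k|\gtrsim n\rho_n\gg\|X\|$ by \cref{ass:eigen-scaling} and \cref{lem:Good Set}. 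Expanding $(zI-A)^{-1}=(zI-P)^{-1}\sum_{m\ge0}(X(zI-P)^{-1})^m$ gives $\hat\beta^{-l}-\beta^{-l}=\sum_{m\ge1}R_{l,m}$. Each $R_{l,m}$ has rank at most $(m+1)k$ and operator norm $\lesssim (n\rho_n)^{-l}(\|X\|/n\rho_n)^m$. The first-order term has the explicit form $R_{l,1}=\sum_{a+b=l}\beta^{-a-1}X\beta^{-b}$-type products plus terms of the shape $\beta^{\perp}X\beta^{-l-1}$. This already gives $\|\hat\beta^{-2}-\beta^{-2}\|_F\lesssim\sqrt k\,(n\rho_n)^{-2}(n\rho_n)^{-1/2}$. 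To reach the stated rate in \eqref{inter-g}, I plan to exploit the fact that the leading term is linear in $X$ and sandwiched by incoherent $V$. Its Frobenius norm is then controlled via $\|V\t XV\|$ and $\|\beta^\perp XV\|_F$, using Bernstein-type bounds that exploit $\|V\|_{2,\infty}\lesssim\sqrt{k/n}$ rather than the crude spectral bound. Whatever extra factor this yields, the higher-order terms $m\ge2$ are summed geometrically.

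For \eqref{inter-c}, I will write $\hat\alpha_j-\alpha_j=e_j\t(\hat\beta^{-2}-\beta^{-2})e_j$ and bound it entrywise. Each term $e_j\t R_{2,m}e_j$ contains at least one factor $V\t e_j$ or $\hat V\t e_j$ on each side. By \cref{ass:incoherence} and \cref{lem:Very Good Set}, each such factor has norm $\lesssim\sqrt{k/n}$. This gives $|\hat\alpha_j-\alpha_j|\lesssim (k/n)(n\rho_n)^{-2}\cdot\epsilon_n$, where $\epsilon_n$ is the size of $\|V\t X V\|$- or $\|e_j\t \beta^\perp X V\|$-type quantities relative to $n\rho_n$. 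The latter are handled by a row-wise $\ell_{2,\infty}$ bound on $XV$, namely $\|XV\|_{2,\infty}\lesssim\sqrt{k\rho_n\log n}$ with high probability via Bernstein. Summing over $n$ coordinates with an $\ell_2$ norm produces the factor $\sqrt n$.

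For \eqref{inter-e}, I use $\hat G\circ\hat G-G\circ G=(\hat G-G)\circ(\hat G+G)$ together with $\|A\circ B\|_F\le\|A\|_{\max}\|B\|_F$. The difference $\hat G-G$ expands into products such as $(\hat\beta_1^{-1}-\beta_1^{-1})\beta_2^{-1}$, bounded as above. The entrywise sup of $\hat G+G$ is $\lesssim (k/n)(n\rho_n)^{-2}$ by incoherence of both $V$ and $\hat V$.

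The main obstacle will be tracking the exact powers in \eqref{inter-c}--\eqref{inter-g}. The crude spectral-norm route loses factors of $\sqrt{n\rho_n}$, so the first-order linear term has to be handled sharply using incoherence on both sides together with concentration of $V\t XV$, which is $O_p(\sqrt{k\rho_n})$ in each entry. If the stated exponents turn out to be unattainable by this route, I would fall back to reporting the rate the expansion actually yields. Any such rate still suffices for the downstream claim $|\hat\sigma^2-\sigma^2|=o_p(k^2/(n^3\rho_n^2))$ under \cref{ass:dense}.
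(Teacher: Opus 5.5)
Your argument for \eqref{inter-e} is the same as the paper's. For \eqref{inter-g}, your resolvent (contour-integral) expansion is a valid and somewhat cleaner alternative: the paper instead uses the exact identity \eqref{dseries}, the series for $\hat V\hat V^{\top}-VV^{\top}$, and a self-bounding step for the terms containing $\hat\beta^{-2}-\beta^{-2}$. Note, though, that your ``crude'' estimate $\sqrt{k}\,(n\rho_n)^{-2}(n\rho_n)^{-1/2}$ already \emph{is} the claimed rate $k^{1/2}/(n^{2.5}\rho_n^{2.5})$. It is also of the correct order: the first-order piece $\beta^{-3}X\beta^{\perp}$ alone has Frobenius norm of order $\sqrt{k}\,(n\rho_n)^{-5/2}$, since $\mathbb{E}\|V^{\top}X\|_F^2\asymp kn\rho_n$. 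So the sharpening via incoherence that you plan is neither needed nor possible.

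The genuine gap is in \eqref{inter-c}. You claim that every term $e_j^{\top}R_{2,m}e_j$ carries a factor $V^{\top}e_j$ or $\hat V^{\top}e_j$ on both sides, and this is false for your expansion.
\begin{itemize}
\item Your own formula for $R_{2,1}$ contains $\beta^{\perp}X\beta^{-3}$, whose diagonal entry $e_j^{\top}\beta^{\perp}XV\Lambda^{-3}V^{\top}e_j$ is incoherent only on the right.
\item $R_{2,2}$ contains $\beta^{\perp}X\beta^{-4}X\beta^{\perp}$, which is incoherent on neither side.
\end{itemize}
Hence the per-entry bound $(k/n)(n\rho_n)^{-2}\epsilon_n$ is unjustified, and in fact wrong. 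It would yield $\|\hat\alpha-\alpha\|\lesssim k^{3/2}\sqrt{\log n}/(n^{3.5}\rho_n^{2.5})$. But consider the leading term $2\,\diag(\beta^{-3}X\beta^{\perp})$, e.g.\ in a balanced SBM, with entries $a_j^{\top}Xb_j$ where $a_j=\beta^{-3}e_j\perp b_j=\beta^{\perp}e_j$. Each entry has variance $\asymp\rho_n\|a_j\|^2$, so a second-moment computation shows the leading term has typical size $\sqrt{k}/(n^{3}\rho_n^{2.5})$, larger than your claimed bound.

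With the correct one-sided factor $\sqrt{k/n}$, your plan is to bound $\|XV\|_{2,\infty}\lesssim\sqrt{k\rho_n\log n}$ and then multiply by $\sqrt{n}$. That gives only $k\sqrt{\log n}/(n^{3}\rho_n^{2.5})$, a factor $\sqrt{\log n}$ short of the statement. The missing step is to aggregate over $j$ before taking a supremum, exactly as the paper does. With $r_j=(V^{\top}e_j)^{\top}\Lambda^{-3}(V^{\top}X\beta^{\perp}e_j)$,
\[
\|r\|^{2}\le\|V\|_{2,\infty}^{2}\,\|\Lambda^{-3}\|^{2}\,\|V^{\top}X\beta^{\perp}\|_F^{2}\lesssim\frac{k}{n}\,(n\rho_n)^{-6}\,k\,\|X\|^{2},
\]
which gives precisely $k/(n^{3}\rho_n^{2.5})$.

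For terms with no incoherent end, such as $M=\beta^{\perp}X\beta^{-4}X\beta^{\perp}$, you do need your row-wise bound on $\beta^{\perp}XV$. The crude estimate $\sqrt{k}\,(n\rho_n)^{-3}$ exceeds the target by $(k\rho_n)^{-1/2}\to\infty$. With the row-wise bound, $\|\diag(M)\|\le(\max_j M_{jj}\,\tr M)^{1/2}$ for this positive semidefinite $M$ gives $k\sqrt{\log n}/(n^{3.5}\rho_n^{3})$, which is negligible. The same term appears in the paper as $S_1(X)V\Lambda^{-2}V^{\top}S_1(X)$, and there too a row-wise estimate is what is really required.

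Finally, falling back to ``the rate the expansion actually yields'' would not prove the lemma as stated.
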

\begin{proof}
    See \cref{sec:interproof2}.
\end{proof}

\begin{lemma}\label{inter4}
Defining the quantities as in \cref{lem:series-expansion}, the following bounds hold:
\begin{align}
\left\| \hat{P}^2 - P^2 \right\|_F &=O_p\left( k^{1/2} n^{3/2} \rho_n^{3/2}\right);\label{inter-f1}\\
\left\| \hat{P} ({\hat P} \circ {\hat P}) - P ( P \circ P) \right\|_F &=O_p\left( k^{1/2} n^{3/2} \rho_n^{5/2}\right);\label{inter-f2}\\
\left\| ({\hat P} \circ \hat{P})^2 - ( P \circ P)^2 \right\|_F &=O_p\left( k^{1/2} n^{1.5} \rho_n^{3.5}\right).\label{inter-f3}
\end{align}
\end{lemma}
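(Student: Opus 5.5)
The plan is to write $\hat P = P + \Delta$ with $\Delta := \hat P - P = \sum_{l\ge1}T_l(X)$ as in \cref{lem:series-expansion}. Each difference will then be expanded algebraically into terms that are linear, quadratic, and so on in $\Delta$, and each term bounded by submultiplicativity together with the elementary inequalities $\|AB\|_F\le\|A\|\,\|B\|_F$ and $\|A\circ B\|_F\le\|A\|_\infty\|B\|_F$.

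\textbf{Basic bounds.} Work on $\mathcal{E}_{{\sf very\ good}}$ from \cref{lem:Very Good Set}. Since $\hat P$ is the best rank-$k$ approximation of $A$, on $\mathcal{E}_{\sf good}$ we have
\[
\|\Delta\|\le \|\hat P - A\|+\|X\|\le 2\|X\|\lesssim\sqrt{n\rho_n},
\]
and $\rank(\Delta)\le 2k$, so $\|\Delta\|_F\lesssim \sqrt{k n\rho_n}$. We also need
\[
\|\Delta\|_\infty\le\|\hat V\hat\Lambda\hat V\t - V\Lambda V\t\|_\infty\lesssim \rho_n\ \text{(crudely)}.
\]
This follows from \cref{ass:asymptotic1}, from incoherence of $V$ and $\hat V$ (\cref{ass:incoherence}, \cref{lem:Very Good Set}), and from $|\hat\lambda_i|\lesssim n\rho_n$, which comes from Weyl's inequality and \cref{ass:eigen-scaling}. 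The same ingredients give $\|P\|\asymp n\rho_n$, $\|\hat P\|\lesssim n\rho_n$, $\|P\|_\infty\lesssim\rho_n$, and $\|\hat P\|_\infty\lesssim\rho_n$.

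\textbf{(\ref{inter-f1}).} Write
\[
\hat P^2-P^2=P\Delta+\Delta P+\Delta^2.
\]
Then $\|P\Delta\|_F\le\|P\|\,\|\Delta\|_F\lesssim n\rho_n\sqrt{kn\rho_n}=k^{1/2}n^{3/2}\rho_n^{3/2}$, and $\|\Delta^2\|_F\le\|\Delta\|\,\|\Delta\|_F$ is of smaller order. This gives (\ref{inter-f1}).

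\textbf{(\ref{inter-f2}).} Expand
\[
\hat P(\hat P\circ\hat P)-P(P\circ P)=\Delta(\hat P\circ\hat P)+P\big((\hat P-P)\circ(\hat P+P)\big).
\]
For the second summand we use
\[
\|P\|\,\|\Delta\circ(\hat P+P)\|_F\le \|P\|\,\|\hat P+P\|_\infty\|\Delta\|_F\lesssim n\rho_n\cdot\rho_n\cdot\sqrt{kn\rho_n},
\]
which is exactly $k^{1/2}n^{3/2}\rho_n^{5/2}$. For the first summand we need $\|\hat P\circ\hat P\|\lesssim n\rho_n^2$. This holds because $\hat P\circ\hat P=\sum_{a,b}\hat\lambda_a\hat\lambda_b(\hat v_a\circ\hat v_b)(\hat v_a\circ\hat v_b)\t$ has rank at most $k^2$, and incoherence gives $\|\hat v_a\circ\hat v_b\|\lesssim \sqrt{k}/\sqrt n$. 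Each term then has norm at most $(n\rho_n)^2\cdot k/n$, and for fixed $k$ this gives $\lesssim n\rho_n^2$. Alternatively, the crude bound $\|\hat P\circ\hat P\|\le n\|\hat P\circ\hat P\|_\infty\lesssim n\rho_n^2$ suffices. Then $\|\Delta(\hat P\circ\hat P)\|_F\le\|\Delta\|_F\,\|\hat P\circ\hat P\|\lesssim k^{1/2}n^{3/2}\rho_n^{5/2}$.

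\textbf{(\ref{inter-f3}).} Similarly,
\[
(\hat P\circ\hat P)^2-(P\circ P)^2=D(\hat P\circ\hat P)+(P\circ P)D,\qquad D:=\Delta\circ(\hat P+P),
\]
with $\|D\|_F\lesssim\rho_n\sqrt{kn\rho_n}$ and both $\|\hat P\circ\hat P\|$ and $\|P\circ P\|$ bounded by $\lesssim n\rho_n^2$. The product gives $k^{1/2}n^{3/2}\rho_n^{7/2}$, as claimed.

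\textbf{Main obstacle.} The delicate point is the entrywise control $\|\Delta\|_\infty\lesssim\rho_n$ and the operator bound $\|\hat P\circ\hat P\|\lesssim n\rho_n^2$ for the empirical matrix. These rely on the $\ell_{2,\infty}$ bound of \cref{lem:Very Good Set} and on eigenvalue concentration. I expect that only these crude bounds are needed, not the sharper ones, since the claimed rates match the first-order terms exactly. The results then hold on an event of probability $1-O(n^{-19})$, which gives the $O_p$ statements.
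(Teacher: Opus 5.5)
Your proof is correct and follows essentially the same route as the paper. It uses the same splitting $\hat P(\hat P\circ\hat P)-P(P\circ P)=(\hat P-P)(\hat P\circ\hat P)+P\bigl((\hat P-P)\circ(\hat P+P)\bigr)$ and the same difference-of-squares factorizations for the other two bounds. It also rests on the same ingredients: $\|\hat P-P\|\lesssim\|X\|$, the entrywise bounds $\|\hat P\|_\infty,\|P\|_\infty\lesssim\rho_n$ from incoherence of $V$ and $\hat V$, and $\|\hat P\circ\hat P\|\lesssim n\rho_n^2$. The one small difference is that you get $\|\hat P-P\|\le 2\|X\|$ directly from Eckart--Young and Weyl and pass to the Frobenius norm via $\rank(\hat P-P)\le 2k$, whereas the paper reads the operator-norm bound off the series expansion in \cref{lem:series-expansion}.
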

\begin{proof}
    See \cref{sec:interproof4}.
\end{proof}

\subsection{Consistency of Final Estimators}
We first establish the consistency of the one–sample estimators, which will subsequently be used to derive the consistency of the two–sample estimators. Throughout the proof, we work on the event $\mathcal{E}_{{\sf very \ good}}$ which has been defined in \cref{lem:Very Good Set}.
\subsubsection{Proof of  \cref{thm:1sampleconsis}}\label{prf:FinEst1}
\begin{proof}
We now establish \cref{thm:1sampleconsis}, which concerns the consistency of the mean and variance estimators in the one–sample setting under \cref{ass:dense}.
\\ \ \\
\textbf{Analyzing the estimated mean.} From \cref{param:11}, the population mean admits the expansion
\[
\mu_1
=\underbrace{2\tr\bigl[\beta^{-2}\bigl(\Sigma \circ\beta^\perp\bigr)\bigr]}_{\mu_1^{(1)}}
+\underbrace{2\tr\bigl[\beta^{-2}\bigl(\Diag(\Sigma\cdot d)\bigr)\bigr]}_{\mu_1^{(2)}}
+ O_p\left(\frac{k}{n^2\rho_n^2}\right),
\]
where the remainder term vanishes in \cref{ass:dense}. Note that the original mean expression in \cref{param:11} contained an additional term \(2\,\mathrm{tr}\Bigl(\beta^{-2}\left[\Diag(\diag(\Sigma)\circ d)\right]\Bigr)\), which is absent in the above expression. This is because
\begin{equation*}
\begin{aligned}
\mathrm{tr}\Bigl(\beta^{-2}\left[\Diag(\diag(\Sigma)\circ d)\right]\Bigr)
&\lesssim \mathrm{tr}\Bigl(\beta^{-2}\left[\Diag(\diag(\Sigma)\right]\Bigr) + \mathrm{tr}\Bigl(\beta^{-2}\left[\Diag\left(\diag(\Sigma)\circ \diag(VV\t)\right)\right]\Bigr)\\
& \lesssim \rho_n\,\mathrm{tr}(\beta^{-2}) \\
& \lesssim \frac{k}{n^2\rho_n},
\end{aligned}
\end{equation*}
where the second line follows from \cref{ass:incoherence,ass:asymptotic1}. Hence, this term is absorbed within the remainder term \(O_p\left(\tfrac{k}{n^2\rho_n^2}\right)\).
 The plug–in estimator of $\mu_1$, defined in \cref{eq:estimatorsone}, is given by
\[
\hat{\mu}_1
=\underbrace{2\tr\Bigl[\hat{\beta}^{-2}\bigl(\hat{\Sigma}\circ\hat{\beta}^\perp\bigr)\Bigr]}_{\hat{\mu}_1^{(1)}}
+\underbrace{2\tr\Bigl[\hat{\beta}^{-2}\Diag\bigl(\hat{\Sigma}\cdot\hat{d}\bigr)\Bigr]}_{\hat{\mu}_1^{(2)}},
\]
where all notation follows that of \cref{thm:1sampleconsis}. To control the estimation error, we first consider the leading component $\mu_1^{(1)}$ and decompose the difference as
\begin{align*}
\mu_1^{(1)} - \hat{\mu}_1^{(1)}
&= \tr\left( V \Lambda^{-2} V\t \cdot \Sigma\circ\beta^\perp
      - V \Lambda^{-2} V\t\Diag(\Sigma) \right) \\
&\quad + \tr\left( V \Lambda^{-2} V\t\Diag(\Sigma)
      - V \Lambda^{-2} V\t\Diag(\hat{\Sigma}) \right) \\
&\quad + \tr\left( V \Lambda^{-2} V\t\Diag(\hat{\Sigma})
      - \hat{V} \hat{\Lambda}^{-2} \hat{V}\t\Diag(\hat{\Sigma}) \right) \\
&\quad + \tr\left( \hat{V} \hat{\Lambda}^{-2} \hat{V}\t\Diag(\hat{\Sigma})
      - \hat{V} \hat{\Lambda}^{-2} \hat{V}\t \cdot \hat{\Sigma}\circ\hat{\beta}^\perp \right).
\end{align*}
An analogous decomposition applies to $\mu_1^{(2)}$:
\begin{align*}
\mu_1^{(2)} - \hat{\mu}_1^{(2)} 
&= \tr\left( V \Lambda^{-2} V\t\Diag(\Sigma \cdot d) - V \Lambda^{-2} V\t\Diag(\Sigma \cdot \mathbf{1}_n) \right) \\
&\quad + \tr\left( V \Lambda^{-2} V\t\Diag(\Sigma \cdot \mathbf{1}_n) - V \Lambda^{-2} V\t \Diag(\hat{\Sigma} \cdot \mathbf{1}_n) \right) \\
&\quad + \tr\left(  V \Lambda^{-2} V\t \Diag(\hat{\Sigma} \cdot \mathbf{1}_n) -  \hat{V} \hat{\Lambda}^{-2} \hat{V}\t \Diag(\hat{\Sigma} \cdot \mathbf{1}_n) \right)\\
&\quad + \tr\left(  \hat{V} \hat{\Lambda}^{-2} \hat{V}\t \Diag(\hat{\Sigma} \cdot \mathbf{1}_n) -  \hat{V} \hat{\Lambda}^{-2} \hat{V}\t \Diag(\hat{\Sigma} \cdot \hat{d}) \right),
\end{align*}
where \(d=\diag\left(I-V V\t\right)\) and \(\hat{d}=\diag\left(I-\hat{V} \hat{V}\t\right)\).
\begin{itemize}
    \item \textbf{Bounding $|\mu_1^{(1)} - \hat \mu_1^{(1)} |$}. We first establish that 
\begin{align}\label{mu1bound}
|\mu_1^{(1)} - \hat{\mu}_1^{(1)} |=o_p\left(\max \left(\frac{k}{n^2 \rho_n^2}, \frac{k \sqrt{\rho_n \log n}}{n^2 \rho_n^2}\right)\right).
\end{align}
Consider
\begin{equation}
\begin{aligned} \label{eq:ab1}
    \big|\tr\left( V \Lambda^{-2} V\t  \Sigma \circ \beta^\perp
      - V \Lambda^{-2} V\t \Diag(\Sigma) \right)\big|
      &= |\tr\left( V \Lambda^{-2} V\t  \Sigma \circ (V V\t) \right)| \\
      &\lesssim \frac{1}{n^2 \rho_n^2}\, |\tr\left( V V\t \cdot \Sigma \circ (V V\t) \right)|\\
      &\lesssim \frac{k}{n^2 \rho_n^2}\, \|V V\t\|_F\, \|\Sigma \circ (V V\t)\|_F\\
      &\lesssim \frac{k}{n^3 \rho_n^2}\, \|V V\t\|_F\, \|\Sigma\|_F\\
      &\lesssim \frac{k^{3/2}}{n^2 \rho_n} \\
      &=o_p\left(\max \left(\frac{k}{n^2 \rho_n^2}, \frac{k \sqrt{\rho_n \log n}}{n^2 \rho_n^2}\right)\right),
\end{aligned}
\end{equation}
where the fourth line follows from \cref{ass:incoherence} and the final line follows from \cref{ass:asymptotic1}.

Next, we bound the term involving the deviation between $\Sigma$ and $\hat{\Sigma}$:
\begin{align*}
\big|\tr\left( V \Lambda^{-2} V\t \Diag(\Sigma - \hat{\Sigma}) \right)\big|
&\lesssim \frac{1}{(n\rho_n)^2}\, \big|\tr\left( V V\t \Diag(\Sigma - \hat{\Sigma}) \right)\big|\\
&=O_p\left(\max \left(\frac{k}{n^3 \rho_n^2}, \frac{k \sqrt{\rho_n \log n}}{n^3 \rho_n^2}\right)\right)\\
&=o_p\left(\max \left(\frac{k}{n^2 \rho_n^2}, \frac{k \sqrt{\rho_n \log n}}{n^2 \rho_n^2}\right)\right),
\end{align*}
where the second line uses \cref{inter-b} of \cref{inter1}.

We then control the eigenvalue and eigenvector perturbation component directly using \cref{inter-d1} of \cref{inter3} via
\begin{align*}
\big|\tr\left( V \Lambda^{-2} V\t \Diag(\hat{\Sigma}) - \hat V \hat \Lambda^{-2} \hat V\t \Diag(\hat{\Sigma}) \right)\big|
&=O_p \left(\max \left(\frac{k}{n^3 \rho_n^2},\frac{k \log n}{n^3 \rho_n^{3/2}}\right)\right)=o_p\left(\max \left(\frac{k}{n^2 \rho_n^2}, \frac{k \sqrt{\rho_n \log n}}{n^2 \rho_n^2}\right)\right).
\end{align*}
Finally, for the term
\[
\tr\left( V \Lambda^{-2} V\t\Diag(\hat{\Sigma}) - V \Lambda^{-2} V\t \cdot \hat{\Sigma}\circ\beta^\perp \right)=o_p\left(\max \left(\frac{k}{n^2 \rho_n^2}, \frac{k \sqrt{\rho_n \log n}}{n^2 \rho_n^2}\right)\right),
\]
the same reasoning as in \cref{eq:ab1} follows, with the incoherence of $\hat{V}$ (guaranteed by \cref{lem:Very Good Set}) replacing that of $V$.  

Combining all four bounds above yields the desired rate in \cref{mu1bound}.
\item \textbf{Bounding $| \mu_1^{(2)} - \hat \mu_1^{(2)} |$.}
We next establish that 
\begin{align}\label{mu12bound}
|\mu_1^{(2)} - \hat{\mu}_1^{(2)}| =O_p\left(\max \left(\frac{k}{n^2 \rho_n^2}, \frac{k \sqrt{\rho_n\log n} }{n^2 \rho_n^2}\right)\right).
\end{align}
Consider first
\begin{equation}
\begin{aligned}\label{eq:ab2}
    \big | \tr\left( V \Lambda^{-2} V\t\Diag(\Sigma \cdot d) - V \Lambda^{-2} V\t\Diag(\Sigma \cdot \mathbf{1}_n) \right) \big | 
    &= \tr\left( V \Lambda^{-2} V\t\Diag(\Sigma \cdot \diag(V V\t)) \right) \\
    &\lesssim \frac{1}{n^2 \rho_n^2} \tr\left( V V\t\Diag(\Sigma \cdot \diag(V V\t)) \right) \\
    &\lesssim \frac{1}{n^2 \rho_n^2} \left( \frac{k}{n} \right) \tr\Big( \Diag(\Sigma \cdot \diag(V V\t)) \Big) \\
    &\lesssim \frac{k}{n^3 \rho_n^2}\, \big( \mathbf{1}_n\t \cdot \Sigma \cdot \diag(V V\t) \big) \\
    &\lesssim \frac{k}{n^3 \rho_n^2} \left( \frac{k}{n} \right) \big( \mathbf{1}_n\t \cdot \Sigma \cdot \mathbf{1}_n \big) \\
    &\lesssim \frac{k^2}{n^4 \rho_n^2} (n^2 \rho_n) \\
    &= \frac{k^2}{n^2 \rho_n}\\
    &= o_p\left(\max \left(\frac{k}{n^2 \rho_n^2}, \frac{k \sqrt{\rho_n \log n}}{n^2 \rho_n^2}\right)\right),
\end{aligned}
\end{equation}
where we have used \cref{ass:incoherence} and the final line follows from \cref{ass:asymptotic1}.

Next, for the perturbation in $\Sigma$:
\begin{align*}
\big| \tr\left( V \Lambda^{-2} V\t (\Diag(\Sigma \cdot \mathbf{1}_n) - \Diag(\hat{\Sigma} \cdot \mathbf{1}_n)) \right) \big|
&\lesssim \frac{1}{(n\rho_n)^2} \big| \tr\left( V V\t (\Diag(\Sigma \cdot \mathbf{1}_n) - \Diag(\hat{\Sigma} \cdot \mathbf{1}_n)) \right) \big| \\
&=O_p\left(\max \left(\frac{k}{n^2 \rho_n^2}, \frac{k \sqrt{\rho_n\log n} }{n^2 \rho_n^2}\right)\right),
\end{align*}
where the last line follows from \cref{inter-a} of \cref{inter1}.

We then control the eigenvalue and eigenvector perturbation component directly using \cref{inter-d2} of \cref{inter3}:
\begin{align*}
    &\Big|\tr\big( V\Lambda^{-2}V\t\Diag(\hat\Sigma \cdot \mathbf{1}_n)-\hat V\hat\Lambda^{-2}\hat V\t\Diag(\hat\Sigma \cdot \mathbf{1}_n)\big)\Big|  =O_p\left(\max \left(\frac{k}{n^2 \rho_n^2}, \frac{k \sqrt{\rho_n\log n} }{n^2 \rho_n^2}\right)\right).
\end{align*}
Finally, for the term 
\(
\tr\left(  \hat{V} \hat{\Lambda}^{-2} \hat{V}\t \Diag(\hat{\Sigma} \cdot \mathbf{1}_n) -  \hat{V} \hat{\Lambda}^{-2} \hat{V}\t \Diag(\hat{\Sigma} \cdot \hat{d}) \right),
\)
we have the same argument as in \cref{eq:ab2}, using incoherence of \(\hat V\) which is guaranteed on the event $\mathcal{E}_{{\sf very \ good}}$ as per \cref{lem:Very Good Set}. Combining all four bounds yields the desired rate in \cref{mu12bound}.
\end{itemize}
Thus, from \cref{mu1bound} and \cref{mu12bound} we have
\begin{align}\label{mubd}
    \big|\hat{\mu}_1 - \mu_1 \big| =O_p\left(\max \left(\frac{k}{n^2 \rho_n^2}, \frac{k \sqrt{\rho_n\log n} }{n^2 \rho_n^2}\right)\right),
\end{align}
under \cref{ass:asymptotic1,ass:dense,ass:eigen-scaling,ass:incoherence}. 
\\ \ \\
\noindent
\textbf{Analyzing the Estimated Variance.}
From \cref{param:12}, the population variance admits the expansion
\begin{align*}
\sigma_1^2
    = 8\,\underbrace{\Bigl\langle (J_n - I_n)\circ(\beta^{-2}\circ\beta^{-2}),\Sigma^2\Bigr\rangle}_{(\sigma^{11})^{2}}
      + 4\,\underbrace{\alpha\t K\,\alpha}_{(\sigma^{12})^{2}}
      + o\left(\frac{k^2}{n^3 \rho_n^2}\right).
\end{align*}
The corresponding plug–in estimator, as defined in \cref{eq:estimatorsone}, takes the form
\[
\hat{\sigma}_1^2 
= 8\,\underbrace{\Bigl\langle (J_n - I_n)\circ(\hat{\beta}^{-2}\circ\hat{\beta}^{-2}),\hat{\Sigma}^2\Bigr\rangle}_{(\hat \sigma^{11})^{2}}
      + 4\,\underbrace{\hat{\alpha}\t \hat{K}\,\hat{\alpha}}_{(\hat \sigma^{12})^{2}},
\]
where all notation follows that of \cref{thm:1sampleconsis}. To analyze the estimation error, we first focus on the first component $\left(\sigma^{11}\right)^{2}$ and decompose the difference as
\begin{align*}
    (\hat \sigma^{11})^{2} - ( \sigma^{11})^{2} 
    &= \underbrace{\left\langle (J_n - I_n)\circ(\hat{\beta}^{-2}\circ\hat{\beta}^{-2}),\,\hat{\Sigma}^2 - \Sigma^2 \right\rangle}_{I_1} 
     + \underbrace{\left\langle (J_n - I_n)\circ(\hat{\beta}^{-2}\circ\hat{\beta}^{-2} - \beta^{-2}\circ\beta^{-2}),\,\Sigma^2 \right\rangle}_{I_2}.
\end{align*}
An analogous decomposition applies to $( \sigma^{12})^{2}$:
\begin{align}
(\hat \sigma^{12})^{2} - ( \sigma^{12})^{2} 
&= \alpha\t \hat{K} \alpha - \hat{\alpha}\t K \hat{\alpha} \nonumber \\
&= \alpha\t (\hat{K} - K)\alpha
   + 2\,\alpha\t \hat{K} (\hat{\alpha} - \alpha)
   + (\hat{\alpha} - \alpha)\t \hat{K} (\hat{\alpha} - \alpha). \label{sigma12decomposition}
\end{align}
\begin{itemize}
    \item \textbf{Bounding $|(\hat \sigma^{11})^{2} - ( \sigma^{11})^{2}|$}. 
We will establish that 
\begin{equation}
\begin{aligned}\label{sigma11bound}
\big|(\hat \sigma^{11})^{2} - ( \sigma^{11})^{2}\big| =o_p \left( \frac{k^2}{n^3 \rho_n^2}\right).
\end{aligned}
\end{equation}
Applying the Cauchy–Schwarz inequality and noting that $\|(J_n-I_n)\circ M\|_F \le \|M\|_F$, we obtain
\begin{equation}
\begin{aligned} \label{eq:i1}
\big| I_1 \big| 
&= \left\langle (J_n - I_n)\circ(\hat{\beta}^{-2}\circ\hat{\beta}^{-2}),\,\hat{\Sigma}^2 - \Sigma^2 \right\rangle \\
&\le \left\| (J_n - I_n)\circ(\hat{\beta}^{-2}\circ\hat{\beta}^{-2}) \right\|_F 
      \left\| \hat{\Sigma}^2 - \Sigma^2 \right\|_F \\
&\lesssim \left\| \hat{\beta}^{-2}\circ\hat{\beta}^{-2} \right\|_F 
        \left\| \hat{\Sigma}^2 - \Sigma^2 \right\|_F \\
&\lesssim \left\| \hat{\beta}^{-2}\circ\hat{\beta}^{-2} \right\|_F \left(
        \left\| \hat{P}^2 - P^2 \right\|_F + 2 \left\| \hat{P} ({\hat P} \circ {\hat P}) - P ( P \circ P) \right\|_F + \left\| ({\hat P} \circ \hat{P})^2 - ( P \circ P)^2 \right\|_F\right)\\
&\lesssim \frac{1}{n^4 \rho_n^4}
         \left\| ((\hat {V}^{(i)}) (\hat {V}^{(i)})\t) \circ ((\hat {V}^{(i)}) (\hat {V}^{(i)})\t) \right\|_F 
         \left(
        \left\| \hat{P}^2 - P^2 \right\|_F\right) \\
&=O_p \left( \frac{(k/n)}{n^4 \rho_n^4}\, k^{1/2} (n\rho_n)^{1.5} \right)\\
&= o_p \left( \frac{k^{3/2}}{n^3 \rho_n^2} \right).
\end{aligned}
\end{equation}
Here, the third to last line and the second to last line follow from \cref{inter-f1,inter-f2,inter-f3} in \cref{inter4}. The bound 
$\|((\hat {V}^{(i)}) (\hat {V}^{(i)})\t) \circ ((\hat {V}^{(i)}) (\hat {V}^{(i)})\t)\|_F \asymp k/n$ in the second to last line 
is ensured by \cref{ass:incoherence}.

Note the following bounds:
\begin{align}\label{note01}
\|\hat{\beta}^{-2}+\beta^{-2}\|_{\infty} \le \|\hat{\beta}^{-2}\|_{\infty} + \|\beta^{-2}\|_{\infty}=O_p \left(\frac{1}{n^3 \rho_n^2} \right),
\end{align}
\begin{equation}
\begin{aligned}\label{note02}
\|\Sigma^2\|_F
&\le
\|\Sigma^2-P^2\|_F+\|P^2\|_F
\\
&\le \|(\Sigma-P)\Sigma+P(\Sigma-P) \|_F+\|P^2\|_F\\
&\lesssim
n\rho_n\|\Sigma-P\|_F+k n^2 \rho_n^2\\
&\lesssim
n\rho_n\|P\circ P\|_F+k n^2 \rho_n^2\\
&\lesssim
n\rho_n^2\|P\|_F+k n^2 \rho_n^2\\
&\lesssim k n^2 \rho_n^2.
\end{aligned}
\end{equation}
Now, consider the term \(I_2\). We have
\begin{equation}
\begin{aligned}\label{eq:i2}
\big| I_2 \big|
&=
\Big|
\big\langle
(J_n-I_n)\circ(\hat{\beta}^{-2}\circ\hat{\beta}^{-2}-\beta^{-2}\circ\beta^{-2}),
\,\Sigma^2
\big\rangle
\Big| \\[4pt]
&\lesssim
\Big|
\big\langle
(\hat{\beta}^{-2}-\beta^{-2})\circ(\hat{\beta}^{-2}+\beta^{-2}),
\,\Sigma^2
\big\rangle
\Big| \\[4pt]
&\lesssim
\|
(\hat{\beta}^{-2}-\beta^{-2})\circ(\hat{\beta}^{-2}+\beta^{-2})
\|_F\,
\|\Sigma^2\|_F \\[4pt]
&\lesssim
\|\Sigma^2\|_F\,
\|\hat{\beta}^{-2}+\beta^{-2}\|_{\infty}\,
\|\hat{\beta}^{-2}-\beta^{-2}\|_F \\[4pt]
&\lesssim
k n^2 \rho_n^2 \cdot
\|\hat{\beta}^{-2}+\beta^{-2}\|_{\infty}\,
\|\hat{\beta}^{-2}-\beta^{-2}\|_F \\[4pt]
&=O_p \left(
\frac{k^{3/2}}{n^{7/2}\rho_n^{5/2}} \right)\\
&=
o_p \left(\frac{k^{3/2}}{n^3\rho_n^2}\right).
\end{aligned}
\end{equation}
The third to last line and the second to last line follow from \cref{inter-g} in \cref{inter2} and \cref{note01,note02}.
Combining \cref{eq:i1,eq:i2} yields \cref{sigma11bound}.

\item \textbf{Bounding $|(\hat \sigma^{12})^{2} - ( \sigma^{12})^{2}|$}. 
We will now establish that 
\begin{align}
|\hat{\sigma}_{12}^2 - \sigma_{12}^2 |=o_p \left( \frac{k^2}{n^3 \rho_n^2}\right). \label{sigma12bound}
\end{align}
Recalling that $\alpha = \diag(\beta^{-2})$, we have that 
\(
\|\alpha\|_{\infty} \asymp \frac{k}{n^3 \rho_n^{2}}, \) and 
\(\|\alpha\|_{2} \asymp \frac{k}{n^{2.5} \rho_n^{2}}.
\) For the first term on the right hand side of \cref{sigma12decomposition}, we have
\begin{equation}
\begin{aligned}\label{kbd01}
|{\alpha}\t (\hat{K} - K) {\alpha}| 
&\lesssim \|\alpha\|^2 \cdot \|\hat{K} - K\| \\
&\lesssim \left( \frac{k^2}{n^5 \rho_n^4} \right) \cdot \|P - {\hat P}\| \\
&=O_p \left( \frac{k^2}{n^{4.5} \rho_n^{3.5}} \right) = o_p \left( \frac{k^2}{n^3 \rho_n^2} \right),
\end{aligned}
\end{equation}
which holds on the event $\mathcal{E}_{{\sf good}}$ as explained in \cref{lem:Good Set}. To establish the bound $\|\widehat K-K\| \lesssim \|P-{\widehat P}\|$, we first observe that
$$
    \|\widehat K-K\| \lesssim \|\widehat\Sigma-\Sigma\| + \|\widehat\Sigma\circ\widehat P-\Sigma\circ P\|.
$$
By the triangle inequality, the second term on the right-hand side satisfies
\begin{align}\label{note03}
    \big\|\widehat\Sigma\circ\widehat P-\Sigma\circ P \big\| \le \big\|\widehat\Sigma \circ (\widehat P- P)\big\| + \big\| P \circ (\widehat \Sigma- \Sigma)\big\|.
\end{align}
For the second term on the right hand side of \cref{note03}, \cref{ass:incoherence,ass:eigen-scaling,ass:asymptotic1} imply
\begin{equation}\label{schur01}
\begin{aligned} 
    \big\| P \circ (\widehat \Sigma- \Sigma)\big\| &= \bigg\|\sum_{l=1}^{k} \lambda_l \Diag(V_{l.}) (\widehat \Sigma- \Sigma) \Diag(V_{l.})\bigg\| \\
    &\le \left(\sum_{l=1}^{k} \lambda_l \|V_{l.}\|_{\infty}^2 \right) \big\|\widehat \Sigma- \Sigma \big\| \\
    &\le k^2 \rho_n \big\|\widehat \Sigma- \Sigma \big\| \ll \big\|\widehat \Sigma- \Sigma \big\|.
\end{aligned}
\end{equation}
Similarly, the first term on the right hand side of \cref{note03} can be bounded as
\begin{equation}\label{schur02}
\begin{aligned} 
    \big\|\widehat\Sigma \circ (\widehat P- P)\big\| &\le \big\|\widehat P \circ (\widehat P- P)\big\| + \big\|(\widehat P \circ \widehat P) \circ (\widehat P- P)\big\| \\
    &\ll \big\|\widehat P- P \big\|
\end{aligned}
\end{equation}
with probability $1-O(n^{-19})$, where the final step follows from \cref{lem:Very Good Set}, \cref{ass:eigen-scaling,ass:asymptotic1}, and the fact that $\rank(\widehat P \circ \widehat P) \le k^2$. Under the same assumptions, an analogous argument yields $\|\widehat\Sigma-\Sigma\| \lesssim \|P-{\widehat P}\|$. Combining this with \cref{schur01} and \cref{schur02} establishes the desired bound $\|\widehat K-K\| \lesssim \|P-{\widehat P}\|$.

For the cross term in \cref{sigma12decomposition}, it follows that
\begin{equation}
\begin{aligned}\label{kbd02}
|{\alpha}\t \hat{K} (\hat{\alpha} - {\alpha})| 
&\leq \|\alpha\| \, \|\hat{K}\| \, \|\hat{\alpha} - {\alpha}\| \\
&=O_p \left( \frac{k^{2}}{n^{4.5} \rho_n^{3.5}} \right) = o_p \left( \frac{k^{2}}{n^3 \rho_n^2}\right),
\end{aligned}
\end{equation}
where the second line uses \cref{inter-c} of \cref{inter2}.

For the quadratic term in \cref{sigma12decomposition}, again by \cref{inter-c},
\begin{equation}
\begin{aligned}\label{kbd03}
|(\hat{\alpha} - {\alpha})\t \hat{K} (\hat{\alpha} - {\alpha})| 
&\leq \|\hat{K}\| \cdot \|\hat{\alpha} - {\alpha}\|^2 \\
&=O_p \left( n \rho_n \cdot \left( \frac{k}{n^{3} \rho_n^{2.5}} \right)^2 \right)  =o_p \left( \frac{k^2}{n^3 \rho_n^2}\right).
\end{aligned}
\end{equation}
Combining \cref{kbd01,kbd02,kbd03,sigma12decomposition} yields
\begin{align}\label{sigma12bound}
    \big|(\hat \sigma^{12})^{2} - ( \sigma^{12})^{2} \big|
= o_p\left( \frac{k^{2}}{n^3 {\rho_n}^2} \right).
\end{align}
\end{itemize}
\noindent
Consequently, \cref{sigma12bound} and \cref{sigma11bound} imply that 
\begin{align}\label{sigbd}
    |\hat{\sigma}_1^2 - \sigma_1^2| = o_p \left( \frac{k^2}{n^3 \rho_n^2} \right).
\end{align}

\noindent
\Cref{lem:second-order1} implies that \(\hat{\sigma}_1^2 \asymp \frac{k^{2}}{n^{3}\rho_n^{2}}\). Combining this with \cref{sigbd}, we obtain
\(
    \frac{\hat{\sigma}_1}{\sigma_1} \xrightarrow{p} 1.
\)
Furthermore, \Cref{ass:dense}, \cref{mubd} and \Cref{lem:second-order1} imply that
\(
    \frac{\hat{\mu}_1 - \mu_1}{\hat{\sigma}_1} \xrightarrow{p} 0,
\)
which verifies the conditions required for the consistency of the estimator.
\end{proof}

\subsubsection{Proof of \cref{thm:2sampleconsis}}
We will next establish \cref{thm:2sampleconsis}.

\begin{proof}
\noindent\textbf{Computing the mean}. The population mean for the two–sample statistic admits the decomposition
\[
\mu_2=2\sum_{i=1}^{2}\mu_2^{(i)}
=2\sum_{i=1}^{2}\tr\Bigl[\beta_i^{-2}\Bigl({\Sigma^{(i)}}\circ\beta_i^\perp 
      + \Diag({\Sigma^{(i)}}\cdot d_i)\Bigr)\Bigr],
\]
where \(\mu_2^{(i)}:=\tr\Bigl[\beta_i^{-2}\Bigl({\Sigma^{(i)}}\circ\beta_i^\perp 
      + \Diag({\Sigma^{(i)}}\cdot d_i)\Bigr)\Bigr].\)
The corresponding plug–in estimator introduced in \cref{eq:estimators} is
\[
\hat{\mu}_2 
= \sum_{i=1}^{2}\hat{\mu}_2^{(i)}
= 2\sum_{i=1}^{2}\tr\Bigl[\hat{\beta}_i^{-2}\Bigl({\hat{P}^{(i)}}\circ(J_n-{\hat{P}^{(i)}})\circ\hat\beta_i^\perp
      + \Diag\bigl(({\hat{P}^{(i)}}\circ(J_n-{\hat{P}^{(i)}}))\cdot\hat d_i\bigr)\Bigr)\Bigr],
\]
with notation as in \cref{thm:2sampleconsis}.

Since the two–sample mean decomposes as the sum of the two one–sample contributions, the consistency and rate established in \cref{thm:1sampleconsis} apply componentwise. In particular, for each \(i\in\{1,2\}\) we have
\(
\big|\hat{\mu}_2^{(i)}-\mu_2^{(i)}\big| = O_p\Bigl(\frac{k}{n^2\rho_n^2}\Bigr),
\)
and hence
\begin{align}\label{mu2bd}
    \big|\hat{\mu}_2-\mu_2\big| = O_p \left(\max \left(\frac{k}{n^2 \rho_n^2},\frac{k \sqrt{\log n}}{n^2 \rho_n^{3/2}}\right)\right),
\end{align}
which establishes the claimed mean consistency for the two–sample statistic and thus proves the first part of \cref{thm:2sampleconsis}.
\\ \ \\
\noindent\textbf{Computing the Variance}. From \cref{thm:twosample_dense}, the population variance for the two–sample statistic admits the decomposition
\begin{align*}
\sigma^2_2 
&= \sum_{i=1}^{2} \bigl(\sigma_2^{(i)}\bigr)^2 + \sigma_2^{(1,2)} \\
&= \underbrace{\sum_{i=1}^{2} \!\left( 
8 \,\bigl\langle (J_n - I_n)\!\circ\!(\beta_i^{-2}\!\circ\!\beta_i^{-2}),\,({\Sigma^{(i)}})^2 \bigr\rangle
+ 4\,\alpha_i\t K^{(i)}\,\alpha_i \right)}_{\sum_{i=1}^{2} (\sigma_2^{(i)})^2}
+ \underbrace{4 \,\bigl\langle G\!\circ\!G,
{\Sigma^{(1)}}\t{\Sigma^{(2)}} + (J_n-I_n)\!\circ\!({\Sigma^{(1)}}\!\circ\!{\Sigma^{(2)}}) \bigr\rangle}_{\sigma_2^{(1,2)}}.
\end{align*}
The corresponding plug–in estimator introduced in \cref{eq:estimators} takes the form
\begin{equation*}
\begin{aligned}
    \hat{\sigma}^2_2 
    &= \sum_{i=1}^{2} \bigl(\hat{\sigma}_2^{(i)}\bigr)^2 + \hat{\sigma}_2^{(1,2)} \\
    &= \underbrace{
        \begin{aligned}
            \sum_{i=1}^{2} \biggl[ \,
            & 8\,\bigl\langle (J_n - I_n)\circ(\hat{\beta}_i^{-2}\circ\hat{\beta}_i^{-2}),
            \,(\hat{\Sigma}^{(i)})^2 \bigr\rangle  + 4\,\hat{\alpha}_i^\top {\hat{K}^{(i)}}\,\hat{\alpha}_i \biggr]
        \end{aligned}
       }_{\sum_{i=1}^{2} (\hat{\sigma}_2^{(i)})^2} \\
    &\quad + \underbrace{
        \begin{aligned}
            4\,\bigl\langle \hat{G}\circ\hat{G},
            (\hat{\Sigma}^{(1)})^\top(\hat{\Sigma}^{(2)}) 
            + (J_n-I_n)\circ((\hat{\Sigma}^{(1)})\circ(\hat{\Sigma}^{(2)}))\bigr\rangle
        \end{aligned}
       }_{\hat{\sigma}_2^{(1,2)}}
\end{aligned}
\end{equation*}
with notation as in \cref{thm:2sampleconsis}.

From the one–sample variance consistency result as in \cref{sigbd}, it follows that 
\(|
(\hat{\sigma}_2^{(i)})^2 - (\sigma_2^{(i)})^2| \ll \frac{k^2}{n^3\rho_n^2},\) for \(i = 1,2\).
To establish the full variance consistency of the two–sample statistic, it remains to verify that 
\(
|\hat{\sigma}_2^{(1,2)} - \sigma_2^{(1,2)}| =\; o_p \left(\frac{k^2}{n^3\rho_n^2}\right).
\) We decompose the difference as
\begin{align*}
    \hat{\sigma}_2^{(1,2)} - \sigma_2^{(1,2)} 
    &= \Big\langle \hat{G}\!\circ\!\hat{G},\;
    (\hat{\Sigma}^{(1)})\t(\hat{\Sigma}^{(2)}) 
    + (J_n-I_n)\!\circ\!((\hat{\Sigma}^{(1)})\!\circ\!(\hat{\Sigma}^{(2)}))\Big\rangle 
    - \Big\langle G\!\circ\!G,\;
    {\Sigma^{(1)}}\t{\Sigma^{(2)}} + (J_n-I_n)\!\circ\!({\Sigma^{(1)}}\!\circ\!{\Sigma^{(2)}}) \Big\rangle \\
    &= \underbrace{\Big\langle \hat{G}\!\circ\!\hat{G} - G \circ G,\; 
    (\hat{\Sigma}^{(1)})\t(\hat{\Sigma}^{(2)}) +
    (J_n-I_n)\!\circ\!((\hat{\Sigma}^{(1)})\!\circ\!(\hat{\Sigma}^{(2)}))\Big\rangle}_{Q_2} \\
    &-
    \underbrace{\Big\langle G\!\circ\!G,\;
    ({\Sigma^{(1)}}\t{\Sigma^{(2)}} - (\hat{\Sigma}^{(1)})\t(\hat{\Sigma}^{(2)}))
    + (J_n-I_n)\!\circ\!({\Sigma^{(1)}}\!\circ\!{\Sigma^{(2)}} - (\hat{\Sigma}^{(1)})\!\circ\!(\hat{\Sigma}^{(2)}))
    \Big\rangle}_{Q_1}.
\end{align*}
We first bound \(Q_2\). Observe that
\begin{equation}
\begin{aligned}\label{newG1}
    \big| \langle \hat{G}\!\circ\!\hat{G} - G \circ G,\; 
    (\hat{\Sigma}^{(1)})\t(\hat{\Sigma}^{(2)}) +
    (J_n-I_n)\!\circ\!((\hat{\Sigma}^{(1)})\!\circ\!(\hat{\Sigma}^{(2)}))\rangle \big|
    &\lesssim 
    \big| \langle \hat{G}\!\circ\!\hat{G} - G \circ G,\; 
    (\hat{\Sigma}^{(1)})\t(\hat{\Sigma}^{(2)}) \rangle \big| \\
    &\lesssim 
     \| \hat{G}\!\circ\!\hat{G} - G \circ G\|_F \|(\hat{\Sigma}^{(1)})\t(\hat{\Sigma}^{(2)})\|_F \\
     &\lesssim \sqrt{k}
     \| \hat{G}\!\circ\!\hat{G} - G \circ G\|_F \|(\hat{\Sigma}^{(1)})\|_2 \|\t(\hat{\Sigma}^{(2)})\|_2 \\
    &=O_p \left(\sqrt{k} \cdot \frac{k^{3/2}}{n^{5.5}\rho_n^{4.5}} \cdot n^2 \rho_n^2\right) \\
    &=O_p \left( \frac{k^{2}}{n^{3.5}\rho_n^{2.5}} \right)\\
    &= o_p \left(\frac{k^2}{n^3 \rho_n^2}\right),
\end{aligned}
\end{equation}
where the fourth line follows from \cref{inter-e} of \cref{inter2}.

Next, we bound \(Q_1\). Observe that
\begin{equation}
\begin{aligned}\label{newG}
& \big|
\langle G\!\circ\!G,\;
({\Sigma^{(1)}}{\Sigma^{(2)}} - (\hat{\Sigma}^{(1)})(\hat{\Sigma}^{(2)}))
+ (J_n-I_n)\!\circ\!({\Sigma^{(1)}}\!\circ\!{\Sigma^{(2)}} - (\hat{\Sigma}^{(1)})\!\circ\!(\hat{\Sigma}^{(2)}))
\rangle
\big|\\
\lesssim
& \big|
\langle G\!\circ\!G,\;
{\Sigma^{(1)}}{\Sigma^{(2)}} - (\hat{\Sigma}^{(1)})(\hat{\Sigma}^{(2)})
\rangle
\big|  + \big|
\langle G\!\circ\!G,
({\Sigma^{(1)}}\!\circ\!{\Sigma^{(2)}} - (\hat{\Sigma}^{(1)})\!\circ\!(\hat{\Sigma}^{(2)}))
\rangle
\big|. 
\end{aligned}
\end{equation}
By Hölder’s inequality for Frobenius inner products,
\[
\big|
\langle G\!\circ\!G,\;
{\Sigma^{(1)}}{\Sigma^{(2)}} - (\hat{\Sigma}^{(1)})(\hat{\Sigma}^{(2)})
\rangle
\big|
\le
\|G\!\circ\!G\|_F\,
\|{\hat{P}^{(1)}}{\hat{P}^{(2)}} - {P^{(1)}}{P^{(2)}}\|_F .
\]
Using the identity
\(
{\hat{P}^{(1)}}{\hat{P}^{(2)}} - {P^{(1)}}{P^{(2)}}
=
{\hat{P}^{(1)}}({\hat{P}^{(2)}}-{P^{(2)}}) - ({\hat{P}^{(1)}}-{P^{(1)}}){P^{(2)}} ,
\)
together with \(\|G\!\circ\!G\|_F \le \|G\|_\infty \|G\|_F\), we obtain
\begin{align} \label{right1}
\|G\!\circ\!G\|_F\,
\|{\hat{P}^{(1)}}{\hat{P}^{(2)}} - {P^{(1)}}{P^{(2)}}\|_F
&\lesssim
\|G\|_\infty \|G\|_F
\Big(
\|{\hat{P}^{(1)}}\|_F \|{X^{(2)}}\|
+
\|{X^{(1)}}\| \|{P^{(2)}}\|_F
\Big).
\end{align}
Invoking \cref{ass:asymptotic1,ass:eigen-scaling,ass:incoherence} and \cref{lem:Very Good Set}, we have
\[\|G\|_\infty \le \|\beta_1^{-1}\beta_2^{-1}\|_{\infty} + \|\widehat \beta_1^{-1} \widehat \beta_2^{-1}\|_{\infty} \lesssim \frac{k}{n^3\rho_n^2}, \qquad \|G\|_F \le \|\beta_1^{-1}\beta_2^{-1}\|_F + \|\widehat \beta_1^{-1} \widehat \beta_2^{-1}\|_F \le \frac{\sqrt{k}}{n^2\rho_n^2}.\]
Thus, \cref{right1} is bounded by
\[
\frac{k}{n^3\rho_n^2}\cdot\frac{\sqrt{k}}{n^2\rho_n^2}
\Big(
\|{\hat{P}^{(1)}}\|_F \|{X^{(2)}}\|
+
\|{X^{(1)}}\| \|{P^{(2)}}\|_F
\Big)
=
O_p \left(\frac{k^2}{n^{3.5}\rho_n^{2.5}}\right)
=
o_p\!\left(\frac{k^2}{n^3\rho_n^2}\right),
\]
which completes the bound for the first term on the right hand side of \cref{newG}.
Before bounding the second term on the right hand side of \cref{newG}, we observe that \(({\Sigma^{(1)}}\!\circ\!{\Sigma^{(2)}} - (\hat{\Sigma}^{(1)})\!\circ\!(\hat{\Sigma}^{(2)}))\) is an \(n \times n\) matrix whose elements are \(O(\rho_n^2).\)
Therefore,
\begin{equation}
\begin{aligned}\label{right2}
    \big|
\langle G\!\circ\!G,\;
({\Sigma^{(1)}}\!\circ\!{\Sigma^{(2)}} - (\hat{\Sigma}^{(1)})\!\circ\!(\hat{\Sigma}^{(2)}))
\rangle
\big| & \lesssim 
\| G\!\circ\!G \|_F \|{\Sigma^{(1)}}\!\circ\!{\Sigma^{(2)}} - (\hat{\Sigma}^{(1)})\!\circ\!(\hat{\Sigma}^{(2)}) \|_F \\
&= O_p\left( \frac{k^{3/2}}{n^5 \rho_n^4} \cdot \sqrt{k} n \rho_n \right) = o_p\!\left(\frac{k^2}{n^3\rho_n^2}\right).
\end{aligned}
\end{equation}
Thus, \cref{right2,newG} give the following bound on \(Q_1\) 
\[\big|
\langle G\!\circ\!G,\;
({\Sigma^{(1)}}{\Sigma^{(2)}} - (\hat{\Sigma}^{(1)})(\hat{\Sigma}^{(2)}))
+ (J_n-I_n)\!\circ\!({\Sigma^{(1)}}\!\circ\!{\Sigma^{(2)}} - (\hat{\Sigma}^{(1)})\!\circ\!(\hat{\Sigma}^{(2)}))
\rangle
\big|=o_p\!\left(\frac{k^2}{n^3\rho_n^2}\right).\]
Combining it with \cref{newG1} where we bounded \(Q_2\), we conclude that
\(
\big| \hat{\sigma}_2^{(1,2)} - \sigma_2^{(1,2)} \big| \ll \frac{k^2}{n^3\rho_n^2},
\) implying
\begin{align}\label{sig2bd}
    \big| \hat{\sigma}_2^2 - \sigma_2^2 \big| \ll \frac{k^2}{n^3\rho_n^2}.
\end{align}

\noindent
\cref{lem:second-order2} implies that \(\hat{\sigma}_2^2 \asymp \frac{k^{2}}{n^{3}\rho_n^{2}}\). Combining this with \cref{sig2bd}, we obtain
\(
    \frac{\hat{\sigma}_2}{\sigma_2} \xrightarrow{p} 1.
\)
Furthermore, \Cref{ass:dense}, \cref{mu2bd} and \cref{lem:second-order2} imply that
\(
    \frac{\hat{\mu}_2 - \mu_2}{\hat{\sigma}_2} \xrightarrow{p} 0,
\)
which completes the proof of \cref{thm:2sampleconsis}.
\end{proof}

\section{Proofs of Corollaries from \texorpdfstring{\cref{TestConsistency}}{Section~\ref{TestConsistency}} and \cref{thm:testcon}}\label{sec:testconpf}
We first state some lemmas which will be used to prove the corollaries for the consistency of our test.
\subsection{Auxiliary Lemmas for Consistency of Our Test}
\begin{lemma}[Cross--term moments]\label{lem:cross-term-moments}
Suppose \cref{ass:asymptotic1,ass:eigen-scaling,ass:incoherence,ass:dense} hold.  Then \[\frac{\langle {V^{(1)}}{V^{(1)}}\t - {V^{(2)}}{V^{(2)}}\t,\; \hat {V^{(2)}}\hat {V^{(2)}}\t - {V^{(2)}}{V^{(2)}}\t\rangle}{\hat\sigma_2} \asymp 1,\]
where $\hat\sigma_2$ is the variance estimator defined in \cref{eq:estimators}.
\end{lemma}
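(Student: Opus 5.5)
We expand the second factor with the series expansion of \citet{xia_normal_2021} already used in \cref{eq1}. Write $D:=V^{(1)}{V^{(1)}}\t - V^{(2)}{V^{(2)}}\t$. On $\mathcal{E}_{{\sf good}}$ (\cref{lem:Good Set}) we have $\hat V^{(2)}\hat V^{(2)\t}-V^{(2)}V^{(2)\t}=\sum_{l\ge1}S_{2,l}(X^{(2)})$. Hence the numerator splits as
\[
\langle D,S_{2,1}(X^{(2)})\rangle+\langle D,S_{2,2}(X^{(2)})\rangle+\sum_{l\ge3}\langle D,S_{2,l}(X^{(2)})\rangle .
\]
The first-order piece is linear in the noise:
\[
\langle D,S_{2,1}(X^{(2)})\rangle=2\langle \beta_2^{\perp}D\beta_2^{-1},X^{(2)}\rangle,
\]
using symmetry of $D$ and $X^{(2)}$. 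It is a sum of independent centered Bernoulli terms with variance $\asymp \rho_n\|\beta_2^\perp D\beta_2^{-1}\|_F^2$, and by \cref{ass:eigen-scaling} this is at most $\|D\|_F^2/(n^2\rho_n)$. I will compute this variance exactly, in the same way as \cref{lem:second-order1}, using $\Sigma^{(2)}\asymp\rho_n$ entrywise from \cref{ass:asymptotic1}. Lyapunov's condition holds because the coefficients are uniformly small (incoherence, \cref{ass:incoherence}), so this term is of exact stochastic order $\sqrt{\rho_n}\,\|\beta_2^\perp D\beta_2^{-1}\|_F$.

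For the higher-order pieces, I treat the second-order term by computing its mean and variance as in \cref{lem:3rdmean1,lem:3rdvar1}. Its mean is $\langle D,\E S_{2,2}\rangle$, and bounding it by $\|D\|_F\,\|\E S_{2,2}\|_F$ gives order $\|D\|_F\sqrt{k}/(n\rho_n)\cdot n^{-1/2}$. The terms with $l\ge3$ are bounded deterministically on $\mathcal{E}_{{\sf good}}$ by $\|D\|_F\,\|S_{2,l}\|_F\le \|D\|_F\sqrt{k}\,(c/n\rho_n)^{l/2}$, exactly as in the treatment of $T_1^{(H)}$.

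For the denominator, \cref{thm:2sampleconsis} and \cref{lem:second-order2} give $\hat\sigma_2\asymp k/(n^{3/2}\rho_n)$ with probability tending to one. The ratio is then governed by
\[
\frac{\sqrt{\rho_n}\,\|\beta_2^\perp D\beta_2^{-1}\|_F}{k/(n^{3/2}\rho_n)} .
\]
To get a two-sided bound in the sense of the paper's $\asymp$ notation, I sandwich $\|\beta_2^\perp D\beta_2^{-1}\|_F$ between constant multiples of $\|D\|_F/(n\rho_n)$. The upper bound is immediate. For the lower bound I use $\beta_2^\perp D V^{(2)} = -\beta_2^\perp V^{(1)}{V^{(1)}}\t V^{(2)}$ and the identity $\|\beta_2^\perp V^{(1)}\|_F^2=\tfrac12\|D\|_F^2$, together with a lower bound on the singular values of ${V^{(1)}}\t V^{(2)}$, which is near orthogonal in the local-alternative regime.

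The main obstacle is the lower bound. The numerator is a mean-zero Gaussian-type leading term plus a bias from $S_{2,2}$, so showing the ratio is bounded away from zero requires anti-concentration of the linear term. Gaussian approximation via Lyapunov gives this, but only after the variance comparison above pins down the normalization. I would also check that the normalized quantity stays of constant order in the regime of \cref{thm:testcon}, where it is compared against the dominant signal $\|D\|_F^2/\hat\sigma_2\to\infty$. The cross term is of smaller order than this signal precisely when $\|D\|_F^2\gg k/(n^{3/2}\rho_n)$, and that is what the consistency proof ultimately needs.
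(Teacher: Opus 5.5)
\textbf{Gap: with your reading, the ratio is not of constant order.} You take the first slot literally as the deterministic $D=V^{(1)}V^{(1)\top}-V^{(2)}V^{(2)\top}$. The numerator is then driven by the linear term $2\langle \beta_2^{\perp}D\beta_2^{-1},X^{(2)}\rangle$. Its size is $\sqrt{\rho_n}\,\|\beta_2^{\perp}D\beta_2^{-1}\|_F\asymp \|D\|_F/(n\sqrt{\rho_n})$, so after dividing by $\hat\sigma_2\asymp k/(n^{3/2}\rho_n)$ the ratio is of order $\sqrt{n\rho_n}\,\|D\|_F/k$. Sandwiching $\|\beta_2^{\perp}D\beta_2^{-1}\|_F$ only pins down this rate; it does not make it a constant. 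The ratio is identically zero under $H_0$. It is $o(1)$ just above the detection boundary of \cref{thm:testcon}, and it diverges when, say, $\|D\|_F\asymp 1$. So your final ``two-sided bound'' step cannot be completed, and under your reading the claim is false.

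What you have sketched is really the content of \cref{lem:delta-linearization}. It is the term $L(\Delta)$, which the paper treats separately. That term enters \cref{eq:Power2} as $\zeta_2=O_p(\|\Delta\|_F\sqrt{n\rho_n\log n}/k)$, not as an $O_p(1)$ term.

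\textbf{The intended statement and the paper's argument.} The first slot is a typo. The decomposition \cref{eq:Power1} and the paper's own proof show that the lemma concerns the cross term between the two estimation errors, $\langle \hat V^{(1)}\hat V^{(1)\top}-V^{(1)}V^{(1)\top},\,\hat V^{(2)}\hat V^{(2)\top}-V^{(2)}V^{(2)\top}\rangle$. The missing idea is that this is a product of two independent centred perturbations. The paper expands both factors and writes the term as $\sum_{l_1,l_2\ge1}\langle S_{1,l_1}(X^{(1)}),S_{2,l_2}(X^{(2)})\rangle$, then argues as follows.
\begin{itemize}
\item The terms with $l_1+l_2\in\{2,3\}$ have mean zero, by conditioning on one sample.
\item The leading term $2\tr(\beta_1^{-1}\beta_2^{-1}X^{(1)}\beta^{\perp}X^{(2)})$ has variance $\asymp k/(n^3\rho_n^2)$, computed by conditioning on $X^{(1)}$.
\item The third-order cross terms have variance $\lesssim k/(n^4\rho_n^3)$.
\item On the good event, the terms with $l_1+l_2\ge4$ contribute at most $\frac{n^{3/2}\rho_n}{k}\sum_{l\ge4}2^l k^2(n\rho_n)^{-l/2}\lesssim k/(n^{1/2}\rho_n)=o(1)$ after normalisation, under \cref{ass:dense}.
\end{itemize}
Chebyshev's inequality together with $\hat\sigma_2^2\asymp k^2/(n^3\rho_n^2)$ then gives a bounded ratio.

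\textbf{The lower bound is the wrong target.} Both your quantity and the intended one are asymptotically centred, so no positive lower bound holding with high probability is possible. What \cref{thm:testcon} needs, and what the paper's Chebyshev argument actually delivers, is tightness, i.e.\ an $O_p(1)$ bound. The anti-concentration step you call the ``main obstacle'' is therefore not needed.
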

\begin{proof}
    See \cref{sec:cross-term-momentsproof}.
\end{proof}

\begin{lemma}\label{lem:delta-linearization}
Suppose \cref{ass:asymptotic1,ass:eigen-scaling,ass:incoherence,ass:dense} hold.  Let $\Delta= {V^{(1)}} {V^{(1)}}\t - {V^{(2)}} {V^{(2)}}\t.$ Then the quantity
\[
L(\Delta):=\big\langle (\hat V^{(1)}) (\hat V^{(1)})\t - {V^{(1)}} {V^{(1)}}\t,\;\Delta\big\rangle
      -\big\langle (\hat V^{(2)}) (\hat V^{(2)})\t - {V^{(2)}} {V^{(2)}}\t,\Delta\big\rangle,
\]
satisfies
\(
L(\Delta)
= O_p\left(\max \left(\frac{\|\Delta\|_F}{n\rho_n},\frac{\|\Delta\|_F \sqrt{\rho_n\log n}}{n \rho_n}\right)\right).
\)
\end{lemma}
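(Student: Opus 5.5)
We bound each of the two inner products in $L(\Delta)$ separately; the triangle inequality then gives the claim. Fix $i\in\{1,2\}$ and write $E_i := \hat V^{(i)}\hat V^{(i)\top} - V^{(i)}V^{(i)\top}$. On $\mathcal{E}_{{\sf good}}$ (\cref{lem:Good Set}) the series expansion of \citet{xia_normal_2021} gives $E_i = \sum_{l\ge1} S_{i,l}(X^{(i)})$. The first-order term is
\[
S_{i,1}(X^{(i)}) = \beta_i^{\perp}X^{(i)}\beta_i^{-1} + \beta_i^{-1}X^{(i)}\beta_i^{\perp}.
\]
Accordingly we split
\[
\langle E_i,\Delta\rangle = \langle S_{i,1}(X^{(i)}),\Delta\rangle + \Big\langle \textstyle\sum_{l\ge2}S_{i,l}(X^{(i)}),\Delta\Big\rangle .
\]

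\textbf{Linear term.} The first piece is linear in $X^{(i)}$:
\[
\langle S_{i,1}(X^{(i)}),\Delta\rangle = 2\langle X^{(i)}, W_i\rangle, \qquad W_i := \operatorname{sym}\big(\beta_i^{\perp}\Delta\beta_i^{-1}\big).
\]
It is therefore a sum of independent centred bounded terms $\sum_{a\le b} X^{(i)}_{ab}\,c_{ab}(W_i)$. Its variance is at most $\rho_n\|W_i\|_F^2 \lesssim \rho_n\|\Delta\|_F^2/(n\rho_n)^2$, since $\|\beta_i^{-1}\|\lesssim 1/(n\rho_n)$ by \cref{ass:eigen-scaling}. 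Bernstein's inequality then gives, with probability $1-O(n^{-c})$,
\[
|\langle S_{i,1},\Delta\rangle| \lesssim \frac{\sqrt{\rho_n\log n}\,\|\Delta\|_F}{n\rho_n} + \|W_i\|_\infty \log n .
\]
The $\ell_\infty$ term is controlled through incoherence (\cref{ass:incoherence}). We have $\|\beta_i^{\perp}\Delta\beta_i^{-1}\|_\infty \le \|\beta_i^\perp\Delta\|_{2,\infty}\|\beta_i^{-1}\|_{2,\infty}$, and $\|\beta_i^{-1}\|_{2,\infty}\lesssim \sqrt{k/n}/(n\rho_n)$. Since $\|\Delta\|_{2,\infty}\le \|\Delta\|_F$, this term is of lower order than $\sqrt{\rho_n\log n}\,\|\Delta\|_F/(n\rho_n)$ once $n\rho_n\gg\log n$. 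This produces the second entry of the maximum.

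\textbf{Higher-order terms.} For the remainder we use the deterministic bound $|\langle M,\Delta\rangle|\le \|M\|_F\|\Delta\|_F$. Combined with the bounds from the proof of \cref{thm:onesample}, namely $\|S_{i,l}\|_F\le \sqrt{k}\,(4c/(n\rho_n))^{l/2}$ on $\mathcal{E}_{{\sf good}}$, the geometric series gives
\[
\Big|\Big\langle \textstyle\sum_{l\ge2}S_{i,l},\Delta\Big\rangle\Big| \lesssim \frac{\sqrt{k}\,\|\Delta\|_F}{n\rho_n}.
\]
With $k$ fixed this is the first entry of the maximum. Adding the two pieces for $i=1,2$ yields the stated $O_p$ rate.

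\textbf{Main obstacle.} The delicate point is the second-order term $S_{i,2}$. Its crude Frobenius bound only matches the target rate up to the factor $\sqrt{k}$. One might hope to improve it by separating $\mathbb{E}S_{i,2}$ from its fluctuation, but this is not needed for the stated $\max$ rate, so I would accept the crude bound. I would also double-check that the Bernstein $\ell_\infty$ term does not dominate in the sparse end of \cref{ass:dense}; this is where incoherence of $V^{(i)}$ is essential. If the $2,\infty$ control of $\beta_i^\perp\Delta$ proves awkward, I would fall back on the bound $\|\beta_i^\perp\Delta\|_{2,\infty}\le 1+\|V^{(i)}\|_{2,\infty}$. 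Under \cref{ass:dense}, this gives a contribution $\log n\sqrt{k/n}/(n\rho_n)$, which still needs to be compared against the main terms.
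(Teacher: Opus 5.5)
Your proposal is correct and takes essentially the same route as the paper. You expand via the series of \citet{xia_normal_2021}, handle the linear term $\langle S_{i,1}(X^{(i)}),\Delta\rangle=\langle X^{(i)},\beta_i^\perp\Delta\beta_i^{-1}+\beta_i^{-1}\Delta\beta_i^\perp\rangle$ by Bernstein with variance proxy $\rho_n\|\Delta\|_F^2/(n\rho_n)^2$, and bound the terms with $l\ge 2$ by $\|S_{i,l}(X^{(i)})\|_F\|\Delta\|_F$; the extra $\sqrt{k}$ is harmless since $k$ is fixed.

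Your incoherence-based bound $\|W_i\|_\infty\lesssim\sqrt{k/n}\,\|\Delta\|_F/(n\rho_n)$ is more careful than the paper's argument. The justification should be $\|\beta_i^\perp\Delta\|_{2,\infty}\le\|\beta_i^\perp\Delta\|_F\le\|\Delta\|_F$, rather than $\|\Delta\|_{2,\infty}\le\|\Delta\|_F$ as you wrote. The paper instead takes $M\le\|W\|_F$, which by itself leaves a sub-exponential Bernstein term of order $\|\Delta\|_F\log n/(n\rho_n)$ that exceeds the stated rate. Your bound closes that gap, so the fallback discussion at the end is unnecessary.
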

\begin{proof}
    See \cref{sec:delta-linearization-proof}.
\end{proof}

\begin{lemma}[Balanced communities imply incoherence]\label{lem:balanced-incoherence}
Let \(P=ZBZ\t\) denote the population probability matrix of a SBM, where \(Z\in\{0,1\}^{n\times k}\) is the membership matrix with exactly one unit entry per row and \(B\in\mathbb{R}^{k\times k}\) is a symmetric block matrix. Let \(n_a\) denote the size of community \(a\) and set \(N=\mathrm{diag}(n_1,\dots,n_k)=Z\t Z\). Assume \(\rank(P)=k\). If the community sizes are balanced in the sense that
\(
 n_a \asymp \frac{n}{k},
\)
then \cref{ass:incoherence} holds.
\end{lemma}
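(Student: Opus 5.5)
The plan is to write the eigenvectors of $P$ explicitly in terms of $Z$ and then read the row norms off from that expression. Since $P = ZBZ\t$ has rank $k$ and $Z$ has full column rank (every community is nonempty), we have $\col(P) = \col(Z)$. An orthonormal basis of $\col(Z)$ is $ZN^{-1/2}$, because $(ZN^{-1/2})\t(ZN^{-1/2}) = N^{-1/2}Z\t Z N^{-1/2} = I_k$. The population eigenvector matrix $V$ has the same column space, so there is an orthogonal $O \in \mathbb{R}^{k\times k}$ with $V = ZN^{-1/2}O$. To get $O$ concretely, diagonalize the symmetric $k\times k$ matrix $N^{1/2}BN^{1/2} = O\Lambda O\t$. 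Then
\[
P = ZN^{-1/2}\bigl(N^{1/2}BN^{1/2}\bigr)N^{-1/2}Z\t = (ZN^{-1/2}O)\,\Lambda\,(ZN^{-1/2}O)\t ,
\]
so $ZN^{-1/2}O$ is a valid choice of $V$. If some eigenvalues are repeated, any orthonormal basis of $\col(P)$ has this form for some orthogonal $O$, so the argument does not depend on how the basis is chosen.

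Next I compute the row norms. Row $j$ of $Z$ is $e_{C(j)}\t$, so
\[
V_{j\cdot} = e_{C(j)}\t N^{-1/2} O = n_{C(j)}^{-1/2}\, O_{C(j)\cdot}.
\]
Since $O$ is orthogonal, its rows have unit norm, and therefore $\|V_{j\cdot}\|_2 = n_{C(j)}^{-1/2}$.

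Finally I apply the balance hypothesis. The assumption $n_a \asymp n/k$ means there is a constant $c>0$ with $n_a \ge c\,n/k$ for every $a$. Hence $\|V_{j\cdot}\|_2 \le \sqrt{k/(cn)} = \sqrt{\psi k/n}$ with $\psi = 1/c$, and $\psi$ does not depend on $n$. This is exactly \cref{ass:incoherence}. Under \cref{def:balanced} we have $n_a = n\pi_a \ge C_1 n$, and since $k$ is fixed this is equivalent to the $n/k$ scaling.

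The only step needing care is the identification $\col(P) = \col(Z)$, which relies on $\rank(P) = k$. If $B$ were singular, the rank would drop and the top $k$ eigenvectors would not be determined by $Z$ alone. The hypothesis $\rank(P) = k$ gives $\col(P) \subseteq \col(Z)$ with equal dimensions, so the two spaces coincide. Everything else is a direct computation.
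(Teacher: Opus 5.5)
Your proof is correct and follows essentially the same route as the paper. The paper writes $V = ZW$, derives $W^{\top} N W = I_k$ so that $W = N^{-1/2}U$ with $U$ orthogonal, and reads off $\|V_{j\cdot}\|_2 = n_{C(j)}^{-1/2}$ from the unit-norm rows of $U$ before invoking $n_a \gtrsim n/k$. This matches your $V = ZN^{-1/2}O$; the explicit choice of $O$ from diagonalizing $N^{1/2}BN^{1/2}$ is a harmless extra that the argument does not need.
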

\begin{proof}
    See \cref{sec:balanced-incoherenceproof}.
\end{proof}

\begin{lemma}[Balanced mixed memberships imply incoherence]\label{lem:balanced-mmsb-incoherence}
Let \(P=ZBZ\t\) denote the population probability matrix of a MMSBM, where \(Z\in[0,1]^{n\times k}\) has rows \({Z^{(i)}}\t\) lying in the probability simplex (\(\sum_{a=1}^k z_{i,a}=1\) for each \(i\)), \(B\in\mathbb{R}^{k\times k}\) is symmetric, and \(\rank(P)=k\). Define the Gram matrix \(N:=Z\t Z\in\mathbb{R}^{k\times k}\) and let \(V\in\mathbb{R}^{n\times k}\) be an orthonormal basis of \(\col(P)\) (the population eigenvectors associated to the nonzero eigenvalues). Suppose there exist constants \(c_1,c_2>0\) (independent of \(n\)) such that
\(
\lambda_{\min}(N)\ge c_1\,\frac{n}{k}
\).
Then \cref{ass:incoherence} holds.
\end{lemma}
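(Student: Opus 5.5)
The plan is to express the eigenvectors through the membership matrix and reduce incoherence to a bound on the rows of $Z$ together with a lower bound on $\lambda_{\min}(N)$.

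Since $P=ZBZ\t$ has rank $k$, both $Z$ and $B$ have rank $k$. Hence $\col(P)=\col(Z)$. The orthonormal basis $V$ therefore spans the same space as $Z$, and we may take
\[
V = Z N^{-1/2} O
\]
for some $k\times k$ orthogonal matrix $O$. This is legitimate because $(ZN^{-1/2})\t (ZN^{-1/2}) = N^{-1/2} N N^{-1/2} = I_k$, so $ZN^{-1/2}$ is itself an orthonormal basis of $\col(Z)$. Any other orthonormal basis differs from it by right multiplication by an orthogonal matrix, and row norms are invariant under such a change. So it suffices to bound the rows of $ZN^{-1/2}$.

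For each row $j$ we have $V_{j\cdot}=Z_{j\cdot}N^{-1/2}O$. Consequently
\[
\|V_{j\cdot}\|_2 \le \|Z_{j\cdot}\|_2\,\|N^{-1/2}\| = \frac{\|Z_{j\cdot}\|_2}{\sqrt{\lambda_{\min}(N)}}.
\]
Because $Z_{j\cdot}$ lies in the simplex, its entries are nonnegative and sum to one, so $\|Z_{j\cdot}\|_2\le \|Z_{j\cdot}\|_1 = 1$. Combining this with the hypothesis $\lambda_{\min}(N)\ge c_1 n/k$ gives
\[
\|V_{j\cdot}\|_2\le \sqrt{\frac{k}{c_1 n}} = \sqrt{\frac{\psi k}{n}},
\qquad \psi = \frac{1}{c_1},
\]
uniformly in $j$. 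This is exactly \cref{ass:incoherence}.

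The only point requiring care is the identification $\col(P)=\col(Z)$, together with the invertibility of $N$. Invertibility follows directly from $\lambda_{\min}(N)>0$. For the column space, $\rank(P)=k$ forces $\rank(Z)=k$, and since $\col(P)\subseteq\col(Z)$ with both spaces of dimension $k$, they coincide. The step I would verify most carefully is the claim that the rotation $O$ (equivalently, an arbitrary choice of orthonormal basis of $\col(P)$) does not affect row norms. This holds because right multiplication by an orthogonal matrix preserves Euclidean norms. The same argument gives \cref{lem:balanced-incoherence} as the special case $N=\Diag(n_1,\dots,n_k)$.
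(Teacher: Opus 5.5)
Your proof is correct and follows the paper's argument essentially step for step. The paper writes $V = ZW$ with $W = N^{-1/2}U$ for an orthogonal $U$, uses that right multiplication by $U$ preserves row norms, and concludes from $\|Z_{i\cdot}\|\le\|Z_{i\cdot}\|_1=1$ together with $\lambda_{\min}(N)\ge c_1 n/k$, giving $\psi = 1/c_1$.
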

\begin{proof}
    See \cref{sec:balanced-mmsbm-incoherenceproof}.
\end{proof}

\subsection{Proof of \cref{thm:testcon}} \label{sec:testconproof}

\noindent
\begin{proof}
We begin by decomposing the test statistic in terms of its population and estimation components. Using the expression of $\hat{\mu}_2$ from \cref{eq:estimators}, we obtain
\begin{equation}
\begin{aligned} \label{eq:Power1}
\frac{\|\hat{V}_{1}\hat{V}_{1}\t - \hat{V}_{2}\hat{V}_{2}\t\|_F^2 - \hat{\mu}_2}{\hat{\sigma}_2} 
&= \frac{\|\hat{V}_{1}\hat{V}_{1}\t - {V^{(1)}}{V^{(1)}}\t\|_F^2 - \hat{\mu}_2^{(1)}}{\hat{\sigma}_2} 
+ \frac{\|\hat{V}_{2}\hat{V}_{2}\t - {V^{(2)}}{V^{(2)}}\t\|_F^2 - \hat{\mu}_2^{(2)}}{\hat{\sigma}_2} \\
&\quad + \frac{\|{V^{(1)}}{V^{(1)}}\t - {V^{(2)}}{V^{(2)}}\t\|_F^2}{\hat{\sigma}_2} 
- 2\,\frac{\langle \hat{V}_{1}\hat{V}_{1}\t - {V^{(1)}}{V^{(1)}}\t,\, \hat{V}_{2}\hat{V}_{2}\t - {V^{(2)}}{V^{(2)}}\t \rangle}{\hat{\sigma}_2}
+ 2\,\frac{L(\Delta)}{\hat{\sigma}_2},
\end{aligned}
\end{equation}
where \(L(\Delta)\) is defined in \cref{lem:delta-linearization}.
From \cref{thm:onesample}, the first two terms on the right hand side of \cref{eq:Power1}
\(
\frac{\|\hat{V}_{i}\hat{V}_{i}\t - {V^{(i)}}{V^{(i)}}\t\|_F^2 - \hat{\mu}_2^{(i)}}{\hat{\sigma}_2}\asymp 1\) 
for \(i = 1, 2\). Furthermore, by \cref{lem:cross-term-moments}, the cross–sample interaction term satisfies
\[
\frac{\langle \hat{V}_{1}\hat{V}_{1}\t - {V^{(1)}}{V^{(1)}}\t,\, \hat{V}_{2}\hat{V}_{2}\t - {V^{(2)}}{V^{(2)}}\t \rangle}{\hat{\sigma}_2} \asymp 1.
\]
Turning to the component involving $L(\Delta)$, \cref{lem:delta-linearization} yields
\(L(\Delta)
= O_p\left(\max \left(\frac{\|\Delta\|_F}{n\rho_n},\frac{\|\Delta\|_F \sqrt{\rho_n \log n}}{n \rho_n}\right)\right).
\)
Finally, recalling from \cref{lem:second-order2} that \(\hat{\sigma}_2 \asymp \dfrac{k}{n^{3/2}\rho_n}\), the decomposition in \cref{eq:Power1} reduces to the asymptotic expansion
\begin{equation}\label{eq:Power2}
\frac{\|\hat{V}_{1}\hat{V}_{1}\t - \hat{V}_{2}\hat{V}_{2}\t\|_F^2 - \hat{\mu}_2}{\hat{\sigma}_2}
\asymp
\zeta_1
+
\frac{n^{3/2}\rho_n}{k}\,\|\Delta\|_F^2
+
\zeta_2,
\end{equation}
where \(\zeta_1\asymp 1\) and \(\zeta_2=O_p\left(\max \left(\frac{\sqrt n\|\Delta\|_F}{k},\frac{\|\Delta\|_F \sqrt{n\rho_n \log n}}{k}\right)\right)\).

Therefore, for the left–hand side of \cref{eq:Power2} to diverge it suffices that  \(\dfrac{n^{3/2}\rho_n}{k}\|\Delta\|_F^2\) diverges; equivalently,
\(
\|\Delta\|_F^2 \gg \frac{k}{n^{3/2}\rho_n}.
\) We note that forcing divergence via \(\zeta_2\) would require \(\|\Delta\|_F^2 \gg \min\left(\dfrac{k^2}{n},\dfrac{k^2}{n \rho_n \log n}\right)\) and under \cref{ass:dense}, we have \(\min \left(\dfrac{k^2}{n},\dfrac{k^2}{n \rho_n \log n} \right)\gg \dfrac{k}{n^{3/2}\rho_n}\).

Thus, in the dense regime (\cref{ass:dense}) the test attains power tending to one whenever
\[
\|{V^{(1)}}{V^{(1)}}\t - {V^{(2)}}{V^{(2)}}\t\|_F^2 \gg \frac{k}{n^{3/2}\rho_n},
\]
which verifies the signal–strength condition stated in \cref{eq:lowerbound_Frobenius} and completes the argument.
\end{proof}

\subsection{Proof of \cref{cor:SBM}}

\noindent
\begin{proof}
Using Lemma 1 from \cite{agterberg_overview_2025}, in the SBM setting with balanced communities, the population eigenspace projector admits the explicit representation
\(
Q_i = {V^{(i)}} {V^{(i)}}\t = {Z^{(i)}} ({Z^{(i)}}\t {Z^{(i)}})^{-1} {Z^{(i)}}\t
= \sum_{l=1}^k \frac{1}{{n^{(i)}_{l}}}\,({Z^{(i)}})_{\cdot l}({Z^{(i)}})_{\cdot l}\t\) for \(i=1,2\),
where ${n^{(i)}_{l}}=n{\pi^{(i)}_{l}}$ denotes the size of community $l$ under model $i$. Since each $Q_i$ is an orthogonal projector of rank $k$, we may write
\begin{align} \label{eq:Q1}
    \|Q_1-Q_2\|_F^2 = \tr(Q_1) + \tr(Q_2) - 2\tr(Q_1Q_2) = 2k - 2\tr(Q_1Q_2).
\end{align}
Hence it suffices to analyse $\tr(Q_1Q_2)$ using the block expansions above. Under the balanced–community assumption (so that ${\pi^{(i)}_{l}}$ are bounded away from $0$ and $\infty$) we have
\begin{equation}
\begin{aligned}\label{trq1q2}
\tr(Q_1Q_2) &= \sum_{l_1,l_2=1}^k \frac{((Z^{(1)})_{\cdot l_1}\t(Z^{(2)})_{\cdot l_2})^2}{{n^{(1)}_{l}}{n^{(2)}_{l}}} \\
&= \frac{1}{n^2}\sum_{l_1,l_2=1}^k \frac{((Z^{(1)})_{\cdot l_1}\t(Z^{(2)})_{\cdot l_2})^2}{{\pi_{l_1}^{(1)}}{\pi_{l_2}^{(2)}}}\\
&=\frac{1}{n^2}\sum_{l=1}^k \frac{((Z^{(1)})_{\cdot l}\t(Z^{(2)})_{\cdot l})^2}{{\pi^{(1)}_{l}}{\pi^{(2)}_{l}}} + O \left(\frac{\|Z^{(1)}-Z^{(2)}\|_0^2}{n^2}\right).
\end{aligned} 
\end{equation}
The last big-O term arises from the fact that \(((Z^{(1)})_{\cdot l_1}\t(Z^{(2)})_{\cdot l_2})^2 \lesssim \|(Z^{(1)})_{\cdot l_1}-(Z^{(2)})_{\cdot l_2}\|_0^2\) if \(l_1 \neq l_2\) and corresponds to the cross term contributions. These cross-terms count nodes that moved from community $l_1$ to community $l_2$.

Let $w_l := (Z^{(1)})_{\cdot l}-(Z^{(2)})_{\cdot l}$. Then
\[
\|w_l\|^2 = (Z^{(1)})_{\cdot l}\t(Z^{(1)})_{\cdot l}+(Z^{(2)})_{\cdot l}\t(Z^{(2)})_{\cdot l}-2\,(Z^{(1)})_{\cdot l}\t(Z^{(2)})_{\cdot l}
= n({\pi^{(1)}_{l}}+{\pi^{(2)}_{l}}) - 2\,(Z^{(1)})_{\cdot l}\t(Z^{(2)})_{\cdot l},
\]
so that
\(
(Z^{(1)})_{\cdot l}\t(Z^{(2)})_{\cdot l} = \frac{{\pi^{(1)}_{l}}+{\pi^{(2)}_{l}}}{2}\,n - \tfrac12\|w_l\|^2.
\)
Squaring gives the expansion
\[
((Z^{(1)})_{\cdot l}\t(Z^{(2)})_{\cdot l})^2
= \Big(\frac{{\pi^{(1)}_{l}}+{\pi^{(2)}_{l}}}{2}\Big)^2 n^2
- \Big(\frac{{\pi^{(1)}_{l}}+{\pi^{(2)}_{l}}}{2}\Big)n\,\|w_l\|^2
+ \tfrac14\|w_l\|^4.
\]
Substituting this into the expression for $\tr(Q_1Q_2)$ in \cref{trq1q2} and using ${n^{(i)}_{l}}=n{\pi^{(i)}_{l}}$ yields
\begin{align*}
    \tr(Q_1Q_2) &=\sum_{l=1}^k \frac{\Big(\frac{{\pi^{(1)}_{l}}+{\pi^{(2)}_{l}}}{2}\Big)^2}{{\pi^{(1)}_{l}}{\pi^{(2)}_{l}}} + \frac{1}{n^2}\sum_{l=1}^k \frac{
- \Big(\frac{{\pi^{(1)}_{l}}+{\pi^{(2)}_{l}}}{2}\Big)n\,\|w_l\|^2
+ \tfrac14\|w_l\|^4}{{\pi^{(1)}_{l}}{\pi^{(2)}_{l}}} + O \left(\frac{\|Z^{(1)}-Z^{(2)}\|_0^2}{n^2}\right).\\ 
&= k + \sum_{l=1}^{k} \frac{({\pi^{(1)}_{l}}-{\pi^{(2)}_{l}})^2}{4 {\pi^{(1)}_{l}} {\pi^{(2)}_{l}}} + \sum_{l=1}^k \Bigg[
- \frac{{\pi^{(1)}_{l}}+{\pi^{(2)}_{l}}}{2n{\pi^{(1)}_{l}}{\pi^{(2)}_{l}}}\,\|w_l\|^2
+ \frac{1}{4n^2{\pi^{(1)}_{l}}{\pi^{(2)}_{l}}}\,\|w_l\|^4\Bigg] + O \left(\frac{\|Z^{(1)}-Z^{(2)}\|_0^2}{n^2} \right)
\\
&= k
+ \sum_{l=1}^k \Bigg[
- \frac{{\pi^{(1)}_{l}}+{\pi^{(2)}_{l}}}{2n{\pi^{(1)}_{l}}{\pi^{(2)}_{l}}}\,\|w_l\|^2
+ \frac{1}{4n^2{\pi^{(1)}_{l}}{\pi^{(2)}_{l}}}\,\|w_l\|^4\Bigg] + O \left(\frac{\|Z^{(1)}-Z^{(2)}\|_0^2}{n^2} \right).
\end{align*}
The leading $k$–term of the above expression cancels with the $2k$ appearing in \cref{eq:Q1}. Hence
\begin{align}\label{q1mq2}
    \|Q_1-Q_2\|_F^2
&= 2k - 2\tr(Q_1Q_2)\\ \notag
&=
2\sum_{l=1}^k \Bigg[
\frac{{\pi^{(1)}_{l}}+{\pi^{(2)}_{l}}}{2n{\pi^{(1)}_{l}}{\pi^{(2)}_{l}}}\,\|w_l\|^2
- \frac{1}{4n^2{\pi^{(1)}_{l}}{\pi^{(2)}_{l}}}\,\|w_l\|^4
\Bigg] + O \left(\frac{\|Z^{(1)}-Z^{(2)}\|_0^2}{n^2} \right).
\end{align}
Now, if \(\frac{\|Z^{(1)}-Z^{(2)}\|_0^2}{n^2} \asymp 1\), then the condition \(
\max_{1 \leq l \leq k} \sum_{j=1}^{n} | Z_{jl}^{(1)} - Z_{jl}^{(2)} | \gtrsim \frac{k}{n^{1/2}\rho_n}
\) is already satisfied trivially. So, we will assume \(\frac{\|Z^{(1)}-Z^{(2)}\|_0}{n} \ll 1\).
From the definition of \(w_l\), we have
\begin{align*}
    &\|w_l\|^2 \leq n_l^{(1)} + n_l^{(2)} = n{\pi^{(1)}_{l}} + n{\pi^{(2)}_{l}} = n({\pi^{(1)}_{l}} + {\pi^{(2)}_{l}}) 
\end{align*}
which implies that
\begin{align*}
 & \frac{1}{4n^2{\pi^{(1)}_{l}}{\pi^{(2)}_{l}}}\,\|w_l\|^4  \lesssim  \frac{{\pi^{(1)}_{l}}+{\pi^{(2)}_{l}}}{2n{\pi^{(1)}_{l}}{\pi^{(2)}_{l}}}\,\|w_l\|^2
\end{align*}
under the balanced–community assumption that ${\pi^{(i)}_{l}}$ are bounded above and below by constants. Thus, discarding the smaller quartic correction and the big-O term gives 
\[
\|Q_1-Q_2\|_F^2 \asymp \frac{1}{n}\sum_{l=1}^k \|(Z^{(1)})_{\cdot l}-(Z^{(2)})_{\cdot l}\|^2.
\]
Combining this representation with the condition from \cref{thm:testcon},
\(
\|Q_1-Q_2\|_F^2 \gg \frac{k}{n^{3/2}\rho_n},
\)
we obtain the necessary lower bound
\(
\sum_{l=1}^k \|(Z^{(1)})_{\cdot l}-(Z^{(2)})_{\cdot l}\|^2 \gg \frac{k}{n^{1/2}\rho_n},
\)
and in particular since community sizes are balanced,
\[
\max_{1 \leq l \leq k} \sum_{j=1}^{n} \left| Z_{jl}^{(1)} - Z_{jl}^{(2)} \right| \gg \frac{k}{n^{1/2}\rho_n},
\]
and each $\|(Z^{(1)})_{\cdot l}-(Z^{(2)})_{\cdot l}\|^2$ equals twice the Hamming distance for binary indicator vectors. This completes the proof.
\end{proof}

\subsection{Proof of \cref{cor:MMSBM}}

\begin{proof}
Adopting the notation from the statement, for \(i=1,2\) write
\(
Q_i = {V^{(i)}} {V^{(i)}}\t = {Z^{(i)}} ({Z^{(i)}}\t {Z^{(i)}})^{-1} {Z^{(i)}}\t,
\)
where the second equality follows from the mixed–membership representation \({V^{(i)}} = {Z^{(i)}} ({Z^{(i)}}\t {Z^{(i)}})^{-1/2}T\) with \(T\) orthogonal. 

We begin with the decomposition
\begin{equation}
\begin{aligned} \label{MMSBM_a_final}
\|Q_1-Q_2\|_F
&=\|Z^{(1)} ({Z^{(1)}}\t Z^{(1)})^{-1}{Z^{(1)}}\t - Z^{(2)} ({Z^{(2)}}\t Z^{(2)})^{-1}{Z^{(2)}}\t \|_F \\
&\le \|(Z^{(1)}-Z^{(2)})({Z^{(1)}}\t Z^{(1)})^{-1}{Z^{(1)}}\t \|_F
      + \|Z^{(2)}\bigl(({Z^{(1)}}\t Z^{(1)})^{-1}-({Z^{(2)}}\t Z^{(2)})^{-1}\bigr){Z^{(1)}}\t \|_F \\
&\qquad\qquad + \|Z^{(2)}({Z^{(2)}}\t Z^{(2)})^{-1}(Z^{(1)}-Z^{(2)})\t\|_F.
\end{aligned}
\end{equation}
Under the spectral gap hypothesis \(\lambda_{\min}({Z^{(i)}}\t {Z^{(i)}})\ge c\,\frac{n}{k}\), we have
\begin{align}\label{bdneed1}
    \|({Z^{(i)}}\t {Z^{(i)}})^{-1}\| \lesssim \frac{k}{n} \quad \text{and} \quad \|{Z^{(i)}}\|_F \lesssim \sqrt{n}
\end{align}
for \(i=1,2.
\)
Hence the first and third terms on the right hand side of \cref{MMSBM_a_final} are immediately bounded by
\begin{align*}
\|(Z^{(1)}-Z^{(2)})({Z^{(1)}}\t Z^{(1)})^{-1}{Z^{(1)}}\t \|_F &\lesssim \frac{k\,\|Z^{(1)}-Z^{(2)}\|_F}{\sqrt{n}}, \\
\|Z^{(2)}({Z^{(2)}}\t Z^{(2)})^{-1}(Z^{(1)}-Z^{(2)})\t \|_F &\lesssim \frac{k\,\|Z^{(1)}-Z^{(2)}\|_F}{\sqrt{n}}.
\end{align*}
To handle the middle term in \eqref{MMSBM_a_final} set \(A={Z^{(1)}}\t Z^{(1)}\) and \(B={Z^{(2)}}\t Z^{(2)}\). Using the identity \(A^{-1}-B^{-1}=A^{-1}(B-A)B^{-1}\) we obtain
\begin{equation}
\begin{aligned} \label{cross_MMSBM_final}
\|Z^{(2)}(A^{-1}-B^{-1}){Z^{(1)}}\t\|_F
&\le \|Z^{(2)}\|_F\,\|A^{-1}(B-A)B^{-1}\|_F\,\|Z^{(1)}\|_F \\
&\le \|Z^{(2)}\|_F \|A^{-1}\| \|B-A\|_F \|B^{-1}\| \|Z^{(1)}\|_F.
\end{aligned}
\end{equation}
A direct expansion gives
\begin{equation}
\begin{aligned} \label{cross_MMSBM_diff}
\|B-A\|_F &= \|{Z^{(2)}}\t Z^{(2)} - {Z^{(1)}}\t Z^{(1)}\|_F
= \|{Z^{(2)}}\t(Z^{(2)}-Z^{(1)})\|_F + \|(Z^{(2)}-Z^{(1)})\t Z^{(1)}\|_F \\
&\lesssim \|Z^{(2)}\|_F\|Z^{(2)}-Z^{(1)}\|_F + \|Z^{(1)}\|_F\|Z^{(2)}-Z^{(1)}\|_F
\lesssim \sqrt{n}\,\|Z^{(1)}-Z^{(2)}\|_F.
\end{aligned}
\end{equation}
Combining \cref{cross_MMSBM_diff} with \cref{cross_MMSBM_final,bdneed1} yields
\(
\|Z^{(2)}(A^{-1}-B^{-1}){Z^{(1)}}\t\|_F \lesssim \frac{k^2\,\|Z^{(1)}-Z^{(2)}\|_F}{\sqrt{n}}.
\)
Collecting the bounds for the three terms on the right hand side of \cref{MMSBM_a_final} we obtain \(
\|Q_1-Q_2\|_F \lesssim \frac{k^2}{\sqrt{n}}\,\|Z^{(1)}-Z^{(2)}\|_F.
\)
Recalling the signal requirement from \cref{thm:testcon},
\(
\|Q_1-Q_2\|_F^2 \gg \frac{k}{n^{3/2}\rho_n},
\)
and substituting the preceding inequality gives the condition on the membership matrices
\(
\|Z^{(1)}-Z^{(2)}\|_F^2 \gg \frac{1}{n^{1/2}\rho_n},
\)
as claimed.
\end{proof}

\section{Proofs of Lemmas from \cref{sec:mainproof}}
In this section we prove all of the technical results stated in \cref{sec:mainproof}.

\subsection{Proof of \cref{lem:Good Set}} \label{sec:goodsetproof}
\begin{proof}
The probability bound for \(\mathcal{E}_{{\sf good}}\) follows from a direct application of Corollary~3.12 of \citet{bandeira2016sharp}.  
Specifically, for a random matrix \(X=A-P\) with independent, mean-zero entries, we have
\(
\mathbb{P}\left(\|X\|\ge (1+\epsilon)\,2\tilde{\sigma}+t\right)
\le
\exp\left(\log n - \frac{t^2}{c_{\epsilon}^2\,\tilde{\sigma}_*^{\,2}}\right),
\)
where \(\tilde{\sigma}=\max_i\sqrt{\sum_j \E({X_{ij}}^2)}\) and 
\(\tilde{\sigma}_*=\max_{i,j}\|X_{ij}\|_{\infty}\).  
In our setting, since the entries of \(A\) are Bernoulli with mean \(P_{ij} \asymp \rho_n\), we have
\(\tilde{\sigma}\asymp \sqrt{n\rho_n}\) and \(\tilde{\sigma}_*\le 1\).  
Choosing \(t=\sqrt{n\rho_n}\) gives
\(
\mathbb{P}\left(\|X\|\gtrsim \sqrt{n\rho_n}\right)
\le
\exp\left(\log n - \frac{n\rho_n}{C^2}\right),
\)
for some universal constant \(C>0\).  
Under \cref{ass:asymptotic1,ass:sparse}, the right-hand side can be shown as \(O(n^{-19})\) .

The Davis–Kahan perturbation bound on the event \(\mathcal{E}_{{\sf good}}\) follows from Corollary~2.8 of \citet{chen_spectral_2021}.  
Since \(\|X\|\lesssim \sqrt{n\rho_n}\) on \(\mathcal{E}_{{\sf good}}\),  
and by \cref{ass:eigen-scaling} we have
\(\|X\|<(1-\frac{1}{\sqrt{2}})\lambda_k\),  
the eigengap condition required by Davis–Kahan holds.  
Consequently,
\(
\|\hat V\hat V\t - VV\t\|_F^2
\lesssim
\frac{k\,\|X\|^2}{(n\rho_n)^2}.
\)
Substituting \(\|X\|\lesssim \sqrt{n\rho_n}\) yields the desired bound
\(
\big\|\hat V\hat V\t - VV\t\big\|_F^2
\lesssim
\frac{k}{n\rho_n},
\)
which holds on the event \(\mathcal{E}_{{\sf good}}\).
\end{proof}

\subsection{Proof of \cref{lem:beta_perp1one}}\label{sec:betaperp1proof}

\begin{proof}
Define
$$M(X) = X^\top X, \qquad M(Y) = Y^\top Y,$$
where $Y = \beta^{\perp} X$. It therefore suffices to analyze
$$\frac{\Var\big(\|\beta^{\perp} X \beta^{-1}\|_F^2-\|X \beta^{-1}\|_F^2\big)}
     {\Var\big(\|X \beta^{-1}\|_F^2\big)}=\frac{\Var \big( \tr \big( \beta^{-1} (M(Y) - M(X)) \beta^{-1} \big) \big)}
     {\Var \big( \tr \big( \beta^{-1} M(X) \beta^{-1} \big) \big)}.$$
We can write
$$M(X) - M(Y) = X^\top X - X^\top \beta^{\perp} \beta^{\perp} X = X^\top (I - \beta^{\perp}) X = X^\top V V^\top X. $$
Denote the inner matrix by $W = V V^\top$. Then
$$\Var \left( \tr \big( \beta^{-1} (M(Y) - M(X)) \beta^{-1} \big) \right)
= 
\Var \left( \tr \left( V V^\top X \beta^{-2} X^\top \right) \right).$$
Hence, to demonstrate that $\Var\big(\tr(\beta^{-1} (M(Y) - M(X)) \beta^{-1})\big)$ is asymptotically negligible compared to $\Var\big(\tr(\beta^{-1} M(X) \beta^{-1})\big)$, it suffices to establish that
$$\frac{
\Var\left(\tr\left(V V^\top X \beta^{-2} X^\top\right)\right)
}{
\Var\left(\tr\left(X \beta^{-2} X^\top\right)\right)
}
\to 0.$$
The trace appearing in the numerator can be expanded as
$$\tr\big(V V^\top X \beta^{-2} X\big) = \sum_{a,b,c} (V V^\top)_{a,b}\, (\beta^{-2})_{c,c}\, X_{a,c} X_{b,c}.$$
We partition the above sum into diagonal ($a=b$) and off-diagonal ($a \neq b$) terms. For the diagonal terms ($a=b$),
\begin{equation}
\begin{aligned} \label{need02}
    \Var\bigg(\sum_{a,c} (V V^\top)_{a,a} (\beta^{-2})_{c,c} X_{a,c}^2 \bigg) & = \sum_{a,c} (V V^\top)_{a,a}^2 (\beta^{-2})_{c,c}^2 \Var(X_{a,c}^2)\\
    & \asymp \rho_n \sum_{a,c} (V V^\top)_{a,a}^2 (\beta^{-2})_{c,c}^2.
\end{aligned}
\end{equation}
For the off-diagonal terms ($a \neq b$),
\begin{equation}
\begin{aligned} \label{need03}
    \Var\bigg(\sum_{a \neq b, c} (V V^\top)_{a,b} (\beta^{-2})_{c,c} X_{a,c} X_{b,c} \bigg) &= \sum_{a \neq b, c} (V V^\top)_{a,b}^2 (\beta^{-2})_{c,c}^2 \Var(X_{a,c} X_{b,c})\\
    & \asymp \rho_n \sum_{a,c} (V V^\top)_{a,a}^2 (\beta^{-2})_{c,c}^2 + \rho_n^2 \|V V^\top\|_F^2 \|\beta^{-2}\|_F^2.
\end{aligned}
\end{equation}
We compare the above to the variance of the unweighted trace in the denominator. By a perfectly analogous variance expansion for $\tr(X \beta^{-2} X) = \sum_{a,c} (\beta^{-2})_{c,c} X_{a,c}^2$, one obtains
\begin{equation}
\begin{aligned} \label{need04}
    \Var\big(\tr(X \beta^{-2} X^\top)\big) = \sum_{a,c} (\beta^{-2})_{c,c}^2 \Var(X_{a,c}^2) \asymp \rho_n\, n\, \|\beta^{-2}\|_F^2.
\end{aligned}
\end{equation} 
Combining the results from \cref{need02,need03,need04} yields
\begin{equation}
\begin{aligned}
    \frac{\Var\big(\tr(V V^\top X \beta^{-2} X^\top)\big)}
     {\Var\big(\tr(X \beta^{-2} X^\top)\big)}
&\lesssim
\frac{\rho_n \sum_{a} (V V^\top)_{a,a}^2 \|\beta^{-2}\|_F^2 + \rho_n^2 \|V V^\top\|_F^2 \|\beta^{-2}\|_F^2}{\rho_n\, n\, \|\beta^{-2}\|_F^2}\\
&= \frac{\sum_{a} (V V^\top)_{a,a}^2 + \rho_n \|V V^\top\|_F^2}{n}\\
&\lesssim \frac{k(1+\rho_n)}{n} \to 0,
\end{aligned}
\end{equation}
where the last line follows from the fact that \(VV\t\) is a projection matrix and \cref{ass:asymptotic1}. This completes the proof of the lemma.
\end{proof}

\subsection{Proof of \cref{lem:second-order1}}
\label{sec:second-order1proof}

\begin{proof} 
\noindent \textbf{Computing the mean}: The expectation of the second–order term can be written in a compact operator form.  Using linearity of trace and expectation, we have
\begin{equation} \label{eq:mean1}
\begin{aligned}
\mathbb{E}\bigl[\,2\|\beta^{\perp} X \beta^{-1}\|_F^2\bigr]
&= 2\,\mathbb{E}\bigl[\tr(\beta^{-2} X \beta^{\perp} X)\bigr]
= 2\,\tr\bigl[\beta^{-2}\,\mathbb{E}(X \beta^{\perp} X)\bigr].
\end{aligned}
\end{equation}
To evaluate \(\mathbb{E}(X \beta^{\perp} X)\) we use the independence and centering of the entries of \(X\).  Write \(\Sigma:=P\circ(J_n-P)\) for the entrywise variances of the Bernoulli observations.  A straightforward bookkeeping of the contributing index configurations yields the decomposition of \(\mathbb{E}(X \beta^{\perp} X)\) into an off–diagonal Hadamard term and a diagonal correction:
\(
\mathbb{E}(X \beta^{\perp} X) = \Sigma\circ \beta^{\perp} + \mathrm{Diag}(\Sigma \cdot d - \diag(\Sigma) \circ d),\)
where \(d:=\diag(\beta^{\perp}).
\)
Substituting this identity into \cref{eq:mean1} produces the final mean expression
\begin{align*}
\mathbb{E}\bigl[\,2\|\beta^{\perp} X \beta^{-1}\|_F^2\bigr]
&= 2\,\tr\Bigl[\beta^{-2}\bigl(\Sigma\circ \beta^{\perp} + \mathrm{Diag}(\Sigma \cdot d - \diag(\Sigma) \circ d))\bigr)\Bigr] \\
&=2\,\tr\Bigl[\beta^{-2}\bigl(\Sigma\circ\beta^\perp + \mathrm{Diag}(\Sigma\cdot d - \diag(\Sigma) \circ d))\bigr)\Bigr].
\end{align*}
This identity is the deterministic mean contribution that appears in \cref{param:11}.\\ \ \\
\noindent \textbf{Computing the variance:} We now compute the variance of the leading second--order contribution.  
Observe that since $\beta^{\perp} = I - VV\t$, by the cyclic property of trace we have that
\begin{align*}
\|\beta^{\perp}X\beta^{-1}\|_F^2
= \tr(\beta^{-1}X\beta^{\perp}X\beta^{-1}) 
= \tr(X\beta^{-2}X) - \tr(VV\t X\beta^{-2}X).
\end{align*}
From \cref{lem:beta_perp1one}, it suffices to analyze the variance of
\(
2\|X \beta^{-1}\|_F^2 = 2\,\tr(X \beta^{-2} X\t)
= \sum_{i=1}^n 2\,X_{i \cdot}\t \beta^{-2} X_{i \cdot},
\)
For a fixed \(i\), define the quadratic form
\begin{align*}
Q_i := 2\,X_{i \cdot}\t \beta^{-2} X_{i \cdot}
= 4\sum_{s < t} (\beta^{-2})_{st}\,{X_{is}}\,{X_{it}}
+ 2\sum_{s=1}^n (\beta^{-2})_{ss}\,{X_{is}}^2,
\end{align*}
so that \(2\|X \beta^{-1}\|_F^2 = \sum_{i=1}^n Q_i\). The variance then decomposes as
\begin{align}\label{eq:varq1}
\Var\left(2\|X \beta^{-1}\|_F^2\right)
= \sum_{i=1}^n \Var(Q_i) + 2\sum_{i<k} \Cov(Q_i,Q_k).
\end{align}
For the diagonal terms, using independence and the standard fourth moment identity for centered Bernoulli variables, we obtain
\begin{align*}
\Var(Q_i)
= 16 \sum_{s < t} (\beta^{-2})_{st}^2\,\Sigma_{is}\,\Sigma_{it}
+ 4 \sum_{s=1}^n (\beta^{-2})_{ss}^2\,K_{is},
\end{align*} where
\(
K_{is} := \mathbb{E}({X_{is}}^4)-\bigl(\mathbb{E}({X_{is}}^2)\bigr)^2
= 2\,\Sigma_{is}(1-2P_{is})
\)
is the fourth--moment correction term.  

For the off--diagonal contributions, note that \(Q_i\) and \(Q_k\) share only the symmetric variable \(x_{ik}=x_{ki}\). A direct calculation shows
\(
\Cov(Q_i,Q_k) = 4\,K_{ik}\,(\beta^{-2})_{ii}\,(\beta^{-2})_{kk}.
\)
Combining the two components and rewriting the double sums from \cref{eq:varq1} yields 
\begin{align*}
\Var\left(2\|X \beta^{-1}\|_F^2\right) &=\sum_{i=1}^{n} \left( 16 \sum_{s < t} (\beta^{-2})_{st}^2\,\Sigma_{is}\,\Sigma_{it}
+ 4 \sum_{s=1}^n (\beta^{-2})_{ss}^2\,K_{is} \right) +  8 \sum_{i<k} \,K_{ik}\,(\beta^{-2})_{ii}\,(\beta^{-2})_{kk}\\
&=8\sum_{i=1}^{n} \left( \sum_{s \ne t} (\beta^{-2})_{st}^2\,\Sigma_{is}\,\Sigma_{it} \right)
+ \left( 4 \sum_{s=1}^n (\beta^{-2})_{ss}^2\,K_{is} +  4 \sum_{i \ne k} \,K_{ik}\,(\beta^{-2})_{ii}\,(\beta^{-2})_{kk} \right)\\
&= 8\big\langle (J_n-I_n)\circ(\beta^{-2}\circ\beta^{-2}),\,\Sigma^2 \big\rangle
+ 4\,\alpha\t K \alpha.
\end{align*}
This expression coincides with the variance term in \cref{param:12}.

Finally, we compute the asymptotic order of the variance obtained above.  
From \cref{ass:asymptotic1,ass:eigen-scaling,ass:incoherence}, we have 
\begin{equation}
\begin{aligned}\label{newres001}
    \sum_{i \neq j} (\beta^{-2})_{ij}^2 &= \|\beta^{-2}\|_F^2 - \sum_{i=1}^{n}(\beta^{-2})_{ii}^2\\
    &\gtrsim \tr(\beta^{-4}) - \max_{i,j} (\beta^{-2})_{ij} \tr(\beta^{-2})\\
    &\gtrsim \frac{k}{n^4 \rho_n^4}.
\end{aligned}
\end{equation}
Similarly it can be shown that 
\begin{equation}
\begin{aligned}\label{newres002}
    \sum_{i \neq j} (\beta^{-2})_{ij}^2 \lesssim \frac{k}{n^4 \rho_n^4}
\end{aligned}
\end{equation}
which combined with  \cref{newres001} yields
\begin{equation}\label{newres01}
     \sum_{i\neq j} (\beta^{-2})_{ij}^2 \asymp \frac{k}{n^4 \rho_n^4}.
\end{equation}
Therefore it follows that
\begin{align*}
    8\big\langle (J_n-I_n)\circ(\beta^{-2}\circ\beta^{-2}),\,\Sigma^2 \big\rangle &= 8 \sum_{i\neq j} (\beta^{-2})_{ij}^2 (\Sigma^2)_{ij}\asymp n \rho_n^2 \cdot \frac{k}{n^4 \rho_n^4}\asymp \frac{k}{n^3 \rho_n^2},
\end{align*}
and
\begin{align*}
    \alpha\t K \alpha &= \rho_n (\sum_i \alpha_i)^2 =\rho_n \tr(\beta^{-2})^2= \frac{k^2}{n^4 \rho_n^3}.
\end{align*}
Thus, it follows that
\(
\Var\left(2\|X \beta^{-1}\|_F^{2}\right)
\asymp
\frac{k^{2}}{n^{3}\rho_n^{2}}.
\)
Hence, by \cref{lem:beta_perp1one}, we obtain
\[
\Var\left(2\|\beta^{\perp} X \beta^{-1}\|_F^{2}\right)
=
8\big\langle (J_n - I_n)\circ(\beta^{-2}\circ \beta^{-2}),\,\Sigma^{2}\big\rangle
+
4\,\alpha\t K \alpha
+
o\left(\frac{k^2}{n^{3}\rho_n^{2}}\right).
\]
\end{proof}

\subsection{Proof of \cref{lem:second-order-clt1}} \label{sec:second-orderclt1proof}
\begin{proof} 
Suppose that
\(
\|\beta^{\perp} X \beta^{-1}\|_{F}^{2}
=
\|X\beta^{-1}\|_{F}^{2} + R.
\)
By \cref{lem:beta_perp1one}, we have
\(
\Var(R) / \Var(T_{1}^{(S)}) \to 0,
\)
and consequently
\(
(R - \E R)/\sqrt{\Var(T_{1}^{(S)})} \to 0
\)
in probability. Hence the contribution of \(R\) is asymptotically negligible.  
Therefore, in establishing the limiting distribution of
\(
\|\beta^{\perp} X \beta^{-1}\|_{F}^{2},
\)
it suffices to derive the asymptotic distribution of
\(
\|X\beta^{-1}\|_{F}^{2}
\)
after appropriate centering and scaling.  
The desired result then follows by a direct application of Slutsky’s theorem.
For notational convenience set \(U=\beta^{-1}\).  Then
\(
\|X\beta^{-1}\|_F^2=\tr(\beta^{-1}X^2\beta^{-1})=\tr(UX^2U),
\)
and hence
\(
n\tr(UX^2U)
= n \sum_{j=1}^{n} \sum_{1 \leq i,r,q \leq n} {X_{ri}} {X_{qi}} U_{jr} U_{jq}.
\)
Following the decomposition strategy employed in the proof of Lemma 2 in \citet{fan_asymptotic_2022}, the right–hand side can be reorganized into the form used for a martingale central limit theorem (CLT).  Concretely, 
\begin{equation} \label{eq:martingale-decomp}
\begin{aligned}
\sum_{j=1}^{n} \sum_{1 \leq i,r,q \leq n} n\,{X_{ri}} {X_{qi}} U_{jr} U_{jq}
&= \sum_{j=1}^{n} \sum_{1 \leq r < i \leq n} 2 {X_{ri}} \Bigl( n\,U_{jr} \sum_{1 \leq q < r} {X_{iq}} U_{jq} + n\,U_{ji} \sum_{1 \leq q < i} {X_{rq}} U_{jq} \Bigr) \\
&\quad + \sum_{j=1}^{n} \sum_{1 \leq q < r \leq n} 2n \,{X_{rr}} {X_{rq}} \,U_{jq} U_{jr} \\
&\quad + \sum_{j=1}^{n} \sum_{1 \leq r < i \leq n} {X_{ri}}^2 \bigl(n\,U_{jr}^2 + n\,U_{ji}^2\bigr) + \sum_{j=1}^{n} \sum_{r = 1}^{n} {X_{rr}}^2 \,n\,U_{jr}^2 \\
&=  \sum_{1 \leq r < i \leq n} 2 {X_{ri}} \Bigl( \sum_{1 \leq q < r} {X_{iq}} \gamma_{rq} + \sum_{1 \leq q < i} {X_{rq}} \gamma_{iq} \Bigr) \\
&\quad + \sum_{1 \leq q < r \leq n} 2 \gamma_{rq} {X_{rr}} {X_{rq}} + \sum_{1 \leq r < i \leq n} {X_{ri}}^2 (\gamma_{rr} + \gamma_{ii}) + \sum_{r = 1}^{n} {X_{rr}}^2 \gamma_{rr},
\end{aligned}
\end{equation}
where we have set
\(
\gamma_{r,q} = \sum_{j=1}^{n} n\, U_{jr} U_{jq}.
\) After centering we obtain
\begin{equation} \label{eq:centered}
\begin{aligned}
\sum_{j=1}^{k} U_{j \cdot}\t (X^2 - \mathbb{E}X^2) U_{j \cdot}
&= \sum_{1 \leq r < i \leq n} 2 {X_{ri}} \Bigl( \sum_{1 \leq q < r} {X_{iq}} \gamma_{rq} + \sum_{1 \leq q < i} {X_{rq}} \gamma_{iq} \Bigr)  + \sum_{1 \leq q < r \leq n} 2 \gamma_{rq} {X_{rr}} {X_{rq}}  \\
&+ \sum_{1 \leq r < i \leq n} ({X_{ri}}^2-\nu_{ri}^2) (\gamma_{rr} + \gamma_{ii})  + \sum_{r = 1}^{n} ({X_{rr}}^2 - \sigma_{rr}^2)\,\gamma_{rr}.
\end{aligned}
\end{equation}
To set up a martingale difference array, we introduce the triangular filtration
\[
\mathscr{F}_t := \sigma(w_1, \ldots, w_t), 
\qquad t = k + \tfrac{l(l-1)}{2},\quad 1\le k\le l\le n,
\]
so that \(\mathscr{F}_t\) records the entries \({X_{ij}}\) revealed up to position \(t\) in the standard lexicographic ordering of the upper–triangular matrix.  Under this filtration \cref{eq:centered} can be written as a sum of martingale differences.  Each summand (the increment associated with the revealing of the entry \({X_{ri}}\)) admits the decomposition
\[
{X_{ri}} b_{ri} + ({X_{ri}}^2 - \nu_{ri}^2)\, c_{ri},
\]
with coefficients given by
\begin{equation} \label{eq:coefficients}
\begin{aligned}
b_{ri} &= 
\begin{cases}
2 \displaystyle\Bigl( \sum_{1 \leq q < r} {X_{iq}} \gamma_{rq} 
      + \sum_{1 \leq q < i} {X_{rq}} \gamma_{iq} \Bigr), & r<i, \\[8pt]
\displaystyle 2\sum_{1 \leq q < r}  {X_{rq}} \gamma_{rq} , & i=r ,
\end{cases} \\[10pt]
c_{ri} &= 
\begin{cases}
\gamma_{rr} + \gamma_{ii}, & r<i, \\[6pt]
\gamma_{rr}, & i=r .
\end{cases}
\end{aligned}
\end{equation}
The conditional variance (given the $\sigma$–field just prior to revealing \({X_{ri}}\)) takes the explicit form
\begin{equation} \label{eq:cond-var}
\sum_{1 \leq r \leq i \leq n} \nu_{ri}^2 b_{ri}^2 
+ 2 \sum_{1 \leq r \leq i \leq n} \theta_{ri} b_{ri} c_{ri} 
+ \sum_{1 \leq r \leq i \leq n} \kappa_{ri} c_{ri}^2,
\end{equation}
where we adopt the notation
\(
\nu_{ri}^2 = \E[{X_{ri}}^2],  
\theta_{ri} = \E[{X_{ri}}^3], 
\kappa_{ri} = \E\bigl[({X_{ri}}^2-\nu_{ri}^2)^2\bigr].
\)

\medskip

The verification of the martingale CLT requires control of the first four moments of the coefficients \(b_{ri}\) and the second and fourth moments of the conditional variance of \cref{eq:cond-var}.  We proceed to record the necessary moment estimates.  Using the definitions above, one checks that the leading order behavior of the coefficients is captured by the following asymptotic relations (the algebraic manipulations are elementary and follow from counting the contributing index configurations).

We calculate the necessary moment estimates. First, note that \(\E[b_{ri}]=0\) for all \(r,i\). Also, under \cref{ass:asymptotic1,ass:eigen-scaling,ass:incoherence}, just like \cref{newres01} we can show that 
\begin{align}\label{newres02}
    \sum_{i \neq j}\gamma_{ij}^2 \asymp \frac{k}{n^2 \rho_n^4}, \quad \gamma_{ij}^2 \lesssim \frac{k}{n^4 \rho_n^4}
\end{align} 
Thus, 
\[
\sum_{r\le i}\E[b_{ri}^2]
\asymp
\sum_{r \le i}\sum_{1\le q<r}\gamma_{rq}^2\,\nu_{iq}^2
+ \sum_{r \le i}
\sum_{1\le q<i}\gamma_{iq}^2\,\sigma_{rq}^2
\asymp \frac{k}{n \rho_n^2}
\]
where \(\gamma_{rq}^2=(\sum_{r=1}^{n} n U_{jr} U_{jq})^2 \). 
Apart from that, we also have \(\nu_{ri}^2, \, \theta_{ri}, \kappa_{ri} \asymp \rho_n\).

The fourth moment of \(b_{ri}\) admits the expansion
\[
\begin{aligned}
\E[b_{ri}^4]
&\lesssim
\sum_{1\le q<r}\gamma_{rq}^4\,\E[{X_{iq}}^4]
+
\sum_{1\le q<i}\gamma_{iq}^4\,\E[{X_{rq}}^4] \\
&\quad
{}+
\sum_{\substack{q_1\neq q_2\\1\le q_1,q_2<r}}
\gamma_{rq_1}^2\gamma_{rq_2}^2\,
\E[{X_{iq_1}}^2]\,\E[{X_{iq_2}}^2]
+
\sum_{\substack{q_1\neq q_2\\1\le q_1,q_2<i}}
\gamma_{iq_1}^2\gamma_{iq_2}^2\,
\E[{X_{rq_1}}^2]\,\E[{X_{rq_2}}^2] \\
&\lesssim \frac{k^2}{n^6 \rho_n^6} 
\end{aligned}
\]
Finally, for the \(c_{ri}\) terms we have the simpler bounds:
\begin{align}\label{newc}
\sum_{1 \leq r \leq i \leq n} \kappa_{ri} c_{ri}^2 &\lesssim \rho_n \left[n \sum_{i} \gamma_{ii}^2 + \left(\sum_{i=1}^{n}\gamma_{ii}\right)^2\right]\\ \notag
&\lesssim \frac{k^2}{n^2 \rho_n^3}\\
&=o \left(\frac{k^2}{n \rho_n^2} \right) \notag
\end{align}
With these moment estimates in hand we evaluate the first two moments of the conditional variance of \cref{eq:cond-var}.  Denote by \(s_V^2\) the expectation of the conditional variance and by \(\kappa_V^2\) its variance.  A straightforward aggregation over indices yields the asymptotic relations
\begin{align*}
s_V^2 
&= \E\Big(\sum_{1 \leq r \leq i \leq n}\nu_{ri}^2 b_{ri}^2 + 2 \sum_{1 \leq r \leq i \leq n} \theta_{ri} b_{ri} c_{ri} + \sum_{1 \leq r \leq i \leq n} \kappa_{ri} c_{ri}^2\Big)\\ 
&\asymp \frac{k}{n \rho_n^2},
\end{align*}
and
\begin{align*}
\kappa_V
&= \Var\Big(\sum_{1 \leq r \leq i \leq n}\nu_{ri}^2 b_{ri}^2 + 2 \sum_{1 \leq r \leq i \leq n} \theta_{ri} b_{ri} c_{ri} + \sum_{1 \leq r \leq i \leq n} \kappa_{ri} c_{ri}^2\Big) \\[4pt]
&\lesssim \sum_{1 \leq r \leq i \leq n}\Var(\nu_{ri}^2 b_{ri}^2)
      + \sum_{1 \leq r \leq i \leq n}\Var(2\theta_{ri}b_{ri}c_{ri})
      \lesssim \frac{k^2}{n^4 \rho_n^4}.
\end{align*}
Combining the above estimates we obtain the Lyapunov–type ratio used to verify the martingale CLT:
\[
\frac{s_V}{\kappa_V^{1/4}} 
= \frac{n \rho_n}{\sqrt n  \rho_n} \gg 1.
\]
Thus the Lyapunov type condition as used in Lemma 9.12 of \citet{bai2010spectral} is satisfied.

Therefore the martingale central limit theorem applies and we conclude that the appropriately centered and scaled second–order term converges in distribution to a standard normal:
\begin{equation}
\frac{
  2\bigl\|\beta^{\perp} X \beta^{-1}\bigr\|_F^2
  - \,\mathbb{E}\left[\,2\bigl\|\beta^{\perp} X \beta^{-1}\bigr\|_F^2\right]
}{
  \sqrt{\Var\left( 
    2\bigl\|\beta^{\perp} X \beta^{-1}\bigr\|_F^2
  \right)}
}
\xrightarrow{D} \mathcal{N}(0,1).
\end{equation}
\end{proof}

\subsection{Proof of \cref{lem:3rdmean1}}\label{sec:3rd}
\begin{proof}
We first decompose \(\langle VV\t,\,S_{3}(X)\rangle\):
\begin{align*}
\E\left(\langle VV\t,S_3(X)\rangle\right)
&= \E\left(\tr\big(\beta^{-1} X \beta^{\perp} X \beta^{-1} X \beta^{-1}\big)\right)
+\E\left(\tr\big(\beta^{-1} X \beta^{-1} X \beta^{\perp} X \beta^{-1}\big)\right) \notag\\
&\quad
-\E\left(\tr\big(\beta^{-2} X \beta^{\perp} X \beta^{\perp} X \beta^{-1}\big)\right)
-\E\left(\tr\big(\beta^{-1} X \beta^{\perp} X \beta^{\perp} X \beta^{-2}\big)\right).
\end{align*}
We compute the orders of 
\(\E\left(\tr(\beta^{-1} X \beta^{\perp} X \beta^{-1} X \beta^{-1})\right)\)  
and  
\(\E\left(\tr(\beta^{-2} X \beta^{\perp} X \beta^{\perp} X \beta^{-1})\right)\);  
the remaining two terms follow identically. Note that

\begin{equation}
\begin{aligned}
\label{eq:mean31}
\bigg|\E\left(\tr\big(\beta^{-1} X \beta^{\perp} X \beta^{-1} X \beta^{-1}\big)\right)\bigg|
&= \E \bigg|\sum_{i,j}
(\beta^{-2})_{ji}\,(\beta^{\perp})_{ji}\,(\beta^{-1})_{ji}\,(X_{ij})^{3}\bigg| \\[2pt]
&=O_p \left( \frac{k^{2}}{n^{4}\rho_{n}^{2}}\right),
\end{aligned}
\end{equation}
where the final line follows from
\cref{ass:asymptotic1,ass:eigen-scaling,ass:incoherence}. Also, we have

\begin{equation}
\begin{aligned}
\label{eq:mean32}
\bigg|\E\left(\tr\big(\beta^{-2} X \beta^{\perp} X \beta^{\perp} X \beta^{-1}\big)\right)\bigg|
&= \E \bigg|\sum_{i,j}
(\beta^{-3})_{ji}\,(\beta^{\perp})_{ji}^{2}\,(X_{ij})^{3}\bigg| \\[2pt]
&= O_p \left( \frac{k}{n^{3}\rho_{n}^{2}}\right).
\end{aligned}
\end{equation}

Combining \cref{eq:mean31,eq:mean32} establishes the claim and completes the proof of \cref{lem:3rdmean1}.
\end{proof}

\subsection{Proof of \cref{lem:3rdvar1}}\label{sec:3rdproof}

\begin{proof}
To evaluate the order of \(\Var\!\big(\langle VV^{\top}, S_3(X)\rangle\big)\), it suffices to show that
\[
\frac{\Var\!\big(\langle VV^{\top}, S_3(X)\rangle\big)}
{\Var\!\left(\tr\!\big(\beta^{-3} X^{3}\big)\right)}
\asymp 1 
\] and calculate the order of \(\Var\!\left(\tr\!\big(\beta^{-3} X^{3}\big)\right).\)
We proceed by comparing cyclically reduced representatives. Define
\[
R_1 := \tr\!\big(\beta^{-3} X^{3}\big),
\qquad
R_2 := \tr\!\big(\beta^{-3} X(\beta^{\perp}X)^{2}\big).
\]
We first establish that
\(
\frac{\Var(R_1)}{\Var(R_2)} \to  1 .
\)
To this end, insert the decomposition \(I = VV\t + \beta^{\perp}\) in each interior gap of \(R_1\) to obtain 
\begin{align} \label{R1}
    R_1
= \sum_{\sigma \in \{VV\t,\,\beta^{\perp}\}^{\,2}}
\tr\big(\beta^{-3} X M_{\sigma,1} X M_{\sigma,2} X\big),
\end{align}
where each \(M_{\sigma,r}\in\{VV\t,\beta^{\perp}\}\).  
The unique summand for which every \(M_{\sigma,r}=\beta^{\perp}\) is exactly \(R_2\); every other summand (henceforth a \emph{remainder}) contains at least one factor \(VV\t\).  
Since the total number of summands is finite, it suffices to prove that each remainder has variance \(o(\Var(R_2))\); the claim then follows by finite summation.

Fix a remainder monomial
\(
\mathcal{M} := \tr\big(\beta^{-3} X M_{1} X M_{2} X\big).
\)
Expanding the trace yields
\begin{equation}
\begin{aligned}
\label{eq:monomial1}
\mathcal{M}
&=
\sum_{i_{1},i_{2},i_{3} \atop j_{1},j_{2},j_{3}}
(\beta^{-3})_{j_{1},i_{1}}\,
X_{i_{1},j_{2}}\,
(M_{1})_{j_{2},i_{2}}\,
X_{i_{2},j_{3}}\,
(M_{2})_{j_{3},i_{3}}\,
X_{i_{3},j_{1}} \\
&=
\sum_{i_{1},i_{2},i_{3} \atop j_{1},j_{2},j_{3}}
w(\mathbf{i},\mathbf{j})\,
X_{i_{1},j_{2}}\, X_{i_{2},j_{3}}\, X_{i_{3},j_{1}},
\end{aligned}
\end{equation}
where
\(
w(\mathbf{i},\mathbf{j})
:= (\beta^{-3})_{j_{1},i_{1}} (M_{1})_{j_{2},i_{2}} (M_{2})_{j_{3},i_{3}}.
\)
In order for
\[
\Cov\big( w(\mathbf{i},\mathbf{j})\,
X_{i_{1},j_{2}}\, X_{i_{2},j_{3}}\, X_{i_{3},j_{1}}, 
w(\mathbf{k},\mathbf{l})\,
X_{k_{1},l_{2}}\, X_{k_{2},l_{3}}\, X_{k_{3},l_{1}} \big) \neq 0 \]
to hold, the underlying edge configurations must coincide. Without loss of generality, the edge \((k_{2},l_{3})\) must coincide with \((i_{2},j_{3})\), and the subgraph \(\{(k_{1},l_{2}), (k_{3},l_{1})\}\) must coincide with \(\{(i_{1},j_{2}), (i_{3},j_{1})\}\). Alternatively, the edge \((k_{1},l_{2})\) is identified with \((i_{1},j_{2})\), or \((k_{3},l_{1})\) with \((i_{3},j_{1})\), with the corresponding subgraph matching accordingly. Consequently, these constraints imply that the index tuples \((\mathbf{k},\mathbf{l})\) agree with \((\mathbf{i},\mathbf{j})\) up to the natural symmetries of the configuration, a relationship we denote by \((\mathbf{k},\mathbf{l}) \sim (\mathbf{i},\mathbf{j})\). Thus, we obtain
\[
\Var(\mathcal{M})
\asymp
\sum_{(\mathbf{k},\mathbf{l}) \sim (\mathbf{i},\mathbf{j})}
w(\mathbf{i},\mathbf{j}) w(\mathbf{k},\mathbf{l})\, \rho_{n}^{3}.
\]
Define the quantity
\[
\Sigma(\mathcal{M})
:= \sum_{(\mathbf{k},\mathbf{l}) \sim (\mathbf{i},\mathbf{j})}
w(\mathbf{i},\mathbf{j}) w(\mathbf{k},\mathbf{l})
= \sum_{(\mathbf{k},\mathbf{l}) \sim (\mathbf{i},\mathbf{j})}
\Big(
(\beta^{-3})_{j_{1},i_{1}}\,
(M_{1})_{j_{1},i_{2}}\,
(M_{2})_{j_{2},i_{3}}
\Big) \Big(
(\beta^{-3})_{l_{1},k_{1}}\,
(M_{1})_{l_{1},k_{2}}\,
(M_{2})_{l_{2},k_{3}}
\Big).
\]
Invoking \cref{ass:eigen-scaling,ass:asymptotic1}, we find that
\[
\Sigma(\mathcal{M})
\lesssim \frac{k^2}{n^8 \rho_n^6}\sum_{(\mathbf{k},\mathbf{l}) \sim (\mathbf{i},\mathbf{j})}
\Big(
(M_{1})_{j_{1},i_{2}}\,
(M_{2})_{j_{2},i_{3}}
\Big) \Big(
(M_{1})_{l_{1},k_{2}}\,
(M_{2})_{l_{2},k_{3}}
\Big).
\]
Consider the configuration where $M_1 = V V^\top$ and $M_2 = \beta^\perp$. Evaluating $\Sigma(\mathcal{M})$ requires summing over six structural indices: $i_1, j_1, i_2, j_2, i_3, \text{ and } j_3$. However, because $M_2 = \beta^\perp$, it enforces an index collapse (specifically, $j_2 = i_3$). This structural constraint reduces the effective degrees of freedom in the summation from six to five, yielding a factor of $O(n^5)$. Coupling this with the entrywise bound of $(k/n)^2$ for $M_1$, guaranteed by the incoherence condition in \cref{ass:incoherence}, we directly obtain the upper bound:
$$
\Sigma(\mathcal{M}) \lesssim \frac{k^2}{n^8 \rho_n^6}\cdot n^5 \cdot \left( \frac{k}{n} \right)^2 \lesssim \frac{k^4}{n^5 \rho_n^6}.
$$
Similarly, if \(M_1=M_2=V V\t\), we have $$
\Sigma(\mathcal{M}) \lesssim \frac{k^2}{n^8 \rho_n^6}\cdot n^6 \cdot \left( \frac{k}{n} \right)^4 \lesssim \frac{k^6}{n^6 \rho_n^6}.
$$ 
Thus, it follows that whenever at least one of \(M_{1}\) or \(M_{2}\) equals \(VV\t\), the variance of the monomial satisfies 
\begin{align} \label{mono1}
    \Var(\mathcal{M}) = o(\Var(R_2)).
\end{align}
By the same covariance calculation over matching edge configurations, under \cref{ass:eigen-scaling,ass:incoherence,ass:asymptotic1} we have
\begin{equation}
\begin{aligned}\label{ord1}
    \Var(R_1) &= \Var( \tr(\beta^{-3}X^3))\\
    &\asymp n^2 \rho_n^3 \|\beta^{-3}\|_F^2 \\
    &\asymp\frac{k^2}{n^4 \rho_n^3}
\end{aligned}
\end{equation}
Combining \cref{R1,ord1} with 
\cref{mono1} establishes the claim that
\begin{align} \label{later01}
    \frac{\Var(R_1)}{\Var(R_2)} = \frac{\Var (\sum_{\sigma \in \{VV\t,\,\beta^{\perp}\}^{\,2}}
\tr\big(\beta^{-3} X M_{\sigma,1} X M_{\sigma,2} X\big))}{\Var(R_2)} = \frac{\Var(R_2) + o(\Var(R_2))}{\Var(R_2)}\to 1.
\end{align}
In exactly the same manner, we can show that
\[
\frac{\Var\left(\tr \left(\beta^{-t} X^t \beta^{-(3-t)} X^{3-t}\right) \right)}
     {\Var\left(\tr \left( \beta^{-t} X (\beta^{\perp}X)^{t-1}\beta^{-(3-t)}X(\beta^{\perp}X)^{3-t-1}\right) \right)}
\to 1
\] for \(t=1,2\) and \[
\frac{\Var\left(\tr \left(\beta^{-1} X \beta^{-1} X \beta^{-1} X\right) \right)}
     {\Var\left(\tr \left(\beta^{-1} X \beta^{-1} (\beta^\perp X) \beta^{-1} (\beta^\perp X)\right) \right)}
\to 1.
\]
To establish that the dominant variance contributions in  
\(\langle VV\t,\, S_{3}(X)\rangle\) arise from terms of the form  
\(
\tr\big(\beta^{-s_{1}} X\,\beta^{\perp} X \cdots \beta^{\perp}X\,\beta^{-s_{4}}\big),
\)
it therefore suffices to show that for \(t =1,2\),
\[
\frac{\Var\left(\tr\left(\beta^{-t} X^{t}\, \beta^{-(3-t)} X^{3-t}\right)\right)}
     {\Var\left(\tr\left(\beta^{-3} X^{3}\right)\right)}
\to 0,
\qquad
\frac{\Var\left(\tr\left(\beta^{-1} X \beta^{-1} X \beta^{-1} X\right)\right)}
     {\Var\left(\tr\left(\beta^{-3} X^{3}\right)\right)}
\to 0.
\]
The proof technique for both ratios is identical; hence we concentrate on the first.

Write the numerator trace as  
\(
R_{1,\mathrm{num}}
:= \tr\big(\beta^{-t} X^{t}\, \beta^{-(3-t)} X^{3-t}\big).
\)
Proceeding as in the preceding argument, we obtain
\begin{align*}
\Var(R_{1,\mathrm{num}})
&\lesssim
\rho_{n}^{3}
\sum_{i_{1}, i_{2} \atop j_{1}, j_{2}}
\left[
(\beta^{-t})_{i_{1},j_{1}}\,
(\beta^{-(3-t)})_{i_{2},j_{2}}
\right]^{2} \\
&\lesssim
\rho_{n}^{3}
\sum_{i_{1}, i_{2} \atop j_{1}, j_{2}}
\frac{k^{4}}{n^{10}\rho_{n}^{6}}
\lesssim
\frac{k^{4}}{n^{6}\rho_{n}^{3}}.
\end{align*}
Using \cref{mono1,ord1} we will therefore have
\[
\frac{
\Var\left(\tr\left(\beta^{-t} X^{t}\, \beta^{-(3-t)} X^{3-t}\right)\right)
}{
\Var\left(\tr\left(\beta^{-3} X^{3}\right)\right)
} = \frac{\frac{k^{4}}{n^{6}\rho_{n}^{3}}}{\frac{k^{2}}{n^{4}\rho_{n}^{3}}}
\to 0,
\]
as required. This establishes that the dominant contributions to
\(
\Var\!\big(\langle VV^{\top}, S_3(X)\rangle\big)
\)
arise precisely from summands of the form
\(
\tr\big(\beta^{-s_{1}} X\,\beta^{\perp} X \cdots \beta^{\perp}X\,\beta^{-s_{4}}\big)\).
In particular, we have shown that
\[
\frac{\Var\!\big(\langle VV^{\top}, S_3(X)\rangle\big)}
{\Var\!\left(\tr\!\big(\beta^{-3} X^{3}\big)\right)}
\asymp 1 .
\]
Combining this with the variance order in \cref{ord1} completes the proof of the lemma.
\end{proof}

\section{Proof of Lemmas from \cref{sec:mainthm2}}
\label{sec:mainthm2lemmasproof}
In this section we prove the auxiliary lemmas required for the proof of \cref{thm:twosample} in \cref{sec:mainthm2}.

\subsection{Proof of \cref{lem:beta_perp1two}} \label{sec:betaperp1proof2}
\begin{proof}
Define
\[
M(X) = 
\begin{pmatrix}
{X^{(1)}}\t {X^{(1)}} & -{X^{(1)}}\t {X^{(2)}} \\
- {X^{(2)}}\t {X^{(1)}} & {X^{(2)}}\t {X^{(2)}}
\end{pmatrix},
\qquad
M(Y) =
\begin{pmatrix}
(Y^{(1)})\t (Y^{(1)}) & -(Y^{(1)})\t (Y^{(2)}) \\
- (Y^{(2)})\t (Y^{(1)}) & (Y^{(2)})\t (Y^{(2)})
\end{pmatrix}.
\]
It therefore suffices to analyze
\begin{align*}
\frac{\Var \left( 2\, \tr \big( U (M(Y) - M(X)) U\t \big) \right)}
     {\Var \left( 2\, \tr \big( U M(X) U\t \big) \right)}.
\end{align*}
Observe that
\begin{align} \label{Wijdef}
    M(X) - M(Y) = 
    \begin{pmatrix}
        {X^{(1)}}\t W_{11} {X^{(1)}} & {X^{(1)}}\t W_{12} {X^{(2)}} \\
        {X^{(2)}}\t W_{21} {X^{(1)}} & {X^{(2)}}\t W_{22} {X^{(2)}}
    \end{pmatrix}.
\end{align}
Denote the $(i,j)$-th block of $M(X) - M(Y)$ by ${X^{(i)}}\t W_{ij} {X^{(j)}}$, where
\begin{equation}
\begin{aligned}\label{note04}
    W_{11} &= {V^{(1)}} {V^{(1)}}\t, \\
    W_{12} = W_{21} &= -({V^{(1)}} {V^{(1)}}\t + {V^{(2)}} {V^{(2)}}\t - {V^{(1)}} {V^{(1)}}\t {V^{(2)}} {V^{(2)}}\t), \\
    W_{22} &= {V^{(2)}} {V^{(2)}}\t.
\end{aligned}
\end{equation}
Then,
\begin{align*}
\Var \left( \tr \big( U (M(Y) - M(X)) U\t \big) \right)
= 
\Var \left( 
\sum_{i,j=1}^{2}
\tr\left( W_{ij} {X^{(j)}} \beta_j^{-1} \beta_i^{-1} {X^{(i)}}\t \right)
\right).
\end{align*}
Hence, to demonstrate that 
\(\Var\big(\tr(U (M(Y) - M(X)) U\t)\big)\) 
is asymptotically negligible compared to 
\(\Var\big(\tr(U M(X) U\t)\big)\),
it suffices to consider any fixed pair \((i,j)\) and establish that
\begin{align} \label{need01}
    \frac{
\Var\left(\tr\left(W_{ij} {X^{(j)}} \beta_j^{-1} \beta_i^{-1} {X^{(i)}}\t\right)\right)
}{
\Var\left(\tr\left({X^{(j)}} \beta_j^{-1} \beta_i^{-1} {X^{(i)}}\t\right)\right)
}
\to 0.
\end{align}
By conditioning on \({X^{(i)}}\) (recall \({X^{(j)}}\) is independent of \({X^{(i)}}\) and has mean zero) we have
\begin{align*}
\Var\big(\tr\big(W_{ij} {X^{(j)}} \beta_j^{-1}\beta_i^{-1} {X^{(i)}}\t\big)\big)
=\mathbb{E}\Big[\Var\big(\tr\big(W_{ij} {X^{(j)}} \beta_j^{-1}\beta_i^{-1} {X^{(i)}}\t\big)\bigm| {X^{(i)}}\big)\Big].
\end{align*}
Writing the trace in entrywise form yields
\(\tr\big(W_{ij} {X^{(j)}} \beta_j^{-1}\beta_i^{-1} {X^{(i)}}\t\big)
=\sum_{b,c} ({X^{(j)}})_{b,c} s_{b c}\big(({X^{(i)}})\big),\)
where, for each pair \((b,c)\),
\(
s_{b c}\big(({X^{(i)}})\big)
:=\sum_{a,d} W_{a b}\,(\beta_j^{-1}\beta_i^{-1})_{c d}\,({X^{(i)}})_{a,d}.
\)
By independence (up to symmetry), there exist positive constants \(C_1,C_2\) (independent of \(n\)) such that for every fixed realization of \({X^{(i)}}\),
\begin{equation}\label{eq:cond-var-bounds}
C_1\sum_{b,c}\Var\big(({X^{(j)}})_{b,c}\big)\,s_{b c}\big(({X^{(i)}})\big)^2
\le
\Var\big(\tr(W_{ij} {X^{(j)}} \beta_j^{-1}\beta_i^{-1} {X^{(i)}}\t)\bigm|{X^{(i)}}\big)
\le
C_2\sum_{b,c}\Var\big(({X^{(j)}})_{b,c}\big)\,s_{b c}\big(({X^{(i)}})\big)^2.
\end{equation}
Taking expectation with respect to \({X^{(i)}}\), and using \(\Var(({X^{(j)}})_{b,c})\asymp \rho_n\) uniformly in \((b,c)\), we obtain
\begin{equation}\label{eq:var-W-sum}
\Var\big(\tr(W_{ij} {X^{(j)}} \beta_j^{-1}\beta_i^{-1} {X^{(i)}}\t)\big)
\asymp
\rho_n \sum_{b,c}\mathbb{E}\big[s_{b c}\big(({X^{(i)}})\big)^2\big].
\end{equation}
We now expand \(\mathbb{E}\big[s_{b c}(({X^{(i)}}))^2\big]\). Using independence (up to symmetry) of the entries of \({X^{(i)}}\) and the uniform variance bound \(\Var(({X^{(i)}})_{a,d})\asymp\rho_n\),
it holds that
\begin{align*}
\mathbb{E}\big[s_{b c}\big(({X^{(i)}})\big)^2\big]
= \sum_{a,d} \bigg(W_{a b}^2\,(\beta_j^{-1}\beta_i^{-1})_{c d}^2 + W_{d b}^2\,(\beta_j^{-1}\beta_i^{-1})_{c a}^2\bigg)\,\mathbb{E}\big[({X^{(i)}})_{a,d}^2\big],
\end{align*}
and hence
\begin{equation}\label{eq:s-square-expectation}
\mathbb{E}\big[s_{b c}\big(({X^{(i)}})\big)^2\big]\asymp
\rho_n \sum_{a,d} \bigg(W_{a b}^2\,(\beta_j^{-1}\beta_i^{-1})_{c d}^2 + W_{d b}^2\,(\beta_j^{-1}\beta_i^{-1})_{c a}^2\bigg).
\end{equation}
Summing \cref{eq:s-square-expectation} over \(b,c\) and reordering the sums yields
\begin{align*}
\sum_{b,c}\mathbb{E}\big[s_{b c}\big(({X^{(i)}})\big)^2\big]
\asymp
\rho_n\sum_{a,b}\sum_{c,d} W_{a b}^2\,(\beta_j^{-1}\beta_i^{-1})_{c d}^2
\asymp \rho_n\,\|W_{ij}\|_F^2\,\|\beta_j^{-1}\beta_i^{-1}\|_F^2.
\end{align*}
Combining this with \cref{eq:var-W-sum} shows that
\begin{equation}\label{eq:var-weighted-final}
\Var\big(\tr(W_{ij} {X^{(j)}} \beta_j^{-1}\beta_i^{-1} {X^{(i)}}\t)\big)
\asymp \rho_n^2\,\|W_{ij}\|_F^2\,\|\beta_j^{-1}\beta_i^{-1}\|_F^2.
\end{equation}
We compare the above to the variance of the unweighted trace. By an identical conditioning argument, now with the coefficient
\(
r_{m p}\big(({X^{(i)}})\big):=\sum_{q}(\beta_j^{-1}\beta_i^{-1})_{p q}\,({X^{(i)}})_{m,q},
\)
appearing in the expansion \(\tr({X^{(j)}} \beta_j^{-1}\beta_i^{-1} {X^{(i)}}\t)=\sum_{m,p} ({X^{(j)}})_{m,p}\,r_{m p}(({X^{(i)}}))\), one obtains
\begin{equation}\label{eq:var-unweighted-final}
\Var\big(\tr({X^{(j)}} \beta_j^{-1}\beta_i^{-1} {X^{(i)}}\t)\big)
\asymp \rho_n^2\,n\,\|\beta_j^{-1}\beta_i^{-1}\|_F^2.
\end{equation}
Dividing \cref{eq:var-weighted-final} by \cref{eq:var-unweighted-final} yields
\[
\frac{\Var\big(\tr(W_{ij} {X^{(j)}} \beta_j^{-1}\beta_i^{-1} {X^{(i)}}\t)\big)}
     {\Var\big(\tr({X^{(j)}} \beta_j^{-1}\beta_i^{-1} {X^{(i)}}\t)\big)}
\asymp
\frac{\|W_{ij}\|_F^2}{n}.
\] Recalling the definition of \(W_{ij}\) in \cref{note04}, we note that
\(\|{V^{(1)}}{V^{(1)}}^{\top}\|_F^2 = k\) and \(\|{V^{(2)}}{V^{(2)}}^{\top}\|_F^2 = k\), which follow immediately from the orthonormality of the columns of \({V^{(1)}}\) and \({V^{(2)}}\).
Moreover,
\[
\|{V^{(1)}}{V^{(1)}}^{\top}+{V^{(2)}}{V^{(2)}}^{\top}-{V^{(1)}}{V^{(1)}}^{\top}{V^{(2)}}{V^{(2)}}^{\top}\|_F^2
=
\|{V^{(1)}}{V^{(1)}}^{\top}\|_F^2
+
\|{V^{(2)}}{V^{(2)}}^{\top}-{V^{(1)}}{V^{(1)}}^{\top}{V^{(2)}}{V^{(2)}}^{\top}\|_F^2,
\]
which is bounded above by \(2k\) and below by \(k\). Consequently,
\(\|W_{ij}\|_F^2 \asymp k\).
It follows that
\[
\frac{\Var\!\big(\tr(W_{ij} {X^{(j)}} \beta_j^{-1}\beta_i^{-1} {X^{(i)}}^{\top})\big)}
     {\Var\!\big(\tr({X^{(j)}} \beta_j^{-1}\beta_i^{-1} {X^{(i)}}^{\top})\big)}
\asymp \frac{k}{n} \longrightarrow 0,
\]
which completes the argument for \cref{need01}, thereby proving the lemma.
\end{proof}

\subsection{Proof of \cref{lem:second-order2}}
\label{sec:second-order2proof}
\begin{proof}[Proof of \cref{lem:second-order2}]
\textbf{Step 1: computing the mean.}  For independent, mean-zero matrices \({X^{(1)}}\) and \({X^{(2)}}\),
 \[
\mathbb{E}\bigl\|\beta^{\perp}({X^{(1)}} \beta_1^{-1} - {X^{(2)}} \beta_2^{-1})\bigr\|_F^2 
= \mathbb{E}\,\tr\bigl(\beta^{\perp} {X^{(1)}} \beta_1^{-2} {X^{(1)}}\bigr) 
+ \mathbb{E}\,\tr\bigl(\beta^{\perp} {X^{(2)}} \beta_2^{-2} {X^{(2)}}\bigr),
\]
since the cross terms vanish by independence. Using the one-sample expectation result from \cref{lem:second-order1}, for each \(i=1,2\),
\begin{align*}
\mathbb{E}\,({X^{(i)}} \beta_i^{\perp} {X^{(i)}}) 
= {\Sigma^{(i)}} \circ \beta_i^{\perp} 
+ \Diag\bigl({\Sigma^{(i)}} \cdot d_i - \diag \left({\Sigma^{(i)}}\right) \circ d_i \bigr),
\end{align*}
where
\( d_i = \diag(\beta_i^{\perp})\).
Consequently, the leading mean term satisfies
\begin{align*}
\mu_2 
= 2\sum_{i=1}^{2} 
  \tr\left[
    \beta_i^{-2} 
    \Bigl(
      {\Sigma^{(i)}} \circ \beta_i^{\perp} 
+ \Diag\bigl({\Sigma^{(i)}} \cdot d_i - \diag \left({\Sigma^{(i)}}\right) \circ d_i \bigr)
    \Bigr)
  \right].
\end{align*}
\\ \ \\
\noindent \textbf{Step 2: Computing the variance}.
According to \cref{lem:beta_perp1two}, it suffices to compute 
\[
\Var\bigg( 
2\,\tr\left(
U\t
\begin{pmatrix}
{X^{(1)}}\t {X^{(1)}} & -{X^{(1)}}\t {X^{(2)}}\\[4pt]
-{X^{(2)}}\t {X^{(1)}} & {X^{(2)}}\t {X^{(2)}}
\end{pmatrix}
U
\right) \bigg).
\]
Let
\begin{align*}
Q_i = {(X^{(1)})_{i.}}\t \beta_1^{-2} {(X^{(1)})_{i.}}
       + {(X^{(2)})_{i.}}\t \beta_2^{-2} {(X^{(2)})_{i.}}
       - {(X^{(1)})_{i.}}\t \beta_1^{-1}\beta_2^{-1} {(X^{(2)})_{i.}}
       - {(X^{(2)})_{i.}}\t \beta_2^{-1}\beta_1^{-1} {(X^{(1)})_{i.}},
       \end{align*}
and define 
\(\tilde{T}_{2}^{(S)} = 2\sum_{i=1}^{n} Q_i.
\)
Then
\begin{align} \label{eq:ind1}
\Var(\tilde{T}_{2}^{(S)}) 
= 4\sum_{i=1}^{n} \Var(Q_i) 
  + 8\sum_{1\le i<k\le n} \Cov(Q_i,Q_k).
\end{align}
First, we will calculate \(\Var(Q_i).\) For each sample \(r\in\{1,2\}\), the variance of the individual quadratic form \(Q_i^{(r)} = {(X^{(r)})_{i.}}\t \beta_r^{-2} {(X^{(r)})_{i.}}\)
follows directly from the one-sample calculation (see \cref{sec:second-order1proof})
\begin{align}\label{eq:ind2}
\Var(Q_i^{(r)}) 
= 4 \sum_{s<t} (\beta_r^{-2})_{st}^2\, \Sigma_{is}^{(r)} \Sigma_{it}^{(r)}
  + \sum_{s=1}^n (\beta_r^{-2})_{ss}^2\, K_{is}^{(r)},
\end{align}
where \(\Sigma_{ij}^{(r)}=\Var({(X^{(r)})_{ij}})\) and \(K_{ij}^{(r)} = \E\left({(X^{(r)})_{ij}}^4\right) - \E\left({(X^{(r)})_{ij}}^2\right)^2\). Since the vectors \({(X^{(r)})_{i.}}\) are centered, it follows immediately that 
\({(X^{(r_1)})_{i.}}\t\,\beta_{1}^{-1}\beta_{2}^{-1}\,{(X^{(r_2)})_{i.}}\) and 
\({(X^{(r_1)})_{i.}}\t\,\beta_{1}^{-1}\beta_{2}^{-1}\,{(X^{(r_1)})_{i.}}\)
are independent whenever \((r_{1},r_{2})=(1,2)\).
Hence, for the purpose of evaluating \(\Var(Q_{i})\), it suffices to compute
\(
\Var\left({(X^{(1)})_{i.}}\t\,\beta_{1}^{-1}\beta_{2}^{-1}\,{(X^{(2)})_{i.}}
+ {(X^{(2)})_{i.}}\t\,\beta_{2}^{-1}\beta_{1}^{-1}\,{(X^{(1)})_{i.}}\right).
\)

A direct calculation shows that
\(\
\Var\left({(X^{(1)})_{i.}}\t\,\beta_{1}^{-1}\beta_{2}^{-1}\,{(X^{(2)})_{i.}}\right)
= \sum_{s,t} M_{s,t}^2 \Sigma_{is}^{(1)} \Sigma_{is}^{(2)},\)
where \(
M := \beta_{1}^{-1}\beta_{2}^{-1}.
\)
Therefore, the total variance contribution from the bilinear cross–sample term equals
\begin{align} \label{eq:ind3}
\Var\left({(X^{(1)})_{i.}}\t\,\beta_{1}^{-1}\beta_{2}^{-1}\,{(X^{(2)})_{i.}}
+ {(X^{(2)})_{i.}}\t\,\beta_{2}^{-1}\beta_{1}^{-1}\,{(X^{(1)})_{i.}}\right) =
\sum_{s,t} G_{st}^{2}\,\Sigma_{is}^{(1)}\Sigma_{it}^{(2)},
\end{align}where \(
G := M + M\t,
\)
which completes the calculation.

We now compute the covariance terms.  
Decompose \(Q_i=Q_i^{(1)}+Q_i^{(2)}-Q_i^{(1,2)}\) such that
\(
Q_i^{(1)} = {(X^{(1)})_{i.}}\t \beta_1^{-2} {(X^{(1)})_{i.}},
Q_i^{(2)} = {(X^{(2)})_{i.}}\t \beta_2^{-2} {(X^{(2)})_{i.}},
Q_i^{(1,2)} = {(X^{(1)})_{i.}}\t M {(X^{(2)})_{i.}} + {(X^{(2)})_{i.}}\t M\t {(X^{(1)})_{i.}}.
\)
By bilinearity of covariance,
\[
\begin{aligned}
\Cov(Q_i, Q_k)
&= \Cov(Q_i^{(1)}, Q_k^{(1)}) + \Cov(Q_i^{(2)}, Q_k^{(2)}) 
- \Cov(Q_i^{(1)}, Q_k^{(1,2)}) - \Cov(Q_i^{(2)}, Q_k^{(1,2)}) \\[3pt]
&\quad - \Cov(Q_i^{(1,2)}, Q_k^{(1)}) - \Cov(Q_i^{(1,2)}, Q_k^{(2)}) 
+ \Cov(Q_i^{(1,2)}, Q_k^{(1,2)}).
\end{aligned}
\]
Observe that \(\Cov(Q_i^{(1)},Q_k^{(1,2)})=0\) because \(Q_k^{(1,2)}\) is a linear combination of products \({X^{(1)}_{ks}}{X^{(2)}_{kt}}\),
and by independence of \({X^{(1)}}\) and \({X^{(2)}}\) together with \(\E[{X^{(2)}_{kt}}]=0\), every term in \(\E[Q_i^{(1)} Q_k^{(1,2)}]\) vanishes.  
The same reasoning yields \(\Cov(Q_i^{(2)},Q_k^{(1,2)})=\Cov(Q_i^{(1,2)},Q_i^{(1)})=\Cov(Q_i^{(1,2)},Q_k^{(2)})=0\),
and independence across samples gives \(\Cov(Q_i^{(1)},Q_k^{(2)})=\Cov(Q_i^{(2)},Q_k^{(1)})=0\).  
Hence,
\begin{align*}
\Cov(Q_i,Q_k)
= \Cov(Q_i^{(1)},Q_k^{(1)}) + \Cov(Q_i^{(2)},Q_k^{(2)}) + \Cov(Q_i^{(12)},Q_k^{(12)}).
\end{align*}
Expanding
\(
Q_i^{(1)} = \sum_{s,t} (\beta_1^{-2})_{st}\, {X^{(1)}_{is}} {X^{(1)}_{it}},\)
and 
\(Q_k^{(1)} = \sum_{u,v} (\beta_1^{-2})_{uv}\, {X^{(1)}_{ku}} {X^{(1)}_{kv}},
\)
the only shared random variable for \(i\neq k\) is \({X^{(1)}_{ik}}={X^{(1)}_{ki}}\).  
Thus all cross-terms vanish except the product involving \(\left({X^{(1)}_{ik}}\right)^2\),
leading to
\begin{align*}
\Cov(Q_i^{(1)},Q_k^{(1)})
= \left(\E\left[\left({X^{(1)}_{ik}}\right)^4\right] - \E\left[\left({X^{(1)}_{ik}}\right)^2\right]^2\right)\,
  \left(\beta_1^{-2}\right)_{ii} \left(\beta_1^{-2}\right)_{kk}
= K_{ik}^{(1)} (\beta_1^{-2})_{ii} (\beta_1^{-2})_{kk}.
\end{align*}
Analogously,
\(
\Cov(Q_i^{(2)},Q_k^{(2)})
= K_{ik}^{(2)} (\beta_2^{-2})_{ii} (\beta_2^{-2})_{kk}.
\) 
Writing
\(
Q_i^{(12)} = \sum_{s,t} G_{st}\, {X^{(1)}_{is}} {X^{(2)}_{it}},\)
and exploiting the independence of \({X^{(1)}}\) and \({X^{(2)}}\),
\begin{align*}
\E[Q_i^{(12)} Q_k^{(12)}]
= \sum_{s,t,u,v} G_{st} G_{uv}\,
  \E[{X^{(1)}_{is}}{X^{(1)}_{ku}}]\,\E[{X^{(2)}_{it}}{X^{(2)}_{kv}}].
\end{align*}
For \(i\neq k\), the only index combination contributing a nonzero expectation is \((s,u)=(k,i)\) and \((t,v)=(i,k)\),
yielding
\(
\Cov(Q_i^{(12)}, Q_k^{(12)})
= G_{ik}^2\, \Sigma_{ik}^{(1)} \Sigma_{ik}^{(2)}.
\) Combining all nonzero contributions, for \(i\neq k\),
\begin{align}\label{eq:ind4}
\Cov(Q_i,Q_k)
= K_{ik}^{(1)} (\beta_1^{-2})_{ii} (\beta_1^{-2})_{kk}
  + K_{ik}^{(2)} (\beta_2^{-2})_{ii} (\beta_2^{-2})_{kk}
  + G_{ik}^2\, \Sigma_{ik}^{(1)} \Sigma_{ik}^{(2)}.
\end{align}
Let \(\alpha_r=\diag(\beta_r^{-2})\).  
Summing the within and between index contributions from \cref{eq:ind1,eq:ind2,eq:ind3,eq:ind4}, and including the cross-term variance from the bilinear component, yields
\[
\begin{aligned}
\Var(\tilde{T}_{2}^{(S)})
&= 8 \sum_{r=1}^{2} 
   \Big\langle (J_n-I_n) \circ (\beta_r^{-2}\circ\beta_r^{-2}),\, (\Sigma^{(r)})^2 \Big\rangle
 + 4 \sum_{r=1}^{2} \alpha_r\t K^{(r)} \alpha_r \\
&\quad + 4 \Big\langle G\circ G,
   {\Sigma^{(1)}}\t {\Sigma^{(2)}} + (J_n-I_n)\circ({\Sigma^{(1)}}\circ{\Sigma^{(2)}}) \Big\rangle,
\end{aligned}
\]
where \(G = \beta_1^{-1}\beta_2^{-1} + \beta_2^{-1}\beta_1^{-1}\). From \cref{ass:asymptotic1,ass:eigen-scaling,ass:incoherence}, we have   
\(K^{(r)}_{ij} \asymp \rho_n,\) and \(\Sigma^{(r)}_{ij} \asymp \rho_n\) for \(r=1,2.\)  Plugging these values in our expression using the identity \cref{newres01} just like in the last paragraph of \cref{sec:second-order1proof}, it follows that \(\Var(\tilde{T}_{2}^{(S)}) \asymp \frac{k^2}{n^3 \rho_n^2}\). Hence by \cref{lem:beta_perp1two}, we have 
\[
\Var\bigl(T_{2}^{(S)}\bigr)
= \sum_{i=1}^{2} 
  \Bigl(
    8\,\bigl\langle (J_n - I_n)\circ(\beta_i^{-2}\circ\beta_i^{-2}),({\Sigma^{(i)}})^2 \bigr\rangle
    + 4\,\alpha_i\t K^{(i)} \alpha_i
  \Bigr)
  + 4\,\bigl\langle 
      G \circ G,
      {\Sigma^{(1)}}\t {\Sigma^{(2)}} + (J_n - I_n) \circ ({\Sigma^{(1)}} \circ {\Sigma^{(2)}})
    \bigr\rangle + o \left( \frac{k^2}{n^3 \rho_n^2}\right).
\]
\end{proof}

\subsection{Proof of \cref{lem:second-order-clt2}} \label{sec:second-orderclt2proof}

\begin{proof}[Proof of \cref{lem:second-order-clt2}]
We will prove this using a very similar approach to the proof of \cref{lem:second-order-clt1}. Suppose that
\[
\underbrace{2\,\tr\left(
U\t
\begin{pmatrix}
(Y^{(1)})\t (Y^{(1)}) & -(Y^{(1)})\t (Y^{(2)})\\[4pt]
-(Y^{(2)})\t (Y^{(1)}) & (Y^{(2)})\t (Y^{(2)})
\end{pmatrix}
U
\right)}_{T_2^{(S)}}
=
\underbrace{2\,\tr\left(
U\t
\begin{pmatrix}
{X^{(1)}}\t {X^{(1)}} & -{X^{(1)}}\t {X^{(2)}}\\[4pt]
-{X^{(2)}}\t {X^{(1)}} & {X^{(2)}}\t {X^{(2)}}
\end{pmatrix}
U
\right)}_{\widetilde T_2^{(S)}} + R,
\] with the same notation as in \cref{lem:second-order2}. By \cref{lem:beta_perp1one}, we have
\(
\Var(R) / \Var(T_{2}^{(S)}) \to 0,
\)
and consequently
\(
(R - \E (R))/\sqrt{\Var(T_{2}^{(S)})} \to 0
\)
in probability. Hence the contribution of \(R\) is asymptotically negligible.  
Therefore, in establishing the limiting distribution of
\(
T_2^{(S)},
\)
it suffices to derive the asymptotic distribution of
\(
\widetilde T_2^{(S)}
\)
after appropriate centering and scaling.  
The desired result then follows by a direct application of Slutsky’s theorem. We expand the leading-order term via
\begin{equation} \label{break_step1}
\begin{aligned}
n \mathrm{tr}\bigg( 
U\t &
\begin{pmatrix}
{X^{(1)}}\t {X^{(1)}} & -{X^{(1)}}\t {X^{(2)}} \\
-{X^{(2)}}\t {X^{(1)}} & {X^{(2)}}\t {X^{(2)}}
\end{pmatrix}
U
\bigg) \\
&= \sum_{i=1}^{n} \sum_{j=1}^{n} \sum_{k=1}^{n} \sum_{l=1}^{n}
n\Big[ U_{k j}U_{l j}(x_1)_{i k}(x_1)_{i l}  + U_{n+k, j}U_{n+l,j}(x_2)_{i k}(x_2)_{i l}  - 2\,U_{k j}U_{n+l,j}(x_1)_{i k}(x_2)_{i l} \Big].
\end{aligned}
\end{equation}
Regrouping terms using $\gamma_{a,b} := n\sum_{j=1}^n U_{a j}U_{b j}$ and rearranging, we obtain
\begin{align} 
\begin{split}
n \mathrm{tr}\bigg( 
U\t &
\begin{pmatrix}
{X^{(1)}}\t {X^{(1)}} & -{X^{(1)}}\t {X^{(2)}} \\
-{X^{(2)}}\t {X^{(1)}} & {X^{(2)}}\t {X^{(2)}}
\end{pmatrix}
U
\bigg) \\
&= \sum_{i=1}^{n} \sum_{k=1}^{n} \sum_{l=1}^{n}
\Big[ \gamma_{k,l}(x_1)_{i k}(x_1)_{i l}
+ \gamma_{n+k,n+l}(x_2)_{i k}(x_2)_{i l} 
- 2\,\gamma_{k,n+l}(x_1)_{i k}(x_2)_{i l} \Big]  \end{split} \notag\\
 \begin{split}
&= \sum_{i=1}^{n} \sum_{\substack{k,l=1 \\ k>l}}^{n}
\Big[ 2\gamma_{k,l}(x_1)_{i k}(x_1)_{i l}
+ 2\gamma_{n+k,n+l}(x_2)_{i k}(x_2)_{i l}  - 2\,\gamma_{k,n+l}(x_1)_{i l}(x_2)_{i k}
- 2\,\gamma_{l,n+k}(x_1)_{i k}(x_2)_{i l} \Big] \\
&\quad + \sum_{i=1}^{n} \sum_{k=1}^{n}
\Big[ \gamma_{k,k}(x_1)_{i k}^2
+ \gamma_{n+k,n+k}(x_2)_{i k}^2
- 2\,\gamma_{k,n+k}(x_1)_{i k}(x_2)_{i k} \Big]\end{split} \notag\\
\begin{split}
&= \sum_{\substack{i,k,l=1 \\ k>l,\, k<i}}^{n}
\Big[ 2\gamma_{k,l}(x_1)_{i k}(x_1)_{i l}
+ 2\gamma_{n+k,n+l}(x_2)_{i k}(x_2)_{i l} 
- 2\,\gamma_{k,n+l}(x_1)_{i l}(x_2)_{i k}
- 2\,\gamma_{l,n+k}(x_1)_{i k}(x_2)_{i l} \Big]  \\
&\quad + \sum_{\substack{i,k,l=1 \\ l<i,\, k<i}}^{n}
\Big[ 2\gamma_{i,l}(x_1)_{i k}(x_1)_{k l}
+ 2\gamma_{n+i,n+l}(x_2)_{i k}(x_2)_{k l} 
- 2\,\gamma_{i,n+l}(x_1)_{k l}(x_2)_{i k}
- 2\,\gamma_{l,n+i}(x_1)_{i k}(x_2)_{l k} \Big]  \\
&\quad + \sum_{\substack{k,l=1 \\ l<k}}^{n}
\Big[ 2\gamma_{k,l}(x_1)_{k k}(x_1)_{k l}
+ 2\gamma_{n+k,n+l}(x_2)_{k k}(x_2)_{k l} 
- 2\,\gamma_{k,n+l}(x_1)_{k l}(x_2)_{k k} 
- 2\,\gamma_{l,n+k}(x_1)_{k k}(x_2)_{l k} \Big]  \\
&\quad + \sum_{i,k=1}^{n}
\Big[ \gamma_{k,k}(x_1)_{i k}^2
+ \gamma_{n+k,n+k}(x_2)_{i k}^2 
- 2\gamma_{k,n+k}(x_1)_{i k}(x_2)_{i k}  \Big] \\
&= \sum_{1 \leq k < i \leq n}
\Bigg[
\sum_{1 \leq l < k \leq n}
\big( 2\gamma_{k,l}(x_1)_{i k}(x_1)_{i l}
+ 2\gamma_{n+k,n+l}(x_2)_{i k}(x_2)_{i l} 
- 2\,\gamma_{k,n+l}(x_1)_{i l}(x_2)_{i k}
- 2\,\gamma_{l,n+k}(x_1)_{i k}(x_2)_{i l} \big) \\
&\quad + \sum_{1 \leq l < i \leq n}
\big( 2\gamma_{i,l}(x_1)_{i k}(x_1)_{k l}
+ 2\gamma_{n+i,n+l}(x_2)_{i k}(x_2)_{k l} 
- 2\,\gamma_{i,n+l}(x_1)_{k l}(x_2)_{i k}
- 2\,\gamma_{l,n+i}(x_1)_{i k}(x_2)_{l k} \big)
\Bigg] \\
&\quad + \sum_{1 \leq l < k \leq n}
\big( 2\gamma_{k,l}(x_1)_{k k}(x_1)_{k l}
+ 2\gamma_{n+k,n+l}(x_2)_{k k}(x_2)_{k l} 
- 2\,\gamma_{k,n+l}(x_1)_{k l}(x_2)_{k k}
- 2\,\gamma_{l,n+k}(x_1)_{k k}(x_2)_{l k} \big) \\
&\quad + \sum_{1 \leq k < i \leq n}
\big[ (\gamma_{k,k} + \gamma_{i,i})(x_1)_{i k}^2
+ (\gamma_{n+k,n+k} + \gamma_{n+i,n+i})(x_2)_{i k}^2
- 2(\gamma_{k,n+k} + \gamma_{i,n+i})(x_1)_{i k}(x_2)_{i k} \big] \\
&\quad + \sum_{1 \leq k \leq n}
\big[ \gamma_{k,k} (x_1)_{k k}^2
+ \gamma_{n+k,n+k} (x_2)_{k k}^2
- 2 \gamma_{n+k,k} (x_1)_{k k} (x_2)_{k k} \big].
\end{split} \notag
\end{align}
Now consider the centered version of the quadratic form above. The centering ensures that each term has mean zero, which is essential for the martingale difference structure that follows. Explicitly, we write
\begin{equation}
\begin{aligned} \label{centered2}
&\sum_{1 \leq k < i \leq n}
\Bigg[
\sum_{1 \leq l < k \leq n}
\big( 2\gamma_{k,l}(x_1)_{i k}(x_1)_{i l}
+ 2\gamma_{n+k,n+l}(x_2)_{i k}(x_2)_{i l} 
- 2\,\gamma_{k,n+l}(x_1)_{i l}(x_2)_{i k}
- 2\,\gamma_{l,n+k}(x_1)_{i k}(x_2)_{i l} \big) \\
&\quad + \sum_{1 \leq l < i \leq n}
\big( 2\gamma_{i,l}(x_1)_{i k}(x_1)_{k l}
+ 2\gamma_{n+i,n+l}(x_2)_{i k}(x_2)_{k l} 
- 2\,\gamma_{i,n+l}(x_1)_{k l}(x_2)_{i k}
- 2\,\gamma_{l,n+i}(x_1)_{i k}(x_2)_{l k} \big)
\Bigg] \\
&\quad + \sum_{1 \leq l < k \leq n}
\big( 2\gamma_{k,l}(x_1)_{k k}(x_1)_{k l}
+ 2\gamma_{n+k,n+l}(x_2)_{k k}(x_2)_{k l} 
- 2\,\gamma_{k,n+l}(x_1)_{k l}(x_2)_{k k}
- 2\,\gamma_{l,n+k}(x_1)_{k k}(x_2)_{l k} \big) \\
&\quad + \sum_{1 \leq k < i \leq n}
\big[ (\gamma_{k,k} + \gamma_{i,i})\big((x_1)_{i k}^2 - (\sigma_1)_{i,k}^2\big)
+ (\gamma_{n+k,n+k} + \gamma_{n+i,n+i})\big((x_2)_{i k}^2 - (\sigma_2)_{i,k}^2\big) \\
&\quad\quad - 2(\gamma_{k,n+k} + \gamma_{i,n+i})(x_1)_{i k}(x_2)_{i k} \big] \\
&\quad + \sum_{1 \leq k \leq n}
\big[ \gamma_{k,k} \big((x_1)_{k k}^2 - (\sigma_1)_{k,k}^2\big)
+ \gamma_{n+k,n+k} \big((x_2)_{k k}^2 - (\sigma_2)_{k,k}^2\big)
- 2 \gamma_{n+k,k} (x_1)_{k k} (x_2)_{k k} \big],
\end{aligned}
\end{equation}
where \(\E \,(x_i)_{k_1,k_2}^2=(\sigma_i)_{k_1,k_2}^2\).
To employ the martingale central limit theorem, we introduce a filtration that orders the terms appropriately.  
Define
\(
\mathscr{F}_t := \sigma({w}_1, \ldots, {w}_t), \; \text{where } t = k + \frac{l(l - 1)}{2}, \; 1 \leq k \leq l \leq n,
\)
so that
\[
\mathscr{F}_t = \sigma\big( (x_m)_{ij} : 1 \leq i \leq j < l \ \text{or} \ 1 \leq k \leq j < l, \ m=1,2 \big).
\]
Under this ordering, each summand in \cref{centered2} can be expressed as a martingale difference with respect to \(\{\mathscr{F}_t\}\).  Indeed, for \(1 \leq k \leq i \leq n\) and \(t = k + \frac{i(i-1)}{2}\),
\begin{equation}
\begin{aligned} \label{single2}
&\E\Big[(x_1)_{i,k}(b_1)_{i,k} + (x_2)_{i,k}(b_2)_{i,k} 
+ \big((x_1)^2_{i,k}-(\sigma_1)^2_{i,k}\big)(c_1)_{i,k}  \\
&\quad + \big((x_2)^2_{i,k}-(\sigma_2)^2_{i,k}\big)(c_2)_{i,k} 
+ (x_1)_{i,k}(x_2)_{i,k}(c_3)_{i,k} \ \big| \ \mathscr{F}_{t-1} \Big] = 0,
\end{aligned}
\end{equation}
where  
\[
\begin{aligned}
(b_1)_{i,k} &= \sum_{1 \leq l < k } \big( 2 \gamma_{k,l} (x_1)_{i,l} - 2 \gamma_{l,n+k} (x_2)_{i,l} \big) 
+ \sum_{1 \leq l < i } \big( 2 \gamma_{i,l} (x_1)_{k,l} - 2 \gamma_{l,n+i}(x_2)_{l,k} \big), \\
(b_2)_{i,k} &= \sum_{1 \leq l < k } \big( 2 \gamma_{n+k,n+l}(x_2)_{i,l} - 2\gamma_{k,n+l}(x_1)_{i,l} \big) 
+ \sum_{1 \leq l < i } \big( 2 \gamma_{n+i,n+l}(x_2)_{k,l} - 2 \gamma_{i,n+l} (x_1)_{k,l} \big), \\
(c_1)_{i,k} &=  \gamma_{k,k} + \gamma_{i,i} , \\
(c_2)_{i,k} &=  \gamma_{n+k,n+k} + \gamma_{n+i,n+i} , \\
(c_3)_{i,k} &= - 2\gamma_{k,n+k} - 2\gamma_{i,n+i}.
\end{aligned}
\]
From \cref{single2}, the conditional variance of the sum in \cref{centered2} with respect to \(\{\mathscr{F}_t\}\) denoted by \(\eta\) is given by:
\begin{equation}
\begin{aligned}\label{cond_var2}
\eta = \sum_{1\leq k<i \leq n} &\Big[ (b_1)^2_{i,k} (\sigma_1)^2_{i,k} + (b_2)^2_{i,k} (\sigma_2)^2_{i,k} 
+ (c_1)^2_{i,k} (\kappa_1)_{i,k} + (c_2)^2_{i,k} (\kappa_2)_{i,k} \\
&\quad + (c_3)^2_{i,k} (\sigma_1)^2_{i,k} (\sigma_2)^2_{i,k} 
+ (c_1)_{i,k}(b_1)_{i,k} (\theta_1)_{i,k} 
+ (c_2)_{i,k}(b_2)_{i,k} (\theta_2)_{i,k} \Big],
\end{aligned}
\end{equation}
where \((\kappa_m)_{i,k} = \E\big[ (x_m)^2_{i,k} - (\sigma_m)^2_{i,k} \big]^2\) and \((\theta_m)_{i,k} = \E\big[ (x_m)_{i,k} \big( (x_m)^2_{i,k} - (\sigma_m)^2_{i,k} \big) \big]\).

This explicit martingale difference decomposition, together with the variance representation in \cref{cond_var2}, provides the necessary framework to invoke the martingale central limit theorem.
From the definition \( U = \begin{pmatrix} \beta_1^{-1} & \beta_2^{-1} \end{pmatrix} \). Like \cref{newres01}, we have
\begin{align}\label{newres03}
    \sum_{i \neq j}\gamma_{ij}^2 \asymp \frac{k}{n^2 \rho_n^4}, \quad \gamma_{ij}^2 \lesssim \frac{k}{n^4 \rho_n^4}.
\end{align} 
We first compute the order of \(\E(\eta)\). From the definitions of \(b_i\) and \(c_i\), it is straightforward to verify that
\begin{equation}
\begin{aligned}\label{newres004}
\sum_{1\le k<i\le n}
(\sigma_1)_{ik}^{2}\,
\E\!\left[(b_1)_{ik}^{2}\right]
&\gtrsim
\rho_n^2
\sum_{1\le k<i\le n}
\left(
\sum_{1\le l<k}\gamma_{k,l}^{2}
+
\sum_{1\le l<i}\gamma_{i,l}^{2}
\right) \\
&=
\rho_n^2
\left[
\sum_{1\le l<k\le n}(n-k)\gamma_{k,l}^{2}
+
\sum_{1\le l<i\le n}(i-1)\gamma_{i,l}^{2}
\right] \\
&=
(n-1)\rho_n^2
\sum_{1\le l<k\le n}
\gamma_{k,l}^{2} \\
&\asymp
n\rho_n^2
\cdot
\frac{k}{n^2\rho_n^4}
=
\frac{k}{n\rho_n^2}.
\end{aligned}
\end{equation}
Similarly we have 
\begin{align}\label{newres005}
    \sum_{1\le k<i\le n}
(\sigma_2)_{ik}^{2}\,
\E\!\left[(b_2)_{ik}^{2}\right] \gtrsim \frac{k}{n\rho_n^2}.
\end{align}
The other side of the bound follows trivially from \cref{ass:incoherence} yielding
\begin{align}\label{newres04}
    \sum_{1\le k<i\le n}
(\sigma_m)_{ik}^{2}\,
\E\!\left[(b_m)_{ik}^{2}\right] \asymp \frac{k}{n\rho_n^2} \quad \text{for} \quad m=1,2
\end{align}
Similar to \cref{newc}, we also have the bounds
\begin{align}
    \sum_{1 \le k < i \le n}(\kappa_1)_{i,k} (c_1)^2_{i,k} \lesssim \frac{k^2}{n^2 \rho_n^3}=o \left(\frac{k}{n\rho_n^2}\right), \quad \sum_{1 \le k < i \le n}(\sigma_1)^2_{i,k} (\sigma_2)^2_{i,k} (c_3)^2_{i,k} \lesssim \frac{k^2}{n^2 \rho_n^3}=o \left(\frac{k}{n\rho_n^2}\right)
\end{align}

In addition, we have the bounds
\(
(\kappa_m)_{i,k} \asymp \rho_n, 
\; (\theta_m)_{i,k} \asymp \rho_n, 
\; (\sigma_m)^2_{i,k} \asymp \rho_n.
\)
Combining these estimates, we find
\(
\E(\eta) \asymp \frac{k}{n\rho_n^2}.
\)

Next, we determine the order of \(\Var(\eta)\). Since the terms \((c_m)_{i,k}\) are deterministic, they do not contribute directly to the variance; only the stochastic terms involving \((b_m)_{i,k}\) enter into \(\Var(\eta)\). We obtain
\begin{equation}
\begin{aligned}
\E\!\left[(b_m)_{i,k}^{4}\right]
&\lesssim
\rho_n \max_{1\le a\le 2n}
\sum_{b=1}^{2n}\gamma_{a,b}^{4}
+
\rho_n^2
\left(
\max_{1\le a\le 2n}
\sum_{b=1}^{2n}\gamma_{a,b}^{2}
\right)^2 \\
&\lesssim
\rho_n
\left(\max_{a,b}\gamma_{a,b}^{2}\right)
\left(\max_a \sum_b \gamma_{a,b}^{2}\right)
+
\rho_n^2
\left(
\frac{k}{n^3\rho_n^4}
\right)^2 \\
&\lesssim
\frac{k^2}{n^6\rho_n^6}.
\end{aligned}
\end{equation}
and thus,
\begin{align*}
&\Var\big( (b_m)^2_{i,k} (\sigma_m)^2_{i,k} \big) 
\lesssim (\sigma_m)^4_{i,k} \E ((b_m)^4_{i,k}) \lesssim\frac{k^2}{n^6 \rho_n^4}; \\
&\Var\big( (c_m)_{i,k}(b_m)_{i,k} (\theta_m)_{i,k}\big) 
\lesssim \frac{k^3}{n^7 \rho_n^5}.
\end{align*}
Aggregating these contributions, we conclude that
\(
\Var(\eta) \asymp \frac{k^2}{n^4 \rho_n^4}.
\)

Consequently, we have
\begin{align*}
\frac{\E(\eta)}{\sqrt{\Var(\eta)}} 
=\sqrt{n} \gg 1.
\end{align*}
This verify the conditions required for the martingale CLT as stated in Lemma 9.12 of \citet{bai2010spectral}. Therefore, the appropriately scaled second-order approximation of our test statistic satisfies
\begin{align*}
\frac{
  T_{2}^{(S)} - \mathbb{E}[T_{2}^{(S)}]
}{\sqrt{\Var(T_{2}^{(S)})}} \xrightarrow{D} \mathcal{N}(0,1)
  \end{align*}
which establishes the result.
\end{proof}

\subsection{Proof of \cref{lem:3rdmean2}}\label{sec:3rdm2}
\begin{proof}
The third order term is 
\[\left( 
    \langle {V^{(1)}} {V^{(1)}}\t, S_{1,3}({X^{(1)}}) \rangle 
    + \langle {V^{(2)}} {V^{(2)}}\t, S_{2,3}({X^{(2)}}) \rangle 
    + \sum_{l_1 + l_2 = 3} 
      \langle S_{1,l_1}({X^{(1)}}), S_{2,l_2}({X^{(2)}}) \rangle 
\right).\]
The first two terms are directly \(o \left(\frac{k}{n^2 \rho_n^2} \right)\) using \cref{lem:3rdmean1}. For the mixed terms, we have that
\begin{equation}
\begin{aligned} \label{eq:van}
\E\!\left\langle S_{1,1}({X^{(1)}}), S_{2,2}({X^{(2)}}) \right\rangle
&= \E_{{X^{(2)}}}\!\left[
\E_{{X^{(1)}}}\!\left(
\left\langle S_{1,1}({X^{(1)}}), S_{2,2}({X^{(2)}}) \right\rangle
\,\big|\, {X^{(2)}}
\right)
\right] \\
&= \E_{{X^{(2)}}}\!\left[
\left\langle \E_{{X^{(1)}}}\!\left( S_{1,1}({X^{(1)}}) \right), S_{2,2}({X^{(2)}}) \right\rangle
\right] \\
&= 0 .
\end{aligned}
\end{equation}
By the same argument as in \cref{eq:van}, the remaining mixed term
\(\E\langle S_{1,2}({X^{(1)}}), S_{2,1}({X^{(2)}})\rangle\) also vanishes identically.
Indeed, since \({X^{(1)}}\) and \({X^{(2)}}\) are independent and centered, the inner
conditional expectation with respect to either random matrix is zero.
This completes the proof of \cref{lem:3rdmean2}.
\end{proof}

\subsection{Proof of \cref{lem:3rdvar2}}\label{sec:3rdproofvar2}
\begin{proof}
This proof follows by identical index–counting applied component-wise as in \cref{lem:3rdvar1}. Without loss of generality, take \(l_1 = 1\) and \(l_2 = 2\). First we expand 
\(
\langle S_{1,1}({X^{(1)}}),\, S_{2,2}({X^{(2)}})\rangle
\):
\begin{equation}
\begin{aligned} \label{eq:exp}
    \langle S_{1,1}({X^{(1)}}),\, S_{2,2}({X^{(2)}})\rangle & = \tr\left(\beta_1^{-1} {X^{(1)}} \beta^{\perp} {X^{(2)}} \beta^{\perp} {X^{(2)}} \beta_2^{-2} + \beta^{\perp} {X^{(1)}} \beta_1^{-1} \beta_2^{-2} {X^{(2)}} \beta^{\perp} {X^{(2)}} \right) \\
    & - \tr\left(\beta_1^{-1} {X^{(1)}} \beta^{\perp} {X^{(2)}} \beta_2^{-1} {X^{(2)}} \beta_2^{-1} + \beta^{\perp} {X^{(1)}} \beta_1^{-1} \beta_2^{-1} {X^{(2)}} \beta_2^{-1} {X^{(2)}} \right).
\end{aligned}
\end{equation}
From the expansion in \cref{eq:exp}, we will establish that the term whose variance dominates the other terms and hence is the dominant variance contributor to 
\(
\langle S_{1,1}({X^{(1)}}),\, S_{2,2}({X^{(2)}})\rangle,
\)
arises from the fully separated monomials
\(
\tr\left(\beta_1^{-1} {X^{(1)}} \beta^{\perp} {X^{(2)}} \beta^{\perp} {X^{(2)}} \beta_2^{-2}\right) \) and \( \tr \left(\beta^{\perp} {X^{(1)}} \beta_1^{-1} \beta_2^{-2} {X^{(2)}} \beta^{\perp} {X^{(2)}} \right)
\), and then calculate its order. 
We first show that
\[
\frac{
\Var\left(\tr\left(\beta_1^{-1} {X^{(1)}} (\beta^{\perp} {X^{(2)}})(\beta^{\perp} {X^{(2)}})\beta_2^{-2}\right)\right)
}{
\Var\left(\tr\left(\beta_1^{-1} {X^{(1)}} {X^{(2)}}^{2}\beta_2^{-2}\right)\right)
}
\to 1,
\]
exactly paralleling the one–sample setting as in \cref{sec:3rdproof}.

Consider the mixed trace
\(
R_{3}
:= \tr\big(\beta_1^{-1} {X^{(1)}} ({X^{(2)}})^{2}\beta_2^{-2}\big).
\)
Decompose \({X^{(2)}} = \left({V^{(2)}} {V^{(2)}}\t + \beta^{\perp}\right){X^{(2)}}\) in the definition of \(R_{3}\).  
The unique term in which all insertions equal \(\beta^{\perp}\) is the fully separated monomial
\[
R_{4}
:= \tr\left(\beta_1^{-1} {X^{(1)}} (\beta^{\perp} {X^{(2)}})(\beta^{\perp} {X^{(2)}})\beta_2^{-2}\right),
\]
the two–sample analogue of \(R_{2}\) in \cref{sec:3rdproof}.  
Every other summand (a remainder term) contains at least one factor \({V^{(2)}} {V^{(2)}}\t\).
Fix a remainder term
\(
\mathcal{N}
:= \tr\left(\beta_1^{-1} {X^{(1)}} (N_{1} {X^{(2)}})(N_{2} {X^{(2)}})\beta_2^{-2}\right),
\)
where each of \(N_{1}, N_{2}\) is either \({V^{(2)}}{V^{(2)}}\t\) or \(\beta^{\perp}\).  
Expanding the trace into its index form, exactly as in \cref{sec:3rdproof}, yields
\[
\mathcal{N}
= 
\sum_{i_1,i_2,i_3 \atop j_1,j_2,j_3}
c(\mathbf{i},\mathbf{j})\,
({X^{(1)}})_{i_{1},j_{2}}\,
({X^{(2)}})_{i_{2},j_{3}}\,
({X^{(2)}})_{i_{3},j_{1}},
\]
where
\(
c(\mathbf{i},\mathbf{j})
:= 
(\beta_2^{-2}\beta_1^{-1})_{j_1,i_1}\,
(N_{1})_{j_2,i_2}\,
(N_{2})_{j_3,i_3}.
\)
Using the same arguments as in \cref{sec:3rdproof}, we define
\(\Sigma(\mathcal{N})
:= 
\sum_{(\mathbf{k},\mathbf{l}) \sim (\mathbf{i},\mathbf{j})}
c(\mathbf{i},\mathbf{j}) c(\mathbf{k},\mathbf{l})
\) which enables us to write \(\Var(\mathcal{N})\asymp \Sigma(\mathcal{N}) \rho_n^3.\)
The exact same argument of \cref{sec:3rdproof} holds here as well. Consider the case where \(N_1=V V\t\) and \(N_2= \beta^{\perp}\). The entrywise bounds for \(N_1\) and \(N_2\) implied by \cref{ass:incoherence} yield \(\Sigma(\mathcal{N}) \lesssim \frac{k^4}{n^5 \rho_n^6}\). Similarly, if \(N_1=N_2=V V\t\), we have \(\Sigma(\mathcal{N}) \lesssim \frac{k^6}{n^6 \rho_n^6}\).
It follows that whenever at least one of \(N_{1}\) or \(N_{2}\) equals \(VV\t\), the variance of the monomial satisfies 
\(
\Var(\mathcal{N}) = o(\Var(R_4)),
\) as in \cref{mono1}.
\\ \ \\
In contrast, just like in \cref{ord1},
\[
\Var(R_{3}) \asymp n^2 \rho_n^3 \|\beta_2^{-2} \beta_1^{-1}\|_F^2
\asymp
\frac{k^2}{n^{4}\rho_n^{3}}.
\]
Using the same logic as in \cref{later01}, this immediately proves
\[
\frac{
\Var\left(\tr\left(\beta_1^{-1} {X^{(1)}} (\beta^{\perp} {X^{(2)}})(\beta^{\perp} {X^{(2)}})\beta_2^{-2}\right)\right)
}{
\Var\left(\tr\left(\beta_1^{-1} {X^{(1)}} {X^{(2)}}^{2}\beta_2^{-2}\right)\right)
}
\to 1.
\]
The same argument establishes
\[
\frac{
\Var\left(\tr\left(\beta_1^{-1} {X^{(1)}} \beta^\perp {X^{(2)}} \beta_2^{-1} {X^{(2)}} \beta_2^{-1}\right)\right)
}{
\Var\left(\tr\left(\beta_1^{-1} {X^{(1)}} {X^{(2)}} \beta_2^{-1} {X^{(2)}} \beta_2^{-1}\right)\right)
}
\to 1.
\] Thus, 
it remains to show
\[
\frac{
\Var\left(\tr\left(\beta_1^{-1} {X^{(1)}} {X^{(2)}} \beta_2^{-1} {X^{(2)}} \beta_2^{-1}\right)\right)
}{
\Var\left(\tr\big(\beta_1^{-1} {X^{(1)}} {X^{(2)}}^{2}\beta_2^{-2}\big)\right)
}
\to 0.
\]
Write the trace in the numerator as
\[
R_{2,\mathrm{num}}
:= 
\tr\left(\beta_1^{-1} {X^{(1)}} {X^{(2)}} \beta_2^{-1} {X^{(2)}} \beta_2^{-1}\right).
\]
Using the same argument employed in \cref{sec:3rdproof},
\begin{align*}
\Var(R_{2,\mathrm{num}})
&\lesssim
\rho_n^{3}
\sum_{i_1,i_2 \atop j_1,j_2}
\left[
(\beta_2^{-1}\beta_1^{-1})_{i_1,j_1}\,
(\beta_2^{-1})_{i_2,j_2}
\right]^{2}
\\
&\lesssim
\rho_n^{3}
\sum_{i_1,i_2 \atop j_1,j_2}
\frac{k^{6}}{n^{10}\rho_n^{6}}
\lesssim
\frac{k^{6}}{n^{6}\rho_n^{3}}.
\end{align*}
Thus,
\[
\frac{
\Var\left(\tr\left(\beta_1^{-1} {X^{(1)}} {X^{(2)}} \beta_2^{-1} {X^{(2)}} \beta_2^{-1}\right)\right)
}{
\Var\left(\tr\big(\beta_1^{-1} {X^{(1)}} {X^{(2)}}^{2}\beta_2^{-2}\big)\right)
} = \frac{\Var(R_{2,\mathrm{num}})}{\Var(R_{3})} \lesssim \frac{\frac{k^6}{n^6 \rho_n^3}}{\frac{k^2}{n^4 \rho_n^3}}
\to 0.
\]
Similarly we can show
\[\frac{
\Var \left(\tr\left( \beta^{\perp} {X^{(1)}} \beta_1^{-1} \beta_2^{-1} {X^{(2)}} \beta_2^{-1} {X^{(2)}} \right) \right)
}{
\Var\left(\tr\left(\beta_1^{-1} {X^{(1)}} (\beta^{\perp} {X^{(2)}})(\beta^{\perp} {X^{(2)}})\beta_2^{-2}\right)\right)
} \to 0.\]
So, in \cref{eq:exp}, the variance of the terms \(
\tr\left(\beta_1^{-1} {X^{(1)}} \beta^{\perp} {X^{(2)}} \beta^{\perp} {X^{(2)}} \beta_2^{-2}\right)\) and  \(\tr \left(\beta^{\perp} {X^{(1)}} \beta_1^{-1} \beta_2^{-2} {X^{(2)}} \beta^{\perp} {X^{(2)}} \right)
\) is of order \(k^2/(n^4 \rho_n^3)\). The remaining terms of
\(
\Var\!\big(\langle S_{1,1}(X), S_{2,2}(X)\rangle\big)
\)
have variance of order
\(o\!\left(k^2/(n^4 \rho_n^3)\right)\).
Since the number of summands in
\(\langle S_{1,1}(X), S_{2,2}(X)\rangle\)
is finite, the claim of the lemma follows.
\end{proof}

\section{Proofs of Lemmas from \cref{sec:estproof}}
In this section we prove the additional results from \cref{sec:estproof}.  

\subsection{Proof of \cref{lem:Very Good Set}} \label{sec:verygoodproof}

\begin{proof}[Proof of \cref{lem:Very Good Set}]
Write $\mathcal{E}_{22}=\{\, \|\hat{V}\|_{2,\infty} \lesssim \sqrt{k/n}\,\}$.  
By union bound,
\(
\mathbb{P}(\mathcal{E}_{{\sf very \ good}}^c)\le\mathbb{P}(\mathcal{E}_{{\sf good}}^c)+\mathbb{P}(\mathcal{E}_{22}^c).
\)
The first term was controlled in \cref{lem:Good Set}, namely
\(
\mathbb{P}(\mathcal{E}_{{\sf good}}^c)=O(n^{-19}).
\) To bound $\mathbb{P}(\mathcal{E}_{22}^c)$ we invoke Lemma~C.6 of \citet{agterberg_joint_2025} (where we have that $\theta_i \asymp \sqrt{\rho_n}$ for all $n$ as well as the fact that their result holds under the same assumptions we impose herein).  Under \cref{ass:asymptotic1} and \cref{ass:eigen-scaling} that lemma guarantees that, with probability at least $1-O(n^{-19})$, there exists an orthogonal alignment matrix $W_*$ such that
\(
\|\hat{V}-V W_*\|_{2,\infty}
\lesssim
\sqrt{\frac{\log n}{n\rho_n}}\|V\|_{2,\infty}.
\)
Combining this with the triangle inequality gives, with the same high probability,
\(
\|\hat{V}\|_{2,\infty}
\le \|\hat{V}-V W_*\|_{2,\infty} + \|V\|_{2,\infty}
\lesssim
\Big(1+\sqrt{\frac{\log n}{n\rho_n}}\Big)\|V\|_{2,\infty}.
\)
\cref{ass:incoherence} implies $\|V\|_{2,\infty}\lesssim\sqrt{k/n}$ and, since $\sqrt{\frac{\log n}{n\rho_n}}=o(1)$ under \cref{ass:sparse}, the right-hand side is $\lesssim\sqrt{k/n}$.  Hence
\(
\mathbb{P}(\mathcal{E}_{22}^c)=O(n^{-19}).
\)
Combining the two bounds yields
\(
\mathbb{P}(\mathcal{E}_{{\sf very \ good}})
\ge 1- O(n^{-19}),
\)
as claimed.
\end{proof}

\subsection{Proof of \cref{lem:series-expansion}}
\label{sec:series-expansionproof}
\begin{proof}[Proof of \cref{lem:series-expansion}]
We begin with the following identity for the difference between the true probability matrix and its rank-\(k\) approximation:
\begin{align*}
P - \widehat{A} = P - \widehat{V}\widehat{V}\t P \widehat{V}\widehat{V}\t - \widehat{V}\widehat{V}\t (A-P)\widehat{V}\widehat{V}\t.
\end{align*}
Invoking Theorem~1 of \citet{xia_normal_2021}, we can expand the projection operator \(\widehat{V}\widehat{V}\t\) as
\begin{align*}
\widehat{V}\widehat{V}\t = V V\t + \sum_{l \geq 1} S_{l}(X),
\end{align*}
where the operators \(S_{l}(X)\) are defined in \cref{sec:mainproof}. Substituting this expansion, we obtain
\begin{equation}
\begin{aligned}
\widehat{A} - P
&= \widehat{V}\widehat{V}\t P \widehat{V}\widehat{V}\t + \widehat{V}\widehat{V}\t X \widehat{V}\widehat{V}\t - P \\
&= \big(V V\t + \sum_{l \geq 1} S_{l}(X)\big)\,P\,\big(V V\t + \sum_{l \geq 1} S_{l}(X)\big)  + \big(V V\t + \sum_{l \geq 1} S_{l}(X)\big)\,X\,\big(V V\t + \sum_{l \geq 1} S_{l}(X)\big) - P.
\end{aligned}
\label{eq:series-expansion}
\end{equation}
Expanding and collecting terms, we obtain
\begin{align*}
\widehat{A} - P
&= \sum_{l \geq 1} S_{l}(X)\,P + \sum_{l \geq 1} P\,S_{l}(X) + \sum_{l_1,l_2 \geq 1} S_{l_1}(X)\,P\,S_{l_2}(X) \\
&\quad + V V\t X V V\t + \sum_{l \geq 1} S_{l}(X)\,X\,V V\t + \sum_{l \geq 1} V V\t X S_{l}(X) \\
&\quad + \sum_{l_1,l_2 \geq 1} S_{l_1}(X)\,X\,S_{l_2}(X).
\end{align*}
From this decomposition, it follows that the first-order term (i.e., the term linear in $X$) is
\begin{align*}
T_{1}(X) = V V\t X (I - V V\t) + (I-V V\t) X V V\t + V V\t X V V\t,
\end{align*}
while for general \(l \geq 2\),
\[
\begin{aligned}
T_{l}(X)
&= S_{l}(X)\,P + P\,S_{l}(X) + \sum_{l_1+l_2=l} S_{l_1}(X)\,P\,S_{l_2}(X) \\
&\quad + S_{P,l-1}(X)\,X\,V V\t + X V V\t S_{P,l-1}(X) \\
&\quad + \sum_{l_1+l_2=l-1} S_{l_1}(X)\,X\,S_{l_2}(X).
\end{aligned}
\]
By Theorem~1 of \cite{xia_normal_2021}, we have the operator norm bound
\(
\|S_{l}(X)\| \lesssim \left(\frac{\|X\|}{n\rho_n}\right)^{\!l}.
\)
Combining this with the decomposition above, we obtain
\begin{align*}
\|T_{l}(X)\|
&\lesssim (l+1)\,2^{\,l-1}\, (n\rho_n)\, \|S_{l}(X)\| + l\,2^{\,l-2}\,\|X\|\,\|S_{l}(X)\| \\
&\lesssim c_l \left(\frac{\|X\|}{n\rho_n}\right)^{\!l}\,(n\rho_n),
\end{align*}
for constants \(c_l>0\) such that \(\log c_l = O(1)\). This completes the proof.
\end{proof}

\subsection{Proof of \cref{inter1}}
\label{sec:interproof1}
\begin{proof}[Proof of \cref{inter1}]
In this proof, we will assume that \(A\) (in the one-sample case) and \({A^{(i)}}\) (in the two-sample case) both belong to $\mathcal{E}_{{\sf very  \ good}}$, as defined in \cref{lem:Very Good Set}, which happens with high probability as already proved.
\\ \ \\ \noindent
\begin{equation}
\begin{aligned} \label{intereq01}
\tr\bigl(VV\t\,\Delta M\bigr) &= \sum_{i,j} (V V\t)_{ii}(P_{ij}(1-P_{ij}) - {\hat P}_{ij}(1-{\hat P}_{ij}) )\\
&= \sum_{i,j} (V V\t)_{ii} (P_{ij}-{\hat P}_{ij}) + \sum_{i,j} (V V\t)_{ii} ({\hat P}_{ij}^2  - P_{ij}^2 ).
\end{aligned}
\end{equation}
We will start the proof by first proving that
\begin{align} \label{claim01}
    \bigg|\sum_{i,j} (V V\t)_{ii} (P_{ij}-{\hat P}_{ij})\bigg| = \bigg|\tr \left(VV \t \Diag(P - \hat P \right)\bigg| =O_p(\max (k,k \sqrt{\rho_n \log n})).
\end{align}
By \cref{lem:series-expansion}, we can expand $\Diag((\hat{P}-P)\cdot \mathbf{1}_n)$. We first use Bernstein’s inequality to bound the first–order term, and then control the combined contribution of higher–order terms.

Define
\(
u:=V V\t\cdot \mathbf{1}_n,  
y:=(I - V V\t)X u
\).
By \cref{lem:series-expansion}, the term of \(\Diag((\hat{P}-P)\cdot \mathbf{1}_n)\) that is linear in $X$, is  \(\Diag(((I-VV\t)XVV\t + X VV\t)\cdot \mathbf{1}_n)\). We will control the first term which is \( \Diag((I-VV\t)XVV\t\cdot \mathbf{1}_n)) = \Diag(y)\). The bound for the second term \( \Diag(XVV\t\cdot \mathbf{1}_n))\) follows in exactly the same way. Let $w:=\diag(V V\t)$ and set $\alpha:=w\t(I - V V\t)$. Then
\begin{equation}\label{T}
\begin{aligned}
\tau_1
&:=\tr\bigl(V V\t\Diag(y)\bigr)
=\sum_{i=1}^n (V V\t)_{ii} y_i
= w\t y
= w\t(I - V V\t)X u
= \alpha\t X u.
\end{aligned}
\end{equation}
Since $X$ is symmetric and centered with independent off–diagonal entries ${X_{ij}}$, $\tau_1$ can be rewritten as
\begin{equation}\label{T1}
\begin{aligned}
\tau_1
&=\sum_{i,j=1}^n \alpha_{i} {X_{ij}} u_j
=\sum_{i< j}\epsilon_{ij}{X_{ij}} + \sum_{i}\epsilon_{ii}x_{ii},
\end{aligned}
\end{equation}
where $\epsilon_{ii}:=\alpha_{i}u_i$ and $\epsilon_{ij}:=\alpha_{i}u_j+\alpha_{j}u_i$. To bound $\epsilon_{ij}$ and $\Var(\tau_1)$, note that
\begin{equation}\label{bds}
\|w\| \lesssim kn^{-1/2}, 
\quad
\|\alpha\|=\|w(I-V V\t)\| \lesssim kn^{-1/2}, 
\quad
\|u\|_{2}\lesssim \sqrt {kn},
\quad
\|u\|_{\infty}\lesssim \sqrt {k}.
\end{equation}

Next, we have $\|\alpha\|_\infty \le \|w\|_{\infty} + \|w\t V V\t\|_{\infty} = O\left( \frac{k}{n} \right)$ from \cref{ass:incoherence}. Hence
\(
|\epsilon_{ij}|
\le 2\|\alpha\|_\infty\|u\|_\infty
=O(k^2 n^{-1}).
\)
Therefore,
\begin{align*}
\Var(\tau_1)
=\sum_{i\le j}\epsilon_{ij}^2\,\Var({X_{ij}})
\lesssim\sum_{i\le j}(\alpha_{i}u_j+\alpha_{j}u_i)^2\rho_n
\lesssim k^2\rho_n,
\end{align*}
and moreover $\max_{i,j}|\epsilon_{ij}{X_{ij}}|\lesssim \frac{k^2}{n}$. By Bernstein’s inequality,
\begin{align*}
\Pr(|\tau_1|>t)
\lesssim \exp\Biggl(-\frac{t^2/2}{k^2\rho_n+k^2t/(3n)}\Biggr).
\end{align*}
So by choosing \(t \gtrsim k \sqrt{\rho_n \log n}\), we have
\begin{align}\label{claim02}
   \tau_1 = O_p \left( k\sqrt{\rho_n \log n}\right).
\end{align}
This establishes the bound on the first–order term for \(\tr(V V\t \Delta M)\).

The higher–order contributions are given by
\(
\sum_{l \ge 2} \tr\bigl(V V\t\,\Diag(T_{l}(X)\cdot \mathbf{1}_n)\bigr).
\)
For each \(l \ge 2\), we have
\(
\tr\bigl(V V\t\,\Diag(T_{l}(X)\cdot \mathbf{1}_n)\bigr)
  = \sum_{i=1}^{n} (V V\t)_{ii}\,\bigl(T_{l}(X)\cdot \mathbf{1}_n\bigr)_{i}
  \lesssim \sum_{i=1}^{n} \frac{k}{n}\,\bigl(T_{l}(X)\cdot \mathbf{1}_n\bigr)_{i},
\)
where the inequality follows from \cref{ass:incoherence}.  Now \(\sum_{i=1}^{n} \bigl(T_{l}(X)\cdot \mathbf{1}_n\bigr)_{i}\) is the sum of all entries in the vector $T_{l}(X)\cdot \mathbf{1}_n$. Thus \(\sum_{i=1}^{n} \bigl(T_{l}(X)\cdot \mathbf{1}_n\bigr)_{i} = \langle \mathbf{1}_n,(T_l(X) \mathbf{1}_n) \rangle \le n \|T_l(X)\|\).
Consequently,
\begin{align}\label{claim03}
    \sum_{l \ge 2} \tr\bigl(V V\t\,\Diag(T_{l}(X)\cdot \mathbf{1}_n)\bigr) \lesssim \frac{k}{n}\, \sum_{i=1}^{n} \bigl(T_{l}(X)\cdot \mathbf{1}_n\bigr)_{i}
\le
\sum_{l \ge 2} \|T_{l}(X)\|
= O_p(k),
\end{align}
where the final bound follows directly from \cref{lem:Good Set,lem:series-expansion}, which control the operator norms of the higher–order terms in the expansion. Thus \cref{claim02,claim03} together prove \cref{claim01}.

Next, we will bound the second term on the right hand side of \cref{intereq01} by proving that 
\begin{align}\label{claim04}
    \bigg|\sum_{i,j} (V V\t)_{ii} (P_{ij}^2-{\hat P}_{ij}^2)\bigg| = o_p(\max (k,k \sqrt{\rho_n \log n})).
\end{align}
We decompose the term further:
\begin{equation}
\begin{aligned} \label{intereq02}
    \bigg|\sum_{i,j} (V V\t)_{ii} (P_{ij}^2-{\hat P}_{ij}^2)\bigg| &= \bigg|\sum_{i,j} (V V\t)_{ii} P_{ij}(P_{ij} - \hat P_{ij}) - \sum_{i,j} (V V\t)_{ii}(P_{ij} - \hat P_{ij})^2\bigg| \\
    & \le \bigg|\sum_{i,j} (V V\t)_{ii} P_{ij}(P_{ij} - \hat P_{ij}) \bigg| + \bigg| \sum_{i,j} (V V\t)_{ii}(P_{ij} - \hat P_{ij})^2\bigg|.
\end{aligned}
\end{equation}
The second term on the right hand side of \cref{intereq02} can be bounded using \cref{lem:series-expansion}:
\begin{equation}
\begin{aligned}\label{intereq03}
    \bigg| \sum_{i,j} (V V\t)_{ii}(P_{ij} - \hat P_{ij})^2\bigg| &\le \max_i |(V V\t)_{ii}| \|P - \hat P\|_F^2 \\
    & \lesssim \frac{k^{3/2}}{n} \|P - \hat P\|_2^2 = O_p (k^{3/2} \rho_n) = o_p (k ).
\end{aligned}
\end{equation}
The first term on the right hand side of \cref{intereq02} can be expressed as $$ \bigg| \sum_{i,j} (V V\t)_{ii} P_{ij}(P_{ij} - \hat P_{ij}) \bigg| = \bigg| \tr(P \Diag(VV \t) (P - \hat P)) \bigg|.$$ We calculate its order in a manner very similar to what we did for the first term of \cref{intereq01} earlier in the proof. By \cref{lem:series-expansion}, we can expand $\Diag(\hat{P}-P)$. We first use Bernstein’s inequality to bound the first-order term, and then control the combined contribution of higher-order terms:
$$
    \tr(P \Diag(VV \t) (P - \hat P)) = \tr(P \Diag(VV \t) T_1(X)) + \sum_{l \ge 2} \tr(P \Diag(VV \t) T_l(X)).
$$
Substituting $T_1(X)$ from \cref{lem:series-expansion}, we see that
\begin{align} \label{intereq04}
    \tau_2 = \tr(P \Diag(VV \t) T_1(X)) = \tr(P \Diag(VV \t) X) = \sum_{i,j}(VV \t)_{ii} P_{ij} X_{ij}.
\end{align}
We can easily check that $\Var(\tau_2) \lesssim k^2 \rho_n^3$ from \cref{ass:asymptotic1,ass:incoherence} and $|(VV \t)_{ii} P_{ij} X_{ij}| \le \frac{k \rho_n}{n}$. By Bernstein’s inequality,
\begin{align*}
    \Pr(|\tau_2|>t) \lesssim \exp\Biggl(-\frac{t^2/2}{k^2\rho_n^3+k^2 t \rho_n/(3n)}\Biggr).
\end{align*}
Choosing $t \gtrsim k \sqrt{\rho_n^3 \log n}$, we obtain 
\begin{align}\label{claim05}
    \tau_2 = O_p \left( k\sqrt{\rho_n^3 \log n}\right) = o_p (k ).
\end{align}
The higher-order terms can be bounded using \cref{lem:series-expansion,lem:Good Set}:
\begin{equation}
\begin{aligned}\label{intereq04}
    \sum_{l \ge 2} \tr(P \Diag(VV \t) T_l(X)) &\le \sum_{l \ge 2} k^{3/2} \|P\| \|\Diag(VV \t)\| \|T_l(X)\| \\
    & =O_p \left( \sum_{l \ge 2} k^{5/2}  c_l \left(\frac{\|X\|}{n\rho_n}\right)^{\!l}\,n\rho_n^2 \right) = O_p \left( k^{5/2} \rho_n \right) = o_p(k ).
\end{aligned}
\end{equation}
Thus, \cref{claim05,intereq04,intereq03} together prove \cref{claim04}. Since we have proved both \cref{claim04,claim01}, \cref{inter-a} also follows.

An identical argument applies to \cref{inter-b}, where again we expand $\Delta\Sigma$ and proceed analogously to the steps used for \cref{inter-a}.
\end{proof}
\subsection{Proof of \cref{inter3}}
\label{sec:interproof3}
\begin{proof}[Proof of \cref{inter3}]
In this proof, we will assume that \(A\) (in the one-sample case) and \({A^{(i)}}\) (in the two-sample case) both belong to $\mathcal{E}_{{\sf very  \ good}},$ as defined in \cref{lem:Very Good Set}, which happens with high probability as already proved.
\\ \ \\ 
\noindent
\textbf{Proof of \cref{inter-d1}.} Just as in the proof of \cref{inter-a}, it suffices to show
\begin{align} \label{claim06}
    \bigg|\tr\Bigl((V\Lambda^{-2}V\t-\widehat V\widehat\Lambda^{-2}\widehat V\t)\,\Diag({\hat P})\Bigr)\bigg|
    = O_p \left(\max \left(\frac{k}{n^3 \rho_n^2},\frac{k \sqrt{\log n}}{n^3 \rho_n^{3/2}}\right)\right),
\end{align}
because then we can similarly show that 
\begin{align*}
    \bigg|\tr\Bigl((V\Lambda^{-2}V\t-\widehat V\widehat\Lambda^{-2}\widehat V\t)\,\Diag({\hat P} \circ {\hat P})\Bigr)\bigg|
    = o_p \left(\max \left(\frac{k}{n^3 \rho_n^2},\frac{k \sqrt{\log n}}{n^3 \rho_n^{3/2}}\right)\right).
\end{align*}
To that end we first work with $D:=\Diag(P)$ and prove
\[
\bigg|\tr\Bigl((V\Lambda^{-2}V\t-\widehat V\widehat\Lambda^{-2}\widehat V\t)\,\Diag(P)\Bigr)\bigg|
=O_p \left(\max \left(\frac{k}{n^3 \rho_n^2},\frac{k \sqrt{\log n}}{n^3 \rho_n^{3/2}}\right)\right).
\]
Using the algebraic identity
\begin{align}\label{dseries}
\widehat V\widehat\Lambda^{-2}\widehat V^\top - V\Lambda^{-2}V^\top
= \big(\widehat V\widehat V^\top - VV^\top\big)\widehat V\widehat\Lambda^{-2}\widehat V^\top
+ V\big(V^\top\widehat V\widehat\Lambda^{-2}-\Lambda^{-2}V^\top\widehat V\big)\widehat V^\top
+ V\Lambda^{-2}V^\top\big(\widehat V\widehat V^\top - VV^\top\big),
\end{align}
and observing that
\[
\begin{aligned}
V\big(V^\top\widehat V\widehat\Lambda^{-2}-\Lambda^{-2}V^\top\widehat V\big)\widehat V^\top
&=V\Lambda^{-2}\big(\Lambda^2V^\top\widehat V - V^\top\widehat V\widehat\Lambda^2\big)\widehat\Lambda^{-2}\widehat V^\top\\
&=V\Lambda^{-2}\big(V^\top P^2\widehat V - V^\top A^2\widehat V\big)\widehat\Lambda^{-2}\widehat V^\top\\
&=V\Lambda^{-2}V^\top\big(P^2-A^2\big)\widehat V\widehat\Lambda^{-2}\widehat V^\top\\
&=-V\Lambda^{-2}V^\top\big(PX+XP+X^2\big)\widehat V\widehat\Lambda^{-2}\widehat V^\top,
\end{aligned}
\]
we expand the leading (first–order) contributions using the series representation
\(
\widehat V\widehat V\t - VV\t = S_1(X) + \sum_{j\ge 2} S_j(X),
\)
with
\(
S_1(X)=\beta^{\perp}X\beta^{-1}+\beta^{-1}X\beta^{\perp}.
\)
Collecting the principal terms and using \cref{dseries} yields the following bound:
\begin{equation}\label{dseries1}
\begin{aligned}
\bigg|\tr&\Bigl((\widehat V\widehat\Lambda^{-2}\widehat V\t-V\Lambda^{-2}V\t)\,\Diag(P)\Bigr)\bigg| \\
&\le \Big|\tr\Big(\big[V\Lambda^{-2}V\t S_1(X)+S_1(X)V\Lambda^{-2}V\t -\,V\Lambda^{-2}V\t(XP+PX)V\Lambda^{-2}V\t\big]\Diag(P)\Big)\Big|\\[4pt]
&\quad+\Big|\tr\Big(S_1(X)\big(\widehat V\widehat\Lambda^{-2}\widehat V\t-V\Lambda^{-2}V\t\big)\Diag(P)\Big)\Big|\\[4pt]
&\quad+\Big|\tr\Big(V\Lambda^{-2}V\t(XP+PX)\big(\widehat V\widehat\Lambda^{-2}\widehat V\t-V\Lambda^{-2}V\t\big)\Diag(P)\Big)\Big|\\[4pt]
&\quad+\Big|\tr\Big(\big(\sum_{j\ge2}V\Lambda^{-2}V\t S_j(X)+\sum_{j\ge2}S_j(X)\widehat V\widehat\Lambda^{-2}\widehat V\t-\,V\Lambda^{-2}V\t X^2\widehat V\widehat\Lambda^{-2}\widehat V\t\big)\Diag(P)\Big)\Big|.
\end{aligned}
\end{equation}
The first term on the right hand side of \cref{dseries1} corresponds precisely to those components in the expansion \cref{dseries} that are linear in the noise matrix \(X\). We bound these contributions using a standard Bernstein–type concentration argument.

Consider first the term
\(
\tr\left(V\Lambda^{-2}V\t S_{1}(X)\,\Diag(P)\right).
\)
By direct expansion, this term can be written as
\[
\tr\left(V\Lambda^{-2}V\t S_{1}(X)\,\Diag(P)\right)
=\sum_{t,q=1}^{n} w_{tq}\,x_{tq}
=\sum_{t>q}(w_{tq}+w_{qt})x_{tq}+\sum_{t} w_{tt}x_{tt},
\]
where the deterministic weights \(w_{tq}\) are given by
\(
w_{tq}
=\sum_{s=1}^{k}\sum_{r=1}^{n}
\frac{1}{\lambda_{ss}^{3}}\,
v_{rs}v_{ts}\,(I-VV\t)_{qr}\,p_{rr}.
\)
Under \cref{ass:asymptotic1,ass:eigen-scaling,ass:incoherence}, these coefficients satisfy
\[|w_{tq}|  \lesssim  \bigg| \sum_{s=1}^{k}
\frac{1}{\lambda_{ss}^{3}}\,
v_{qs}v_{ts}\,p_{qq} \bigg| \lesssim \frac{k}{n^{4}\rho_{n}^{2}}.\]
Moreover, since the entries \(\{x_{tq}\}\) are independent (up to symmetry) with variance of order \(\rho_n\), we have
\(
\Var\left(\sum_{t,q=1}^{n} w_{tq}x_{tq}\right)
\lesssim\frac{k^{2}}{n^{6}\rho_{n}^{3}}.
\)
Applying Bernstein’s inequality yields, for any \(z>0\),
\[
\mathbb{P}\left(
\left|\sum_{t,q=1}^{n} w_{tq}x_{tq}\right|\ge z
\right)
\le
2\exp\left(
-\frac{z^{2}/2}{
\frac{k^{2}}{n^{6}\rho_{n}^{3}}+\frac{kz}{3n^{4}\rho_{n}^{2}}
}
\right).
\]
Taking \(z\asymp \frac{k \sqrt {\log n}}{n^{3}\rho_{n}^{3/2}}\) gives
\begin{align}\label{claim001}
    \tr\left(V\Lambda^{-2}V\t S_{1}(X)\,\Diag(P)\right)
=O_{p}\left(\frac{k\sqrt {\log n}}{n^{3}\rho_{n}^{3/2}}\right).
\end{align}
A completely analogous argument applies to
\(
\tr\left(V\Lambda^{-2}V\t X V\Lambda^{-1}V\t\,\Diag(P)\right),
\)
which can again be written in the form
\(
\sum_{t,q=1}^{n} w_{tq}x_{tq}
=\sum_{t>q}(w_{tq}+w_{qt})x_{tq}+\sum_{t} w_{tt}x_{tt},
\)
with coefficients
\[
w_{tq}
=\sum_{s,j=1}^{k}\sum_{r=1}^{n}
\frac{1}{\lambda_{ss}^{2}\lambda_{jj}}\,
v_{rs}v_{ts}v_{qj}v_{rj}\,p_{rr}.
\]
Using the same approach one obtains
\begin{align}\label{claim002}
   \tr\left(V\Lambda^{-2}V\t X V\Lambda^{-1}V\t\,\Diag(P)\right)
=O_{p}\left(\frac{k\sqrt {\log n}}{n^{3}\rho_{n}^{3/2}}\right). 
\end{align}
The remaining linear terms,
\(
\tr\left(S_{1}(X)V\Lambda^{-2}V\t\,\Diag(P)\right)\)
and \(
\tr\left(V\Lambda^{-1}V\t X V\Lambda^{-2}V\t\,\Diag(P)\right),
\)
are handled in exactly the same manner. Collecting all such contributions from \cref{claim001,claim002}, we conclude that the total first–order (linear in \(X\)) component in \cref{dseries1} is
\(
O_{p}\left(\frac{k\sqrt {\log n}}{n^{3}\rho_{n}^{3/2}}\right).
\)

Bounding the final term in \cref{dseries1} is straightforward. We first establish a high–probability bound for
\[
\Bigg|\tr\Bigg(\sum_{j\ge 2} V\Lambda^{-2}V\t S_j(X)\,\Diag(P)\Bigg)\Bigg|.
\]
By \cref{lem:Good Set}, on the corresponding high–probability event, we have
\(
\|S_j(X)\| \lesssim \frac{1}{(n\rho_n)^{j/2}},\) for \(j\ge 2.
\)
Consequently,
\begin{equation}
\begin{aligned}\label{dseries4}
\Bigg|\tr\Bigg(\sum_{j\ge 2} V\Lambda^{-2}V\t S_j(X)\,\Diag(P)\Bigg)\Bigg|
&\le \sum_{j\ge 2} \Big|\tr\big(V\Lambda^{-2}V\t S_j(X)\,\Diag(P)\big)\Big| \\
&\le k\sum_{j\ge 2} \|\Diag(P)V\Lambda^{-2}V\t\| \, \|S_j(X)\| \\
&=O_p \left( \sum_{j\ge 2}
k \frac{\rho_n}{n^2\rho_n^2}\,
\frac{4^{j/2}}{(n\rho_n)^{j/2}} \right) \\
&=O_p \left( \frac{k}{n^3\rho_n^2}\right).
\end{aligned}
\end{equation}
Applying an identical argument, we have
\[
\Bigg|\tr\Bigg(\sum_{j\ge 2} S_j(X) \hat V \hat \Lambda^{-2} \hat V\t\,\Diag(P)\Bigg)\Bigg| =O_p \left( \frac{k}{n^3 \rho_n^2} \right).
\]
Finally, consider the term
\(
\tr\Big(V\Lambda^{-2}V\t X^2\widehat V\widehat\Lambda^{-2}\widehat V\t\,\Diag(P)\Big).
\)
By the Cauchy–Schwarz inequality for the Frobenius inner product,
\begin{equation}
\begin{aligned}\label{dseries5}
\Big|
\tr\!\Big(
V\Lambda^{-2}V^{\top} X^{2}\,
\widehat V \widehat\Lambda^{-2}\widehat V^{\top}\,
\Diag(P)
\Big)
\Big|
&\le
k\,
\big\|V\Lambda^{-2}V^{\top} X^{2}\big\|\,
\big\|\widehat V \widehat\Lambda^{-2}\widehat V^{\top}\Diag(P)\big\| \\
&\le
k\,
\|X\|^{2}\,
\|V\Lambda^{-2}V^{\top}\|\,
\big\|\widehat V \widehat\Lambda^{-2}\widehat V^{\top}\Diag(P)\big\| \\
&=O_p \left(
k\,
(n\rho_n)\,
\frac{1}{n^{2}\rho_n^{2}}\,
\frac{\rho_n}{n^{2}\rho_n^{2}} \right) = O_p \left(
\frac{k^{2}}{n^{3}\rho_n^{2}} \right),
\end{aligned}
\end{equation}
where the final bound follows from \cref{lem:Very Good Set,ass:eigen-scaling,ass:asymptotic1}. Combining the above estimates, we conclude that the last term on the right hand side of \cref{dseries1} is of order
\(
O_p\left(\frac{k}{n^3\rho_n^2}\right).
\)

We now turn to bounding the second term on the right hand side of \cref{dseries1}. To this end, we once again decompose 
\(\widehat V\widehat\Lambda^{-2}\widehat V^\top - V\Lambda^{-2}V^\top\) using the expansion in \cref{dseries}. This yields
\begin{equation}
\begin{aligned}\label{dseries2}
\Big|
\tr\Big(
S_1(X)\big(\widehat V\widehat\Lambda^{-2}\widehat V^\top - V\Lambda^{-2}V^\top\big)\Diag(P)
\Big)
\Big|
&\le
\Big|
\tr\Big(
S_1(X)\big(\widehat V\widehat V^\top - VV^\top\big)
\widehat V\widehat\Lambda^{-2}\widehat V^\top \Diag(P)
\Big)
\Big| \\
&\quad +
\Big|
\tr\Big(
S_1(X) V\big(V^\top\widehat V\widehat\Lambda^{-2}-\Lambda^{-2}V^\top\widehat V\big)
\widehat V^\top \Diag(P)
\Big)
\Big| \\
&\quad +
\Big|
\tr\Big(
S_1(X) V\Lambda^{-2}V^\top
\big(\widehat V\widehat V^\top - VV^\top\big)\Diag(P)
\Big)
\Big| \\
&\le
\sum_{j\ge 1}
\Big|
\tr\Big(
S_1(X) S_j(X)\widehat V\widehat\Lambda^{-2}\widehat V^\top \Diag(P)
\Big)
\Big| \\
&\quad +
\Big|
\tr\Big(
S_1(X) V\Lambda^{-2}V^\top
\big(PX+XP+X^2\big)
\widehat V\widehat\Lambda^{-2}\widehat V^\top \Diag(P)
\Big)
\Big| \\
&\quad +
\sum_{j\ge 1}
\Big|
\tr\Big(
S_1(X) V\Lambda^{-2}V^\top S_j(X)\Diag(P)
\Big)
\Big|.
\end{aligned}
\end{equation}
All terms appearing on the right hand side of \cref{dseries2} can be controlled using
\cref{ass:asymptotic1,ass:dense,ass:eigen-scaling,ass:incoherence}, together with the high–probability bounds established in \cref{lem:Very Good Set}. Indeed,
\begin{align*}
\sum_{j\ge 1}
\Big|
\tr\Big(
S_1(X) S_j(X)\widehat V\widehat\Lambda^{-2}\widehat V^\top \Diag(P)
\Big)
\Big|
&\le
\sum_{j\ge 1}
k\|S_1(X)\| \, \|S_j(X)\| \,
\|\widehat V\widehat\Lambda^{-2}\widehat V^\top \Diag(P)\| \\
&= O_p \left(
\sum_{j\ge 1}
k \frac{\rho_n}{n^{2.5}\rho_n^{2.5}}\,
\frac{4^{j/2}}{(n\rho_n)^{j/2}}\right)\\
&=
O_p \left(\frac{k}{n^3\rho_n^2} \right),
\end{align*}
where the second line follows similarly as in \cref{dseries4}. An identical bound holds for
\[
\sum_{j\ge 1}
\Big|
\tr\Big(
S_1(X) V\Lambda^{-2}V^\top S_j(X)\Diag(P)
\Big)
\Big|
\]
by the same argument. Moreover, \cref{dseries5} yields the same order of magnitude for
\[
\Big|
\tr\Big(
S_1(X) V\Lambda^{-2}V^\top
\big(PX+XP+X^2\big)
\widehat V\widehat\Lambda^{-2}\widehat V^\top \Diag(P)
\Big)
\Big|.
\]
Combining these bounds, we conclude that the second term on the right hand side of \cref{dseries1} is of order
\(
O_p\left(\frac{k}{n^3\rho_n^2}\right).
\)

By an entirely analogous argument, one can show that the third term on the right hand side of \cref{dseries1} is also
\(
O_p\left(\frac{k}{n^3\rho_n^2}\right).
\)
Specifically, one again decomposes
\(\widehat V\widehat\Lambda^{-2}\widehat V^\top - V\Lambda^{-2}V^\top\) using \cref{dseries} and applies
\cref{ass:asymptotic1,ass:dense,ass:eigen-scaling,ass:incoherence} together with \cref{lem:Very Good Set}.
The resulting calculations mirror those used for the second term and therefore yield the stated bound.

\medskip
We now turn to the deviation term involving $\Diag(\widehat P)$. Observe that
\begin{equation}
\begin{aligned}\label{claim07}
\tr\bigl((\widehat V \widehat \Lambda^{-2} \widehat V^\top - V \Lambda^{-2} V^\top)\Diag(\widehat P)\bigr)
&=
\tr\bigl((\widehat V \widehat \Lambda^{-2} \widehat V^\top - V \Lambda^{-2} V^\top) \Diag(P)\bigr)
+
\tr\bigl((\widehat V \widehat \Lambda^{-2} \widehat V^\top - V \Lambda^{-2} V^\top)( \Diag(\widehat P - P)\bigr) \\
&=
O_p \left(\max \left(\frac{k}{n^3 \rho_n^2},\frac{k \sqrt{\log n}}{n^3 \rho_n^{3/2}}\right)\right)
+
\tr\bigl((\widehat V \widehat \Lambda ^{-2}\widehat V^\top - V \Lambda^{-2} V^\top)\Diag({\widehat P} - P)\bigr),
\end{aligned}
\end{equation}
where the bound for the first term follows from \cref{claim06}. It therefore suffices to control the second term. To this end, we invoke both the decomposition in \cref{dseries} and the series expansion in \cref{lem:series-expansion}, and obtain
\begin{equation}
\begin{aligned}\label{dseries6}
\Big|
\tr\bigl((\widehat V \widehat \Lambda^{-2} \widehat V^\top - V \Lambda^{-2} V^\top)\Diag({\widehat P} - P)\bigr)
\Big|
&\le
\Big|
\tr\bigl((\widehat V\widehat V^\top - VV^\top)
\widehat V\widehat\Lambda^{-2}\widehat V^\top
\Diag({\widehat P} - P)\bigr)
\Big| \\
&\quad +
\Big|
\tr\bigl(
V\big(V^\top\widehat V\widehat\Lambda^{-2}-\Lambda^{-2}V^\top\widehat V\big)
\widehat V^\top
\Diag({\widehat P} - P)\bigr)
\Big| \\
&\quad +
\Big|
\tr\bigl(
V\Lambda^{-2}V^\top(\widehat V\widehat V^\top - VV^\top)
\Diag({\widehat P} - P)\bigr)
\Big| \\
&\le
\sum_{l\ge 1}\sum_{j\ge 1}
\Big|
\tr\bigl(
S_j(X)\widehat V\widehat\Lambda^{-2}\widehat V^\top
\Diag(T_l(X))\bigr)
\Big| \\
&\quad +
\sum_{l\ge 1}
\Big|
\tr\bigl(
V\Lambda^{-2}V^\top(PX+XP+X^2)
\widehat V\widehat\Lambda^{-2}\widehat V^\top
\Diag(T_l(X))\bigr)
\Big| \\
&\quad +
\sum_{l\ge 1}\sum_{j\ge 1}
\Big|
\tr\bigl(
V\Lambda^{-2}V^\top S_j(X)\Diag(T_l(X))\bigr)
\Big|.
\end{aligned}
\end{equation}
We now derive high–probability bounds for the terms on the right hand side of \cref{dseries6}. Using \cref{ass:asymptotic1,ass:dense,ass:eigen-scaling,ass:incoherence} together with the event in \cref{lem:Very Good Set} and the series representation in \cref{lem:series-expansion}, we obtain
\begin{align*}
\sum_{l\ge 1}\sum_{j\ge 1}
\Big|
\tr\bigl(
S_j(X)\widehat V\widehat\Lambda^{-2}\widehat V^\top
\Diag(T_l(X))\bigr)
\Big|
&\le k
\sum_{l,j\ge 1}
\|S_j(X)\| \,
\|\widehat V\widehat\Lambda^{-2}\widehat V^\top\Diag(T_l(X))\| \\
&= O_p \left(
\sum_{l,j\ge 1}
\frac{k}{(n\rho_n)^2}
\frac{4^{j/2}}{(n\rho_n)^{j/2}}
\frac{4^{l/2}}{(n\rho_n)^{l/2}}
\rho_n \right)
= O_p \left(
\frac{k}{n^3\rho_n^2} \right).
\end{align*}
The final inequality follows from the bound
\(
(T_l(X))_{ij}\lesssim
\rho_n\Big(\frac{\|X\|}{n\rho_n}\Big)^{l},
\)
which holds by the definition of $T_l(X)$ in \cref{lem:series-expansion}. The remaining two terms in \cref{dseries6} can be bounded in an identical manner by applying the same norm inequalities and high–probability controls. Combining all these bounds for the four terms in \cref{dseries1} and \cref{claim07} completes the proof of \cref{claim06}.
\\ \ \\
\textbf{Proof of \cref{inter-d2}}
This bound follows directly from \cref{inter-d1}. Indeed,
\begin{align*}
\big|
\tr\bigl((V \Lambda^{-2} V\t - \widehat V \widehat \Lambda^{-2} \widehat V\t)\,
\Diag(\widehat{\Sigma}\cdot \mathbf{1}_n)\bigr)
\big|
&\lesssim
n\,\big|
\tr\bigl((V \Lambda^{-2} V\t - \widehat V \widehat \Lambda^{-2} \widehat V\t)\,
\Diag(\widehat{\Sigma}\bigr)
\big| \\
&=O_p \left(\max \left(\frac{k}{n^2 \rho_n^2},\frac{k \sqrt{\log n}}{n^2 \rho_n^{3/2}}\right)\right),
\end{align*}
which completes the proof.
\end{proof}
\subsection{Proof of \cref{inter2}}
\label{sec:interproof2}
\begin{proof}[Proof of \cref{inter2}]
Just like in \cref{sec:interproof1}, we will assume that \(A\) (in the one-sample case) and \({A^{(i)}}\) (in the two-sample case) both belong to  $\mathcal{E}_{{\sf very  \ good}},$ as defined in \cref{lem:Very Good Set}, which happens with high probability as already proved.
\\ \ \\ \noindent
\textbf{Proof of \cref{inter-c}}
We have
\(
\alpha=\diag(\beta^{-2})
\)
and
\(
\widehat{\alpha}=\diag(\widehat{\beta}^{-2}).
\)
Once again, we invoke \cref{dseries} to obtain the bound
\begin{equation}\label{dseriesnew1}
\begin{aligned}
\bigg\|&
\diag\Bigl(
\widehat V \widehat\Lambda^{-2}\widehat V^{\top}
-
V\Lambda^{-2}V^{\top}
\Bigr)
\bigg\| \\
&\le
\Big\|
\diag\Big(
V\Lambda^{-2}V^{\top} S_1(X)
+
S_1(X)V\Lambda^{-2}V^{\top}  
-
V\Lambda^{-2}V^{\top}(XP+PX)V\Lambda^{-2}V^{\top}
\Big)
\Big\| \\[4pt]
&\quad+
\Big\|
\diag\Big(
S_1(X)
\big(
\widehat V \widehat\Lambda^{-2}\widehat V^{\top}
-
V\Lambda^{-2}V^{\top}
\big)
\Big)
\Big\| \\[4pt]
&\quad+
\Big\|
\diag\Big(
V\Lambda^{-2}V^{\top}(XP+PX)
\big(
\widehat V \widehat\Lambda^{-2}\widehat V^{\top}
-
V\Lambda^{-2}V^{\top}
\big)
\Big)
\Big\| \\[4pt]
&\quad+
\Big\|
\diag\Big(
\sum_{j\ge2} V\Lambda^{-2}V^{\top} S_j(X)
+
\sum_{j\ge2} S_j(X)\widehat V \widehat\Lambda^{-2}\widehat V^{\top}  -
V\Lambda^{-2}V^{\top} X^2 \widehat V \widehat\Lambda^{-2}\widehat V^{\top}
\Big)
\Big\|.
\end{aligned}
\end{equation}
Consider first the vector
\(
r
=
\diag\left(
V\Lambda^{-2}V^{\top} S_{1}(X)
\right)
=
\diag\left(
\beta^{-3} X \beta^{\perp}
\right).
\)
For each \(i\),
\(
r_i
=
u_i^{\top} \Lambda^{-3} w_i,
\)
where
\(u_i = V^{\top} e_i\)
and
\(w_i = V^{\top} X \beta^{\perp} e_i\).
By the Cauchy--Schwarz inequality,
\(
|r_i|^2
\le
\|\Lambda^{-3}\|_2^2
\|u_i\|_2^2
\|w_i\|_2^2.
\)
Consequently,
\[
\|r\|_2^2
=
\sum_{i=1}^n r_i^2
\le
\|\Lambda^{-3}\|_2^2
\sum_{i=1}^n
\|u_i\|_2^2
\|w_i\|_2^2
\lesssim
\frac{k}{n}
\|\Lambda^{-3}\|_2^2
\,
\|V^{\top} X \beta^{\perp}\|_F^2,
\]
where the final inequality follows from the incoherence assumption in \cref{ass:incoherence}.

Using standard norm inequalities,
\(
\|V^{\top} X \beta^{\perp}\|_F
\le
\|V\|_F
\|X\|_2
\|\beta^{\perp}\|_2
\le
\sqrt{k}\,\|X\|_2.
\)
Therefore,
\(
\|r\|_2
\lesssim
\frac{k}{\sqrt{n}}
\|X\|_2
\|\Lambda^{-3}\|_2.
\)
On the event $\mathcal{E}_{{\sf good}}$, together with the eigen-scaling assumption in \cref{ass:eigen-scaling}, we conclude that
\(
\|r\|_2
= O_p \left(
\frac{k}{n^{3}\rho_n^{2.5}}\right).
\)
Similarly we can also show that \(\|\diag\left(S_{1}(X)
V\Lambda^{-2}V^{\top}
\right)\| =O_p \left( \frac{k}{n^{3}\rho_n^{2.5}}\right)\).

Next, consider the vector
\(
r=\diag\left(\beta^{-2} X \beta^{-1}\right),
\)
for which, for each \(i\),
\(
r_i
= (V^{\top} e_i)^{\top}\Lambda^{-2}(V^{\top}X\beta^{-1}e_i).
\)
Proceeding exactly as in the previous argument, using Cauchy-Schwarz inequality, the incoherence condition in \cref{ass:incoherence}, standard norm inequalities,  the bound on event $\mathcal{E}_{{\sf good}}$ as in \cref{lem:Good Set}, together with the eigenvalue scaling in \cref{ass:eigen-scaling}, we obtain
\(
\|\diag(V\Lambda^{-2}V^{\top} X P V\Lambda^{-2}V^{\top})\|
=O_p \left( \frac{k}{n^{3}\rho_n^{2.5}} \right).
\)
The same bound holds for
\(
\|\diag(V\Lambda^{-2}V^{\top} P X V\Lambda^{-2}V^{\top})\|.
\)
Consequently, the entire first term on the right-hand side of \cref{dseriesnew1} is bounded by
\(
O_p \left(\frac{k}{n^{3}\rho_n^{2.5}}\right) .
\)

We now turn to bounding the second term in \cref{dseriesnew1}. To this end, we once again decompose 
\(\widehat V\widehat\Lambda^{-2}\widehat V^\top - V\Lambda^{-2}V^\top\) using the expansion in \cref{dseries}. This yields
\begin{equation}
\begin{aligned}\label{dseriesnew2}
\Big|
\diag\Big(
S_1(X)\big(\widehat V\widehat\Lambda^{-2}\widehat V^\top - V\Lambda^{-2}V^\top\big)
\Big)
\Big|
&\le
\Big|
\diag\Big(
S_1(X)\big(\widehat V\widehat V^\top - VV^\top\big)
\widehat V\widehat\Lambda^{-2}\widehat V^\top 
\Big)
\Big| \\
&\quad +
\Big|
\diag\Big(
S_1(X) V\big(V^\top\widehat V\widehat\Lambda^{-2}-\Lambda^{-2}V^\top\widehat V\big)
\widehat V^\top 
\Big)
\Big| \\
&\quad +
\Big|
\diag\Big(
S_1(X) V\Lambda^{-2}V^\top
\big(\widehat V\widehat V^\top - VV^\top\big)
\Big)
\Big| \\
&\le
\sum_{j\ge 1}
\Big|
\diag\Big(
S_1(X) S_j(X)\widehat V\widehat\Lambda^{-2}\widehat V^\top 
\Big)
\Big| \\
&\quad +
\Big|
\diag\Big(
S_1(X) V\Lambda^{-2}V^\top
\big(PX+XP+X^2\big)
\widehat V\widehat\Lambda^{-2}\widehat V^\top 
\Big)
\Big| \\
&\quad +
\sum_{j\ge 1}
\Big|
\diag\Big(
S_1(X) V\Lambda^{-2}V^\top S_j(X)
\Big)
\Big|.
\end{aligned}
\end{equation}
We first focus on the first term
\(
r_j
=\diag\Big(
S_1(X)\, S_j(X)\,\widehat V\widehat\Lambda^{-2}\widehat V^{\top}
\Big).
\)
For each \(i\), we may write
\[
(r_{1j})_i
= u_i^{\top}\widehat\Lambda^{-2} w_i,
\qquad
u_i := e_i^{\top} S_1(X) S_j(X) \widehat V,
\quad
w_i := \widehat V^{\top} e_i .
\]
By Cauchy-Schwarz inequality,
\(
(r_{1j})_i^2 \le
\|\widehat\Lambda^{-2}\|_2^{\,2}\,
\|u_i\|_2^{\,2}\,
\|w_i\|_2^{\,2}.
\)
Summing over \(i\) and using the incoherence of \(\widehat V\) from \cref{lem:Very Good Set}, we obtain
\begin{equation}\label{side1}
\begin{aligned}
\|r_{1j}\|
&\lesssim
\frac{\sqrt{k}}{\sqrt{n}}\,
\|\widehat\Lambda^{-2}\|\,
\|S_1(X) S_j(X)\widehat V\|_F \\[4pt]
&\lesssim
\frac{k}{\sqrt{n}}\,
\|\widehat\Lambda^{-2}\|\,
\|S_1(X) S_j(X)\| \\[4pt]
&\lesssim
\frac{k}{\sqrt{n}}\,
\|\widehat\Lambda^{-2}\|\,
\|S_1(X)\|\,
\|S_j(X)\| \\[4pt]
&=O_p \left(
\frac{k}{\sqrt{n}}\,
\frac{1}{n^{2}\rho_n^{2}}
\left(\frac{4}{n\rho_n}\right)^{(j+1)/2}\right).
\end{aligned}
\end{equation}
Summing the bound in \cref{side1} over \(j\ge 1\) yields
\[
\sum_{j\ge 1}
\Big\|
\diag\Big(
S_1(X)\, S_j(X)\,\widehat V\widehat\Lambda^{-2}\widehat V^{\top}
\Big)
\Big\|
=O_p \left(
\frac{k}{n^{3.5}\rho_n^{3}}\right) .
\]
Similarly, consider
\(
r_{2j}
=\diag\Big(
S_1(X)\, V \Lambda^{-2} V^{\top} S_j(X)
\Big).
\)
For each \(i\), we may write
\[
(r_{2j})_i
=
\big(e_i\t S_1(X) \big)\,
\big(V \Lambda^{-2} V^{\top}\big)\,
\big(S_j(X) e_i\big).
\]
Proceeding analogously to the previous case and invoking Cauchy--Schwarz inequality together with the incoherence of \(V\), we obtain

\begin{equation}\label{side2}
\begin{aligned}
\|r_{2j}\| &\le \|\Lambda^{-2}\|\, \left( \max_{1 \le i \le n} \|e_i^\top S_1(X) V\|_2 \right)\, \|V^\top S_j(X)\|_F \\
&\le
\frac{1}{\sqrt{n}}\,
\|\Lambda^{-2}\|\,
\|V^{\top} S_j(X)\|_F\,
\|S_1(X) V\|_F \\[4pt]
&\le
\frac{k}{n^{2.5}\rho_n^{2}}\,
\|S_1(X)\|\,
\|S_j(X)\| \\[4pt]
&=O_p \left(
\frac{k}{n^{2.5}\rho_n}\,
\left(\frac{4}{n\rho_n}\right)^{(j+1)/2}\right).
\end{aligned}
\end{equation}
Summing the bound in \cref{side2} over \(j\ge 1\) yields
\[
\sum_{j\ge 1}
\Big\|
\diag\Big(
S_1(X)\, V \Lambda^{-2} V^{\top} S_j(X)
\Big)
\Big\|
=O_p \left(
\frac{k}{n^{3.5}\rho_n^{3}}\right) .
\]
A completely analogous argument applies to the middle term in \cref{dseriesnew2},
\[
\diag\Big(
S_1(X)\, V\Lambda^{-2}V^{\top}
\big(PX+XP+X^2\big)
\widehat V\widehat\Lambda^{-2}\widehat V^{\top}
\Big),
\]
and yields the same rate. Consequently, the entire second term in \cref{dseriesnew2} is bounded by
\(
O_p\left(\frac{k}{n^{3.5}\rho_n^{3}}\right).
\)

By an entirely analogous argument, one can show that the third term in \cref{dseriesnew1} is also
\(
O_p\left(\frac{k^{3/2}}{n^{3.5}\rho_n^{3}}\right).
\)
Specifically, one again decomposes
\(\widehat V\widehat\Lambda^{-2}\widehat V^\top - V\Lambda^{-2}V^\top\) using \cref{dseries} and applies
\cref{ass:asymptotic1,ass:dense,ass:eigen-scaling,ass:incoherence} together with \cref{lem:Very Good Set}.
The resulting calculations mirror those used for the second term and therefore yield the stated bound.

For the final term in \cref{dseriesnew1}, we first focus on
\(
r_{3j}
=
\diag\Big(
V\Lambda^{-2}V^{\top} S_j(X)
\Big).
\)
For each \(i\), we may write
\[
(r_{3j})_i
=
u_i^{\top}\Lambda^{-2} w_i,
\qquad
u_i = V^{\top} e_i,
\quad
w_i = V^{\top} S_j(X) e_i .
\]
Proceeding along the same lines as in \cref{side1}, and using Cauchy--Schwarz inequality together with the incoherence of \(V\), we obtain a bound on each \(\|r_{3j}\|\). Summing over \(j\ge 2\) then yields
\(
\sum_{j\ge 2} \|r_{3j}\|
=
O_p\!\left(\frac{k}{n^{3.5}\rho_n^{3}}\right).
\)

The same argument applies to the remaining two terms in the final line of \cref{dseriesnew1}, namely
\[
\sum_{j\ge 2}
\diag\Big(
S_j(X)\widehat V \widehat\Lambda^{-2}\widehat V^{\top}
\Big)
\quad\text{and}\quad
\diag\Big(
V\Lambda^{-2}V^{\top} X^2 \widehat V \widehat\Lambda^{-2}\widehat V^{\top}
\Big),
\]
and both admit the bound
\(
O_p\!\left(\frac{k}{n^{3.5}\rho_n^{3}}\right).
\)
Adding up the bounds obtained for each term in \cref{dseriesnew1}, we conclude that
\[
\|\hat{\alpha}-\alpha\| = O_p \left(\frac{k}{n^{3}\rho_n^{2.5}} \right),
\]
which completes the proof of \cref{inter-c}.
\\ \ \\
\textbf{Proof of \cref{inter-e}}. From \cref{eq:estimators},
\(
G = \beta_1^{-1}\beta_2^{-1} + \beta_2^{-1}\beta_1^{-1}\)
and 
\(\hat{G} = \hat{\beta}_1^{-1}\hat{\beta}_2^{-1} + \hat{\beta}_2^{-1}\hat{\beta}_1^{-1}.
\)
Hence,
\begin{align*}
\|\hat{G} \circ \hat{G} - G \circ G\|_F 
&= \biggl\|\sum_{i_1 \neq i_2}\sum_{j_1 \neq j_2}
\Bigl[\bigl(\hat{\beta}_{i_1}^{-1}\hat{\beta}_{i_2}^{-1}\bigr)\circ\bigl(\hat{\beta}_{j_1}^{-1}\hat{\beta}_{j_2}^{-1}\bigr)
-\bigl(\beta_{i_1}^{-1}\beta_{i_2}^{-1}\bigr)\circ\bigl(\beta_{j_1}^{-1}\beta_{j_2}^{-1}\bigr)\Bigr]\biggr\|_F \\
&\le \sum_{i_1 \neq i_2}\sum_{j_1 \neq j_2}
\bigl\|\bigl(\hat{\beta}_{i_1}^{-1}\hat{\beta}_{i_2}^{-1}\bigr)\circ\bigl(\hat{\beta}_{j_1}^{-1}\hat{\beta}_{j_2}^{-1}\bigr)
-\bigl(\beta_{i_1}^{-1}\beta_{i_2}^{-1}\bigr)\circ\bigl(\beta_{j_1}^{-1}\beta_{j_2}^{-1}\bigr)\bigr\|_F.
\end{align*}
We present the argument for one representative configuration \((i_1,i_2,j_1,j_2)=(1,2,1,2)\); all other cases follow identically. Then
\begin{equation}
\begin{aligned} \label{intereq}
\bigl\|\bigl(\hat{\beta}_1^{-1}\hat{\beta}_2^{-1}\bigr)\circ\bigl(\hat{\beta}_1^{-1}\hat{\beta}_2^{-1}\bigr)
-\bigl(\beta_1^{-1}\beta_2^{-1}\bigr)\circ\bigl(\beta_1^{-1}\beta_2^{-1}\bigr)\bigr\|_F
&\le 
\bigl\|\bigl(\hat{\beta}_1^{-1}\hat{\beta}_2^{-1}\bigr)\circ
\bigl(\hat{\beta}_1^{-1}\hat{\beta}_2^{-1}-\beta_1^{-1}\beta_2^{-1}\bigr)\bigr\|_F  \\
&\quad + \bigl\|\bigl(\hat{\beta}_1^{-1}\hat{\beta}_2^{-1}-\beta_1^{-1}\beta_2^{-1}\bigr)\circ
\bigl({\beta}_1^{-1}{\beta}_2^{-1}\bigr)\bigr\|_F.
\end{aligned}
\end{equation}
We begin by bounding the term
\(
\|\hat{\beta}_1^{-1}\hat{\beta}_2^{-1}-\beta_1^{-1}\beta_2^{-1}\|_F .
\)
By the triangle inequality,
\begin{align}\label{ds}
\|\hat \beta_1^{-1} \hat \beta_2^{-1} - \beta_1^{-1} \beta_2^{-1}\|_F
&\le
\|\hat \beta_1^{-1}(\hat \beta_2^{-1} - \beta_2^{-1})\|_F
+
\| (\hat \beta_1^{-1} - \beta_1^{-1})\beta_2^{-1}\|_F \notag\\
&\le \|\hat \beta_1^{-1} \|_F\|\hat \beta_2^{-1} - \beta_2^{-1}\|_F
+
\| \hat \beta_1^{-1} - \beta_1^{-1}\|_F \|\beta_2^{-1}\|_F .
\end{align}
Since we know $\|\beta^{-1}_2\|_2 \asymp \frac{1}{n\rho_n}$ and $\|\widehat \beta^{-1}_1\|_2 \asymp \frac{1}{n\rho_n}$ from \cref{ass:eigen-scaling}, it suffices to derive a bound for the generic quantity
\(
\|\hat \beta^{-1} - \beta^{-1}\|_F .
\)

To this end, we employ the identity
\begin{align}\label{dseriesmod}
\widehat V\widehat\Lambda^{-1}\widehat V^\top - V\Lambda^{-1}V^\top
&=
\big(\widehat V\widehat V^\top - VV^\top\big)\widehat V\widehat\Lambda^{-1}\widehat V^\top \\
&\quad
+ V\big(V^\top\widehat V\widehat\Lambda^{-1}-\Lambda^{-1}V^\top\widehat V\big)\widehat V^\top \nonumber\\
&\quad
+ V\Lambda^{-1}V^\top\big(\widehat V\widehat V^\top - VV^\top\big), \nonumber
\end{align}
which is a first–order analogue of \cref{dseries}. Observe that
\[
\begin{aligned}
V\big(V^\top\widehat V\widehat\Lambda^{-1}-\Lambda^{-1}V^\top\widehat V\big)\widehat V^\top
&=
V\Lambda^{-1}\big(\Lambda V^\top\widehat V - V^\top\widehat V\widehat\Lambda\big)
\widehat\Lambda^{-1}\widehat V^\top \\
&=
-\,V\Lambda^{-1}V^\top X \widehat V\widehat\Lambda^{-1}\widehat V^\top .
\end{aligned}
\]
Using \cref{dseriesmod}, we obtain
\begin{equation}\label{ds1}
\begin{aligned}
    \|\hat \beta^{-1} - \beta^{-1}\|_F
    &\le \|\widehat V\widehat V^\top - VV^\top\|_F \, \|\hat \beta^{-1}\|_2
    + \|\hat \beta^{-1}\|_2 \, \|X\|_2 \, \|\beta^{-1}\|_F \\
    &\qquad + \|\widehat V\widehat V^\top - VV^\top\|_F \, \|\beta^{-1}\|_2 \\
    &=O_p \left( \frac{k^{1/2}}{n^{3/2}\rho_n^{3/2}} \right),
\end{aligned}
\end{equation}
where the final line follows from the Davis--Kahan bound in \cref{lem:Good Set} together with \cref{ass:eigen-scaling}. \cref{ass:eigen-scaling} also implies $\|\beta^{-1}\|_2 \asymp \frac{1}{n\rho_n}$ and $\|\widehat \beta^{-1}\|_2 \asymp \frac{1}{n\rho_n}$.

Thus, substituting \cref{ds1} into \cref{ds} yields
\[
\|\hat \beta_1^{-1} \hat \beta_2^{-1} - \beta_1^{-1} \beta_2^{-1}\|_F
\lesssim
\frac{k^{1/2}}{n^{5/2}\rho_n^{5/2}} .
\]
Inserting this bound into \cref{intereq} and invoking \cref{ass:eigen-scaling,ass:incoherence} along with the incoherence of $\widehat V$, guaranteed by \cref{lem:Very Good Set}, we conclude that
$$
\begin{aligned}
    \bigl\| (\hat{\beta}_1^{-1}\hat{\beta}_2^{-1})\circ (\hat{\beta}_1^{-1}\hat{\beta}_2^{-1}) - (\beta_1^{-1}\beta_2^{-1})\circ (\beta_1^{-1}\beta_2^{-1}) \bigr\|_F
    &\lesssim \max\!\left( \|\hat{\beta}_1^{-1}\hat{\beta}_2^{-1}\|_{\infty}, \|\beta_1^{-1}\beta_2^{-1}\|_{\infty} \right) \|\hat \beta_1^{-1} \hat \beta_2^{-1} - \beta_1^{-1} \beta_2^{-1}\|_F \\
    &\lesssim \frac{k^{3/2}}{n^{11/2}\rho_n^{9/2}}.
\end{aligned}
$$
This is because
$$
\begin{aligned}
    \big|(\beta_1^{-1}\beta_2^{-1})_{ij}\big| &\le \|e_i\t V_1\|_2 \|\Lambda_1^{-1}\|_2 \|V_1\t V_2\|_2 \|\Lambda_2^{-1}\|_2 \|V_2\t e_j\|_2 \\
    &\lesssim \frac{k}{n^3 \rho_n^2}.
\end{aligned}
$$
Here, we used \cref{ass:eigen-scaling,ass:incoherence} to obtain the bound. Similarly, in order to provide the same bound for $\|\hat{\beta}_1^{-1}\hat{\beta}_2^{-1}\|_{\infty}$, we will use the incoherence of $\widehat V$, guaranteed by \cref{lem:Very Good Set}.

Summing over all index combinations yields the desired bound
\[
\|\hat{G} \circ \hat{G} - G \circ G\|_F
=O_p \left(
\frac{k^{3/2}}{n^{11/2}\rho_n^{9/2}}\right) .
\]
\\ \ \\
\textbf{Proof of \cref{inter-g}.}
Once again, invoking \cref{dseries} yields
\begin{equation}\label{dseriesnew10}
\begin{aligned}
\bigg\|
\widehat V &\widehat\Lambda^{-2}\widehat V^{\top}
-
V\Lambda^{-2}V^{\top}
\bigg\|_F \\
&\le
\Big\|
V\Lambda^{-2}V^{\top} S_1(X)
+
S_1(X)V\Lambda^{-2}V^{\top}  
-
V\Lambda^{-2}V^{\top}(XP+PX)V\Lambda^{-2}V^{\top}
\Big\|_F \\[4pt]
&\quad+
\Big\|
S_1(X)
\big(
\widehat V \widehat\Lambda^{-2}\widehat V^{\top}
-
V\Lambda^{-2}V^{\top}
\big)
\Big\|_F \\[4pt]
&\quad+
\Big\|
V\Lambda^{-2}V^{\top}(XP+PX)
\big(
\widehat V \widehat\Lambda^{-2}\widehat V^{\top}
-
V\Lambda^{-2}V^{\top}
\big)
\Big\|_F \\[4pt]
&\quad+
\Big\|
\sum_{j\ge2} V\Lambda^{-2}V^{\top} S_j(X)
+
\sum_{j\ge2} S_j(X)\widehat V \widehat\Lambda^{-2}\widehat V^{\top}  
-
V\Lambda^{-2}V^{\top} X^2 \widehat V \widehat\Lambda^{-2}\widehat V^{\top}
\Big\|_F .
\end{aligned}
\end{equation}
By \cref{ass:asymptotic1,ass:eigen-scaling}, together with bound on \(\|X\|_2\) from \cref{lem:Good Set} and the fact that \(\|S_1(X)\|_2 \lesssim \frac{1}{\sqrt{n \rho_n}}\), the first term on the right-hand side of \cref{dseriesnew10} satisfies
\begin{align}\label{smllbd1}
    \Big\|
V\Lambda^{-2}V^{\top} S_1(X)
+
S_1(X)V\Lambda^{-2}V^{\top}
-
V\Lambda^{-2}V^{\top}(XP+PX)V\Lambda^{-2}V^{\top}
\Big\|_F
= O_p \left(
\frac{\sqrt k}{n^{5/2}\rho_n^{5/2}}\right).
\end{align}
The second and third terms can be bounded using submultiplicativity of the Frobenius norm and the operator-norm control of \(S_1(X)\) and \(XP+PX\), yielding
\begin{align}\label{smllbd2}
    \Big\|
S_1(X)
\big(
\widehat V \widehat\Lambda^{-2}\widehat V^{\top}
-
V\Lambda^{-2}V^{\top}
\big)
\Big\|_F
+
\Big\|
V\Lambda^{-2}V^{\top}(XP+PX)
\big(
\widehat V \widehat\Lambda^{-2}\widehat V^{\top}
-
V\Lambda^{-2}V^{\top}
\big)
\Big\|_F\\ \notag
\lesssim
\frac{1}{\sqrt{n\rho_n}}
\bigg\|
\widehat V \widehat\Lambda^{-2}\widehat V^{\top}
-
V\Lambda^{-2}V^{\top}
\bigg\|_F .
\end{align}
Finally, for the last term in \cref{dseriesnew10}, \cref{lem:Good Set} implies
\begin{align}\label{smllbd3}
    \Big\|
\sum_{j\ge2} V\Lambda^{-2}V^{\top} S_j(X)
\Big\|_F
= O_p \left(
\frac{k}{n^{3}\rho_n^{3}}\right),
\end{align}
and the same bound holds for
\(
\big\|
\sum_{j\ge2} S_j(X)\widehat V \widehat\Lambda^{-2}\widehat V^{\top}
\big\|_F
\)
and
\(
\big\|
V\Lambda^{-2}V^{\top} X^2 \widehat V \widehat\Lambda^{-2}\widehat V^{\top}
\big\|_F
\). Collecting \cref{smllbd1,smllbd2,smllbd3} completes the proof.
\end{proof}
\subsection{Proof of \cref{inter4}}
\label{sec:interproof4}
\begin{proof}[Proof of \cref{inter4}]
Just like in \cref{sec:interproof3}, we will assume that \(A\) (in the one-sample case) and \({A^{(i)}}\) (in the two-sample case) both belong to $\mathcal{E}_{{\sf very  \ good}}$, as defined in \cref{lem:Very Good Set}, which happens with high probability as already proved.
\\ \ \\
\textbf{Proof of \cref{inter-f1}}.
We prove the claim directly. Using the triangle inequality we obtain
\begin{equation*}
\begin{aligned}
\|{\hat P}^2 - P^2\|_F
&\le \|{\hat P} \|_F \|{\hat P}-P\|_2 +  \|{\hat P}-P\|_2 \| P \|_F\\
&\lesssim k^{1/2}\,n\rho_n\,\|X\|_2\\
&= O_p \left(
k^{1/2}\,n^{3/2}\rho_n^{3/2}\right),
\end{aligned}
\end{equation*}
where the last inequality follows from \cref{lem:series-expansion,lem:Good Set}.
\\ \ \\
\textbf{Proof of \cref{inter-f2}.}
We begin by decomposing the difference as
\[
{\hat P}({\hat P}\circ{\hat P})-P(P\circ P)
=
({\hat P}-P)({\hat P}\circ{\hat P})
+
P\bigl[({\hat P}\circ{\hat P})-(P\circ P)\bigr].
\]
Note that $\|{\hat P}\circ{\hat P}\|_F^2 = \sum_{i,j} \hat P_{ij}^4 = O_p \left(n^2 \rho_n^4 \right)$. Thus, for the first term on the right hand side,
\begin{align*}
    \|({\hat P}-P)({\hat P}\circ{\hat P})\|_F
    &\le k^{1/2}\|{\hat P}-P\|_2\,\|{\hat P}\circ{\hat P}\|_F \\
    &=O_p \left( k^{1/2}\,n^{3/2}\rho_n^{5/2}\right),
\end{align*}
where the bound follows from \cref{ass:asymptotic1,lem:series-expansion,lem:Good Set}. 

For the second term, observe that
\begin{equation}
\begin{aligned} \label{smllbd4}
    \|({\hat P}\circ{\hat P})-(P\circ P)\|_F
    &= \|({\hat P}-P)\circ({\hat P}+P)\|_F \\
    &\lesssim k^{1/2} \|{\hat P}-P\|_2 \|{\hat P}+P\|_{\infty} \\
    &=O_p \left( k^{1/2}\,n^{1/2}\rho_n^{3/2}\right).
\end{aligned}
\end{equation}
Consequently,
\(
\|P\bigl[({\hat P}\circ{\hat P})-(P\circ P)\bigr]\|_F
=O_p\left(
k^{1/2}\,n^{3/2}\rho_n^{5/2}\right).
\)
Combining the above bounds yields the claim of \cref{inter-f2}.
\\ \ \\
\textbf{Proof of \cref{inter-f3}.}
We again use the inequality
\begin{align} \label{smllbd5}
    \|({\hat P}\circ{\hat P})^2-(P\circ P)^2\|_F
\le
\|({\hat P}\circ{\hat P})-(P\circ P)\|_F\,
\|({\hat P}\circ{\hat P})+(P\circ P)\|_F.
\end{align}
The first factor is bounded by \(k^{1/2}\,n^{1/2}\rho_n^{3/2}\) as shown in the last part,
while \cref{ass:asymptotic1,ass:eigen-scaling} imply
\(
\|({\hat P}\circ{\hat P})+(P\circ P)\|_F = O_p \left( n\rho_n^2 \right)
\) because of $\|{\hat P}\circ{\hat P}\|_F^2 = \sum_{i,j} \hat P_{ij}^4 = O_p \left(n^2 \rho_n^4 \right)$ as already shown in the last part.
Combining this bound with \cref{smllbd4,smllbd5}, we obtain
\[
\|({\hat P}\circ{\hat P})^2-(P\circ P)^2\|_F
=O_p \left(
k^{1/2}\,n^{3/2}\rho_n^{7/2}\right).
\]
\end{proof}

\section{Proofs of Lemmas from \cref{sec:testconpf}}
\subsection{Proof of \cref{lem:cross-term-moments}} \label{sec:cross-term-momentsproof}
\begin{proof}
Throughout this proof we assume that each \({A^{(i)}}\) lies in the set $\mathcal{E}_{{\sf very \ good}}$ of 
\cref{lem:Very Good Set}, which holds with high probability.  
From Theorem 1 of \citet{xia_normal_2021}, we have
\(
(\hat V^{(i)}) (\hat V^{(i)})\t - {V^{(i)}} {V^{(i)}}\t 
= \sum_{l \ge 1} S_{i,l}({X^{(i)}}),
\)
where \(S_{i,l}({X^{(i)}})\) is defined exactly as in \cref{Sk_2sample}.  
Consequently, the second and third order terms of \({X^{(i)}}\) in 
\(\mathbb{E}\big\langle \hat {V^{(1)}}\hat {V^{(1)}}\t - {V^{(1)}}{V^{(1)}}\t, 
                \hat {V^{(2)}}\hat {V^{(2)}}\t - {V^{(2)}}{V^{(2)}}\t\big\rangle\)
are
\[
\mathbb{E}\left( \tr\big(S_{1,1}({X^{(1)}}) S_{2,1}({X^{(2)}})\big) \right)
\quad\text{and}\quad
\mathbb{E}\left( \tr\big(S_{1,2}({X^{(1)}}) S_{2,1}({X^{(2)}}) 
      + S_{1,1}({X^{(1)}}) S_{2,2}({X^{(2)}})\big) \right),
\]
both of which vanish immediately, as shown in the proof of 
\cref{lem:3rdmean2}.  Moreover, again from that proof, we have the bound
\begin{align}\label{smllbd6}
    \big\langle S_{1,l_1}({X^{(1)}}),\, S_{2,l_2}({X^{(2)}}) \big\rangle
\le 
\|S_{1,l_1}({X^{(1)}})\|_F \, \|S_{2,l_2}({X^{(2)}})\|_F
=O_p \left( 
\frac{k^2}{(n\rho_n)^{\,l/2}}\right),
\end{align}
where \(l = l_1 + l_2\).  
Next, by \cref{lem:second-order2} and \cref{thm:2sampleconsis}, it holds that 
\(
\hat\sigma_2^2 \asymp k^2/(n^3 \rho_n^2).
\)
We therefore obtain
\begin{equation}
\begin{aligned}\label{note05}
\frac{1}{\hat \sigma_2}
\big\langle 
      {V^{(1)}}{V^{(1)}}\t - {V^{(2)}}{V^{(2)}}\t, 
      \hat {V^{(2)}}\hat {V^{(2)}}\t - {V^{(2)}}{V^{(2)}}\t
\big\rangle
&=
\frac{1}{\hat \sigma_2}
\displaystyle\sum_{l=2,3}\ \sum_{l_1 + l_2 = l}
\big\langle S_{1,l_1}({X^{(1)}}),\, S_{2,l_2}({X^{(2)}}) \big\rangle
\\
&\qquad
+
\frac{1}{\hat \sigma_2}
\displaystyle\sum_{l \ge 4}\ \sum_{l_1 + l_2 = l}
\big\langle S_{1,l_1}({X^{(1)}}),\, S_{2,l_2}({X^{(2)}}) \big\rangle.
\end{aligned}
\end{equation}
We will show that the second term in \cref{note05} is \(o_p(1)\) under \cref{ass:dense}. This is because from \cref{smllbd6,lem:second-order2}, we have the following bound with probability \(1-O(n^{-c})\) for a constant \(c>1\), 
\begin{align*}
    \frac{1}{\hat \sigma_2}
\displaystyle\sum_{l \ge 4}\ \sum_{l_1 + l_2 = l}
\big\langle S_{1,l_1}({X^{(1)}}),\, S_{2,l_2}({X^{(2)}}) \big\rangle &\lesssim \frac{n^{3/2} \rho_n}{k}\sum_{l \ge 4}\ 2^l \frac{k^2}{(n\rho_n)^{\,l/2}}\\
&\lesssim \frac{k}{n^{1/2} \rho_n}=o_p(1).
\end{align*}
For the first term on the right hand side of \cref{note05}, we invoke Chebyshev's inequality. As argued earlier,
\[
\E\left(
\sum_{l=2,3}\ \sum_{l_1+l_2=l}
\big\langle S_{1,l_1}({X^{(1)}}),\, S_{2,l_2}({X^{(2)}}) \big\rangle
\right)=0.
\]
We therefore compute the order of
\(
\Var\!\left(
\sum_{l=2,3}\ \sum_{l_1+l_2=l}
\big\langle S_{1,l_1}({X^{(1)}}),\, S_{2,l_2}({X^{(2)}}) \big\rangle
\right).
\)
The first term in the above sum is
\[
\big\langle S_{1,1}({X^{(1)}}),\, S_{2,1}({X^{(2)}}) \big\rangle=
\tr\!\left(
\beta_1^{-1} {X^{(1)}} \beta^{\perp} {X^{(2)}} \beta_2^{-1}
+
\beta^{\perp} {X^{(1)}} \beta_1^{-1}\beta_2^{-1} {X^{(2)}}
\right)
=
2\,\tr\!\left(\beta_1^{-1}\beta_2^{-1} {X^{(1)}} \beta^{\perp} {X^{(2)}}\right).
\]
Hence it suffices to determine the order of
\(
4\,\E\!\left[
\bigl(\tr(\beta_1^{-1}\beta_2^{-1} {X^{(1)}} \beta^{\perp} {X^{(2)}})\bigr)^2
\right]
\)
since it is mean-zero.  

Conditioning on \({X^{(1)}}\) and writing \(M_1 := \beta_1^{-1}\beta_2^{-1} {X^{(1)}} \beta^{\perp}\), we obtain
\begin{equation}
\begin{aligned}\label{eq:e1a}
\E_{{X^{(2)}}}\!\left[
\bigl(\tr(M_1 {X^{(2)}})\bigr)^2
\right] &= \sum_{i < j} \E \left(X^{(2)}_{ij}\right)^2 \bigl( (M_1)_{ij} + (M_1)_{ji} \bigr)^2 + \sum_{i=1}^n \E \left(X^{(2)}_{ii}\right)^2 (M_1)^2_{ii}\\
&= \sum_{i,j} \sigma_{ij}^2 (M_1)_{ij}^2 + \sum_{i \neq j} \sigma_{ij}^2 (M_1)_{ij}(M_1)_{ji}.
\end{aligned}
\end{equation}
Taking expectation with respect to \({X^{(1)}}\), we will bound the first term on the right hand side of \cref{eq:e1a}:
\begin{equation}
\begin{aligned}\label{eq:e1b}
\sum_{i,j} \sigma_{ij}^2 \mathbb{E}_{X^{(1)}}[(M_1)_{ij}^2] &\asymp \rho_n \sum_{i,j} \mathbb{E}_{X^{(1)}}[(M_1)_{ij}^2]\\
&= \rho_n \mathbb{E}_{X^{(1)}}\bigl[\|M_1\|_F^2\bigr] = \rho_n \mathbb{E}_{X^{(1)}}\bigl[\text{tr}(M_1 M_1^\top)\bigr].
\end{aligned}
\end{equation}
For the second term of \cref{eq:e1a}, we apply the Cauchy-Schwarz inequality to the sum:
\begin{equation}
\begin{aligned}\label{eq:e1c}
\left| \sum_{i \neq j} \sigma_{ij}^2 \mathbb{E}_{X^{(1)}}[(M_1)_{ij}(M_1)_{ji}] \right| &\le \sum_{i \neq j} \sigma_{ij}^2 \mathbb{E}_{X^{(1)}}\bigl[|(M_1)_{ij}| |(M_1)_{ji}|\bigr]\\
&\le \frac{1}{2} \sum_{i \neq j} \sigma_{ij}^2 \mathbb{E}_{X^{(1)}}\bigl[ (M_1)_{ij}^2 + (M_1)_{ji}^2 \bigr] \lesssim \rho_n \mathbb{E}_{X^{(1)}}\bigl[\text{tr}(M_1 M_1^\top)\bigr].
\end{aligned}
\end{equation}
Thus, \cref{eq:e1b,eq:e1c} imply that to find the order of \(\E_{{X^{(2)}}}\!\left[
\bigl(\tr(M_1 {X^{(2)}})\bigr)^2
\right]\) as in \cref{eq:e1a}, it suffices to evaluate that of  \(\rho_n \mathbb{E}_{X^{(1)}}\bigl[\text{tr}(M_1 M_1^\top)\bigr]\).
To evaluate the expectation rigorously, note that both $\beta_2^{-1}\beta_1^{-2}\beta_2^{-1}$ and $\beta^\perp$ are deterministic, symmetric, and positive semi-definite.
\begin{equation}
\begin{aligned}
\mathbb{E}_{X^{(1)}}\!\left[\tr(M_1M_1^\top)\right]
&=
\sum_{i,j=1}^{n}
\sigma_{ij}^{2}\,
\left(\beta_2^{-1}\beta_1^{-2}\beta_2^{-1}\right)_{ii}\,
(\beta^\perp)_{jj}
+
\sum_{i,j=1}^{n}
\sigma_{ij}^{2}\,
\left(\beta_2^{-1}\beta_1^{-2}\beta_2^{-1}\right)_{ij}\,
(\beta^\perp)_{ij}.
\end{aligned}
\end{equation}

For the first term, since $\sigma_{ij}^{2}\asymp\rho_n$,
\begin{equation}
\begin{aligned}
\sum_{i,j=1}^{n}
\sigma_{ij}^{2}
\left(\beta_2^{-1}\beta_1^{-2}\beta_2^{-1}\right)_{ii}
(\beta^\perp)_{jj}
&\asymp
\rho_n
\sum_{i,j=1}^{n}
\left(\beta_2^{-1}\beta_1^{-2}\beta_2^{-1}\right)_{ii}
(\beta^\perp)_{jj} \\
&=
\rho_n
\left(
\sum_{i=1}^{n} \left(\beta_2^{-1}\beta_1^{-2}\beta_2^{-1}\right)_{ii}
\right)
\left(
\sum_{j=1}^{n} (\beta^\perp)_{jj}
\right) \\
&=
\rho_n\,\tr\!\left(
\beta_1^{-2}\beta_2^{-2}
\right)\,\tr(\beta^\perp)\\
&\asymp \frac{k}{n^3\rho_n^3}
\end{aligned}
\end{equation}

It remains to show that the second term is smaller. By Cauchy--Schwarz,
\begin{equation}
\begin{aligned}
\sum_{i,j=1}^{n}
\sigma_{ij}^{2}
\left(\beta_2^{-1}\beta_1^{-2}\beta_2^{-1}\right)_{ij}
(\beta^\perp)_{ij}
&\lesssim
\rho_n
\sum_{i,j=1}^{n}
|\left(\beta_2^{-1}\beta_1^{-2}\beta_2^{-1}\right)_{ij}|\,
|(\beta^\perp)_{ij}| \\
&\lesssim
\rho_n
\|\beta_2\inv \beta_1^{-2} \beta_2\inv \|_F
\|\beta^\perp\|_F \\
&\lesssim \frac{\sqrt{k}}{n^{7/2}\rho_n^3} \\
&=o\!\left(
\frac{k}{n^3\rho_n^3}
\right).
\end{aligned}
\end{equation}
Therefore,
\[
\mathbb{E}_{X^{(1)}}\!\left[
\tr(M_1M_1^\top)
\right]
\asymp
\frac{k}{n^3\rho_n^3}.
\]
Consequently,
\begin{equation}
\begin{aligned}\label{need1}
\Var\Bigl(
\bigl\langle
S_{1,1}(X^{(1)}),
S_{2,1}(X^{(2)})
\bigr\rangle
\Bigr)
&=
4\,\mathbb{E}
\Bigl[
\bigl\{
\tr(
\beta_1^{-1}\beta_2^{-1}
X^{(1)}
\beta^\perp
X^{(2)}
)
\bigr\}^2
\Bigr] \\
&\asymp
\rho_n\,
\mathbb{E}_{X^{(1)}}
\!\left[
\tr(M_1M_1^\top)
\right] \\
&\asymp
\rho_n
\cdot
\frac{k}{n^3\rho_n^3}
=
\frac{k}{n^3\rho_n^2}.
\end{aligned}
\end{equation}
We now proceed to bound
\(
\operatorname{Var}\big( \langle S_{1,2}({X^{(1)}}),\, S_{2,1}({X^{(2)}}) \rangle \big)
\)
using analogous arguments. We expand the inner product as
\begin{equation} \label{eq:e2a}
\begin{split}
\langle S_{1,2}({X^{(1)}}),\, S_{2,1}({X^{(2)}}) \rangle 
    &= \operatorname{tr}\left( \beta^\perp {X^{(1)}} \beta^\perp {X^{(1)}} \beta_1^{-2} \beta_2^{-1} {X^{(2)}} \right) \\ 
    &\quad - \operatorname{tr}\left( \beta^\perp {X^{(1)}} \beta_1^{-1} {X^{(1)}} \beta_1^{-1} \beta_2^{-1} {X^{(2)}} \right) \\ 
    &\quad + \operatorname{tr}\left( \beta_1^{-2} {X^{(1)}} \beta^\perp {X^{(1)}} \beta^\perp {X^{(2)}} \beta_2^{-1} \right) \\ 
    &\quad - \operatorname{tr}\left( \beta_1^{-1} {X^{(1)}} \beta_1^{-1} {X^{(1)}} \beta^\perp {X^{(2)}} \beta_2^{-1} \right).
\end{split}
\end{equation}
To analyze the variance, it suffices to determine the order of the second moment of the first term on the right hand side of \cref{eq:e2a} because all other terms can be tackled in the same manner:
\(
\mathbb{E}\left[\operatorname{tr}\left( \beta^\perp {X^{(1)}} \beta^\perp {X^{(1)}} \beta_1^{-2} \beta_2^{-1} {X^{(2)}} \right)^2 \right].
\)
Conditioning on \({X^{(1)}}\) as in \cref{eq:e1a} and defining \(N_1 := \beta^\perp {X^{(1)}} \beta^\perp {X^{(1)}} \beta_1^{-2} \beta_2^{-1}\), we obtain
\begin{equation} \label{eq:e2b}
\begin{split}
\mathbb{E}_{{X^{(2)}}}\left[ \bigl(\operatorname{tr}(N_1 {X^{(2)}})\bigr)^2 \right] \asymp 
    \sum_{i,j} \sigma_{ij}^2 (N_1)_{ij}^2 + \sum_{i \neq j} \sigma_{ij}^2 (N_1)_{ij}(N_1)_{ji}.
\end{split}
\end{equation}
Taking expectation with respect to $X^{(1)}$, the same Cauchy--Schwarz argument used in
\cref{eq:e1b,eq:e1c} gives
\begin{equation}
\begin{aligned}
\mathbb{E}\left[
\left\{
\tr\left(
\beta^\perp X^{(1)}\beta^\perp X^{(1)}
\beta_1^{-2}\beta_2^{-1}X^{(2)}
\right)
\right\}^2
\right]
&\asymp
\rho_n\,
\mathbb{E}_{X^{(1)}}\left[
\tr(N_1N_1^\top)
\right]\\
&\asymp \rho_n
\mathbb{E}_{X^{(1)}}\left[
\tr\left(
\beta_1^{-2}\beta_2^{-2}\beta_1^{-2}
X^{(1)}
(\beta^\perp X^{(1)})^3
\right)
\right].
\end{aligned}
\end{equation}
We now evaluate the last expectation. Expanding the trace gives
\begin{align*}
&\sum_{i_1,\dots,i_8}
\E
\Big[
(\beta_1^{-2}\beta_2^{-2}\beta_1^{-2})_{i_1,i_2}
(X^{(1)})_{i_2,i_3}
(\beta^\perp)_{i_3,i_4}
(X^{(1)})_{i_4,i_5}
(\beta^\perp)_{i_5,i_6}
(X^{(1)})_{i_6,i_7}
(\beta^\perp)_{i_7,i_8}
(X^{(1)})_{i_8,i_1}
\Big].
\end{align*}
The leading nonzero contributions arise from pairings of the four $X^{(1)}$
entries into two equal unordered edge-pairs. Consider, for instance, the pairing
\begin{align} \label{pair100}
\{i_2,i_3\}\sim \{i_4,i_5\},
\qquad
\{i_6,i_7\}\sim \{i_8,i_1\}.
\end{align}
The contribution from this pairing is
\begin{equation} \label{spl}
\begin{split}
    \sum_{\substack{i_1,\dots,i_8 \\ \{i_2, i_3\} \sim \{i_4, i_5\} \\ \{i_6, i_7\} \sim \{i_8, i_1\}}}
    \E &\Bigg[
    (\beta_1^{-2}\beta_2^{-2}\beta_1^{-2})_{i_1,i_2} (X^{(1)})_{i_2,i_3} (\beta^\perp)_{i_3,i_4} (X^{(1)})_{i_4,i_5}(\beta^\perp)_{i_5,i_6} (X^{(1)})_{i_6,i_7} (\beta^\perp)_{i_7,i_8} (X^{(1)})_{i_8,i_1}
    \Bigg]
    \\
    &= \sum_{\substack{i_1,\dots,i_8 \\ \{i_2, i_3\} \sim \{i_4, i_5\} \\ \{i_6, i_7\} \sim \{i_8, i_1\}}}
    \Bigg[
    (\beta_1^{-2}\beta_2^{-2}\beta_1^{-2})_{i_1,i_2}
    \underbrace{\E \left( (X^{(1)})_{i_2,i_3}^2 (\beta^\perp)_{i_3,i_2} + (X^{(1)})_{i_2,i_3}^2 (\beta^\perp)_{i_3,i_3} \right)}_{\text{Block 1}} \\
    &\qquad\qquad\qquad\qquad \times (\beta^\perp)_{i_5,i_6}
    \underbrace{\E \left( (X^{(1)})_{i_6,i_7}^2 (\beta^\perp)_{i_7,i_6} + (X^{(1)})_{i_6,i_7}^2 (\beta^\perp)_{i_7,i_7} \right)}_{\text{Block 2}}
    \Bigg] .
\end{split}
\end{equation}
The cases $\{i_2=i_5,\,i_3=i_4,\,i_6=i_8,\,i_7=i_1\}$ and
$\{i_2=i_4,\,i_3=i_5,\,i_6=i_1,\,i_7=i_8\}$ force \cref{spl} to vanish, since
\[
\sum_{i_5,i_6} (\beta_1^{-2}\beta_2^{-2}\beta_1^{-2})_{i_1,i_2} (\beta^\perp)_{i_5,i_6}=0.
\]
Thus the remaining pairings from \cref{pair100} are
\begin{align}\label{pair200}
    &\{i_2=i_5,\,i_3=i_4,\,i_6=i_1,\,i_7=i_8\}, \\ \notag
    &\{i_2=i_4,\,i_3=i_5,\,i_6=i_8,\,i_7=i_1\}.
\end{align}
The two corresponding blocks in \cref{spl} satisfy, uniformly in $i_2$ and $i_6$,
\begin{equation}
\begin{aligned}\label{spl1}
\sum_{i_3}
\E\left[
(X^{(1)})_{i_2,i_3}^{2}
\left\{
(\beta^\perp)_{i_3,i_2}
+
(\beta^\perp)_{i_3,i_3}
\right\}
\right]
&=
\sum_{i_3}
\sigma_{i_2i_3}^{2}
\left\{
(\beta^\perp)_{i_3,i_2}
+
(\beta^\perp)_{i_3,i_3}
\right\} \\
&\asymp
n\rho_n,
\end{aligned}
\end{equation}
and similarly
\begin{equation}
\begin{aligned}\label{spl2}
\sum_{i_7}
\E\left[
(X^{(1)})_{i_6,i_7}^{2}
\left\{
(\beta^\perp)_{i_7,i_6}
+
(\beta^\perp)_{i_7,i_7}
\right\}
\right]
\asymp
n\rho_n.
\end{aligned}
\end{equation}
We also have the trace identity
\begin{align}\label{spl3}
\tr(\beta_1^{-2}\beta_2^{-2}\beta_1^{-2})
\asymp
\frac{k}{n^6\rho_n^6}.
\end{align}
Therefore, under the first pairing of \cref{pair200}, combining
\cref{spl,spl1,spl2,spl3} we obtain
\begin{align*}
&\sum_{\substack{
i_1,\dots,i_8\\
\{i_2,i_3\}\sim\{i_4,i_5\}\\
\{i_6,i_7\}\sim\{i_8,i_1\}
}}
\E\Big[
\left(\beta_1^{-2}\beta_2^{-2}\beta_1^{-2}\right)_{i_1,i_2}
(X^{(1)})_{i_2,i_3}
(\beta^\perp)_{i_3,i_4}
(X^{(1)})_{i_4,i_5}
(\beta^\perp)_{i_5,i_6}
(X^{(1)})_{i_6,i_7}
(\beta^\perp)_{i_7,i_8}
(X^{(1)})_{i_8,i_1}
\Big] \\
&= \sum_{i_1,i_2}
    \Bigg[
    (\beta_1^{-2}\beta_2^{-2}\beta_1^{-2})_{i_1,i_2}
    \sum_{i_3}\E \left( (X^{(1)})_{i_2,i_3}^2 (\beta^\perp)_{i_3,i_2} + (X^{(1)})_{i_2,i_3}^2 (\beta^\perp)_{i_3,i_3} \right) \\
    &\qquad\qquad\qquad\qquad \times (\beta^\perp)_{i_2,i_1}
    \sum_{i_7}\E \left( (X^{(1)})_{i_6,i_7}^2 (\beta^\perp)_{i_7,i_6} + (X^{(1)})_{i_6,i_7}^2 (\beta^\perp)_{i_7,i_7} \right)
    \Bigg] \\
&\quad \lesssim
n^2\rho_n^2
\bigg|\sum_{i_1}
\left(\beta_1^{-2}\beta_2^{-2}\beta_1^{-2}\right)_{i_1,i_1}\bigg| \\
&\quad \lesssim
n^2\rho_n^2
\tr\left(\beta_1^{-2}\beta_2^{-2}\beta_1^{-2}\right)\\
&\quad \lesssim
\frac{k}{n^4\rho_n^4}.
\end{align*}
Similarly, under the second pairing of \cref{pair200}, we have
\begin{equation}
\begin{aligned}
&\sum_{\substack{
i_1,\dots,i_8\\
\{i_2,i_3\}\sim\{i_4,i_5\}\\
\{i_6,i_7\}\sim\{i_8,i_1\}
}}
\E\Big[
\left(\beta_1^{-2}\beta_2^{-2}\beta_1^{-2}\right)_{i_1,i_2}
(X^{(1)})_{i_2,i_3}
(\beta^\perp)_{i_3,i_4}
(X^{(1)})_{i_4,i_5}
(\beta^\perp)_{i_5,i_6} \\
&\hspace{4.5cm}\times
(X^{(1)})_{i_6,i_7}
(\beta^\perp)_{i_7,i_8}
(X^{(1)})_{i_8,i_1}
\Big] \\
&=
\sum_{i_6,i_2}
\Bigg[
\sum_{i_3}
\E\left(
(\beta^\perp)_{i_3,i_6}
(X^{(1)})_{i_2,i_3}^2
(\beta^\perp)_{i_3,i_2}
+
(\beta^\perp)_{i_3,i_6}
(X^{(1)})_{i_2,i_3}^2
(\beta^\perp)_{i_3,i_3}
\right)
\Bigg] \\
&\qquad\qquad\qquad\qquad\times
\Bigg[
\sum_{i_7}
\E\left(
(\beta_1^{-2}\beta_2^{-2}\beta_1^{-2})_{i_7,i_2}
(X^{(1)})_{i_6,i_7}^2
(\beta^\perp)_{i_7,i_6}
+
(\beta_1^{-2}\beta_2^{-2}\beta_1^{-2})_{i_7,i_2}
(X^{(1)})_{i_6,i_7}^2
(\beta^\perp)_{i_7,i_7}
\right)
\Bigg]\\
&\lesssim \sum_{i_6,i_2}
\left|
\sum_{i_3}
\sigma_{i_2i_3}^2
(\beta^\perp)_{i_3,i_6}
\left\{
(\beta^\perp)_{i_3,i_2}
+
(\beta^\perp)_{i_3,i_3}
\right\}
\right| \left|
\sum_{i_7}
\sigma_{i_6i_7}^2
(\beta_1^{-2}\beta_2^{-2}\beta_1^{-2})_{i_7,i_2}
\left\{
(\beta^\perp)_{i_7,i_6}
+
(\beta^\perp)_{i_7,i_7}
\right\}
\right|\\
&\lesssim \sum_{i_2,i_6} \rho_n^2
\left(\sum_{i_3}
\big|(\beta^\perp)_{i_3,i_6}\big|
\left(
\big|(\beta^\perp)_{i_3,i_2}\big|
+
\big|(\beta^\perp)_{i_3,i_3}\big|
\right)\right) \left(\sum_{i_7}
\big|(\beta_1^{-2}\beta_2^{-2}\beta_1^{-2})_{i_7,i_2}\big|
\left(
(\beta^\perp)_{i_7,i_6}
+
(\beta^\perp)_{i_7,i_7}
\right)\right)\\
&\lesssim
\rho_n^2
\sum_{i_6,i_2,i_7}
\left|
(\beta_1^{-2}\beta_2^{-2}\beta_1^{-2})_{i_7,i_2}
\right|
\left(
|(\beta^\perp)_{i_7,i_6}|
+
|(\beta^\perp)_{i_7,i_7}|
\right)\\
&\lesssim
n^2 \rho_n^2
\left\|
\beta_1^{-2}\beta_2^{-2}\beta_1^{-2}
\right\|_F \\
&\lesssim \frac{k}{n^4 \rho_n^4}.
\end{aligned}
\end{equation}
The two other nonzero pairings,
\[
\{i_2,i_3\}\sim \{i_6,i_7\},
\qquad
\{i_4,i_5\}\sim \{i_8,i_1\},
\]
and
\[
\{i_2,i_3\}\sim \{i_8,i_1\},
\qquad
\{i_4,i_5\}\sim \{i_6,i_7\},
\]
are handled identically and have the same order. The configurations in which three or four $X^{(1)}$-entries collapse onto the same edge have fewer free indices and contribute at most
\[
O\left(
\frac{k}{n^5\rho_n^5}
\right)
=
o\left(
\frac{k}{n^4\rho_n^4}
\right),
\]
Hence
\[
\mathbb{E}_{X^{(1)}}\left[
\tr\left(
\beta_1^{-2}\beta_2^{-2}\beta_1^{-2}
X^{(1)}
(\beta^\perp X^{(1)})^3
\right)
\right]
\lesssim
\frac{k}{n^4\rho_n^4}.
\]
implying
\begin{align}\label{need2}
    \Var\big(
\langle S_{1,2}(X^{(1)}),S_{2,1}(X^{(2)})\rangle
\big)
\lesssim
\frac{k}{n^4\rho_n^3}.
\end{align}
Similarly, we have
\begin{equation} \label{need3}
    \operatorname{Var}\big( \langle S_{1,1}({X^{(1)}}),\, S_{2,2}({X^{(2)}}) \rangle \big) \lesssim \frac{k}{n^{4}\rho_n^{3}}.
\end{equation}
Using \cref{need1,need2,need3} and applying Cauchy--Schwarz inequality to the covariances, we bound the variance of the total sum:
\begin{align*}
    \operatorname{Var}\left( \sum_{l=2,3}\ \sum_{l_1+l_2=l} \langle S_{1,l_1}({X^{(1)}}),\, S_{2,l_2}({X^{(2)}}) \rangle \right) &\le \sum_{l=2,3}\ \sum_{l_1+l_2=l} \operatorname{Var}\left(  \langle S_{1,l_1}({X^{(1)}}),\, S_{2,l_2}({X^{(2)}}) \rangle \right) \\
    & \quad + \Cov \left(\langle S_{1,1}({X^{(1)}}),\, S_{2,2}({X^{(2)}}) \rangle, \langle S_{1,2}({X^{(1)}}),\, S_{2,1}({X^{(2)}}) \rangle \right) \\
    & \quad + \Cov \left(\langle S_{1,1}({X^{(1)}}),\, S_{2,2}({X^{(2)}}) \rangle, \langle S_{1,1}({X^{(1)}}),\, S_{2,1}({X^{(2)}}) \rangle \right) \\
    & \quad + \Cov \left(\langle S_{1,1}({X^{(1)}}),\, S_{2,1}({X^{(2)}}) \rangle, \langle S_{1,1}({X^{(1)}}),\, S_{2,2}({X^{(2)}}) \rangle \right) \\
    &\le \sum_{l=2,3}\ \sum_{l_1+l_2=l} \operatorname{Var}\left(  \langle S_{1,l_1}({X^{(1)}}),\, S_{2,l_2}({X^{(2)}}) \rangle \right) \\
    & \quad + \sqrt {\Var \left(\langle S_{1,1}({X^{(1)}}),\, S_{2,2}({X^{(2)}}) \rangle\right) \Var \left(\langle S_{1,2}({X^{(1)}}),\, S_{2,1}({X^{(2)}}) \rangle \right)} \\
    & \quad + \sqrt {\Var \left(\langle S_{1,1}({X^{(1)}}),\, S_{2,2}({X^{(2)}}) \rangle\right) \Var \left(\langle S_{1,1}({X^{(1)}}),\, S_{2,1}({X^{(2)}}) \rangle \right)} \\
    & \quad + \sqrt {\Var \left(\langle S_{1,1}({X^{(1)}}),\, S_{2,1}({X^{(2)}}) \rangle\right) \Var \left(\langle S_{1,1}({X^{(1)}}),\, S_{2,2}({X^{(2)}}) \rangle \right)}\\
    &\lesssim \frac{k}{n^3\rho_n^2}.
\end{align*}
Therefore, by Chebyshev's inequality, we conclude
\[
\frac{\sum_{l=2,3}\ \sum_{l_1 + l_2 = l} \langle S_{1,l_1}({X^{(1)}}),\, S_{2,l_2}({X^{(2)}}) \rangle}{\hat\sigma_2} \asymp 1,
\]
which completes the proof.
\end{proof}

\subsection{Proof of \cref{lem:delta-linearization}}
\label{sec:delta-linearization-proof}
\begin{proof}
Assume that \({A^{(i)}}\) belongs to $\mathcal{E}_{{\sf very \ good}}$, as defined in \cref{lem:Very Good Set}, which happens with high probability as already proved.
Using the expansion in \cref{Sk_2sample}, we have
\begin{equation}
\begin{aligned}
    L(\Delta)
    &= \langle (\hat V^{(1)}) (\hat V^{(1)})\t - {V^{(1)}} {V^{(1)}}\t,\, \Delta \rangle
       - \langle (\hat V^{(2)}) (\hat V^{(2)})\t - {V^{(2)}} {V^{(2)}}\t,\, \Delta \rangle \\
    &= \sum_{k \geq 1} \langle S_{1,k}({X^{(1)}}) - S_{2,k}({X^{(2)}}),\, \Delta \rangle.
\end{aligned}
\end{equation}
We first determine the order of \(\langle S_{1,1}({X^{(1)}}),\,\Delta\rangle\).
Consider the scalar quantity
\(
T := \tr\big(\Delta\,\beta_1^{-1}{X^{(1)}}\beta_1^\perp\big).
\)
By cyclicity of the trace,
\(
T = \tr\big(\beta_1^\perp\Delta\beta_1^{-1}{X^{(1)}}\big)
    = \langle W,\,{X^{(1)}}\rangle,\)
where
\(W := \beta_1^\perp\Delta\beta_1^{-1}.
\)
Expanding over unordered indices gives
\(
T = \sum_{p} W_{pp}X_{pp} + \sum_{p<q} (W_{pq}+W_{qp})X_{pq}\)
with
\(a_{pp}=W_{pp}, a_{pq}=W_{pq}+W_{qp}(p<q).
\)
Applying Bernstein’s inequality to the independent, mean–zero summands \(a_{pq}X_{pq}\) yields
\[
\Pr(|T|\ge u)
\le
2\exp\Big(
-\frac{u^2}{2\sigma^2 + \tfrac{2}{3}Mu}
\Big),\]
where
\(\sigma^2 = \sum_{p\le q}\Var(X_{pq})a_{pq}^2,\) and
\(M = \max_{p\le q}|a_{pq}|\).
Since \(\Var(X_{pq})\lesssim \rho_n\) and \(\sum_{p\le q}a_{pq}^2\asymp \|W\|_F^2\), we have
\(
\sigma^2 \lesssim \rho_n \|W\|_F^2.
\) By submultiplicativity of norms,
\(
\|W\|_F
= \|\beta_1^\perp \Delta \beta_1^{-1}\|_F
\lesssim \frac{\|\Delta\|_F}{n\rho_n},
\)
which gives
\(
\sigma^2 \lesssim \rho_n \Big(\frac{\|\Delta\|_F}{n\rho_n}\Big)^2
= \frac{\|\Delta\|_F^2}{n^2\rho_n},
\) and
\(M \le \|W\|_F \lesssim \frac{\|\Delta\|_F}{n\rho_n}.
\)
Choosing \(u = C\sigma \sqrt {\log n}\) with a sufficiently large constant \(C\),
we obtain \(T = O_p(\sigma\sqrt{\log n} )\). Hence,
\begin{align} \label{start01}
    &\langle S_{1,1}({X^{(1)}}),\,\Delta\rangle = O_p\left( \frac{\|\Delta\|_F \sqrt{\log n}}{n \sqrt \rho_n} \right),
\end{align}
which implies that 
\begin{align*}
    &\langle S_{1,1}({X^{(1)}}) - S_{2,1}({X^{(2)}}),\, \Delta \rangle
= O_p\left(\frac{\|\Delta\|_F \sqrt{\rho_n\log n}}{n \rho_n}\right).
\end{align*}
For higher–order terms (\(l \ge 2\)), Theorem 1 of \citet{xia_normal_2021} implies
\[
\bigg|\langle S_{i,l}({X^{(i)}}),\, \Delta \rangle \bigg| \le \| S_{i,l}({X^{(i)}})\|_F \|\Delta\|_F
\lesssim \left(\frac{\|X^{(i)}\|}{n \rho_n} \right)^l\|\Delta\|_F
=O_p \left(\frac{\|\Delta\|_F}{(n\rho_n)^{l/2}}\right).
\]
Combining the sharp Bernstein bound for the first–order term from \cref{start01} with the rate for higher–order terms yields
\(
L(\Delta)
= O_p\left(\max \left(\frac{\|\Delta\|_F}{n\rho_n},\frac{\|\Delta\|_F \sqrt{\rho_n\log n}}{n \rho_n}\right)\right).
\)
\end{proof}

\subsection{Proof of \cref{lem:balanced-incoherence}}
\label{sec:balanced-incoherenceproof}
\begin{proof}
Because \(\col(P)=\col(Z)\) and \(\rank(P)=k\), the \(k\) nonzero population eigenvectors of \(P\) span the column space of \(Z\). Hence there exists a full-rank matrix \(W\in\mathbb{R}^{k\times k}\) such that
\(
V = ZW.
\)
By construction each row of \(Z\) is a standard basis vector: if node \(i\) belongs to community \(a\) then the \(i\)-th row of \(Z\) equals \(e_a\t\). Therefore the rows of \(V\) are constant on communities, namely for any node \(i\) in community \(a\) we have
\(
V_{i,\cdot} = W_{a,\cdot}.
\)
Orthonormality of the columns of \(V\) gives
\(
I_k= V\t V = W\t Z\t Z\, W =: W\t N W.
\)
Define \(U := N^{1/2} W\). Then
\(
U\t U = W\t N W = I_k,
\)
so \(U\) is a \(k\times k\) orthonormal matrix. Consequently
\(
W = N^{-1/2} U.
\)
Examining a row \(a\) of \(W\) yields
\(
\|W_{a,\cdot}\| = \frac{1}{\sqrt{n_a}} \,\|U_{a,\cdot}\|.
\)
Since \(U\) is square and orthonormal, each row of \(U\) has Euclidean norm equal to \(1\). Hence for every community \(a\),
\(
\|W_{a,\cdot}\| = \frac{1}{\sqrt{n_a}}.
\)
Therefore for any node \(i\) in community \(a\),
\(
\|V_{i,\cdot}\| = \|W_{a,\cdot}\| = \frac{1}{\sqrt{n_a}}.
\)
Under the balanced-size assumption \( n_a \asymp \,n/k\) we obtain
\(
\|V\|_{2,\infty} \asymp \sqrt{\frac{k}{c\,n}}.
\)
\end{proof}

\subsection{Proof of \cref{lem:balanced-mmsb-incoherence}} \label{sec:balanced-mmsbm-incoherenceproof}
\begin{proof}
The proof follows the same high-level steps as \cref{lem:balanced-incoherence} for the SBM. Because \(\col(P)=\col(Z)\) and \(\rank(P)=k\), there exists a full–rank matrix \(W\in\mathbb{R}^{k\times k}\) such that
\(
V = ZW.
\)
Orthonormality of the columns of \(V\) yields the same relation
\(
I_k= W\t N W.
\)
Setting \(U:=N^{1/2}W\) we have
\(
W = N^{-1/2} U
\) implying \(
V = Z N^{-1/2} U.
\) Right multiplication by the orthonormal matrix \(U\) preserves row Euclidean norms. Therefore for each node \(i\),
\(
\|V_{i,\cdot}\| = \|\,Z_{i,\cdot} N^{-1/2} U\,\| = \|\,Z_{i,\cdot} N^{-1/2}\,\|.
\) By the operator norm inequality,
\(
\|V_{i,\cdot}\| \le \|Z_{i,\cdot}\| \,\|N^{-1/2}\| = \frac{\|Z_{i,\cdot}\|}{\sqrt{\lambda_{\min}(N)}}.
\)
Since each row \(Z_{i,\cdot}\) is a probability vector we have \(\|Z_{i,\cdot}\|\le\|Z_{i,\cdot}\|_1=1\), and thus
\(
\|V_{i,\cdot}\| \le \frac{1}{\sqrt{\lambda_{\min}(N)}}\) for all \(i.\) Finally applying the assumed spectral bound \(\lambda_{\min}(N)\ge c_1\,n/k\)   completes the proof.
\end{proof}

\section{Proof of \cref{prop:GRDPG-equivalence,prop:subspace-vs-membership,prop:eig-scaling-sbm-mmsbm,rem:Minimax2} and \cref{thm:twosample_dense}}\label{sec:extra}
\subsection{Proof of \cref{prop:GRDPG-equivalence}}\label{prop1}

\begin{proof}
Since \(V^{(i)}\) contains the eigenvectors corresponding to the non-zero eigenvalues of \(P^{(i)}\), we have \(\col(V^{(i)}) = \col(P^{(i)})\).
Substituting the structure of the GRDPG probability matrix, we have
\[
\col(P^{(i)}) = \col(\Gamma^{(i)} I_{p_i,q_i} {\Gamma^{(i)}}\t).
\]
\(I_{p_i,q_i}\) is a diagonal matrix with entries \(\pm 1\) and hence, is non-singular. Furthermore, because \(\Gamma^{(i)}\) has full column rank \(k\), the product \(I_{p_i,q_i} {\Gamma^{(i)}}\t\) has full row rank \(k\). Right-multiplication by a full row-rank matrix preserves the column space of the left matrix. Therefore,
\(
\col(P^{(i)}) = \col(\Gamma^{(i)}).
\)
Consequently, the condition \(V^{(1)} {V^{(1)}}\t = V^{(2)} {V^{(2)}}\t\) holds if and only if \(\col(\Gamma^{(1)}) = \col(\Gamma^{(2)})\).
For two full-rank matrices \(\Gamma^{(1)}, \Gamma^{(2)} \in \mathbb{R}^{n \times k}\), their column spaces are identical if and only if there exists an invertible matrix \(W \in \mathbb{R}^{k \times k}\) such that \(\Gamma^{(1)} = \Gamma^{(2)} W\).
\end{proof}

\subsection{Proof of \cref{prop:subspace-vs-membership}}\label{prop2}
\begin{proof}
First note that, for each \(i\), \(\col(P^{(i)})\subseteq\col({Z^{(i)}})\) because \(P^{(i)}={Z^{(i)}}B^{(i)}{Z^{(i)}}\t\). Since \(B^{(i)}\) is full-rank and \(\rank(P^{(i)})=k\) by assumption, we must have \(\rank({Z^{(i)}})=k\) and therefore
\(
\col(P^{(i)})=\col({Z^{(i)}})\) for \( i=1,2.
\)
If \({V^{(1)}}{V^{(1)}}\t={V^{(2)}}{V^{(2)}}\t\), then the two projection (equivalently column) spaces coincide:
\[
\col({P^{(1)}})=\col({V^{(1)}})=\col({V^{(2)}})=\col({P^{(2)}}).
\]
Using the above equation, this is equivalent to \(\col(Z^{(1)})=\col(Z^{(2)})\). Conversely, if \(\col(Z^{(1)})=\col(Z^{(2)}),\) then \(\col({P^{(1)}})=\col({P^{(2)}})\) and hence \({V^{(1)}}{V^{(1)}}\t={V^{(2)}}{V^{(2)}}\t\). Because \(\col(Z^{(1)})=\col(Z^{(2)})\) and both \(Z^{(1)}\) and \(Z^{(2)}\) have full column rank, there exists an invertible \(k\times k\) matrix \(R\) such that
\(
Z^{(2)} = Z^{(1)} R.
\)

For the SBM, each canonical basis vector \(e_j\t\) appears as a row of \(Z^{(1)}\); applying \(R\) yields that the rows of \(Z^{(2)}\) are the vectors \(e_j\t R\). But each \(e_j\t R\) must itself be a canonical basis vector (since rows of \(Z^{(2)}\) are one-hot); hence \(R\) sends the standard basis to the standard basis and therefore \(R\) is a permutation matrix. Thus \(Z^{(2)} = Z^{(1)}\Pi\) for some permutation \(\Pi\), i.e., the membership matrices coincide up to relabeling of community indices. 

For the MMSBM case, consider the pure rows of $Z^{(1)}$. If $i_a$ is an index such that $Z^{(1)}_{i_a \cdot} = e_a^\top$, then evaluating $Z^{(2)}=Z^{(1)}R$ at row $i_a$ yields
$$
    Z^{(2)}_{i_a \cdot} = e_a^\top R.
$$
This implies that the rows of $R$ exactly coincide with the rows of $Z^{(2)}$ corresponding to the pure nodes of $Z^{(1)}$. Because the rows of $Z^{(2)}$ lie in the $(k-1)$-simplex, every row of $R$ must also reside in the simplex; in particular, $R$ is element-wise non-negative and its rows sum to one.

$Z^{(2)}$ also contains pure nodes for each community. Since $R$ is invertible, we can express the latent position relationship as $Z^{(1)} = Z^{(2)} R^{-1}$. Evaluating this equation at the pure nodes of $Z^{(2)}$ demonstrates that the rows of $R^{-1}$ similarly correspond to rows of $Z^{(1)}$, which implies that $R^{-1}$ is also an element-wise non-negative matrix.

Theorem 5.1 of \cite{ding2014matrix} establishes that a matrix and its inverse are both non-negative if and only if the matrix can be expressed as the product of a diagonal matrix with strictly positive diagonal entries and a permutation matrix. This result implies that $R$ must be a generalized permutation matrix. Because we have already established that the rows of $R$ must sum to one, its non-zero entries are forced to be exactly one. Hence, $R$ is a standard permutation matrix, and consequently, $Z^{(2)}=Z^{(1)}\Pi$ for some permutation $\Pi$.
\end{proof}

\subsection{Proof of \cref{prop:eig-scaling-sbm-mmsbm}}\label{prop3}
\begin{proof}
Write \(B=\rho_n\widetilde B\) and \(N=Z\t Z\). Then
\(
P = Z B Z\t = \rho_n\, Z\widetilde B Z\t.
\)
The nonzero eigenvalues of \(Z\widetilde B Z\t\) coincide with the eigenvalues of the \(k\times k\) matrix
\(
\widetilde B^{1/2} N \widetilde B^{1/2},
\)
so that for each \(1\le j\le k\),
\(
\lambda_j(P) = \rho_n\cdot \lambda_j\big(\widetilde B^{1/2} N \widetilde B^{1/2}\big).
\)
Suppose \(v_j\) is the eigenvector of \(\widetilde B^{1/2} N \widetilde B^{1/2}\) corresponding to the eigenvalue \(\lambda_j\big(\widetilde B^{1/2} N \widetilde B^{1/2}\big).\) Since \(\widetilde B\) and \(N\) are symmetric positive definite, we have
\begin{align*}
    \lambda_j\big(\widetilde B^{1/2} N \widetilde B^{1/2}\big) &=  v_j\t \big(\widetilde B^{1/2} N \widetilde B^{1/2}\big)v_j. \\
\intertext{which yields}
    \lambda_{\min}(N) \|\widetilde B^{1/2} v_j\|^2 &\le \lambda_j\big(\widetilde B^{1/2} N \widetilde B^{1/2}\big) \le \lambda_{\max}(N) \|\widetilde B^{1/2} v_j\|^2, \\
\intertext{and further bounding $\|\widetilde B^{1/2} v_j\|^2$ gives}
    \lambda_{\min}(N) \lambda_{\min}(\widetilde B) &\le\lambda_j\big(\widetilde B^{1/2} N \widetilde B^{1/2}\big) \le \lambda_{\max}(N) \lambda_{\max}(\widetilde B).
\end{align*}
Combining this result with \cref{ass:asymptotic1}, the eigenvalue bounds on $B$, and the assumption of balanced community sizes (which imply there exist constants \(b_1,b_2>0\) such that $ b_1 n \lesssim \lambda_{\min}(N) \le \lambda_{\max}(N) \lesssim b_2 n$), we obtain
\begin{align*}
    a_1 b_1 \, n &\le \lambda_j\big(\widetilde B^{1/2} N \widetilde B^{1/2}\big) \le a_2 b_2 \, n. \\
\intertext{This implies that}
    a_1 b_1 \, n\rho_n &\le \lambda_j(P) \le a_2 b_2 \, n\rho_n, \\
\intertext{which upon dividing by $n \rho_n$ yields}
    a_1 b_1 &\le \frac{\lambda_j(P)}{n \rho_n} \le a_2 b_2,
\end{align*}
completing the proof.
\end{proof}
\subsection{Proof of \cref{thm:twosample_dense}}\label{thm:twosample_densepf}
\begin{proof}
The conclusion follows directly from \cref{thm:twosample} together with \cref{lem:second-order2}.  
The latter establishes that 
\(\tilde \sigma_2^{2} \asymp k^{2}/(n^{3}\rho_{n}^{2})\).  
Moreover, under \cref{ass:dense} we have
\(
\frac{k}{n^{3/2}\rho_{n}}
\gg
\frac{k}{n^{2}\rho_{n}^{2}},
\)
which, in combination with the preceding variance characterization, yields the desired result.
\end{proof}

\subsection{Proof of \cref{rem:Minimax2}}\label{prop4}
We will break down the proof into several steps.
\subsubsection{Step 1: Initial Decomposition.} \label{part01}
 Assume that \(A\) belongs to $\mathcal{E}_{{\sf very \ good}}$ as defined in \cref{lem:Very Good Set}.
We slightly refine the proof of \cref{thm:1sampleconsis} where it is already shown that \(
|\hat{\mu}_1 - \mu_1| =O_p \left(\max \left(\frac{k}{n^2 \rho_n^2},\frac{k \sqrt{\log n}}{n^2 \rho_n^{3/2}}\right)\right).
\) We will now show that \(
|\hat{\mu}_1 - \mu_1| \asymp \max \left(\frac{k}{n^2 \rho_n^2},\frac{k \sqrt{\log n}}{n^2 \rho_n^{3/2}}\right).
\) We will decompose \(\mu_1^{(2)}\) and \(\hat \mu_1^{(2)}\) (defined in \cref{prf:FinEst1}) in a slightly different way for this. We have that
\begin{equation}
\begin{aligned}\label{inite_dec01}
\mu_1^{(2)} - \hat{\mu}_1^{(2)} 
&= \tr\left( V \Lambda^{-2} V\t\Diag(\Sigma \cdot d) - V \Lambda^{-2} V\t\Diag(\Sigma \cdot \mathbf{1}_n) \right) \\
&\quad + \tr\left( V \Lambda^{-2} V\t\Diag(\Sigma \cdot \mathbf{1}_n) - \hat V \hat \Lambda^{-2} \hat V\t \Diag(\hat{\Sigma} \cdot \mathbf{1}_n) \right) \\
&\quad + \tr\left(  \hat{V} \hat{\Lambda}^{-2} \hat{V}\t \Diag(\hat{\Sigma} \cdot \mathbf{1}_n) -  \hat{V} \hat{\Lambda}^{-2} \hat{V}\t \Diag(\hat{\Sigma} \cdot \hat{d}) \right).
\end{aligned}
\end{equation}
All the terms of \cref{inite_dec01} except the second one \(\tr\left( V \Lambda^{-2} V\t\Diag(\Sigma \cdot \mathbf{1}_n) - \hat V \hat \Lambda^{-2} \hat V\t \Diag(\hat{\Sigma} \cdot \mathbf{1}_n) \right)\) have already been shown to be \( o_p \left(\max \left(\frac{k}{n^2 \rho_n^2},\frac{k \sqrt{\log n}}{n^2 \rho_n^{3/2}}\right) \right)\) in the proof of \cref{thm:1sampleconsis} in \cref{prf:FinEst1}.
Our target will be to prove that
$$
    \tr\left( V \Lambda^{-2} V\t\,\Diag(P \cdot \mathbf{1}_n) - \hat V \hat \Lambda^{-2} \hat V\t\,\Diag({\hat P} \cdot  \mathbf{1}_n) \right) \asymp \max \left(\frac{k}{n^2 \rho_n^2},\frac{k \sqrt{\log n}}{n^2 \rho_n^{3/2}}\right),
$$
because we can show that 
$$
    \tr\left( V \Lambda^{-2} V\t\,\Diag((P \circ P) \cdot \mathbf{1}_n) - \hat V \hat \Lambda^{-2} \hat V\t\,\Diag(({\hat P} \circ {\hat P}) \cdot \mathbf{1}_n) \right) = o_p \left(\max \left(\frac{k}{n^2 \rho_n^2},\frac{k \sqrt{\log n}}{n^2 \rho_n^{3/2}}\right) \right)
$$
by an argument entirely analogous to that used earlier in the proof of \cref{inter1} in \cref{sec:interproof1}.

We have that
\begin{equation}
\begin{aligned} \label{nseries1}
    \tr\left( 
V \Lambda^{-2} V\t\,\Diag(P \cdot \mathbf{1}_n)
-
\hat V \hat \Lambda^{-2} \hat V\t\,\Diag({\hat P} \cdot \mathbf{1}_n)
\right)
    &=  \tr\left( \left(
V \Lambda^{-2} V\t
-
\hat V \hat \Lambda^{-2} \hat V\t \right)\,\Diag(P \cdot \mathbf{1}_n)
\right) \\
& \quad - \tr\left( 
\left(\hat V \hat \Lambda^{-2} \hat V\t - V \Lambda^{-2} V\t \right)\,\Diag(( {\hat P}-P) \cdot \mathbf{1}_n)
\right)\\
& \quad - \tr\! 
\left( V  \Lambda^{-2}  V\t \,\Diag(( {\hat P}-P) \cdot \mathbf{1}_n)
\right).
\end{aligned}
\end{equation}

We now expand the terms in \cref{nseries1} one by one using \cref{lem:series-expansion}, \cref{dseries}, and Theorem~1 of \citet{xia_normal_2021}. This yields the following decompositions:
\begin{equation}
\begin{aligned} \label{nseries2}
\tr\bigg(&
\bigl(\widehat V \widehat \Lambda^{-2} \widehat V\t
-
V \Lambda^{-2} V\t\bigr)\,\Diag(P \cdot \mathbf{1}_n)
\bigg) \\
&=
\tr\Big(
\bigl[
V\Lambda^{-2}V\t S_1(X)
+
S_1(X)V\Lambda^{-2}V\t 
-
V\Lambda^{-2}V\t(XP+PX)V\Lambda^{-2}V\t
\bigr]\Diag(P \cdot \mathbf{1}_n)
\Big) \\[4pt]
&\quad+
\tr\Big(
S_1(X)\bigl(\widehat V\widehat\Lambda^{-2}\widehat V\t
-
V\Lambda^{-2}V\t\bigr)
\Diag(P \cdot \mathbf{1}_n)
\Big) \\[4pt]
&\quad-
\tr\Big(
V\Lambda^{-2}V\t(XP+PX)
\bigl(\widehat V\widehat\Lambda^{-2}\widehat V\t
-
V\Lambda^{-2}V\t\bigr)
\Diag(P \cdot \mathbf{1}_n)
\Big) \\[4pt]
&\quad+
\tr\Big(
\bigl(
\sum_{j\ge2}V\Lambda^{-2}V\t S_j(X)
+
\sum_{j\ge2}S_j(X)\widehat V\widehat\Lambda^{-2}\widehat V\t -
V\Lambda^{-2}V\t X^2\widehat V\widehat\Lambda^{-2}\widehat V\t
\bigr)\Diag(P \cdot \mathbf{1}_n)
\Big).
\end{aligned}
\end{equation}
Similarly,
\begin{equation}
\begin{aligned} \label{nseries3}
\tr\bigg( &
\bigl(\widehat V \widehat \Lambda^{-2} \widehat V\t
-
V \Lambda^{-2} V\t\bigr)\,
\Diag\bigl(({\widehat P}-P)\cdot \mathbf{1}_n\bigr)
\bigg) \\
&=
\sum_{l \ge 1}
\tr\Big(
\bigl[
V\Lambda^{-2}V\t S_1(X)
+
S_1(X)V\Lambda^{-2}V\t
-
V\Lambda^{-2}V\t(XP+PX)V\Lambda^{-2}V\t
\bigr]\Diag(T_l(X)\cdot \mathbf{1}_n)
\Big) \\[4pt]
&\quad+
\sum_{l \ge 1}
\tr\Big(
S_1(X)\bigl(\widehat V\widehat\Lambda^{-2}\widehat V\t
-
V\Lambda^{-2}V\t\bigr)
\Diag(T_l(X)\cdot \mathbf{1}_n)
\Big) \\[4pt]
&\quad-
\sum_{l \ge 1}
\tr\Big(
V\Lambda^{-2}V\t(XP+PX)
\bigl(\widehat V\widehat\Lambda^{-2}\widehat V\t
-
V\Lambda^{-2}V\t\bigr)
\Diag(T_l(X)\cdot \mathbf{1}_n)
\Big) \\[4pt]
&\quad+
\sum_{l \ge 1}
\tr\Big(
\bigl(
\sum_{j\ge2}V\Lambda^{-2}V\t S_j(X)
+
\sum_{j\ge2}S_j(X)\widehat V\widehat\Lambda^{-2}\widehat V\t -
V\Lambda^{-2}V\t X^2\widehat V\widehat\Lambda^{-2}\widehat V\t
\bigr)\Diag(T_l(X)\cdot \mathbf{1}_n)
\Big).
\end{aligned}
\end{equation}
Finally,
\begin{equation}
\begin{aligned} \label{nseries4}
\tr\left(
V\Lambda^{-2}V\t\,
\Diag\bigl(({\widehat P}-P)\cdot \mathbf{1}_n\bigr)
\right)
=
\sum_{l \ge 1}
\tr\left(
V\Lambda^{-2}V\t\,
\Diag(T_l(X)\cdot \mathbf{1}_n)
\right).
\end{aligned}
\end{equation}
Thus, we have shown that
\begin{align} \label{note06}
    \mu_1^{(2)} - \hat\mu_1^{(2)} &= \tr\left( 
V \Lambda^{-2} V\t\,\Diag(P \cdot \mathbf{1}_n)
-
\hat V \hat \Lambda^{-2} \hat V\t\,\Diag({\hat P} \cdot \mathbf{1}_n)
\right) +  o_p \left(\max \left(\frac{k}{n^2 \rho_n^2},\frac{k \sqrt{\log n}}{n^2 \rho_n^{3/2}}\right) \right).
\end{align}
As in the proofs of \cref{inter-d1,inter-a}, one can verify that all linear in $X$ contributions of the first term in \cref{note06} (expanded term-wise in \cref{nseries1,nseries2,nseries3,nseries4}) are of order
\(
O_p\left(\frac{k \log n}{n^{2}\rho_n^{3/2}}\right).
\)
Next, we will find the quadratic contributions of the same trace term in \cref{note06}.


\subsubsection{Step 2: Concentration of order-two terms around their mean.}\label{part03}
We now demonstrate that the combined contribution of the second-order terms in \cref{nseries1} is of order $\max \left(\frac{k}{n^2 \rho_n^2},\frac{k \sqrt{\log n}}{n^2 \rho_n^{3/2}}\right)$. Recall the term-by-term expansions of \cref{nseries1} provided in \cref{nseries2,nseries3,nseries4}. Focusing first on \cref{nseries2} and isolating all terms that depend quadratically on $X$, we obtain:
\begin{equation}
\begin{aligned} \label{nseries5}
\text{Term 1}
&=
\tr\Big(
S_1(X)^2\, V\Lambda^{-2}V\t
\Diag(P \cdot \mathbf{1}_n)
\Big)
-
\tr\Big(
S_1(X)V\Lambda^{-2}V^\top (PX+XP)V\Lambda^{-2}V^\top
\Diag(P \cdot \mathbf{1}_n)
\Big)\\
&\quad+
\tr\Big(
S_1(X) V\Lambda^{-2}V\t S_1(X)
\Diag(P \cdot \mathbf{1}_n)
\Big)
-
\tr\Big(
V\Lambda^{-2}V\t(PX+XP) S_1(X) V\Lambda^{-2}V\t
\Diag(P \cdot \mathbf{1}_n)
\Big)\\
&\quad+
\tr\Big(
V\Lambda^{-2}V\t(PX+XP) V\Lambda^{-2}V^\top (PX+XP) V\Lambda^{-2}V^\top
\Diag(P \cdot \mathbf{1}_n)
\Big)\\
&\quad-
\tr\Big(
V\Lambda^{-2}V\t(PX+XP) V\Lambda^{-2}V\t S_1(X)
\Diag(P \cdot \mathbf{1}_n)
\Big)\\
&\quad+
\tr\Big(
\bigl(
V\Lambda^{-2}V\t S_2(X)
+
S_2(X)V\Lambda^{-2}V\t
-
V\Lambda^{-2}V\t X^2V\Lambda^{-2}V\t
\bigr)\Diag(P \cdot \mathbf{1}_n)
\Big)\\
&=
\tr\Bigg( 
\Big(
\beta^{\perp} X \beta^{-2} X \beta^{-2}
-
\beta^{\perp} X \beta^{-3} X \beta^{-1}
+
\beta^{\perp} X \beta^{-4} X \beta^{\perp}
-
\beta^{-1} X \beta^{\perp} X \beta^{-3}
+
\beta^{-2} X \beta^{-1} X \beta^{-1}
\\
&\qquad\quad
-
\beta^{-2} X \beta^{-2} X \beta^{\perp} -
\beta^{-2} X \beta^{\perp} X \beta^{-2}
+
\beta^{-1} X \beta^{-1} X \beta^{-2}
+
\beta^{-1} X \beta^{-2} X \beta^{-1}
-
\beta^{-1} X \beta^{-3} X \beta^{\perp} \\
&\qquad\quad
+
\beta^{-4} X \beta^{\perp} X \beta^{\perp}
-
\beta^{-3} X \beta^{\perp} X \beta^{-1}
-
\beta^{-3} X \beta^{-1} X \beta^{\perp}
+
\beta^{\perp} X \beta^{\perp} X \beta^{-4}
-
\beta^{\perp} X \beta^{-1} X \beta^{-3}
\Big)
\Diag(P \cdot \mathbf{1}_n)
\Bigg)\\
&=
\text{Term 0}  + 
\tr \left(\beta^{-4} X \beta^{\perp} X \beta^{\perp} \Diag(P \cdot \mathbf{1}_n)\right)  
+ 
\tr \left(\beta^{\perp} X \beta^{\perp} X \beta^{-4} \Diag(P \cdot \mathbf{1}_n)\right) \\
& \qquad \qquad + \tr \left(\beta^{\perp} X \beta^{-4} X \beta^{\perp} \Diag(P \cdot \mathbf{1}_n)\right),
\end{aligned}
\end{equation}
where we group the remaining components into \(\text{Term 0}\), defined as
\begin{equation*}
\begin{aligned}
    \text{Term 0} \coloneqq \tr\Bigg( \Big(
    &\beta^{\perp} X \beta^{-2} X \beta^{-2}
    - \beta^{\perp} X \beta^{-3} X \beta^{-1}
    - \beta^{-1} X \beta^{\perp} X \beta^{-3}
    + \beta^{-2} X \beta^{-1} X \beta^{-1}
    -
\beta^{-2} X \beta^{-2} X \beta^{\perp} \\
    &+ \beta^{-1} X \beta^{-1} X \beta^{-2}
    + \beta^{-1} X \beta^{-2} X \beta^{-1}
    - \beta^{-1} X \beta^{-3} X \beta^{\perp}
    - \beta^{-3} X \beta^{\perp} X \beta^{-1} \\
    &- \beta^{-3} X \beta^{-1} X \beta^{\perp}
    - \beta^{\perp} X \beta^{-1} X \beta^{-3} -
\beta^{-2} X \beta^{\perp} X \beta^{-2}
    \Big) \Diag(P \cdot \mathbf{1}_n) \Bigg).
\end{aligned}
\end{equation*}
We now proceed to bound all terms in \cref{nseries5} starting first with
\[
    \tr \left(\beta^{-4} X \beta^{\perp} X \beta^{\perp} \Diag(P \cdot \mathbf{1}_n)\right) = \sum_{j,k,l,m} w^{(1)}_{jklm}\, X_{jk} X_{lm},
\]
where the coefficients are defined as
\begin{align*}
    w^{(1)}_{jklm} = \sum_{i,q} (\beta^{-4})_{ij} (\beta^\perp)_{kl} (\beta^\perp)_{mi} P_{iq}.
\end{align*}
Under \cref{ass:asymptotic1,ass:eigen-scaling,ass:incoherence}, these coefficients satisfy the bounds
\[
    \bigl| w^{(1)}_{jklm} \bigr| \lesssim
    \begin{cases}
        \dfrac{k^2}{n^4 \rho_n^3}, & k=l, \\[10pt]
        \dfrac{k^2}{n^5 \rho_n^3}, & k\neq l.
    \end{cases}
\]
Define 
\begin{align} \label{e.1}
    E_1 = \E \left(\tr \left(\beta^{-4} X \beta^{\perp} X \beta^{\perp} \Diag(P \cdot \mathbf{1}_n)\right) \right).
\end{align}
Consequently, it is straightforward to verify that the expectation satisfies
\begin{align*} 
    \big|E_1\big| \lesssim \frac{k^2}{n^2 \rho_n^2}.
\end{align*}
Next, we determine the asymptotic order of the corresponding variance:
\begin{equation}
\begin{aligned}\label{varref01}
    \Var \left(\sum_{j,k,l,m} w^{(2)}_{jklm} X_{jk} X_{lm}\right) 
    &= \sum_{j,k,l,m} \left(w^{(2)}_{jklm} \right)^2 \Var \left(X_{jk} X_{lm} \right) + \sum_{(j,k,l,m) \neq \atop (j',k',l',m')} w^{(2)}_{jklm} w^{(2)}_{j' k' l' m'} \Cov \left(X_{jk} X_{lm},X_{j'k'} X_{l'm'} \right) \\
    &= \sum_{j,k,l,m \atop k \neq l} \left(w^{(2)}_{jklm} \right)^2 \Var \left(X_{jk} X_{lm} \right) + \sum_{j,k,l,m \atop k = l} \left(w^{(2)}_{jklm} \right)^2 \Var \left(X_{jk} X_{lm} \right) \\
    &\quad + \sum_{(j,k,l,m) \neq \atop (j',k',l',m')} w^{(2)}_{jklm} w^{(2)}_{j' k' l' m'} \Cov \left(X_{jk} X_{lm},X_{j'k'} X_{l'm'} \right) \\
    &\lesssim \frac{k^4}{n^5 \rho_n^4}.
\end{aligned}
\end{equation}
Therefore, an application of Chebyshev's inequality yields that, for some constant \(C_1>0\),
\begin{align} \label{nseries6.4}
    \mathbb{P} \left(\left|\tr \left(\beta^{-4} X \beta^{\perp} X \beta^{\perp} \Diag(P \cdot \mathbf{1}_n)\right)- E_1 \right| \le \frac{k^2}{n^{5/2} \rho_n^2} \right) \ge C_1.
\end{align}
By analogous reasoning as in \cref{nseries6.4}, there exists a constant \(C_2>0\) such that
\begin{align} \label{nseries6.5}
    \mathbb{P} \left(\left|\tr \left(\beta^{\perp} X \beta^{\perp} X \beta^{-4} \Diag(P \cdot \mathbf{1}_n)\right)- E_2 \right| \le \frac{k^2}{n^{5/2} \rho_n^2} \right) \ge C_2,
\end{align}
where  
\begin{align} \label{e.2}
    E_2 = \E \left(\tr \left(\beta^{\perp} X \beta^{\perp} X \beta^{-4} \Diag(P \cdot \mathbf{1}_n)\right)\right).
\end{align}
Next, we will bound the term  \(\tr \left(\beta^{\perp} X \beta^{-4} X \beta^{\perp} \Diag(P \cdot \mathbf{1}_n)\right)\), where we we again do our routine decomposition:
\[\tr \left(\beta^{\perp} X \beta^{-4} X \beta^{\perp} \Diag(P \cdot \mathbf{1}_n)\right)=\sum_{j,k,l,m} w^{(4)}_{jklm}\, X_{jk} X_{lm}.\]
The coefficients are defined as
\begin{align*}
    w^{(3)}_{jklm} = \sum_{i,q} (\beta^\perp)_{ij} (\beta^{-4})_{kl} (\beta^\perp)_{mi} P_{iq}.
\end{align*}
Under \cref{ass:asymptotic1,ass:eigen-scaling,ass:incoherence}, \[
    \bigl| w^{(3)}_{jklm} \bigr| \lesssim
    \begin{cases}
        \dfrac{k^2}{n^4 \rho_n^3}, & j=m, \\[10pt]
        \dfrac{k^2}{n^5 \rho_n^3}, & j\neq m.
    \end{cases}
\]
By analogous reasoning as in \cref{nseries6.4}, there exists a constant \(C_3>0\) such that
\begin{align} \label{nseries6.6}
    \mathbb{P} \left(\left|\tr \left(\beta^{\perp} X \beta^{-4} X \beta^{\perp} \Diag(P \cdot \mathbf{1}_n)\right)- E_3 \right| \lesssim \frac{k^2}{n^{5/2} \rho_n^2} \right) \ge C_3,
\end{align}
where 
\begin{align}\label{e.3}
    E_3 = \E \left(\tr \left(\beta^{\perp} X \beta^{-4} X \beta^{\perp} \Diag(P \cdot \mathbf{1}_n)\right)\right).
\end{align}
Identical asymptotic bounds hold for Term 0 in \cref{nseries5}. Suppose we take $\tr \left(\beta^\perp X \beta^{-3}X \beta^{-1} \Diag(P \mathbf{1}_n) \right)$ as a representative term. Expanding the term, we get
$$
    \tr \left(\beta^\perp X \beta^{-3}X \beta^{-1} \Diag(P \mathbf{1}_n) \right) = \sum_{j,u,v,l} w^{(0)}_{juvl}X_{ju}X_{vl},
$$
where $w^{(0)}_{juvl}=\sum_{i,s} (\beta^\perp)_{ij}(\beta^{-3})_{uv}(\beta^{-1})_{li}P_{is}$. We can check that under \cref{ass:asymptotic1,ass:eigen-scaling,ass:incoherence}, 
$$
    \bigl| w^{(0)}_{juvl} \bigr| \lesssim \frac{k^2}{n^5 \rho_n^3}.
$$
Thus,
$$
    \left|\E \left[ \tr \left(\beta^\perp X \beta^{-3}X \beta^{-1} \Diag(P \mathbf{1}_n) \right) \right] \right| = O \left(\frac{k^2}{n^3 \rho_n^2}\right),
$$
and, proceeding similarly to \cref{varref01}, we have
$$
    \Var \left(\tr \left(\beta^\perp X \beta^{-3}X \beta^{-1} \Diag(P \mathbf{1}_n) \right) \right) = O \left(\max \left(\frac{k^2}{n^3 \rho_n^2}, \frac{k^2}{n^4 \rho_n^{5/2}}\right)\right).
$$
Applying Chebyshev's inequality, we obtain 
\begin{equation}\label{nseries6.2}
    \mathbb{P} \left(\big| \text{Term 0} \big| \lesssim \frac{k^2}{n^3 \rho_n^2}\right) \ge C_4
\end{equation}
for some constant $C_4>0$.
Therefore, combining the decomposition of \cref{nseries5} with the concentration bounds established in \cref{nseries6.4,nseries6.5,nseries6.6,nseries6.2}, there exists a constant $C_5 > 0$ such that
\begin{equation}\label{nseries6.9}
    \mathbb{P} \left( \big| \text{Term 1} - (E_1 + E_2 + E_3) \big| \ll \frac{k^2}{n^2 \rho_n^2} \right) \ge C_5.
\end{equation}

Next, we move on to the second-order contributions from \cref{nseries3}:
\begin{equation}
\begin{aligned} \label{nseries6}
\text{Term 2}
&=
\tr\Big(
\bigl[
V\Lambda^{-2}V\t S_1(X)
+
S_1(X)V\Lambda^{-2}V\t 
-
V\Lambda^{-2}V\t(XP+PX)V\Lambda^{-2}V\t
\bigr]\Diag(T_1(X)\cdot \mathbf{1}_n)
\Big) \\
&=
\tr\Big(
\bigl[
\beta^{-3} X \beta^{\perp}
+
\beta^{\perp} X \beta^{-3}
-
\beta^{-2} X \beta^{-1}
-
\beta^{-1} X \beta^{-2}
\bigr]
\Diag(T_1(X)\cdot \mathbf{1}_n)
\Big).
\end{aligned}
\end{equation}
Focusing on the first term of \cref{nseries6}, we expand the trace as
\begin{equation}
\begin{aligned} \label{nseries6.1}
    \tr \left( \beta^{-3} X \beta^{\perp} \Diag(T_1(X)\cdot \mathbf{1}_n) \right) & = \sum_{i,j,k,l} (\beta^{-3})_{ij} (\beta^\perp)_{ki} X_{jk} (T_1(X))_{il}\\
    & \lesssim \sum_{i,j,k,m,l} (\beta^{-3})_{ij} (\beta^\perp)_{ki} (VV\t)_{im} X_{jk} X_{ml}\\
    & \lesssim \sum_{j,k,m,l} \underbrace{\left(\sum_i(\beta^{-3})_{ij} (\beta^\perp)_{ki} (VV\t)_{im} \right)}_{w^{(4)}_{jklm}} X_{jk} X_{ml}.
\end{aligned}
\end{equation}
Under \cref{ass:asymptotic1,ass:eigen-scaling,ass:incoherence}, it follows that \(w^{(4)}_{jklm} \lesssim \frac{k^2}{n^2(n \rho_n)^3}\). Consequently, it is straightforward to verify the expectation
\[
    \left|\E \left(\sum_{j,k,m,l} w^{(4)}_{jklm} X_{jk} X_{ml}\right) \right| \lesssim \frac{k^2}{n^3 \rho_n^2}.
\]
Next, we determine the order of the corresponding variance:
\begin{align*}
    \Var \left(\sum_{j,k,m,l} w^{(4)}_{jklm} X_{jk} X_{ml}\right) & = \sum_{j,k,m,l} \left(w^{(4)}_{jklm} \right)^2 \Var \left(X_{jk} X_{ml} \right) + \sum_{(j,k,m,l) \neq \atop (j',k',m',l')} w^{(4)}_{jklm} w^{(4)}_{j'k'l'm'} \Cov \left(X_{jk} X_{ml},X_{j'k'} X_{m'l'} \right) \\
    & \lesssim \frac{k^4}{n^6 \rho_n^4}.
\end{align*}
Therefore, an application of Chebyshev's inequality to \cref{nseries6.1} implies that with constant probability
\begin{align} \label{nseries6.20}
    \big | \tr \left( \beta^{-3} X \beta^{\perp} \Diag(T_1(X)\cdot \mathbf{1}_n) \right) \big| \lesssim \frac{k^2}{n^3 \rho_n^2}.
\end{align}
By analogous reasoning, identical asymptotic bounds hold for the remaining components in \cref{nseries6}, which ultimately yields 
\begin{align}\label{nseries6.3}
    \mathbb{P} \left(\big| \text{Term 2} \big| \lesssim \frac{k^2}{n^3 \rho_n^2}\right) \ge C_6.
\end{align}
for some constant \(C_6>0\).
\\ \ \\
Finally, the sole second-order contribution from \cref{nseries4} is given by
\[
    \text{Term 3} = \tr\left(\beta^{-2}\Diag(T_2(X)\cdot \mathbf{1}_n)\right).
\]
Invoking \cref{lem:series-expansion}, we obtain
\[
    T_2(X) = \bigl(\beta^{\perp} X \beta^{\perp} X \beta^{-1} + \beta^{-1} X \beta^{\perp} X \beta^{\perp} + \beta^{\perp} X \beta^{-1} X \beta^{\perp}\bigr),
\]
which implies that
\[
    \text{Term 3} = \sum_{a,b,c,d} w^{(5)}_{abcd}\, X_{ab} X_{cd},
\]
where the coefficients are defined as
\begin{align*}
    w^{(5)}_{abcd} &= \sum_{i,j} (\beta^{-2})_{ii} \Big[(\beta^\perp)_{ia} (\beta^\perp)_{bc} (\beta^{-1})_{dj} + (\beta^\perp)_{ia} (\beta^{-1})_{bc} (\beta^\perp)_{dj} + (\beta^{-1})_{ia} (\beta^\perp)_{bc} (\beta^\perp)_{dj}\Big].
\end{align*}
Under \cref{ass:asymptotic1,ass:eigen-scaling,ass:incoherence}, these coefficients satisfy the bounds
\[
    \bigl| w^{(5)}_{abcd} \bigr| \lesssim
    \begin{cases}
        \dfrac{k^2}{n^4 \rho_n^3}, & b=c, \\[10pt]
        \dfrac{k^2}{n^5 \rho_n^3}, & b\neq c.
    \end{cases}
\]
Define 
\begin{align}\label{e.4}
    E_4 &= \E \left(\tr\left(\beta^{-2}\Diag(T_2(X)\cdot \mathbf{1}_n)\right) \right) \\ \notag
    &= \E \left(\tr\left(\beta^{-2}\Diag(\left( \beta^{\perp} X \beta^{\perp} X \beta^{-1} + \beta^{-1} X \beta^{\perp} X \beta^{\perp} + \beta^{\perp} X \beta^{-1} X \beta^{\perp}\right)\cdot \mathbf{1}_n)\right) \right).
\end{align}
Consequently, it is straightforward to verify that the expectation satisfies
\[
    \left |E_4 \right| = \left |\E \left(\sum_{a,b,c,d} w^{(5)}_{abcd} X_{ab} X_{cd}\right) \right| \lesssim \frac{k^2}{n^2 \rho_n^2}.
\]
Next, we determine the asymptotic order of the corresponding variance:
\begin{align*}
    \Var \left(\sum_{a,b,c,d} w^{(5)}_{abcd} X_{ab} X_{cd}\right) 
    &= \sum_{a,b,c,d} \left(w^{(5)}_{abcd} \right)^2 \Var \left(X_{ab} X_{cd} \right) + \sum_{(a,b,c,d) \neq \atop (a',b',c',d')} w^{(5)}_{abcd} w^{(5)}_{a'b'c'd'} \Cov \left(X_{ab} X_{cd},X_{a'b'} X_{c'd'} \right) \\
    &= \sum_{a,b,c,d \atop b \neq c} \left(w^{(5)}_{abcd} \right)^2 \Var \left(X_{ab} X_{cd} \right) + \sum_{a,b,c,d \atop b = c} \left(w^{(5)}_{abcd} \right)^2 \Var \left(X_{ab} X_{cd} \right) \\
    &\quad + \sum_{(a,b,c,d) \neq \atop (a',b',c',d')} w^{(5)}_{abcd} w^{(5)}_{a'b'c'd'} \Cov \left(X_{ab} X_{cd},X_{a'b'} X_{c'd'} \right) \\
    &\lesssim \frac{k^4}{n^5 \rho_n^4}.
\end{align*}
Therefore, an application of Chebyshev's inequality yields that, for some constant \(C_7>0\),
\begin{align} \label{nseries6.7}
    \mathbb{P} \left(\left|\text{Term 3}- E_4  \right| \le \frac{k^2}{n^{5/2} \rho_n^2} \right) \ge C_7.
\end{align} 
Combining all the second-order terms from \cref{nseries6.9,nseries6.3,nseries6.7}, we conclude that there exists a constant \(C_8>0\) such that 
\begin{align} \label{nseries6.8}
    \mathbb{P} \left(\left|\text{Term 4}- E \right| \ll \frac{k^2}{n^{2} \rho_n^2} \right) \ge C_8,
\end{align} 
where \(\text{Term 4}=\text{Term 1}-\text{Term 2}-\text{Term 3}\) contains all the quadratic terms of \(X\) in \cref{nseries1}, and \(E=E_1+E_2+E_3-E_4\) as defined earlier in \cref{e.1,e.2,e.3,e.4}.

Combining these arguments, we have therefore shown that with at least constant probability,
\begin{align}\label{note07}
    \mu_1^{(2)} - \widehat{\mu}_1^{(2)} &= E_1 + E_2 + E_3 - E_4 + (\text{terms of degree $\ge 3$ in $X$}) \\
    &\quad + o_p \left(\max \left(\frac{k}{n^2 \rho_n^2},\frac{k \sqrt{\log n}}{n^2 \rho_n^{3/2}}\right) \right).\notag
\end{align}

\subsubsection{Step 3: Computing the order of order-two mean.}\label{part04}
Using the identity 
\[
    \E \left(X \beta^\perp X \right)= \Sigma \circ \beta^\perp + \Diag \left( \left(\Sigma - \Diag \left(\Sigma \right) \right) \diag \left(\beta^\perp \right) \right),
\]
we can evaluate \(E_1, E_2, E_3\), and \(E_4\):
\begin{equation}
\begin{aligned} \label{e123}
    E_1 &= \tr \left(\beta^{-4} \left[ \Sigma \circ \beta^\perp + \Diag \left( \left(\Sigma - \Diag \left(\Sigma \right) \right) \diag \left(\beta^\perp \right) \right)\right] \beta^{\perp} \Diag \left(P \cdot \mathbf{1}_n \right) \right), \\
    E_2 &= \tr \left(\beta^{\perp} \left[ \Sigma \circ \beta^{\perp} + \Diag \left( \left(\Sigma - \Diag \left(\Sigma \right) \right) \diag \left(\beta^{\perp} \right) \right)\right] \beta^{-4} \Diag \left(P \cdot \mathbf{1}_n \right) \right),\\
    E_3 &= \tr \left(\beta^{\perp} \left[ \Sigma \circ \beta^{-4} + \Diag \left( \left(\Sigma - \Diag \left(\Sigma \right) \right) \diag \left(\beta^{-4} \right) \right)\right] \beta^{\perp} \Diag \left(P \cdot \mathbf{1}_n \right) \right), 
\end{aligned}
\end{equation}
and
\begin{equation}
\begin{aligned} \label{e4}
    E_4 &= \tr \left(\beta^{-2} \Diag \left( \beta^\perp \left[ \Sigma \circ \beta^\perp + \Diag \left( \left( \Sigma - \Diag \left( \Sigma \right) \right) \diag(\beta^\perp) \right) \right] \beta^{-1} \cdot \mathbf{1}_n \right) \right)  \\
    &\quad + \tr \left(\beta^{-2} \Diag \left( \beta^\perp \left[ \Sigma \circ \beta^{-1} + \Diag \left( \left( \Sigma - \Diag \left( \Sigma \right) \right) \diag(\beta^{-1}) \right) \right] \beta^{\perp} \cdot \mathbf{1}_n \right) \right) \\
    &\quad + \tr \left(\beta^{-2} \Diag \left( \beta^{-1} \left[ \Sigma \circ \beta^\perp + \Diag \left( \left( \Sigma - \Diag \left( \Sigma \right) \right) \diag(\beta^\perp) \right) \right] \beta^{\perp} \cdot \mathbf{1}_n \right) \right).
\end{aligned}
\end{equation}
For the specific case of the SBM family, we show that $E_1=E_3=E_4=0$ and $E_2 \asymp \frac{k}{n^2\rho_n^2}$.

\paragraph{Calculation for $E_1, \,E_2$:} 
Here we prove that $\beta^{\perp} \Diag \left(P \cdot \mathbf{1}_n \right) = \Diag \left(P \cdot \mathbf{1}_n \right) \beta^{\perp}$. We first analyze the structure of $\beta^\perp$ under the SBM:
$$
    \beta^\perp = I - VV\t = I - Z(Z\t Z)^{-1}Z\t.
$$
Thus, $\beta^\perp$ is a block-diagonal matrix with the $m$th block given by $(\beta^\perp)^{(m)}=I_{n_m}-\frac{1}{n_m}J_{n_m}$, where $n_m$ is the size of the $m$th community. Similarly, we have 
\begin{align} \label{last1}
    \Diag(P \cdot \mathbf{1}_n) = \Diag( Z B Z\t \mathbf{1}_n) = \Diag( Z \tilde{d} ),
\end{align}
where $\tilde{d}=B Z\t \mathbf{1}_n$ is a $k \times 1$ vector with elements $\tilde{d}_m$ for $m=1,\dots,k$. Thus, \cref{last1} implies that $\Diag(P \cdot \mathbf{1}_n)$ also has a block structure, with the $m$th block being $\Diag(P \cdot \mathbf{1}_n)^{(m)}=\tilde{d}_m I_{n_m}$. Since both $\beta^\perp$ and $\Diag(P \cdot \mathbf{1}_n)$ are block-diagonal matrices with matching block sizes, their product is also block-diagonal. The $m$th block of $\beta^{\perp} \Diag \left(P \cdot \mathbf{1}_n \right)$ is therefore $\tilde d_m \left(I_{n_m}-\frac{1}{n_m}J_{n_m}\right)$, which is identical to the $m$th block of $ \Diag \left(P \cdot \mathbf{1}_n \right)\beta^{\perp}$. This proves that $\beta^{\perp} \Diag \left(P \cdot \mathbf{1}_n \right) = \Diag \left(P \cdot \mathbf{1}_n \right) \beta^{\perp}$.

Consequently, for $E_1$, we obtain
\begin{equation}
\begin{aligned} \label{last02}
    E_1 &= \tr \left(\beta^{-4} \left[ \Sigma \circ \beta^\perp + \Diag \left( \left(\Sigma - \Diag \left(\Sigma \right) \right) \diag \left(\beta^\perp \right) \right)\right] \beta^{\perp} \Diag \left(P \cdot \mathbf{1}_n \right) \right)\\
    &= \tr \left(\left[ \Sigma \circ \beta^\perp + \Diag \left( \left(\Sigma - \Diag \left(\Sigma \right) \right) \diag \left(\beta^\perp \right) \right)\right] \beta^{\perp} \Diag \left(P \cdot \mathbf{1}_n \right)\beta^{-4} \right)\\
    &= \tr \left(\left[ \Sigma \circ \beta^\perp + \Diag \left( \left(\Sigma - \Diag \left(\Sigma \right) \right) \diag \left(\beta^\perp \right) \right)\right] \Diag \left(P \cdot \mathbf{1}_n \right) \beta^{\perp} \beta^{-4} \right) \\
    &= 0.
\end{aligned}
\end{equation}
The exact same argument applies to $E_2$ as well.
\paragraph{Calculation for $E_4$:} 
For the SBM, since $Z \cdot \mathbf{1}_k=\mathbf{1}_n$, we have $\mathbf{1}_n \in \col(Z)=\col(V)$ because the block matrix $B$ has full rank. This directly forces the second and third terms of \cref{e4} to vanish since $\beta^\perp \cdot \mathbf{1}_n = 0$.

Next, we evaluate the first term:
$$
    \tr \left(\beta^{-2} \Diag \left( \beta^\perp \left[ \Sigma \circ \beta^\perp + \Diag \left( \left( \Sigma - \Diag \left( \Sigma \right) \right) \diag(\beta^\perp) \right) \right] \beta^{-1} \cdot \mathbf{1}_n \right) \right).
$$
Our target is to show that the components $\Diag \left( \beta^\perp \left[ \Sigma \circ \beta^\perp \right] \beta^{-1} \cdot \mathbf{1}_n \right)$, $\Diag \left( \beta^\perp \left[ \Diag \left( \Sigma \, \diag(\beta^\perp) \right) \right] \beta^{-1} \cdot \mathbf{1}_n \right)$, and $\Diag \left( \beta^\perp \left[ \Diag \left( \Diag \left( \Sigma \right) \diag(\beta^\perp) \right) \right] \beta^{-1} \cdot \mathbf{1}_n \right)$ vanish separately. 

We established in the calculation for $E_1$ that $(\beta^\perp)^{(m)}=I_{n_m}-\frac{1}{n_m}J_{n_m}$. The matrix $\Sigma=P \circ (J_n-P)$ also possesses a block structure. Thus, $\Sigma \circ \beta^\perp$ is a block-diagonal matrix with the $m$th block given by $(\Sigma \circ \beta^\perp)^{(m)}=P_{mm}(1-P_{mm})\left(I_{n_m}-\frac{1}{n_m}J_{n_m}\right)$. Furthermore, $\beta^{-1}\cdot \mathbf{1}_n=V \Lambda^{-1} V\t \cdot \mathbf{1}_n \in \col(Z)$ is a block-constant vector, with $q_m$ denoting the value of the $m$th block. Evaluating the $m$th block of the vector $\left[ \Sigma \circ \beta^\perp \right] \beta^{-1} \cdot \mathbf{1}_n$, we find
\begin{equation}
\begin{aligned} \label{last03}
    \left(\left[ \Sigma \circ \beta^\perp \right] \beta^{-1} \cdot \mathbf{1}_n \right)^{(m)} &= \left(\left[ \Sigma \circ \beta^\perp \right]\right)^{(m)} q_m \mathbf{1}_{n_m}\\
    &= P_{mm}(1-P_{mm})q_m\left(I_{n_m}-\frac{1}{n_m}J_{n_m}\right)\mathbf{1}_{n_m} = 0.
\end{aligned}
\end{equation}
Thus, \cref{last03} proves that $\left[ \Sigma \circ \beta^\perp \right] \beta^{-1} \cdot \mathbf{1}_n = 0$.

Next, $\diag(\beta^\perp)$ is a block-constant vector with its $m$th block equal to $(1-\frac{1}{n_m})$. Consequently, $\Sigma \, \diag(\beta^\perp)$ is also a block-constant vector with its $m$th block equal to $P_{mm}(1-P_{mm})(1-\frac{1}{n_m})$. This yields $$\left(\Diag \left( \Sigma \, \diag(\beta^\perp) \right) \right)^{(m)} = P_{mm}(1-P_{mm})(1-\frac{1}{n_m})I_{n_m}.$$ We then have 
\begin{equation}
\begin{aligned} \label{last04}
    \left(\beta^\perp \left[ \Diag \left( \Sigma \, \diag(\beta^\perp) \right) \right] \beta^{-1} \cdot \mathbf{1}_n\right)^{(m)} 
    &= (\beta^\perp)^{(m)} \left(\Diag \left( \Sigma \, \diag(\beta^\perp) \right) \right) ^{(m)} \left(\beta^{-1} \cdot \mathbf{1}_n\right)^{(m)}\\
    &= P_{mm}(1-P_{mm}) q_{m} \left(1-\frac{1}{n_m}\right) (\beta^\perp)^{(m)}\mathbf{1}_{n_m} = 0.
\end{aligned}
\end{equation}
This proves that $\Diag \left( \beta^\perp \left[ \Diag \left( \Sigma \, \diag(\beta^\perp) \right) \right] \beta^{-1} \cdot \mathbf{1}_n \right) = 0$. By a nearly identical argument, one can show that $\Diag \left( \beta^\perp \left[ \Diag \left( \Diag \left( \Sigma \right) \diag(\beta^\perp) \right) \right] \beta^{-1} \cdot \mathbf{1}_n \right) = 0$. Combining this result with \cref{last03,last04} confirms that $E_4 = 0$.
\paragraph{Calculation for $E_3$:} 
From \cref{e.3}, we know that \[E_3 = \E \left(\tr \left(\beta^{\perp} X \beta^{-4} X \beta^{\perp} \Diag(P \cdot \mathbf{1}_n)\right)\right) = \E \left(\|\beta^{-2}X \beta^{\perp} \left(\Diag(P \cdot \mathbf{1}_n)\right)^{1/2}\|_F^2 \right)>0.\] Now, we evaluate the exact asymptotic order of $E_3$. We already know that $\Sigma$ is a block-constant matrix. Additionally, $\beta^{-4}= V \Lambda^{-4} V\t \in \col(Z)$ is also a block-constant matrix. Thus, $\Sigma \circ \beta^{-4}$ is a block-constant matrix, meaning that $\col(\Sigma \circ \beta^{-4}) \subseteq \col(Z)$. This implies that $\beta^\perp (\Sigma \circ \beta^{-4})=0$. Thus, from \cref{e123}, we have
\begin{equation}
\begin{aligned} \label{last05}
    E_3 &= \tr \left(\beta^{\perp} \left[ \Diag \left( \left(\Sigma - \Diag \left(\Sigma \right) \right) \diag \left(\beta^{-4} \right) \right)\right] \beta^{\perp} \Diag \left(P \cdot \mathbf{1}_n \right) \right)\\
    &= \sum_{i=1}^n (\Diag(P\cdot \mathbf{1}_n))_{ii} \cdot \left(\Diag \left( \left(\Sigma - \Diag \left(\Sigma \right) \right) \diag \left(\beta^{-4} \right) \right)\right)_{ii} \cdot (\beta^\perp)_{ii}.
\end{aligned}
\end{equation}
Now,
\begin{align*}
    \left(\Diag \left( \left(\Sigma - \Diag \left(\Sigma \right) \right) \diag \left(\beta^{-4} \right) \right)\right)_{ii}&= \sum_{j \neq i} \Sigma_{ij} \beta_{jj}^{-4}\\
    & \asymp \rho_n \tr(\beta^{-4})\\ 
    &\asymp \frac{k}{n^4 \rho_n^3}
\end{align*}
and
$$(\Diag(P\cdot \mathbf{1}_n))_{ii}=\sum_{j} P_{ij} \asymp n \rho_n.
$$
Thus, \cref{last05} implies that $E_3 \asymp \frac{k}{n^2 \rho_n^2}$.

Consequently, we have established that under the SBM, the combined term $E = E_1 + E_2 + E_3 - E_4 \asymp \frac{k}{n^2 \rho_n^2}$. Combining this result with \cref{note07} yields
\begin{align}\label{note08}
    \mu_1^{(2)} - \widehat{\mu}_1^{(2)} &\asymp \frac{k}{n^2 \rho_n^2} + (\text{terms of degree $\ge 3$ in $X$})  + o_p \left(\max \left(\frac{k}{n^2 \rho_n^2},\frac{k \sqrt{\log n}}{n^2 \rho_n^{3/2}}\right) \right).
\end{align}
In the final step, we bound the remaining higher-order terms to complete the proof.
\subsubsection{Step 4: Bounding higher-order terms and completing the proof.}
To bound the higher-order terms (those with power of \(X\) greater than or equal to \(3\)), we begin with the final term in \cref{nseries1} which has been expanded in \cref{nseries4}. Using \cref{lem:series-expansion}, we have
\begin{align*}
\tr 
\left( V  \Lambda^{-2}  V\t \,\Diag(( {\hat P}-P) \cdot \mathbf{1}_n)
\right) &= \sum_{l \ge 3} \tr\left( V \Lambda^{-2} V\t \Diag(T_l(X)\cdot \mathbf{1}_n) \right) \\
&\lesssim
\sum_{l \ge 3}
\frac{k}{(n\rho_n)^2}
\left(\frac{4}{n\rho_n}\right)^{l/2}
\, n\rho_n \\
&\lesssim
\frac{k}{(n\rho_n)^{2.5}}.
\end{align*}

To bound the remaining higher-order contributions from \cref{nseries1}, we revisit the decomposition in \cref{nseries2} and follow an argument entirely analogous to the proof of \cref{inter-d1} in \cref{sec:interproof3}. The sole distinction is that in \cref{sec:interproof3}, after applying Bernstein's inequality to the linear terms, we established a high-probability bound for terms of degree two or higher in $X$; here, however, we restrict our attention to terms of degree three or higher. The remainder of the proof remains unchanged, yielding a bound of $O_p\left(\frac{k}{(n\rho_n)^{2.5}}\right)$.
The higher-order terms in \cref{nseries3} can be handled analogously using the same reasoning as in the proof of \cref{inter-d1}. Again, the only modification is that we now discard all terms of order \(3\) or higher in \(X\). This gives the same bound,
\(
O_p\left(\frac{k}{(n\rho_n)^{2.5}}\right).
\)

Combining the above bounds for terms of order \(3\) or higher in \(X\) expanded in \cref{nseries2,nseries3,nseries4}, we conclude that the total contribution of all terms with power of \(X\) greater than or equal to \(3\) in \cref{nseries1} is
\(
O_p\left(\frac{k}{(n\rho_n)^{2.5}}\right)
=
o_p\left(\frac{k}{n^2\rho_n^2}\right).
\)
Thus, \cref{note08} now can be updated to conclude that
\begin{align*}
    \big|\mu_1^{(2)} - \widehat{\mu}_1^{(2)}\big| &\asymp \frac{k}{n^2 \rho_n^2}+ o_p\left(\frac{k}{n^2\rho_n^2}\right) + o_p \left(\max \left(\frac{k}{n^2 \rho_n^2},\frac{k \sqrt{\log n}}{n^2 \rho_n^{3/2}}\right) \right) \\
    &\asymp \frac{k}{n^2 \rho_n^2},
\end{align*}
which completes the proof.

\bibliographystyle{plainnat_JA}
\bibliography{two_sample_hyp_subspace,references}

@article{li2018two,
  title={Two-sample test of community memberships of weighted stochastic block models},
  author={Li, Yezheng and Li, Hongzhe},
  journal={arXiv preprint arXiv:1811.12593},
  year={2018}
}

@book{bai2010spectral,
  title={Spectral analysis of large dimensional random matrices},
  author={Bai, Zhidong and Silverstein, Jack W},
  volume={20},
  year={2010},
  publisher={Springer}
}

@misc{bts_flight_data,
  author       = {{Bureau of Transportation Statistics}},
  title        = {Airline On-Time Performance Data},
  year         = {2026},
  howpublished = {\url{https://www.transtats.bts.gov/DatabaseInfo.asp?QO_VQ=EEE}},
  note         = {Accessed: 2026-05-13}
}

@article{lei2015consistency,
title={Consistency of spectral clustering in stochastic block models},
author={Lei, Jing and Rinaldo, Alessandro},
journal={The Annals of Statistics},
volume={43},
number={1},
pages={215--237},
year={2015},
doi={10.1214/14-AOS1274},
url={https://arxiv.org/abs/1312.2050}
}

@article{bandeira2016sharp,
  title={Sharp nonasymptotic bounds on the norm of random matrices with independent entries},
  author={Bandeira, Afonso S. and van Handel, Ramon},
  journal={The Annals of Probability},
  volume={44},
  number={4},
  pages={2479--2506},
  year={2016},
  publisher={Institute of Mathematical Statistics},
  doi={10.1214/15-AOP1025}
}

@book{kolaczyk2009statistical,
  title={Statistical analysis of network data: methods and models},
  author={Kolaczyk, Eric D},
  year={2009},
  publisher={Springer}
}

@article{lazega2001collegial,
  title={The collegial phenomenon: The social mechanisms of cooperation among peers in a corporate law partnership},
  author={Lazega, Emmanuel},
  journal={Oxford University Press},
  year={2001}
}

@article{bullmore2009complex,
  title={Complex brain networks: graph theoretical analysis of structural and functional systems},
  author={Bullmore, Ed and Sporns, Olaf},
  journal={Nature reviews neuroscience},
  volume={10},
  number={3},
  pages={186--198},
  year={2009},
  publisher={Nature Publishing Group}
}

@article{girvan2002community,
  title={Community structure in social and biological networks},
  author={Girvan, Michelle and Newman, Mark EJ},
  journal={Proceedings of the national academy of sciences},
  volume={99},
  number={12},
  pages={7821--7826},
  year={2002},
  publisher={National Acad Sciences}
}

@article{zhu2006automatic,
  title={Automatic dimensionality selection from the scree plot via the use of profile likelihood},
  author={Zhu, Mu and Ghodsi, Ali},
  journal={Computational Statistics \& Data Analysis},
  volume={51},
  number={2},
  pages={918--930},
  year={2006},
  publisher={Elsevier}
}

@article{ding2014matrix,
  title={When a matrix and its inverse are nonnegative},
  author={Ding, Jiu and Rhee, Noah H},
  journal={Missouri Journal of Mathematical Sciences},
  volume={26},
  number={1},
  pages={98--103},
  year={2014},
  publisher={University of Central Missouri, Department of Mathematics, Actuarial Science~…}
}

@inproceedings{chatterjee_two-sample_2023,
	title = {Two-{Sample} {Tests} for {Inhomogeneous} {Random} {Graphs} in \${L}\_r\$ {Norm}: {Optimality} and {Asymptotics}},
	shorttitle = {Two-{Sample} {Tests} for {Inhomogeneous} {Random} {Graphs} in \${L}\_r\$ {Norm}},
	url = {https://proceedings.mlr.press/v206/chatterjee23a.html},
	language = {en},
	urldate = {2025-11-18},
	booktitle = {Proceedings of {The} 26th {International} {Conference} on {Artificial} {Intelligence} and {Statistics}},
	publisher = {PMLR},
	author = {Chatterjee, Sayak and Saha, Dibyendu and Dan, Soham and Bhattacharya, Bhaswar B.},
	month = apr,
	year = {2023},
	note = {ISSN: 2640-3498},
	pages = {6903--6911},
}

@article{tang_nonparametric_2017,
	title = {A nonparametric two-sample hypothesis testing problem for random graphs},
	volume = {23},
	issn = {1350-7265},
	url = {https://projecteuclid.org/journals/bernoulli/volume-23/issue-3/A-nonparametric-two-sample-hypothesis-testing-problem-for-random-graphs/10.3150/15-BEJ789.full},
	doi = {10.3150/15-BEJ789},
	number = {3},
	urldate = {2025-11-18},
	journal = {Bernoulli},
	author = {Tang, Minh and Athreya, Avanti and Sussman, Daniel L. and Lyzinski, Vince and Priebe, Carey E.},
	month = aug,
	year = {2017},
	note = {Publisher: Bernoulli Society for Mathematical Statistics and Probability},
	pages = {1599--1630},
}

@article{athreya_statistical_2018,
	title = {Statistical {Inference} on {Random} {Dot} {Product} {Graphs}: a {Survey}},
	volume = {18},
	issn = {1533-7928},
	shorttitle = {Statistical {Inference} on {Random} {Dot} {Product} {Graphs}},
	url = {http://jmlr.org/papers/v18/17-448.html},
	number = {226},
	urldate = {2025-11-18},
	journal = {Journal of Machine Learning Research},
	author = {Athreya, Avanti and Fishkind, Donniell E. and Tang, Minh and Priebe, Carey E. and Park, Youngser and Vogelstein, Joshua T. and Levin, Keith and Lyzinski, Vince and Qin, Yichen and Sussman, Daniel L.},
	year = {2018},
	pages = {1--92},
}

@article{jin_optimal_2025,
	title = {Optimal {Network} {Pairwise} {Comparison}},
	volume = {120},
	issn = {0162-1459},
	url = {https://doi.org/10.1080/01621459.2024.2393471},
	doi = {10.1080/01621459.2024.2393471},
	number = {550},
	urldate = {2025-11-18},
	journal = {Journal of the American Statistical Association},
	author = {Jin, Jiashun and Ke, Zheng Tracy and Luo, Shengming and Ma, Yucong},
	month = apr,
	year = {2025},
	note = {Publisher: ASA Website
\_eprint: https://doi.org/10.1080/01621459.2024.2393471},
	pages = {1048--1062},
}

@article{ghoshdastidar_two-sample_2020,
	title = {Two-{Sample} {Hypothesis} {Testing} for {Inhomogeneous} {Random} {Graphs}},
	volume = {48},
	issn = {0090-5364},
	url = {https://www.jstor.org/stable/26931555},
	number = {4},
	urldate = {2025-11-18},
	journal = {The Annals of Statistics},
	author = {Ghoshdastidar, Debarghya and Gutzeit, Maurilio and Carpentier, Alexandra and von Luxburg, Ulrike},
	year = {2020},
	note = {Publisher: Institute of Mathematical Statistics},
	pages = {2208--2229},
}

@article{chakraborty_scalable_2025,
	title = {Scalable {Estimation} and {Two}-{Sample} {Testing} for {Large} {Networks} via {Subsampling}},
	volume = {34},
	issn = {1061-8600},
	url = {https://doi.org/10.1080/10618600.2024.2432974},
	doi = {10.1080/10618600.2024.2432974},
	number = {3},
	urldate = {2025-11-18},
	journal = {Journal of Computational and Graphical Statistics},
	author = {Chakraborty, Kaustav and Sengupta, Srijan and Chen, Yuguo},
	month = jul,
	year = {2025},
	note = {Publisher: ASA Website
\_eprint: https://doi.org/10.1080/10618600.2024.2432974},
	pages = {1127--1139},
}

@article{saxena_lost_2025,
	title = {Lost in the shuffle: {Testing} power in the presence of errorful network vertex labels},
	volume = {204},
	issn = {0167-9473},
	shorttitle = {Lost in the shuffle},
	url = {https://www.sciencedirect.com/science/article/pii/S0167947324001750},
	doi = {10.1016/j.csda.2024.108091},
	urldate = {2025-11-18},
	journal = {Computational Statistics \& Data Analysis},
	author = {Saxena, Ayushi and Lyzinski, Vince},
	month = apr,
	year = {2025},
	pages = {108091},
}

@misc{qi_multivariate_2024,
	title = {Multivariate {Inference} of {Network} {Moments} by {Subsampling}},
	url = {http://arxiv.org/abs/2409.01599},
	doi = {10.48550/arXiv.2409.01599},
	urldate = {2025-11-18},
	publisher = {arXiv},
	author = {Qi, Mingyu and Li, Tianxi and Zhou, Wen},
	month = sep,
	year = {2024},
	note = {arXiv:2409.01599 [stat]},
}

@article{agterberg_joint_2025,
	title = {Joint {Spectral} {Clustering} in {Multilayer} {Degree}-{Corrected} {Stochastic} {Blockmodels}},
	volume = {120},
	issn = {0162-1459},
	url = {https://doi.org/10.1080/01621459.2025.2516201},
	doi = {10.1080/01621459.2025.2516201},
	number = {551},
	urldate = {2025-11-18},
	journal = {Journal of the American Statistical Association},
	author = {Agterberg, Joshua and Lubberts, Zachary and Arroyo, Jesús},
	month = jul,
	year = {2025},
	note = {Publisher: ASA Website
\_eprint: https://doi.org/10.1080/01621459.2025.2516201},
	pages = {1607--1620},
}

@article{agterberg_estimating_2025,
	title = {Estimating {Higher}-{Order} {Mixed} {Memberships} via the \${\textbackslash}ell\_\{2,{\textbackslash}infty\}\$ {Tensor} {Perturbation} {Bound}},
	volume = {120},
	issn = {0162-1459},
	url = {https://doi.org/10.1080/01621459.2024.2404265},
	doi = {10.1080/01621459.2024.2404265},
	number = {550},
	urldate = {2025-11-18},
	journal = {Journal of the American Statistical Association},
	author = {Agterberg, Joshua and Zhang, Anru R.},
	month = apr,
	year = {2025},
	note = {Publisher: ASA Website
\_eprint: https://doi.org/10.1080/01621459.2024.2404265},
	pages = {1214--1224},
}

@misc{agterberg_nonparametric_2020,
	title = {Nonparametric {Two}-{Sample} {Hypothesis} {Testing} for {Random} {Graphs} with {Negative} and {Repeated} {Eigenvalues}},
	url = {http://arxiv.org/abs/2012.09828},
	doi = {10.48550/arXiv.2012.09828},
	urldate = {2025-11-18},
	publisher = {arXiv},
	author = {Agterberg, Joshua and Tang, Minh and Priebe, Carey},
	month = dec,
	year = {2020},
	note = {arXiv:2012.09828 [math]},
}

@misc{agterberg_overview_2025,
	title = {An {Overview} of {Asymptotic} {Normality} in {Stochastic} {Blockmodels}: {Cluster} {Analysis} and {Inference}},
	shorttitle = {An {Overview} of {Asymptotic} {Normality} in {Stochastic} {Blockmodels}},
	url = {http://arxiv.org/abs/2305.06353},
	doi = {10.48550/arXiv.2305.06353},
	urldate = {2025-11-18},
	publisher = {arXiv},
	author = {Agterberg, Joshua and Cape, Joshua},
	month = jan,
	year = {2025},
	note = {arXiv:2305.06353 [math]},
}

@misc{macdonald_mesoscale_2024,
	title = {Mesoscale two-sample testing for network data},
	url = {http://arxiv.org/abs/2410.17046},
	doi = {10.48550/arXiv.2410.17046},
	urldate = {2025-11-18},
	publisher = {arXiv},
	author = {MacDonald, Peter W. and Levina, Elizaveta and Zhu, Ji},
	month = oct,
	year = {2024},
	note = {arXiv:2410.17046 [stat]},
}

@misc{fan_simple-rc_2022,
	title = {{SIMPLE}-{RC}: {Group} {Network} {Inference} with {Non}-{Sharp} {Nulls} and {Weak} {Signals}},
	shorttitle = {{SIMPLE}-{RC}},
	url = {http://arxiv.org/abs/2211.00128},
	doi = {10.48550/arXiv.2211.00128},
	urldate = {2025-11-18},
	publisher = {arXiv},
	author = {Fan, Jianqing and Fan, Yingying and Lv, Jinchi and Yang, Fan},
	month = oct,
	year = {2022},
	note = {arXiv:2211.00128 [stat]},
}

@article{shao_higher-order_nodate,
	title = {Higher-{Order} {Accurate} {Two}-{Sample} {Network} {Inference} and {Network} {Hashing}},
	volume = {0},
	issn = {0162-1459},
	url = {https://doi.org/10.1080/01621459.2025.2520459},
	doi = {10.1080/01621459.2025.2520459},
	number = {0},
	urldate = {2025-11-18},
	journal = {Journal of the American Statistical Association},
	author = {Shao, Meijia and Xia, Dong and Zhang, Yuan and Wu, Qiong and Chen, Shuo},
	note = {Publisher: ASA Website},
	pages = {1--13},
}

@article{xia_confidence_2019,
	title = {Confidence {Region} of {Singular} {Subspaces} for {Low}-{Rank} {Matrix} {Regression}},
	volume = {65},
	issn = {1557-9654},
	url = {https://ieeexplore.ieee.org/abstract/document/8754764},
	doi = {10.1109/TIT.2019.2924900},
	number = {11},
	urldate = {2025-11-18},
	journal = {IEEE Transactions on Information Theory},
	author = {Xia, Dong},
	month = nov,
	year = {2019},
	pages = {7437--7459},
}

@article{bao_singular_2021,
	title = {Singular vector and singular subspace distribution for the matrix denoising model},
	volume = {49},
	issn = {0090-5364, 2168-8966},
	url = {https://projecteuclid.org/journals/annals-of-statistics/volume-49/issue-1/Singular-vector-and-singular-subspace-distribution-for-the-matrix-denoising/10.1214/20-AOS1960.full},
	doi = {10.1214/20-AOS1960},
	number = {1},
	urldate = {2025-11-18},
	journal = {The Annals of Statistics},
	author = {Bao, Zhigang and Ding, Xiucai and Wang, {and} Ke},
	month = feb,
	year = {2021},
	note = {Publisher: Institute of Mathematical Statistics},
	pages = {370--392},
}

@article{chen_asymmetry_2021,
	title = {Asymmetry helps: {Eigenvalue} and eigenvector analyses of asymmetrically perturbed low-rank matrices},
	volume = {49},
	issn = {0090-5364, 2168-8966},
	shorttitle = {Asymmetry helps},
	url = {https://projecteuclid.org/journals/annals-of-statistics/volume-49/issue-1/Asymmetry-helps--Eigenvalue-and-eigenvector-analyses-of-asymmetrically-perturbed/10.1214/20-AOS1963.full},
	doi = {10.1214/20-AOS1963},
	number = {1},
	urldate = {2025-11-18},
	journal = {The Annals of Statistics},
	author = {Chen, Yuxin and Cheng, Chen and Fan, Jianqing},
	month = feb,
	year = {2021},
	note = {Publisher: Institute of Mathematical Statistics},
	pages = {435--458},
}

@article{cheng_tackling_2021,
	title = {Tackling {Small} {Eigen}-{Gaps}: {Fine}-{Grained} {Eigenvector} {Estimation} and {Inference} {Under} {Heteroscedastic} {Noise}},
	volume = {67},
	issn = {1557-9654},
	shorttitle = {Tackling {Small} {Eigen}-{Gaps}},
	url = {https://ieeexplore.ieee.org/abstract/document/9535122},
	doi = {10.1109/TIT.2021.3111828},
	number = {11},
	urldate = {2025-11-18},
	journal = {IEEE Transactions on Information Theory},
	author = {Cheng, Chen and Wei, Yuting and Chen, Yuxin},
	month = nov,
	year = {2021},
	pages = {7380--7419},
}

@article{xia_inference_2022,
	title = {Inference for low-rank tensors—no need to debias},
	volume = {50},
	issn = {0090-5364, 2168-8966},
	url = {https://projecteuclid.org/journals/annals-of-statistics/volume-50/issue-2/Inference-for-low-rank-tensorsno-need-to-debias/10.1214/21-AOS2146.full},
	doi = {10.1214/21-AOS2146},
	number = {2},
	urldate = {2025-11-18},
	journal = {The Annals of Statistics},
	author = {Xia, Dong and Zhang, Anru R. and Zhou, Yuchen},
	month = apr,
	year = {2022},
	note = {Publisher: Institute of Mathematical Statistics},
	pages = {1220--1245},
}

@article{chen_spectral_2021,
	title = {Spectral {Methods} for {Data} {Science}: {A} {Statistical} {Perspective}},
	volume = {14},
	issn = {1935-8237, 1935-8245},
	shorttitle = {Spectral {Methods} for {Data} {Science}},
	url = {http://arxiv.org/abs/2012.08496},
	doi = {10.1561/2200000079},
	number = {5},
	urldate = {2025-11-18},
	journal = {Foundations and Trends® in Machine Learning},
	author = {Chen, Yuxin and Chi, Yuejie and Fan, Jianqing and Ma, Cong},
	year = {2021},
	note = {arXiv:2012.08496 [stat]},
	pages = {566--806},
}

@article{yan_coherence-free_2024,
	title = {Coherence-free {Entrywise} {Estimation} of {Eigenvectors} in {Low}-rank {Signal}-plus-noise {Matrix} {Models}},
	volume = {37},
	url = {https://proceedings.neurips.cc/paper_files/paper/2024/hash/e4cd50120b6d7e8daff1749d6bbaa889-Abstract-Conference.html},
	doi = {10.52202/079017-4021},
	language = {en},
	urldate = {2025-11-18},
	journal = {Advances in Neural Information Processing Systems},
	author = {Yan, Hao and Levin, Keith},
	month = dec,
	year = {2024},
	pages = {126566--126619},
}

@article{fan_asymptotic_2022,
	title = {Asymptotic {Theory} of {Eigenvectors} for {Random} {Matrices} {With} {Diverging} {Spikes}},
	volume = {117},
	issn = {0162-1459},
	url = {https://doi.org/10.1080/01621459.2020.1840990},
	doi = {10.1080/01621459.2020.1840990},
	number = {538},
	urldate = {2025-11-18},
	journal = {Journal of the American Statistical Association},
	author = {Fan, Jianqing and Fan, Yingying and Han, Xiao and Lv, Jinchi},
	month = apr,
	year = {2022},
	pmid = {36060554},
	note = {Publisher: ASA Website
\_eprint: https://doi.org/10.1080/01621459.2020.1840990},
	pages = {996--1009},
}

@article{xia_normal_2021,
	title = {Normal approximation and confidence region of singular subspaces},
	volume = {15},
	issn = {1935-7524, 1935-7524},
	url = {https://projecteuclid.org/journals/electronic-journal-of-statistics/volume-15/issue-2/Normal-approximation-and-confidence-region-of-singular-subspaces/10.1214/21-EJS1876.full},
	doi = {10.1214/21-EJS1876},
	number = {2},
	urldate = {2025-11-18},
	journal = {Electronic Journal of Statistics},
	author = {Xia, Dong},
	month = jan,
	year = {2021},
	note = {Publisher: Institute of Mathematical Statistics and Bernoulli Society},
	pages = {3798--3851},
}

@misc{wang_analysis_2024,
	title = {Analysis of singular subspaces under random perturbations},
	url = {http://arxiv.org/abs/2403.09170},
	doi = {10.48550/arXiv.2403.09170},
	urldate = {2025-11-18},
	publisher = {arXiv},
	author = {Wang, Ke},
	month = mar,
	year = {2024},
	note = {arXiv:2403.09170 [math]},
}

@article{han_universal_2023,
	title = {Universal rank inference via residual subsampling with application to large networks},
	volume = {51},
	issn = {0090-5364, 2168-8966},
	url = {https://projecteuclid.org/journals/annals-of-statistics/volume-51/issue-3/Universal-rank-inference-via-residual-subsampling-with-application-to-large/10.1214/23-AOS2282.full},
	doi = {10.1214/23-AOS2282},
	number = {3},
	urldate = {2025-11-18},
	journal = {The Annals of Statistics},
	author = {Han, Xiao and Yang, Qing and Fan, Yingying},
	month = jun,
	year = {2023},
	note = {Publisher: Institute of Mathematical Statistics},
	pages = {1109--1133},
}

@article{li_minimax_2025,
	title = {Minimax {Estimation} of {Linear} {Functions} of {Eigenvectors} in the {Face} of {Small} {Eigen}-{Gaps}},
	volume = {71},
	issn = {1557-9654},
	url = {https://ieeexplore.ieee.org/abstract/document/10789229},
	doi = {10.1109/TIT.2024.3514795},
	number = {2},
	urldate = {2025-11-18},
	journal = {IEEE Transactions on Information Theory},
	author = {Li, Gen and Cai, Changxiao and Poor, H. Vincent and Chen, Yuxin},
	month = feb,
	year = {2025},
	pages = {1200--1247},
}

@article{liu_asymptotic_2025,
	title = {Asymptotic limits of spiked eigenvalues and eigenvectors of signal-plus-noise matrices with weak signals and heteroskedastic noise},
	volume = {31},
	issn = {1350-7265},
	url = {https://projecteuclid.org/journals/bernoulli/volume-31/issue-3/Asymptotic-limits-of-spiked-eigenvalues-and-eigenvectors-of-signal-plus/10.3150/24-BEJ1808.full},
	doi = {10.3150/24-BEJ1808},
	number = {3},
	urldate = {2025-11-18},
	journal = {Bernoulli},
	author = {Liu, Xiaoyu and Liu, Yiming and Pan, Guangming and Zhang, Lingyue and Zhang, Zhixiang},
	month = aug,
	year = {2025},
	note = {Publisher: Bernoulli Society for Mathematical Statistics and Probability},
	pages = {2351--2376},
}

@misc{chang_extreme_2025,
	title = {Extreme value theory for singular subspace estimation in the matrix denoising model},
	url = {http://arxiv.org/abs/2507.19978},
	doi = {10.48550/arXiv.2507.19978},
	urldate = {2025-11-18},
	publisher = {arXiv},
	author = {Chang, Junhyung and Cape, Joshua},
	month = jul,
	year = {2025},
	note = {arXiv:2507.19978 [math]},
}

@misc{fan_asymptotic_2025,
	title = {Asymptotic {Theory} of {Eigenvectors} for {Latent} {Embeddings} with {Generalized} {Laplacian} {Matrices}},
	url = {http://arxiv.org/abs/2503.00640},
	doi = {10.48550/arXiv.2503.00640},
	urldate = {2025-11-18},
	publisher = {arXiv},
	author = {Fan, Jianqing and Fan, Yingying and Lv, Jinchi and Yang, Fan and Yu, Diwen},
	month = mar,
	year = {2025},
	note = {arXiv:2503.00640 [stat]},
}

@misc{agterberg_distributional_2024,
	title = {Distributional {Theory} and {Statistical} {Inference} for {Linear} {Functions} of {Eigenvectors} with {Small} {Eigengaps}},
	url = {http://arxiv.org/abs/2308.02480},
	doi = {10.48550/arXiv.2308.02480},
	urldate = {2025-11-18},
	publisher = {arXiv},
	author = {Agterberg, Joshua},
	month = oct,
	year = {2024},
	note = {arXiv:2308.02480 [math]},
}

@article{tang_semiparametric_2017,
	title = {A {Semiparametric} {Two}-{Sample} {Hypothesis} {Testing} {Problem} for {Random} {Graphs}},
	volume = {26},
	issn = {1061-8600},
	url = {https://doi.org/10.1080/10618600.2016.1193505},
	doi = {10.1080/10618600.2016.1193505},
	number = {2},
	urldate = {2025-11-18},
	journal = {Journal of Computational and Graphical Statistics},
	author = {Tang, Minh and Athreya, Avanti and Sussman, Daniel L. and Lyzinski, Vince and Park, Youngser and Priebe, Carey E.},
	month = apr,
	year = {2017},
	note = {Publisher: ASA Website
\_eprint: https://doi.org/10.1080/10618600.2016.1193505},
	pages = {344--354},
}

@misc{fu_two-sample_2022,
	title = {Two-{Sample} {Test} for {Stochastic} {Block} {Models} via {Maximum} {Entry}-wise {Deviation}},
	url = {http://arxiv.org/abs/2211.08668},
	doi = {10.48550/arXiv.2211.08668},
	urldate = {2025-11-18},
	publisher = {arXiv},
	author = {Fu, Kang and Hu, Jianwei and Keita, Seydou and Liu, Hao},
	month = dec,
	year = {2022},
	note = {arXiv:2211.08668 [stat]},
}

@article{auerbach_testing_2022,
	title = {Testing for {Differences} in {Stochastic} {Network} {Structure}},
	volume = {90},
	copyright = {© 2022 The Econometric Society},
	issn = {1468-0262},
	url = {https://onlinelibrary.wiley.com/doi/abs/10.3982/ECTA18093},
	doi = {10.3982/ECTA18093},
	language = {en},
	number = {3},
	urldate = {2025-11-18},
	journal = {Econometrica},
	author = {Auerbach, Eric},
	year = {2022},
	note = {\_eprint: https://onlinelibrary.wiley.com/doi/pdf/10.3982/ECTA18093},
	pages = {1205--1223},
}

@misc{bhadra_bootstrap-based_2025,
	title = {A {Bootstrap}-based {Method} for {Testing} {Network} {Similarity}},
	url = {http://arxiv.org/abs/1911.06869},
	doi = {10.48550/arXiv.1911.06869},
	urldate = {2025-11-18},
	publisher = {arXiv},
	author = {Bhadra, Somnath and Chakraborty, Kaustav and Sengupta, Srijan and Lahiri, Soumendra},
	month = feb,
	year = {2025},
	note = {arXiv:1911.06869 [stat]},
}

@misc{jin_two-sample_2024,
	title = {Two-{Sample} {Hypothesis} {Testing} for {Large} {Random} {Graphs} of {Unequal} {Size}},
	url = {http://arxiv.org/abs/2402.11133},
	doi = {10.48550/arXiv.2402.11133},
	urldate = {2025-11-18},
	publisher = {arXiv},
	author = {Jin, Xin and Chan, Kit and Barnett, Ian and Ghosh, Riddhi Pratim},
	month = feb,
	year = {2024},
	note = {arXiv:2402.11133 [stat]},
}

@misc{nguen_network_2024,
	title = {Network two-sample test for block models},
	url = {http://arxiv.org/abs/2406.06014},
	doi = {10.48550/arXiv.2406.06014},
	urldate = {2025-11-18},
	publisher = {arXiv},
	author = {Nguen, Chung Kyong and Padilla, Oscar Hernan Madrid and Amini, Arash A.},
	month = jun,
	year = {2024},
	note = {arXiv:2406.06014 [math]},
}

@misc{wu_two-sample_2023,
	title = {Two-sample test of sparse stochastic block models},
	url = {http://arxiv.org/abs/2304.00739},
	doi = {10.48550/arXiv.2304.00739},
	urldate = {2025-11-18},
	publisher = {arXiv},
	author = {Wu, Qianyong and Hu, Jiang},
	month = apr,
	year = {2023},
	note = {arXiv:2304.00739 [stat]},
}

@misc{shen_combinatorial-probabilistic_2020,
	title = {Combinatorial-{Probabilistic} {Trade}-{Off}: {Community} {Properties} {Test} in the {Stochastic} {Block} {Models}},
	shorttitle = {Combinatorial-{Probabilistic} {Trade}-{Off}},
	url = {http://arxiv.org/abs/2010.15063},
	doi = {10.48550/arXiv.2010.15063},
	urldate = {2025-11-18},
	publisher = {arXiv},
	author = {Shen, Shuting and Lu, Junwei},
	month = oct,
	year = {2020},
	note = {arXiv:2010.15063 [math]},
}

@article{bhattacharjee_change_2020,
	title = {Change {Point} {Estimation} in a {Dynamic} {Stochastic} {Block} {Model}},
	volume = {21},
	issn = {1533-7928},
	url = {http://jmlr.org/papers/v21/18-814.html},
	number = {107},
	urldate = {2025-11-18},
	journal = {Journal of Machine Learning Research},
	author = {Bhattacharjee, Monika and Banerjee, Moulinath and Michailidis, George},
	year = {2020},
	pages = {1--59},
}

@article{wang_optimal_2021,
	title = {Optimal {Change} {Point} {Detection} and {Localization} in {Sparse} {Dynamic} {Networks}},
	volume = {49},
	issn = {0090-5364},
	url = {https://www.jstor.org/stable/27028771},
	number = {1},
	urldate = {2025-11-18},
	journal = {The Annals of Statistics},
	author = {Wang, Daren and Yu, Yi and Rinaldo, Alessandro},
	year = {2021},
	note = {Publisher: Institute of Mathematical Statistics},
	pages = {203--232},
}

@misc{zheng_limit_2024,
	title = {Limit results for distributed estimation of invariant subspaces in multiple networks inference and {PCA}},
	url = {http://arxiv.org/abs/2206.04306},
	doi = {10.48550/arXiv.2206.04306},
	urldate = {2025-11-18},
	publisher = {arXiv},
	author = {Zheng, Runbing and Tang, Minh},
	month = may,
	year = {2024},
	note = {arXiv:2206.04306 [math]},
}

@article{du_hypothesis_2023,
	title = {Hypothesis testing for equality of latent positions in random graphs},
	volume = {29},
	issn = {1350-7265},
	url = {https://projecteuclid.org/journals/bernoulli/volume-29/issue-4/Hypothesis-testing-for-equality-of-latent-positions-in-random-graphs/10.3150/22-BEJ1581.full},
	doi = {10.3150/22-BEJ1581},
	number = {4},
	urldate = {2025-11-18},
	journal = {Bernoulli},
	author = {Du, Xinjie and Tang, Minh},
	month = nov,
	year = {2023},
	note = {Publisher: Bernoulli Society for Mathematical Statistics and Probability},
	pages = {3221--3254},
}

@misc{agterberg_statistical_2024,
	title = {Statistical {Inference} for {Low}-{Rank} {Tensors}: {Heteroskedasticity}, {Subgaussianity}, and {Applications}},
	shorttitle = {Statistical {Inference} for {Low}-{Rank} {Tensors}},
	url = {http://arxiv.org/abs/2410.06381},
	doi = {10.48550/arXiv.2410.06381},
	urldate = {2025-11-18},
	publisher = {arXiv},
	author = {Agterberg, Joshua and Zhang, Anru},
	month = oct,
	year = {2024},
	note = {arXiv:2410.06381 [math]},
}

@article{zhang_edgeworth_2022,
	title = {Edgeworth expansions for network moments},
	volume = {50},
	issn = {0090-5364, 2168-8966},
	url = {https://projecteuclid.org/journals/annals-of-statistics/volume-50/issue-2/Edgeworth-expansions-for-network-moments/10.1214/21-AOS2125.full},
	doi = {10.1214/21-AOS2125},
	number = {2},
	urldate = {2025-11-18},
	journal = {The Annals of Statistics},
	author = {Zhang, Yuan and Xia, Dong},
	month = apr,
	year = {2022},
	note = {Publisher: Institute of Mathematical Statistics},
	pages = {726--753},
}

@article{lunde_subsampling_2023,
	title = {Subsampling sparse graphons under minimal assumptions},
	volume = {110},
	issn = {1464-3510},
	url = {https://doi.org/10.1093/biomet/asac032},
	doi = {10.1093/biomet/asac032},
	number = {1},
	urldate = {2025-11-18},
	journal = {Biometrika},
	author = {Lunde, Robert and Sarkar, Purnamrita},
	month = mar,
	year = {2023},
	pages = {15--32},
}

@article{zu_local_2025,
	title = {Local bootstrap for network data},
	volume = {112},
	issn = {1464-3510},
	url = {https://doi.org/10.1093/biomet/asae046},
	doi = {10.1093/biomet/asae046},
	number = {1},
	urldate = {2025-11-18},
	journal = {Biometrika},
	author = {Zu, Tianhai and Qin, Yichen},
	month = feb,
	year = {2025},
	pages = {asae046},
}

@article{bravo-hermsdorff_quantifying_2023,
	title = {Quantifying {Network} {Similarity} using {Graph} {Cumulants}},
	volume = {24},
	issn = {1533-7928},
	url = {http://jmlr.org/papers/v24/21-082.html},
	number = {187},
	urldate = {2025-11-18},
	journal = {Journal of Machine Learning Research},
	author = {Bravo-Hermsdorff, Gecia and Gunderson, Lee M. and Maugis, Pierre-André and Priebe, Carey E.},
	year = {2023},
	pages = {1--27},
}

@article{deng_subsampling-based_2024,
	title = {Subsampling-based modified {Bayesian} information criterion for large-scale stochastic block models},
	volume = {18},
	issn = {1935-7524, 1935-7524},
	url = {https://projecteuclid.org/journals/electronic-journal-of-statistics/volume-18/issue-2/Subsampling-based-modified-Bayesian-information-criterion-for-large-scale-stochastic/10.1214/24-EJS2309.full},
	doi = {10.1214/24-EJS2309},
	number = {2},
	urldate = {2025-11-18},
	journal = {Electronic Journal of Statistics},
	author = {Deng, Jiayi and Huang, Danyang and Chang, Xiangyu and Zhang, Bo},
	month = jan,
	year = {2024},
	note = {Publisher: Institute of Mathematical Statistics and Bernoulli Society},
	pages = {4724--4766},
}

@misc{shao_u-statistic_2023,
	title = {U-{Statistic} {Reduction}: {Higher}-{Order} {Accurate} {Risk} {Control} and {Statistical}-{Computational} {Trade}-{Off}, with {Application} to {Network} {Method}-of-{Moments}},
	shorttitle = {U-{Statistic} {Reduction}},
	url = {http://arxiv.org/abs/2306.03793},
	doi = {10.48550/arXiv.2306.03793},
	urldate = {2025-11-18},
	publisher = {arXiv},
	author = {Shao, Meijia and Xia, Dong and Zhang, Yuan},
	month = jun,
	year = {2023},
	note = {arXiv:2306.03793 [stat]},
}

@misc{li_assumption-lean_2025,
	title = {Assumption-lean {Inference} for {Network}-linked {Data}},
	url = {http://arxiv.org/abs/2510.00287},
	doi = {10.48550/arXiv.2510.00287},
	urldate = {2025-11-18},
	publisher = {arXiv},
	author = {Li, Wei and Chakraborty, Nilanjan and Lunde, Robert},
	month = sep,
	year = {2025},
	note = {arXiv:2510.00287 [stat]},
}

@article{levin_bootstrapping_2025,
	title = {Bootstrapping networks with latent space structure},
	volume = {19},
	issn = {1935-7524, 1935-7524},
	url = {https://projecteuclid.org/journals/electronic-journal-of-statistics/volume-19/issue-1/Bootstrapping-networks-with-latent-space-structure/10.1214/25-EJS2347.full},
	doi = {10.1214/25-EJS2347},
	number = {1},
	urldate = {2025-11-18},
	journal = {Electronic Journal of Statistics},
	author = {Levin, Keith and Levina, Elizaveta},
	month = jan,
	year = {2025},
	note = {Publisher: Institute of Mathematical Statistics and Bernoulli Society},
	pages = {745--791},
}


\end{document}